\newcommand{\defparprob}[4]{
  \vspace{1mm}
\noindent\fbox{
  \begin{minipage}{0.96\textwidth}
  \begin{tabular*}{\textwidth}{@{\extracolsep{\fill}}lr} #1  & {\bf{Parameter:}} #3 \\ \end{tabular*}
  {\bf{Input:}} #2  \\
  {\bf{Question:}} #4
  \end{minipage}
  }
  \vspace{1mm}
}
\ifdefined\DEBUG{}
\def\rem#1{{\marginpar{\raggedright\scriptsize #1}}}
\newcommand{\disc}[1]{\rem{\textcolor{cyan}{\(\bullet \) #1}}}
\newcommand{\rearng}[1]{{\color{olive}{#1}}}
\newcommand{\bmpr}[1]{\rem{\textcolor{red}{\(\bullet \) #1}}}
\newcommand{\sk}[1]{\rem{\textcolor{blue}{\(\bullet \) #1}}}
\newcommand{\bmpr}[1]{}
\newcommand{\sk}[1]{}
\newcommand{\disc}[1]{}
\newcommand{\rearng}[1]{{#1}}
\newcommand{\depr}[1]{}
\newcommand{\nphard}{\textsc{NP-}hard\xspace}
\newcommand{\gtk}{$(G, T, k)$\xspace}
\newcommand{\gtminustk}{$(G, T \setminus \{t\}, k)$\xspace}
\newcommand{\tminust}{T \setminus \{t\}\xspace}
\newcommand{\ivd}{internally vertex-disjoint\xspace}
\newcommand{\cct}{connected component\xspace}
\newcommand{\ccts}{connected components\xspace}
\newcommand{\calF}{\mathcal{F}\xspace}
\newcommand{\calT}{\mathcal{T}\xspace}
\newcommand{\calC}{\mathcal{C}\xspace}
\newcommand{\calP}{\mathcal{P}\xspace}
\newcommand{\calB}{\mathcal{B}\xspace}
\newcommand{\calL}{\mathcal{L}\xspace}
\newcommand{\optx}{\emph{OPT}$_x$}
\newcommand{\puv}{P_{uv}\xspace}
\newcommand{\puvprime}{P'_{u'v'}\xspace}
\newcommand{\Oh}{\mathcal{O}}
\newcommand{\ttsep}{$(t, T \setminus \{t\})$-separator\xspace}
\newcommand{\mwns}{\textsc{Multiway Near-Separator}\xspace}
\newcommand{\mwnsshort}{\textsc{mwns}\xspace}
\newcommand{\mws}{\textsc{Multiway Separator}\xspace}
\newcommand{\mwsshort}{\textsc{mws}\xspace}
\newtheorem{reductionrule}{Reduction Rule}
\newtheorem{markingscheme}{Marking Scheme}
\title{On the Parameterized Complexity of \mwns}
\author{Bart M. P. Jansen}{Eindhoven University of Technology, The Netherlands \and \url{https://www.win.tue.nl/~bjansen/}}{b.m.p.jansen@tue.nl}{https://orcid.org/0000-0001-8204-1268}{}
\author{Shivesh K. Roy}{Eindhoven University of Technology, The Netherlands \and \url{https://sites.google.com/view/shiveshroy}}{s.k.roy@tue.nl}{https://orcid.org/0000-0003-0896-3437}{}
\authorrunning{B.\,M.\,P. Jansen and S.\,K. Roy}
\keywords{FPT algorithm}
\keywords{fixed-parameter tractability, multiway cut, near-separator}
\begin{document}

\maketitle

\begin{abstract}
We study a new graph separation problem called \textsc{Multiway Near-Separator}. Given an undirected graph~$G$, integer~$k$, and terminal set~$T \subseteq V(G)$, it asks whether there is a vertex set~$S \subseteq V(G) \setminus T$ of size at most~$k$ such that in graph~$G-S$, no pair of distinct terminals can be connected by two pairwise \ivd paths. Hence each terminal pair can be separated in~$G-S$ by removing at most one vertex. The problem is therefore a generalization of \textsc{(Node) Multiway Cut}, which asks for a vertex set for which each terminal is in a different component of~$G-S$. We develop a fixed-parameter tractable algorithm for \textsc{Multiway Near-Separator} running in time~$2^{\Oh(k \log k)} \cdot n^{\Oh(1)}$. Our algorithm is based on a new pushing lemma for solutions with respect to important separators, along with two problem-specific ingredients. The first is a polynomial-time subroutine to reduce the number of terminals in the instance to a polynomial in the solution size~$k$ plus the size of a given suboptimal solution. The second is a polynomial-time algorithm that, given a graph~$G$ and terminal set~$T \subseteq V(G)$ along with a single vertex~$x \in V(G)$ that forms a multiway near-separator, computes a 14-approximation for the problem of finding a multiway near-separator not containing~$x$.
\end{abstract}

\clearpage
\section{Introduction}
\label{sec:intro}
Graph separation problems play an important role in the study of graph algorithms. While the problem of finding a minimum vertex set whose removal separates two terminals~$s$ and~$t$ can be solved in polynomial-time via the Ford-Fulkerson algorithm~\cite{FordF56}, many variations of the problem are NP-complete. They form a fruitful subject of investigation in the study of parameterized algorithmics, where a typical goal is to develop an algorithm that finds a suitable separator of size~$k$ in an $n$-vertex input graph in time~$f(k) \cdot n^{\Oh(1)}$, or concludes that no such solution exists. Landmark results in this area include the FPT algorithms for \textsc{Multiway Cut}~\cite{ChenLL09,CyganPPW13a,Guillemot11a,Marx06,Xiao10} (in which the goal is to find a vertex set which separates any pair of terminals from a given set~$T$) and \textsc{Multicut}~\cite{BousquetDT18,MarxR14} (in which only a specified subset of the terminal pairs must be separated) in undirected graphs.

After the parameterized complexity of the most fundamental separation problems in this area were settled, researchers started considering variations on the theme of graph separation, including \textsc{Steiner Multicut}~\cite{BringmannHML16} (given a sequence of subsets of terminals~$T_1, \ldots, T_\ell$, find a vertex set separating at least one pair $\{t_i \in T_i, t_j \in T_j\}$ for each~$i \neq j$), \textsc{Multiway Cut-Uncut}~\cite{ChitnisCHPP16,LokshtanovR0Z18} (given an equivalence relation~$\mathcal{R}$ on a set of terminal vertices, find a vertex set whose removal leaves terminals~$t_i, t_j$ in the same \cct if and only if~$t_i \equiv_{\mathcal{R}} t_j$), and \textsc{Stable Multiway cut}~\cite{MarxOR13} (find a multiway cut that is independent). 

In this paper we start to explore a new variation of the separation theme from the parameterized perspective. Rather than asking for a vertex set which fully separates some given terminal pairs, we are interested in \emph{nearly} separating terminals: in the remaining graph, there should not be \emph{two or more} \ivd paths connecting a terminal pair. We therefore study the following problem, which we believe is a natural extension in the well-studied area of graph separation problems.

\defparprob{\mwns(\mwnsshort)}{An undirected graph~$G$, terminal set~$T \subseteq V(G)$, and a positive integer~$k$.}{$k$}{Is there a set~$S \subseteq V(G) \setminus T$ with~$|S| \leq k$ such that there does not exist a pair of distinct terminals~$t_i, t_j \in T$ with \emph{two} \ivd~$t_i$-$t_j$~paths in~$G-S$?}

Note that, by Menger's theorem, the requirement on solutions~$S$ to \mwnsshort is equivalent to the requirement that in the graph~$G-S$, any terminal pair can be separated by removing a non-terminal vertex.
One could therefore imagine applications of this problem in the study of disrupting communications between nodes in a network. While the standard \textsc{Multiway Cut} problem captures the setting that all potential for communication between terminals has to be broken, the size of a solution to the near-separation problem can be arbitrarily much smaller while it still ensures the following property: the communication between each terminal pair is either broken by the solution, or there is at least one non-terminal vertex through which all communications of the pair must pass, so that it may be intercepted at that point. A related problem of reducing connectivity between nodes by removing edges or vertices was studied by Barman and Chawla~\cite{BarmanC10}, who presented various approximation algorithms.

Using the (perhaps non-standard) view that a direct edge between two vertices~$t_i, t_j$ means there are two internally vertex-disjoint paths between them (the intersection of the set of interior vertices is empty), the requirement that in~$G-S$ there do not exist two \ivd paths between any pair of distinct terminals, can alternatively be shown to be equivalent to demanding that~$T$ is an independent set and there is no simple cycle containing at least two vertices from~$T$ (see Proposition~\ref{Proposition:DifferentWaysOfLookingAtMWNS}). Simple cycles containing two vertices from~$T$, henceforth called $T$-cycles, therefore play an important role in our arguments. Based on this alternative characterization, the near-separator problem is related to the \textsc{Subset Feedback Vertex Set} problem, which asks for a minimum vertex set intersecting all cycles that contain at least one terminal~\cite{CyganPPW13}, although there the solution is allowed to contain terminal vertices.

A simple reduction (Lemma~\ref{Lemma:NP-hardness_proof_of_MWNS}) shows that \mwns is NP-complete. It forms a generalization of the \textsc{(Node) Multiway Cut} problem with undeletable terminals: an instance~$(G,T = \{t_1, \ldots, t_\ell\},k)$ of the latter problem can be reduced to an equivalent instance~$(G',T,k)$ of \mwnsshort by inserting~$|T|-1$ new non-terminal vertices~$w_1, \ldots, w_{|T|-1}$ with~$N(w_i) = \{t_i, t_{i+1}\}$.

The \mwns problem can be shown to be non-uniformly fixed-parameter tractable parameterized by~$k$ using the technique of recursive understanding~\cite{LokshtanovR0Z18}, as the problem can be formulated in Monadic Second-Order Logic and can be shown to become fixed-parameter tractable on~$(s(k), k+1)$-unbreakable graphs for some function~$s \colon \mathbb{N} \to \mathbb{N}$ by branching on small connected vertex sets with small neighborhoods. This only serves as a complexity classification, however, as the resulting algorithms are non-uniform and have a large (and unknown) parameter dependence~$f(k)$. In this paper, our goal is to understand the structure of \mwns and develop an efficient (and uniform) parameterized FPT algorithm for the problem.

\paragraph*{Our results}
The main result of our paper is the following theorem, showing that \mwnsshort has a uniform FPT algorithm with parameter dependence~$2^{\Oh(k \log k)}$.

\begin{restatable}{theorem}{MWNSIsFPTState}
\label{theorem:MWNS_is_FPT}
$(\bigstar)$ \mwns can be solved in time~$2^{\Oh(k \log k)} \cdot n^{\Oh(1)}$.
\end{restatable}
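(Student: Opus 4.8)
The plan is to follow the blueprint of the classical FPT algorithm for \textsc{(Node) Multiway Cut} via important separators, adapted through the three new ingredients advertised in the abstract: a terminal-count reduction, a constant-factor approximation for the ``pushing-aside-one-vertex'' subproblem, and a pushing lemma that lets us assume a solution lives among important separators. First I would compute, in polynomial time, some solution~$S_0$ to the instance; if none exists the algorithm rejects, and one can obtain a (possibly large) feasible~$S_0$ by a straightforward branching or LP-relaxation argument, with size bounded by a polynomial in~$k$ in the relevant regime (and otherwise we may simply take~$S_0$ of size up to~$n$, invoking the reduction only once~$|S_0|$ is polynomial in~$k$). Using~$S_0$, I invoke the polynomial-time subroutine promised in the abstract to reduce the number of terminals to~$(k + |S_0|)^{\Oh(1)}$, hence to~$k^{\Oh(1)}$; this is crucial because the important-separator machinery and the branching tree will both have a running time depending on~$|T|$.

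Next, I would set up a bounded-depth branching procedure. The key structural fact, provided by the new pushing lemma, is that there is an optimal solution~$S$ such that for some terminal~$t$ the intersection of~$S$ with an appropriate ``side'' is contained in (or dominated by) an important $(t, T\setminus\{t\})$-separator of size at most~$k$. Since the number of important separators of size at most~$k$ is~$2^{\Oh(k)}$ and they can be enumerated in time~$2^{\Oh(k)}\cdot n^{\Oh(1)}$, we branch on the choice of terminal~$t$ (polynomially many options after the reduction) and on the important separator~$Z$; we add~$Z$ to the solution, decrease~$k$ by~$|Z|$, and recurse. The subtlety compared to plain \textsc{Multiway Cut} is that here a solution need not fully separate~$t$ from the other terminals: it only has to destroy all $T$-cycles through~$t$. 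This is exactly where the 14-approximation for ``find a multiway near-separator avoiding a given single vertex~$x$'' enters: after guessing that some specific vertex~$x$ (e.g.\ a neighbour of~$t$, or the unique cut-vertex through which $t$ still communicates) is \emph{not} in the solution, the residual task of near-separating while avoiding~$x$ can be approximated within a constant factor, and a constant-factor approximation of a subproblem whose optimum is at most~$k$ suffices to make measurable progress (decrease a potential that is a function of~$k$) inside an FPT branching step. Combining the~$2^{\Oh(k)}$ branching factor with a recursion depth of~$\Oh(k)$ and the~$\Oh(\log k)$ loss from guessing terminals/vertices at each node gives the claimed~$2^{\Oh(k\log k)}\cdot n^{\Oh(1)}$ bound.

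Concretely, the steps in order are: (1) handle trivial cases and compute a feasible~$S_0$, rejecting if impossible; (2) apply the terminal-reduction subroutine to get~$|T| = k^{\Oh(1)}$ while preserving the answer; (3) branch: pick a terminal~$t$ and either a vertex~$x$ to forbid (feeding the residual instance to the 14-approximation and, if its output has size~$\le k$, using it directly or recursing on the reduced budget) or an important $(t,T\setminus\{t\})$-separator~$Z$ of size~$\le k$ to add to the partial solution, each time decreasing the parameter; (4) at the leaves, where~$k$ has dropped to~$0$ or the instance has become trivial, check feasibility directly via Proposition~\ref{Proposition:DifferentWaysOfLookingAtMWNS}. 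Correctness reduces to the pushing lemma (guaranteeing that some branch captures an optimal solution) together with the approximation guarantee (guaranteeing progress when the solution is ``pushed'' onto the avoided vertex rather than onto an important separator).

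I expect the main obstacle to be proving the pushing lemma itself and making the branching measure decrease rigorously: unlike \textsc{Multiway Cut}, a near-separator may leave a terminal connected to others through a single cut-vertex, so ``important separators'' must be used with respect to a carefully chosen target set, and one has to argue that replacing part of an optimal solution by an important separator does not create new undestroyed $T$-cycles elsewhere. Interfacing this with the 14-approximation — in particular, charging its constant-factor error against the drop in~$k$ so that the recursion still has depth~$\Oh(k)$ — is the delicate accounting that underlies the~$k\log k$ exponent, and it is where most of the real work lies.
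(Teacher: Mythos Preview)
Your proposal has the right high-level shape---reduce the terminal count, then branch on important separators---but two of the three ingredients are wired in the wrong places, and one structural step is missing.

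First, the missing step: you have no valid source for the suboptimal solution~$S_0$. You write that one can obtain a feasible~$S_0$ ``by a straightforward branching or LP-relaxation argument, with size bounded by a polynomial in~$k$''; this is not straightforward, and the paper does not attempt it. Instead, the paper uses \emph{iterative compression}: one processes the non-terminal vertices one by one, maintaining a solution of size~$k$ for the current induced subgraph; when a new vertex~$v$ is added, the old solution plus~$v$ is a solution of size~$k+1$, and a compression routine (Lemma~\ref{lemma:Compression_Step_for_MWNS}) is invoked with this~$(k{+}1)$-sized MWNS as the~$\hat S$ required by Theorem~\ref{theorem:terminalbounding}. Your proposal needs this outer loop; without it you have no~$\hat S$ of size~$\Oh(k)$ to feed to the terminal-reduction step.

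Second, you misplace the 14-approximation. In the paper, Theorem~\ref{theorem:blocker} is used \emph{only inside} the proof of Theorem~\ref{theorem:terminalbounding}, to turn the given MWNS~$\hat S$ into a \emph{1-redundant} MWNS of size~$\Oh(k\cdot|\hat S|)$ (by replacing each~$x\in\hat S$ with an~$x$-avoiding near-separator of size~$\le 14k$, or else concluding~$x$ is in every solution). It plays no role in the branching. Your step~(3), ``pick a vertex~$x$ to forbid and feed the residual instance to the 14-approximation,'' has no counterpart in the algorithm and would not yield exact optimality.

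Third, your pushing lemma is only half of the real one. Lemma~\ref{lemma:pushinglemmaformwns} says that for some terminal~$t$, either an important $(t,T\setminus\{t\})$-separator of size~$\le k$ is \emph{contained} in an optimal solution, \emph{or} there is an important separator of size~$\le k{+}1$ such that \emph{all but one} of its vertices lie in an optimal solution. The second clause is exactly what handles the phenomenon you flag (that a near-separator may leave~$t$ connected through a single cutvertex); the branching therefore enumerates, for every terminal~$t\in T'$ (now~$|T'|=k^{\Oh(1)}$), every important separator~$I$ of size~$\le k{+}1$, and for each such~$I$ both the branch ``add~$I$'' (if~$|I|\le k$) and, for every~$v\in I$, the branch ``add~$I\setminus\{v\}$''. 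The recurrence~$L(k')\le |T'|\sum_{j=2}^{k'+1}4^{j}\,L(k'-(j-1))$ then gives~$L(k')\le(32|T'|)^{k'}=2^{\Oh(k\log k)}$. Your accounting (``charging the constant-factor error against the drop in~$k$'') is not how the~$k\log k$ arises; the~$\log k$ comes solely from branching over the~$k^{\Oh(1)}$ terminals.
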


The starting point for the algorithm is the approach for solving \textsc{Multiway Cut} via important separators. The \emph{pushing lemma} due to Marx~\cite{Marx06} \cite[Lemma 8.53]{CyganFKLMPPS15} states that for any instance~$(G,T,k)$ of~\textsc{Multiway Cut} and any choice of terminal~$t \in T$, there is an optimal solution which contains an \emph{important}~$(t, T \setminus \{t\})$-separator. As the number of important separators of size at most~$k$ is bounded by~$4^k$, we can construct a solution by picking an arbitrary terminal which is not yet fully separated from the remaining terminals and branching on all choices of including a~$(t, T \setminus \{t\})$-separator in the solution.

Adapting this strategy directly for \mwns fails for several reasons. Most importantly, since terminal pairs are allowed to remain in the same \cct, it is possible that in an instance with a solution~$S$ of size~$k$, there exists a terminal~$t \in T$ for which there are no~$(t, T\setminus \{t\})$-separators of size~$f(k)$ for any function~$f$. This happens when such a terminal~$t$ is located in a `central' block in the block-cut tree of~$G-S$. However, there always exists a terminal~$t' \in T$ for which there does exist a~$(t', T \setminus \{t'\})$-separator of size at most~$k+1$, and for which furthermore a variation of the pushing lemma can be proven: there is an optimal solution which contains all-but-one vertex of an important~$(t', T \setminus \{t'\})$-separator of size at most~$k+1$. Intuitively, such a terminal~$t'$ can be found in a leaf block of the block-cut tree of~$G-S$. Hence there \emph{exists} a terminal for which branching on important separators can make progress in identifying the solution, but not all terminals have this property and a priori it is not clear which is the right one.

To resolve this issue, we will effectively have the algorithm try all choices for the terminal~$t$ which is near-separated from all other terminals by an important separator. To ensure the branching factor of the resulting algorithm is bounded in terms of the parameter~$k$, while the number of terminals~$T$ may initially be arbitrarily large compared to~$k$, we therefore have to reduce the number of terminals to~$k^{\Oh(1)}$ in a preprocessing phase.

For the standard \textsc{Multiway Cut} problem, a preprocessing step based on the linear-programming relaxation of the problem can be used to reduce the number of terminals to~$2k$~\cite{CyganPPW13a} (cf.~\cite{Razgon11}). For the near-separation variant we consider, it seems unlikely that the linear-programming relaxation has the same nice properties (such as half-integrality) as for the original problem, let alone that the resulting fractional solutions are useful for a reduction in the number of terminals. As one of our main technical ingredients, we therefore develop a combinatorial preprocessing algorithm to reduce the number of terminals to a polynomial in the solution size~$k$ plus the size of a given (suboptimal) near-separator~$S$, which will be available via the technique of iterative compression~\cite{ReedSV04} \cite[\S 4]{CyganFKLMPPS15}. The preprocessing step is based on concrete reduction rules operating in the graph.

\begin{restatable}{theorem}{TerminalBoundingState}
\label{theorem:terminalbounding}
$(\bigstar)$ There is a polynomial-time algorithm that, given an instance~\gtk of \mwnsshort and a multiway near-separator~$\hat{S}$ for terminal set~$T$ in~$G$, outputs an equivalent instance~$(G',T',k')$ such that: 
\begin{enumerate}
    \item $G'$ is an induced subgraph of $G$,
    \item $T'$ is a subset of~$T$ of size~$\Oh (k^5 \cdot |\hat{S}|^4 )$, and
    \item $k' \leq k$.
\end{enumerate}
Moreover, there is a polynomial-time algorithm that, given a solution~$S'$ for~$(G', T', k')$, outputs a solution~$S$ of~\gtk.
\end{restatable}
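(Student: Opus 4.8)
The plan is to produce $(G', T', k')$ by exhaustively applying a short list of reduction rules, each of which only \emph{deletes} vertices from the current graph (so the result stays an induced subgraph and $k$ never increases, though one rule may decrement it) and each of which carries a polynomial-time routine lifting a solution of the reduced instance to one of the instance it was applied to; the final lifting algorithm is the composition of these routines. Write $s := |\hat{S}|$. We use as a skeleton the structure implied by Proposition~\ref{Proposition:DifferentWaysOfLookingAtMWNS}: since $\hat{S}$ is a multiway near-separator, $T$ is independent in $G - \hat{S}$ and $G - \hat{S}$ has no $T$-cycle, so every block of the block-cut forest of $G - \hat{S}$ holds at most one terminal, and — the property we lean on most — every $T$-cycle of $G$ (hence of any $G - S'$) passes through a vertex of $\hat{S}$. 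Each terminal $t$ thus sits in some component $C$ of $G - \hat{S}$ and in some block of $C$, and its ``type'' will record, up to bounded information, how the relevant part of $C$ attaches to $\hat{S}$ — through which vertices of $\hat{S}$, and via which cut vertices of $C$, a $T$-cycle through $t$ can reach $\hat{S}$.

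First come cleanup rules. If a terminal $t$ lies on no $T$-cycle of $G$ (equivalently, by Menger, no other terminal has two \ivd paths to $t$; testable in polynomial time), delete $t$ from $G$ and $T$: deleting a vertex cannot create a $T$-cycle, so a solution of the reduced instance is verbatim a solution of the original, and the forward direction is equally trivial. Similarly one deletes every \cct of $G - \hat{S}$ carrying no terminal whose removal preserves all $T$-cycles, and may identify and delete forced solution vertices (lifting by re-adding them, which is why we only get $k' \le k$). After exhaustive cleanup every surviving terminal lies on a $T$-cycle, hence is ``close'' to $\hat{S}$.

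The core is a marking scheme selecting a set $M \subseteq T$ of $\Oh(k^5 s^4)$ terminals to retain. Roughly: for each of the $s^{\Oh(1)}$ relevant ``$\hat{S}$-crossing patterns'' of a $T$-cycle through a terminal (a constant number of vertices of $\hat{S}$ where the cycle leaves/re-enters the component, the remaining port vertices being charged to the solution budget — whence the powers of $k$), we mark, along each relevant direction in the block-cut forest and for each of $\Oh(k)$ budget thresholds, a bounded number of terminals that are ``extremal'' (closest to $\hat{S}$) among those realising that pattern; multiplying the counts gives the bound. The marking is engineered so that every unmarked $t$ is \emph{dominated} by a marked $t^\ast$ of the same type: any $T$-cycle through $t$ in a graph $G - S'$ with $|S'|\le k$ can be rerouted, by replacing its ``$t$-side'' with the corresponding ``$t^\ast$-side'' — which survives because a cheap $S'$ cannot block both it and the full crossing pattern — into a $T$-cycle through $t^\ast$ in $G - S'$. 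The final rule is then: while $|T| > \Oh(k^5 s^4)$, run the marking scheme and delete one unmarked terminal $t$. The forward direction is immediate; for the backward direction, a size-$\le k$ solution $S'$ of the reduced instance leaves no $T'$-cycle, so any $T$-cycle of $G - S'$ must use $t$, and rerouting it through $t^\ast$ contradicts this — hence $S'$ already solves \gtk and the lift is the identity. Each step strictly decreases $|T|$, so the procedure runs in polynomial time and halts with $|T'| = \Oh(k^5 s^4)$, $G'$ an induced subgraph of $G$, and $k' \le k$.

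The hard part is making the type and marking support the domination/rerouting property \emph{uniformly over all solutions of size at most $k$} while marking only $\mathrm{poly}(k,s)$ terminals. Two difficulties intertwine. A $T$-cycle through $t$ may leave and re-enter the component of $t$ several times and may have its second terminal in the same or a different component, so ``the $t$-side'' to be transplanted onto a representative is not a single path but a structure threading the block-cut forest and $\hat{S}$; the case analysis guaranteeing that a same-type representative always admits a compatible side is the bulk of the work. And the marking must avoid degenerating into ``one representative per important separator'' — that would cost $4^{\Oh(k)}$ and is the device used in the main algorithm, not here — so the type must stay coarse, which in turn forces one to extract a valid rerouting from coarse information by marking enough extremal representatives per direction and per budget threshold that, whatever few vertices $S'$ spends near $t$, some representative lies beyond them. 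Pinning these two points down, and with them the exponents $k^5$ and $s^4$, is where essentially all the effort goes.
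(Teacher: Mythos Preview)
Your proposal is a plan, not a proof: you name the pieces (cleanup, a marking scheme over ``$\hat{S}$-crossing patterns'', a domination/rerouting argument) but supply none of them, and you explicitly concede in your final paragraph that ``pinning these two points down\ldots is where essentially all the effort goes.'' That effort is the content of the theorem. In particular, you never define what a crossing pattern is, why there are only $s^{\Oh(1)}$ of them, how the $k$-dependence enters, or how the rerouting survives the two obstacles you yourself list. Without those definitions and arguments there is nothing to check.

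More concretely, the paper's route is quite different from what you sketch, and it hinges on an ingredient you omit entirely: before any terminal-reduction, the paper uses Theorem~\ref{theorem:blocker} to upgrade $\hat{S}$ to a \emph{1-redundant} near-separator $S^*$ of size $\Oh(k\,|\hat{S}|)$ (Lemma~\ref{lemma:ConstructionOfRedundantSet}), so that every $T$-cycle meets $S^*$ in at least \emph{two} vertices. This 1-redundancy is what drives every subsequent counting argument (Lemmas~\ref{Lemma:NumberofInterestingLeavesIsBounded} and~\ref{lemma:TerminalsOnDegreeeTwoPathIsBounded}); with only the single-hit property you rely on, those arguments collapse. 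The paper then applies three concrete reduction rules (which demote terminals to non-terminals rather than deleting vertices, contrary to your description), and afterwards bounds the number of surviving terminals by a block-cut-forest analysis of $G-S^*$: at most $\Oh(k\,|S^*|^3)$ ``deepest'' terminal blocks (the set $\mathcal{L}^*$) times $\Oh(|S^*|)$ terminals on each root-to-leaf path, giving $\Oh(k\,|S^*|^4)=\Oh(k^5|\hat S|^4)$. None of this structure appears in your outline, and your generic marking/rerouting scheme would need to reproduce an equally strong counting argument---which, as you acknowledge, you have not done.
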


Note that the algorithm even runs in \emph{polynomial-time}, and may therefore be a useful ingredient to build a polynomial kernelization for this problem or variations thereof. To obtain this terminal-reduction algorithm, it turns out to be useful to know whether the role of a vertex~$x$ in a suboptimal near-separator~$S$ can be taken over by~$\Oh(k)$ alternative vertices. If not, then this immediately leads to the conclusion that such a vertex~$x$ belongs to any optimal solution to the problem. On the other hand, knowing a small vertex set~$S_x$ for which~$(S \setminus \{x\}) \cup S_x$ is also a near-separator reveals a lot of structure in the instance which can be exploited by the reduction rule. This usage is similar as the use of the \emph{blocker}~\cite[\S 9.1.3]{CyganFKLMPPS15} in Thomass\'e's kernelization algorithm for \textsc{Feedback Vertex Set}~\cite{Thomasse10}. 

Given a suboptimal near-separator~$S$, we are therefore interested in determining, for a given~$x \in S$, whether it is possible to obtain a near-separator~$S'$ by replacing~$x$ by a set of~$\Oh(k)$ vertices. This is equivalent to finding a near-separator of size~$\Oh(k)$ which avoids the use of vertex~$x$ in the graph~$G' := G - (S \setminus \{x\})$. Hence this task effectively reduces to finding a solution not containing~$x$ in the graph~$G'$ for which~$\{x\}$ forms a near-separator. We give a polynomial-time 14-approximation for this problem.

\begin{restatable}{theorem}{BlockerState}
\label{theorem:blocker}
$(\bigstar)$ There is a polynomial-time algorithm that, given a graph~$G$, terminal set~$T\subseteq V(G)$, and a vertex~$x \in V(G)$ such that~$\{x\}$ is a multiway near-separator for terminal set~$T$ in~$G$, outputs a multiway near-separator~$S_x \subseteq V(G) \setminus \{x\}$ for~$T$ in~$G$ such that~$|S_x| \leq 14 |S^*_x|$, where~$S^*_x \not \ni x$ is a smallest multiway near-separator for~$T$ in~$G$ that avoids~$x$.
\end{restatable}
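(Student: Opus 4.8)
The plan is to construct the approximate near-separator iteratively in the spirit of a local-ratio argument. As long as the current graph still contains a $T$-cycle, I would exhibit a set $O$ of at most $14$ deletable (non-terminal, non-$x$) vertices such that \emph{every} $x$-avoiding near-separator of the current graph contains a vertex of $O$, add all of $O$ to the solution, delete $O$, and recurse. Since the successive sets $O_1,\dots,O_r$ are pairwise disjoint and each must be met by the fixed optimum $S^*_x$ — indeed $S^*_x$ minus the previously deleted vertices is still a near-separator of the current graph, so any optimum of the current graph meets $O_i$ — one gets $r\le |S^*_x|$ and hence $|S_x|=\sum_i |O_i|\le 14\,|S^*_x|$. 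Termination is immediate: deleting \emph{all} non-terminal, non-$x$ vertices destroys every $T$-cycle (such a cycle has two non-adjacent terminals, hence at least one deletable vertex), so the process stops at worst at a trivial solution.

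To produce such an $O$, I would first simplify the instance. By Proposition~\ref{Proposition:DifferentWaysOfLookingAtMWNS}, $\{x\}$ being a near-separator means $T$ is independent and no block of $G-x$ contains two terminals; since every $T$-cycle lies in a single block of $G$, each block of $G$ with two terminals contains $x$, and any two such ``bad'' blocks meet only in $x$ — so it suffices to work inside one bad block, i.e.\ we may assume $G$ is $2$-connected and $G-x$ is connected with a block-cut tree in which every block carries at most one terminal. Here $S\subseteq V(G)\setminus(T\cup\{x\})$ is a near-separator iff $G-x-S$ has no path between two distinct surviving neighbours of $x$ that passes through two terminals. I would then apply polynomial-time, optimum-preserving reduction rules: delete vertices lying on no $T$-cycle; replace a connected terminal-free set $H$ with $N(H)$ disjoint from $\{x\}\cup N(x)$ and $|N(H)|\le 2$ by an edge between its two boundary vertices (deleting it if $|N(H)|\le 1$); suppress non-terminal degree-two vertices outside $N(x)$; and contract ``$x$-attached chains'', i.e.\ path-like pieces whose vertices see only the piece, its attachment vertices, and $x$, into a single vertex. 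In a fully reduced instance I would take a shortest $T$-cycle $C$ and let $O$ be a carefully chosen constant-size set read off from $C$ — its deletable vertices when $C$ is short, and otherwise the set $N(t)\setminus\{x\}$ for a minimum-degree terminal $t$ on $C$ (every $T$-cycle through $t$ uses two neighbours of $t$, at least one of them deletable) together with the cut vertices of $G-x$ that separate consecutive terminals of $C$ (each such cut vertex is automatically non-terminal and lies on \emph{every} $T$-cycle through that terminal pair). The key claim is a \emph{$14$-obstruction lemma}: after the reductions this $O$ has size at most $14$ and is met by every $x$-avoiding near-separator — the reductions being exactly what rules out the long ``separating regions'' and long $x$-attached chains that would otherwise make $|O|$ unbounded, and a case analysis on how $C$ meets $x$ being what pins the constant down to $14$.

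The main obstacle is this lemma. One half is bookkeeping but fiddly: verifying that each reduction rule preserves the minimum $x$-avoiding near-separator size, comes with a polynomial-time solution-lifting map, and terminates quickly — the $x$-attached-chain contraction and the interaction of all rules with $N(x)$ being the delicate points. The harder half is combinatorial: proving that in a reduced instance a single shortest $T$-cycle, enriched by a bounded number of forced separator vertices (or replaced by $N(t)\setminus\{x\}$ for a low-degree terminal), is already unavoidable for \emph{every} solution — that is, controlling the freedom a near-separator has to cut the relevant region in many different places — and then optimising the analysis so that the resulting absolute constant is at most $14$.
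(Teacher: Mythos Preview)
Your high-level local-ratio scheme is sound: if at each step you can exhibit a set $O$ of at most $14$ deletable vertices that every $x$-avoiding near-separator of the current graph must intersect, then the $14$-approximation follows. But the heart of your proposal --- the ``$14$-obstruction lemma'' --- has a genuine gap, and the reasoning you sketch for it does not establish what you need.

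The flaw is in your construction of $O$ when the shortest $T$-cycle $C$ is long. You propose $O = (N(t)\setminus\{x\}) \cup \{\text{certain cut vertices}\}$ for a low-degree terminal $t$ on $C$, arguing that ``every $T$-cycle through $t$ uses two neighbours of $t$, at least one of them deletable.'' That is true but beside the point: a near-separator must \emph{hit} every $T$-cycle through $t$, but it may do so at a vertex far from $t$, so there is no reason it should meet $N(t)$. The cut-vertex part has the same defect: a cut vertex of $G-x$ separating $t_i$ from $t_j$ lies on every $T$-cycle containing \emph{both} $t_i$ and $t_j$, but a near-separator can hit the particular cycle $C$ at some other non-terminal while dealing with the remaining $T$-cycles by entirely different deletions. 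So nothing you wrote forces an arbitrary near-separator to meet your $O$. Your reduction rules are also only loosely specified, and the claim that they are optimum-preserving with polynomial-time lifting (especially the ``contract $x$-attached chains'' rule and its interaction with $N(x)$) is asserted rather than argued.

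The paper takes a different local-ratio route that sidesteps the need for a constant-size unavoidable set. It works directly in the block-cut forest $\mathcal{F}$ of $G-x$: pick a \emph{deepest} node $d$ such that $G[V_G(\mathcal{F}_d)\cup\{x\}]$ still contains a $T$-cycle, and build a possibly \emph{large} set $Z\subseteq V_G(\mathcal{F}_d)\setminus T$ (via Gallai's $Q$-path theorem, a minimum $(A,B)$-separator, and two small correction sets) together with the proportional-decrement inequality
\[
\text{\optx}(G-Z,T)\;\le\;\text{\optx}(G,T)\;-\;\frac{|Z|}{14}.
\]
The constant $14$ arises from combining the packing--covering bounds $|S^*_x\cap V_G(\mathcal{F}_d)|\ge\max\{1,(|Z|-2)/6\}$ with a delicate ``one extra vertex'' repair (Claim~\ref{Claim:S-hat_is_at_most_1_away_from_a_MWNS}) needed when the parent of $d$ is a terminal and $Z$ alone does not isolate the subtree. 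This framework accommodates situations where the local structure genuinely forces many deletions at once --- precisely the regime your constant-size obstruction would have to rule out but does not.
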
 

Theorem~\ref{theorem:blocker} can be compared to a result for the \textsc{Chordal Deletion} problem, where the goal is to delete a minimum number of vertices to break all induced cycles of length at least four (holes). A key step in the polynomial kernelization algorithm for the problem due to Jansen and Pilipczuk is a subroutine (\cite[Lemma 1.3]{JansenP18}) which, given a graph~$G$ and vertex~$x$ for which~$G-\{x\}$ is chordal, outputs a set of some~$\ell \geq 0$ holes pairwise intersecting only in~$x$, together with a vertex set~$S$ of size at most~$12\ell$ not containing~$x$ whose removal makes~$G$ chordal. Hence~$S$ is a 12-approximation for the problem of finding a chordal deletion set which avoids~$x$.

\paragraph*{Related work}
Apart from the aforementioned work on graph separation problems, the work of Golovach and Thilikos~\cite{GolovachT19} is related to our setting. They consider the problem of removing at most~$k$ edges from a graph to split it into exactly~$t$ \ccts~$C_1, \ldots, C_t$ such that~$C_i$ has edge-connectivity at least~$\lambda_i$ for a given sequence~$(\lambda_1, \ldots, \lambda_t)$. Besides recursive understanding, which only leads to non-uniform FPT classifications, another generic tool for deriving fixed-parameter tractability of separation problems is the treewidth reduction technique by Marx and Razgon~\cite{MarxR14}. To be able to apply the technique, the number of terminals must be bounded in terms of the parameter~$k$, which is not the case in general. Even if one uses Theorem~\ref{theorem:terminalbounding} to bound the number of terminals first, it is not clear if the technique can be applied since the solutions to be preserved are not minimal separators in the graph. Furthermore, successful application of the technique would give double-exponential algorithms at best, due to having to perform dynamic programming on a tree decomposition of width~$2^{\Omega(k)}$.

The problem of computing a near-separator avoiding a vertex~$x$ is related to the problem of computing an  \emph{$r$-fault tolerant} solution to a vertex deletion problem, which is a solution from which any $r$ vertices may be omitted without invalidating the solution. The computation of $r$-fault tolerant solutions has been studied for the \textsc{Feedback Vertex Set} problem~\cite{BlazejCKKSV21}, which has a polynomial-time $\Oh(r)$-approximation.

The FPT algorithm for \textsc{Subset Feedback Vertex Set} in undirected graphs due to Cygan et al.~\cite{CyganPPW13} bears some similarity to ours, in that it also uses reduction rules to bound the number of terminals followed by an algorithm which is exponential in the solution size and the number of terminals. However,  \textsc{Subset Feedback Vertex Set} behaves differently from the problem we consider, since in the latter the structures to be hit always involve \emph{pairs} of terminals to be near-separated. On the other hand, a solution to \textsc{Subset Feedback Vertex Set} will reduce the connectivity in the graph to the extent that there will no longer be two \ivd paths between any terminal pair, but also has to ensure that there are no cycles through a single terminal. This leads to significant differences in the approach.

\paragraph*{Organization} 
We begin with short preliminaries with the crucial definitions.
We prove our main Theorem~\ref{theorem:MWNS_is_FPT} in Section~\ref{FPT_algorithm_for_MWNS} by assuming Theorem~\ref{theorem:terminalbounding}. 
Next, in Section~\ref{section:BlockerConstruction}, we prove Theorem~\ref{theorem:blocker} by giving a polynomial-time construction of a near-separator avoiding a specific vertex. 
The proof of Theorem~\ref{theorem:terminalbounding} is given in Section~\ref{sec:bounding_terminals}. \rearng{The proofs of statements marked with $(\bigstar)$ are located in the appendix.} \sk{The proofs of statements marked with $(\bigstar)$ are located in the full version TODO: Add Citation.}

\section{Preliminaries}
\label{sec:prelims}
\subparagraph{Graphs.} 
We use standard graph-theoretic notation, and we refer the reader to Diestel~\cite{Diestel17} for any undefined terms. We consider simple unweighted undirected graphs. A graph~$G$ has vertex set~$V(G)$ and edge set~$E(G)$. We use shorthand~$n = |V(G)|$ and~$m = |E(G)|$. The set~$\{1, \ldots, \ell\}$ is denoted by~$[\ell]$. The open neighborhood of~$v \in V (G)$ is~$N_G(v) := \{u \mid \{u,v\} \in E(G)\}$, where we omit the subscript~$G$ if it is clear from context. For a vertex set~$S \subseteq V (G)$ the open neighborhood of~$S$, denoted~$N_G (S)$, is defined as~$S := \bigcup_{v \in S} N_G (v) \setminus S$.
For~$S \subseteq V(G)$, the graph induced by~$S$ is denoted by~$G[S]$.
For two vertices~$x,y$ in a graph~$G$, an $x$-$y$ path is a sequence~$(x=v_1, \ldots, v_k = y)$ of vertices such that~$\{v_i, v_{i+1}\} \in E(G)$ for all~$i \in [k-1]$.
Furthermore, the vertices~$v_2. \ldots, v_{k-1}$ are called the \emph{internal vertices} of the $x$-$y$ path. 
Given a path~$P = (v_1, \ldots, v_k)$ and indices~$i, j \in [k]$, with~$j \geq i$, we use~$P[v_i, v_j]$ to denote the subpath of the path~$P$ which starts from~$v_i$ and ends at~$v_j$. 
Moreover, we use shorthand~$P(v_i, v_j] = P[v_i, v_j] -\{v_i\}, P[v_i, v_j) = P[v_i, v_j] -\{v_j\}$, and~$P(v_i, v_j) = P[v_i, v_j] -\{v_i, v_j\}$.
Given a $p_1$-$p_k$ path~$P = (p_1, \ldots, p_k)$ and a~$q_1$-$q_\ell$ path~$Q = (q_1, \ldots, q_\ell)$ with~$p_k = q_1$ such that~$P$ and~$Q$ are \ivd, we use~$P \cdot Q$ to denote the $p_1$-$q_\ell$ path~$(p_1, \ldots, p_k, q_2, \ldots, q_\ell)$ obtained by first traversing $P$ and then~$Q$. 
We say that a path~$P$ in $G$ \emph{intersects} a vertex~$v_i \in V(G)$ if $v_i \in V(P)$, similarly, for a set~$S \subseteq V(G)$, we say that path~$P$ intersects~$S$ if~$V(P) \cap S \neq \emptyset$.
For~$S \subseteq V(G)$, an~$x$-$y$ path in~$G$ is called an $S$-path if~$x, y \in S$.
For~$S \subseteq V(G)$, cycles~$C_1, C_2$ in~$G$ are said to be~\emph{$S$-disjoint} if~$V(C_1) \cap V(C_2) \subseteq S$.

We now define a few basic notations about block-cut graphs (for completeness we define the notion of block-cut graph in Definition~\ref{definition:block-cut_graph}) that we will use henceforth. Given a graph~$G$ with \ccts $C_1, \ldots, C_m$, a \emph{rooted block-cut forest}~$\calF$ of~$G$ is a block-cut forest containing block-cut trees~$\calT_1, \ldots, \calT_m$ such that for each~$i \in [m]$, the tree~$\calT_i$ is a block-cut tree of~$C_i$ that is rooted at an arbitrary block of~$\calT_i$. 
Given a rooted forest~$\calF$ and a vertex~$v \in V(\calF)$, we use~$\mbox{parent}_{\calF} (v)$ to denote the parent of~$v$ (if~$v$ is a root then~$\mbox{parent}_{\calF} (v) = \emptyset$)
and~$\mbox{child}_\calF (v)$ to denote the set containing all children of~$v$ (if~$v$ is a leaf then~$\mbox{child}_{\calF} (v) = \emptyset$).
Given a rooted block-cut forest~$\calF$ of~$G$, and a node~$d$ of~$\calF$, we use~$V_G(\calF_d)$ to denote the vertices of~$G$ occurring in blocks of the subtree rooted at~$d$. 
Furthermore, we use~$G_d$ to denote the graph induced by the vertex set~$V_G (\calF_d)$, i.e., $G_d := G[V_G(\calF_d)]$.
We also need the following observations.
\begin{observation}
\label{observation:property_of_blocks}
	Let~$G$ be a graph, and let~$B_1, B_2$ be two distinct blocks of~$G$ such that~$V(B_1) \cap V(B_2) = \{v\}$. In the block-cut graph~$G'$ of~$G$, it holds that the distance between blocks~$B_1$ and~$B_2$ is two with~$v$ as an intermediate vertex.
\end{observation}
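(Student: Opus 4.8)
The plan is to unwind the definition of the block-cut graph (Definition~\ref{definition:block-cut_graph}) and argue directly. Recall that the block-cut graph~$G'$ of~$G$ is bipartite: one side contains a node for every block of~$G$, the other side a node for every cut vertex of~$G$, and a block-node~$B$ is adjacent to a cut-vertex-node~$w$ precisely when~$w \in V(B)$. So the whole statement is really just a matter of checking that the relevant nodes and edges are present, plus a short parity argument.

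First I would argue that~$v$ actually occurs as a node of~$G'$, i.e.\ that~$v$ is a cut vertex of~$G$. This is the one place where a classical structural fact about blocks is invoked: since~$B_1$ and~$B_2$ are \emph{distinct} blocks both containing~$v$, the vertex~$v$ lies in at least two blocks, and a vertex contained in more than one block is necessarily a cut vertex (immediate from the definition of blocks as maximal connected subgraphs without a cut vertex, or from the fact that the blocks of a connected graph partition its edge set). Hence~$v \in V(G')$, and by the adjacency rule of~$G'$ we get edges~$\{B_1, v\}$ and~$\{v, B_2\}$ in~$G'$, so~$(B_1, v, B_2)$ is a walk of length two and the distance between~$B_1$ and~$B_2$ in~$G'$ is at most two.

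It remains to rule out distances zero and one. The distance is not zero because~$B_1 \neq B_2$ by hypothesis. The distance is not one because~$G'$ is bipartite with all block-nodes on one side, so no two block-nodes are adjacent. Therefore the distance is exactly two, and the walk~$(B_1, v, B_2)$ exhibits~$v$ as the intermediate vertex on a shortest~$B_1$--$B_2$ path, as claimed. I expect no real obstacle here; the only subtlety worth stating explicitly is the appeal to the standard fact that a vertex lying in two distinct blocks is a cut vertex, which is what guarantees that~$v$ appears in~$G'$ at all.
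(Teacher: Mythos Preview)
Your proof is correct. The paper states this as an observation without proof, so there is no argument to compare against; your unwinding of the block-cut graph definition together with the standard fact that a vertex lying in two distinct blocks is a cutvertex is exactly the expected justification.
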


\begin{observation}\label{Observation:Block_contains_atmost_1-terminal}
    Consider a graph~$G$, terminal set~$T \subseteq V(G)$, and a MWNS~$S \subseteq V(G)$ of~$(G, T)$. Then each block~$B$ of~$G-S$ contains at most one terminal. 
\end{observation}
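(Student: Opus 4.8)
The plan is to argue by contradiction. Assume that some block~$B$ of~$G-S$ contains two distinct terminals~$t_i, t_j \in T$, and derive that~$S$ cannot be a multiway near-separator of~$(G,T)$. The key point to keep in mind is that~$S \subseteq V(G) \setminus T$, so~$G-S$ still contains every terminal together with all edges of~$G$ between terminals; hence it suffices to exhibit, inside~$G-S$, either an edge joining two terminals or two \ivd paths between two terminals, since both are forbidden by the equivalent characterizations of a multiway near-separator collected in Proposition~\ref{Proposition:DifferentWaysOfLookingAtMWNS}.

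The main structural ingredient is the classification of blocks: a block is either a single vertex~$K_1$, a single edge~$K_2$, or a $2$-connected subgraph on at least three vertices. A $K_1$-block obviously cannot contain two terminals, so only two cases remain. If~$B$ is a single edge, then~$\{t_i, t_j\} \in E(B) \subseteq E(G-S)$, so~$T$ is not independent in~$G-S$. If~$B$ is $2$-connected with at least three vertices, then any two of its vertices lie on a common cycle of~$B$; taking such a cycle through~$t_i$ and~$t_j$ and splitting it at these two vertices produces two \ivd~$t_i$-$t_j$ paths inside~$B$, hence inside~$G-S$ (equivalently, apply Menger's theorem to the $2$-connected graph~$B$). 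In either case we contradict the assumption that~$S$ is a multiway near-separator, which establishes the observation.

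I do not expect any genuine obstacle; the only subtlety is the correct handling of the degenerate block types~$K_1$ and~$K_2$, and recalling that, under the conventions of Section~\ref{sec:intro} (an edge between two terminals already counts as two \ivd paths, as formalized in Proposition~\ref{Proposition:DifferentWaysOfLookingAtMWNS}), the single-edge case is genuinely ruled out. If one prefers not to invoke that convention, one can instead use the Menger-based reformulation of the solution condition from Section~\ref{sec:intro}: removing a single non-terminal vertex from~$G-S$ cannot separate~$t_i$ from~$t_j$ in either case — in the $2$-connected case because~$B$ remains connected, and in the single-edge case because the edge~$t_i t_j$ survives any such removal — again contradicting that~$S$ is a multiway near-separator.
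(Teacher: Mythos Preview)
Your proof is correct and matches the paper's reasoning. The paper states this observation without proof, but the very argument you give (edge case via the convention that an edge between terminals counts as two \ivd paths, and $2$-connected case via Menger) appears almost verbatim inside the paper's proof of Proposition~\ref{Proposition:DifferentWaysOfLookingAtMWNS} (direction~$\ref{Condition1:DifferentWaysOfLookingAtMWNS} \Rightarrow \ref{Condition2:DifferentWaysOfLookingAtMWNS}$), so your write-up is exactly in line with the intended justification.
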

Two \mwnsshort instances~\gtk and~$(G', T', k')$ are said to be \emph{equivalent} if it holds that~\gtk is a YES-instance of \mwnsshort if and only if~$(G', T', k')$ is a YES-instance of~\mwnsshort. 
An instance~\gtk of \mwnsshort is said to be \emph{non-trivial} 
if~$\emptyset$ is not a solution of~\gtk. 
A terminal~$t \in T$ is said to be \emph{nearly-separated} in~$G$ if there does not exist another terminal~$t' \in \tminust$ such that there are 2 \ivd $t$-$t'$ paths in~$G$.

Throughout this manuscript we use~\mwns (\mwnsshort) to denote the parameterized version of the multiway near-separator problem, 
whereas given a graph~$G$ and terminal set~$T$ we use multiway near-separator (MWNS) to refer to the graph-theoretic concept of nearly-separating a terminal set~$T$ in~$G$. 
Formally, it is defined as follows.
\begin{definition}[Multiway near-separator (MWNS)]
       Given a graph~$G$ and terminal set~$T \subseteq V(G)$, a set~$S \subseteq V(G)$ is called a multiway near-separator (MWNS) of~$(G, T)$ if~$S \cap T = \emptyset$ and there does not exist a pair of distinct terminals~$t_i, t_j \in T$ such that~$G - S$ contains two \ivd $t_i$-$t_j$ paths.
\end{definition}

\begin{definition}[$r$-redundant~MWNS]
Given a graph~$G$ and terminal set~$T \subseteq V(G)$, a set~$S^* \subseteq V(G) \setminus T$ is an~$r$-redundant~MWNS of~$(G, T)$ if for all~$R \subseteq S^*$ with~$|R| \leq r$, the set~$S^* \setminus R$ is a~MWNS of~$(G, T)$.
\end{definition}

\begin{definition}[$x$-avoiding~MWNS]
    Given a graph $G$, terminal set~$T \subseteq V(G)$, and a vertex $x\in V(G)$, a set~$S_x \subseteq V(G) \setminus T$ is called an $x$-avoiding~MWNS of~$(G, T)$ if $x \notin S_x$ and $S_x$ is a MWNS of~$(G, T)$. Among all $x$-avoiding MWNS of~$(G, T)$, one with the minimum cardinality is called a minimum $x$-avoiding~MWNS of~$(G, T)$.
\end{definition}

Next, we define~$T$-cycle and give a characterization of MWNS in terms of hitting $T$-cycles.

\begin{definition}[$T$-cycle and~$T$-cycle on~$x$] 
        Given a graph~$G$ and terminal set~$T \subseteq V(G)$, a cycle~$C$ in~$G$ is called a~$T$-cycle if~$|V(C) \cap T| \geq 2$. Moreover, if~$C$ also contains a vertex~$x \in V(G)$, then~$C$ is called a~$T$-cycle on~$x$.
\end{definition}

We now show that several ways of looking at a near-separator are equivalent.
\begin{restatable}{proposition}{DifferentWaysOfLookingAtMWNSState} 
\label{Proposition:DifferentWaysOfLookingAtMWNS}
$(\bigstar)$ Given a graph~$G$, terminal set~$T \subseteq V(G)$, and a non-empty set~$S \subseteq V(G) \setminus T$, the following conditions are equivalent:
\begin{enumerate}
    \item For each pair of distinct terminals~$t_i, t_j \in T$, the graph~$G - S$ does not contain $t_i$-$t_j$ paths~$P_1, P_2$ which are pairwise \ivd. (Note that~$P_1$ may be identical to~$P_2$ if there are no internal vertices.)\label{Condition1:DifferentWaysOfLookingAtMWNS}
    
    \item For each pair of distinct terminals~$t_i,t_j \in T$, there is a vertex~$v \in V(G) \setminus T$ such that~$t_i$ and~$t_j$ belong to different \ccts of~$G-(S \cup \{v\})$.\label{Condition2:DifferentWaysOfLookingAtMWNS}
    
    \item The set~$T$ is an independent set and~$G-S$ does not contain a simple cycle~$C$ containing at least two terminals (i.e., a $T$-cycle).\label{Condition3:DifferentWaysOfLookingAtMWNS}
\end{enumerate}
\end{restatable}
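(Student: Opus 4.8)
The plan is to prove the three equivalences cyclically: $\ref{Condition1:DifferentWaysOfLookingAtMWNS} \Leftrightarrow \ref{Condition2:DifferentWaysOfLookingAtMWNS}$ via Menger's theorem, and $\ref{Condition1:DifferentWaysOfLookingAtMWNS} \Leftrightarrow \ref{Condition3:DifferentWaysOfLookingAtMWNS}$ by relating $T$-cycles to pairs of internally vertex-disjoint paths. Since $S$ is fixed throughout, I would work in the graph $G - S$ and abbreviate; note $T \subseteq V(G) \setminus S$ since $S \cap T = \emptyset$.

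\textbf{Equivalence of \ref{Condition1:DifferentWaysOfLookingAtMWNS} and \ref{Condition2:DifferentWaysOfLookingAtMWNS}.} Fix distinct terminals $t_i, t_j$. If there is no vertex $v \in V(G) \setminus T$ separating $t_i$ from $t_j$ in $G - S$, I would first argue $t_i$ and $t_j$ must lie in the same connected component of $G - S$ (otherwise the empty set, or any single non-terminal vertex of another component, trivially separates them — I should handle the degenerate cases carefully, e.g.\ if some component has no non-terminal vertex it consists of a single terminal). Within that component, the non-existence of a non-terminal cut vertex between $t_i$ and $t_j$ means that in the auxiliary graph obtained by marking all vertices of $T \setminus \{t_i,t_j\}$ as forbidden, there is still $t_i$-$t_j$ connectivity of "vertex-connectivity at least $2$" in the Menger sense — more precisely, there is no vertex of $V(G)\setminus T$ whose deletion disconnects them. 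Applying Menger's theorem (the vertex version, counting only internal vertices) to the pair $t_i, t_j$ yields two internally vertex-disjoint $t_i$-$t_j$ paths. The only subtlety is the adjacency case: if $\{t_i,t_j\} \in E(G-S)$ then the edge itself is one "path" with no internal vertices, and per the parenthetical remark in \ref{Condition1:DifferentWaysOfLookingAtMWNS} it may serve as both $P_1$ and $P_2$; this matches the fact that no single non-terminal vertex can separate adjacent vertices. Conversely, if two internally vertex-disjoint $t_i$-$t_j$ paths exist in $G-S$, then deleting any single non-terminal vertex $v$ leaves at least one of the two paths intact (as $v$ can be an internal vertex of at most one of them, and $v \neq t_i, t_j$), so $t_i, t_j$ stay connected; hence \ref{Condition2:DifferentWaysOfLookingAtMWNS} fails. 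This establishes the contrapositive in both directions.

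\textbf{Equivalence of \ref{Condition1:DifferentWaysOfLookingAtMWNS} and \ref{Condition3:DifferentWaysOfLookingAtMWNS}.} For the direction $\neg\ref{Condition3:DifferentWaysOfLookingAtMWNS} \Rightarrow \neg\ref{Condition1:DifferentWaysOfLookingAtMWNS}$: if $T$ is not independent, say $\{t_i,t_j\}\in E(G)$, then since $S \cap T = \emptyset$ this edge survives in $G-S$ and gives the "two identical paths with empty interior" configuration, violating \ref{Condition1:DifferentWaysOfLookingAtMWNS}. If instead $G-S$ has a $T$-cycle $C$ with $t_i, t_j \in V(C)$, then $C$ decomposes into two $t_i$-$t_j$ arcs that share only the endpoints, i.e.\ two internally vertex-disjoint $t_i$-$t_j$ paths, again violating \ref{Condition1:DifferentWaysOfLookingAtMWNS} (here the two arcs are genuinely distinct and one of them is nontrivial as $|V(C)| \geq 3$). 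For the direction $\neg\ref{Condition1:DifferentWaysOfLookingAtMWNS} \Rightarrow \neg\ref{Condition3:DifferentWaysOfLookingAtMWNS}$: suppose $G-S$ has internally vertex-disjoint $t_i$-$t_j$ paths $P_1, P_2$. If $P_1 = P_2$ is the single edge $t_it_j$, then $T$ is not independent and \ref{Condition3:DifferentWaysOfLookingAtMWNS} fails. Otherwise $P_1 \neq P_2$ and at least one of them, say $P_1$, has an internal vertex; since they are internally vertex-disjoint and share exactly the endpoints $t_i, t_j$, their union $P_1 \cup P_2$ is a simple cycle containing both $t_i$ and $t_j$, hence a $T$-cycle, so \ref{Condition3:DifferentWaysOfLookingAtMWNS} fails. (If both $P_1,P_2$ are the bare edge they coincide, handled already; if one is the bare edge and the other has interior, the union is still a simple cycle.)

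\textbf{Main obstacle.} The routine part is the cycle/two-paths correspondence; the step needing the most care is the Menger argument in \ref{Condition1:DifferentWaysOfLookingAtMWNS} $\Leftrightarrow$ \ref{Condition2:DifferentWaysOfLookingAtMWNS}, specifically making sure the "separator" in \ref{Condition2:DifferentWaysOfLookingAtMWNS} is required to avoid $T$ while Menger's theorem is applied in a graph where the other terminals could in principle serve as cut vertices. I would resolve this by observing that a path between $t_i$ and $t_j$ in $G-S$ that passes through another terminal $t_\ell$ can be short-circuited or handled by noting we only need \emph{some} two internally vertex-disjoint paths, and that if every non-terminal vertex fails to separate $t_i,t_j$ then in particular the vertex-connectivity between them in $G-S$ is at least $2$ unless they are adjacent, which is exactly what Menger delivers; the adjacency corner case and the "component with a single terminal" corner case are the two places to be explicit. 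All other parenthetical edge cases about trivial paths are absorbed by the convention stated in \ref{Condition1:DifferentWaysOfLookingAtMWNS}.
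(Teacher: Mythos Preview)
Your treatment of \ref{Condition1:DifferentWaysOfLookingAtMWNS} $\Leftrightarrow$ \ref{Condition3:DifferentWaysOfLookingAtMWNS} is correct and essentially the same as the paper's. The direction $\neg\ref{Condition1:DifferentWaysOfLookingAtMWNS} \Rightarrow \neg\ref{Condition2:DifferentWaysOfLookingAtMWNS}$ (two \ivd paths survive the removal of any single non-terminal) is also fine.

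The genuine gap is in $\neg\ref{Condition2:DifferentWaysOfLookingAtMWNS} \Rightarrow \neg\ref{Condition1:DifferentWaysOfLookingAtMWNS}$. You correctly spot the obstacle: Menger applied to the fixed pair $t_i,t_j$ only guarantees two \ivd paths when no \emph{vertex whatsoever} separates them, whereas $\neg\ref{Condition2:DifferentWaysOfLookingAtMWNS}$ only rules out \emph{non-terminal} separators. Your proposed resolution, ``if every non-terminal vertex fails to separate $t_i,t_j$ then the vertex-connectivity between them in $G-S$ is at least~$2$'', is false. Take two triangles on $\{t_i,a,t_\ell\}$ and $\{t_\ell,b,t_j\}$ glued at the terminal $t_\ell$, with $a,b$ non-terminals: removing $a$ or $b$ alone leaves $t_i$ and $t_j$ connected (through $t_\ell$), so no non-terminal separates them, yet $t_\ell$ is a cut vertex and every $t_i$-$t_j$ path passes through it. There is exactly one \ivd $t_i$-$t_j$ path, so for this particular pair condition~\ref{Condition1:DifferentWaysOfLookingAtMWNS} does \emph{not} fail.

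The fix is that you must be willing to change the witnessing pair. In the example above, condition~\ref{Condition1:DifferentWaysOfLookingAtMWNS} fails for $(t_i,t_\ell)$, not for $(t_i,t_j)$. In general: if some terminal $t_\ell$ is a cut vertex on the block-cut-tree path from $t_i$ to $t_j$, then $t_\ell$ lies in a block $B$ together with (say) $t_i$; either $B$ is the single edge $t_it_\ell$ (so~\ref{Condition1:DifferentWaysOfLookingAtMWNS} fails for $(t_i,t_\ell)$ via the edge convention), or $B$ is 2-connected and Menger inside $B$ yields two \ivd $t_i$-$t_\ell$ paths. The paper runs this argument in the contrapositive form $\ref{Condition1:DifferentWaysOfLookingAtMWNS}\Rightarrow\ref{Condition2:DifferentWaysOfLookingAtMWNS}$: assuming~\ref{Condition1:DifferentWaysOfLookingAtMWNS}, it first shows that no block of $G-S$ contains two terminals, and then invokes Observation~\ref{observation:Separatingcutvertex} to get a non-terminal cut vertex between any $t_i$ and $t_j$. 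Your sketch is missing exactly this block-cut-tree (or pair-switching) step.
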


Due to space constraints we defer the remaining preliminaries about graphs (including block-cut graphs and important separators) and parameterized algorithms to Appendix~\ref{sec:AdditionalPreliminaries}.

\section{\texorpdfstring{FPT algorithm for~\mwns}{FPT algorithm for MWNS}}
\label{FPT_algorithm_for_MWNS}
In this section we prove Theorem~\ref{theorem:MWNS_is_FPT} assuming Theorem~\ref{theorem:terminalbounding}, which we prove later in Section~\ref{sec:bounding_terminals}. We use the combination of bounded search trees and iterative compression \cite[\S 3--4] {CyganFKLMPPS15} to obtain the FPT algorithm. Towards this, we first present the following structural lemma for a~MWNS $S \subseteq V(G)$ of~$(G, T)$. It says that in~$G-S$, there is a terminal that can simultaneously be separated from \emph{all} other terminals by the removal of a single non-terminal~$v$.

\begin{restatable}{lemma}{EasilySeparableTerminalState}
\label{lemma:easilyseparableterminal}
$(\bigstar)$ Let~\gtk be a non-trivial instance of \mwnsshort, and let~$S \subseteq V(G) \setminus T$ be a solution. Then there exists a terminal~$t \in T$ and a non-terminal vertex~$v \in V(G) \setminus T$ such that~$S \cup \{v\}$ is a~\ttsep. \end{restatable}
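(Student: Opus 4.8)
\textbf{Proof proposal for Lemma~\ref{lemma:easilyseparableterminal}.}

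The plan is to work with a rooted block-cut forest of~$G - S$ and find an appropriate \emph{leaf} block from which to extract the terminal~$t$ and the separating vertex~$v$. Since the instance is non-trivial, $\emptyset$ is not a solution, so there are at least two distinct terminals and $G-S$ contains no $T$-cycle (by Proposition~\ref{Proposition:DifferentWaysOfLookingAtMWNS}, since $S$ is a MWNS); moreover by Observation~\ref{Observation:Block_contains_atmost_1-terminal} each block of~$G-S$ contains at most one terminal. First I would restrict attention to a connected component~$C$ of~$G-S$ that contains at least one terminal, and to the block-cut tree~$\calT$ of~$C$. If $C$ contains only one terminal~$t$, then $t$ is already separated from all other terminals in~$G-S$ (they lie in other components), so any non-terminal vertex~$v$ (which exists since the instance is non-trivial, hence $|S|\ge 1$ or there is a non-terminal to play the role of $v$ — one has to be slightly careful here and perhaps pick $v$ from a different component or argue directly that $S$ itself already separates) works; the interesting case is when~$C$ contains $\ge 2$ terminals.

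The main step: root~$\calT$ at an arbitrary block and consider a block node~$B$ of~$\calT$ that is \emph{deepest} among all block nodes whose subtree~$\calT_B$ contains a terminal. By choice of~$B$, no proper descendant block of~$B$ contains a terminal, so the only terminal in the subgraph~$G_B = C[V_C(\calF_B)]$ is the unique terminal~$t \in V(B)\cap T$ guaranteed by Observation~\ref{Observation:Block_contains_atmost_1-terminal} (if $B$ has no terminal, then $\calT_B$ contains a terminal only in a descendant, contradicting deepness — so $B$ itself contains the terminal). Now let~$v$ be the cut vertex of~$C$ that is the parent-cut-vertex connecting~$B$ to the rest of~$\calT$ (if~$B$ is the root, then $C = G_B$ contains only one terminal, handled as above). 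I claim $v \notin T$: indeed $v$ lies on a $T$-cycle-free graph and is a cut vertex; more simply, if $v \in T$ then $v \neq t$ and $G_B$ would contain two terminals $t,v$, contradicting the choice of~$B$. Then $v$ is a non-terminal, and I claim $S \cup \{v\}$ is a $(t, T\setminus\{t\})$-separator: in $C - v$, the vertex set $V_C(\calF_B)\setminus\{v\}$ is disconnected from the rest of $C$ (standard property of block-cut trees), and it contains~$t$ but no other terminal, while all terminals of $T\setminus\{t\}$ lie either in other components of $G-S$ or in $C$ outside $V_C(\calF_B)\setminus\{v\}$. Hence no $t$-$t'$ path survives in $G - (S\cup\{v\})$ for any $t' \in T\setminus\{t\}$.

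The main obstacle I anticipate is the bookkeeping around degenerate cases: when the chosen component~$C$ contains exactly one terminal, when~$B$ is the root block, and when blocks consist of a single edge or vertex — in each of these we must still exhibit a \emph{non-terminal} vertex~$v$, and since solutions to \mwnsshort must avoid terminals we need to confirm such a vertex exists. Non-triviality ($\emptyset$ is not a MWNS, so $G$ itself has a $T$-cycle and in particular $\ge 2$ terminals with connecting internal vertices) should supply enough non-terminal vertices, but one must phrase the argument so that the separating vertex~$v$ is genuinely outside~$T$; the block-cut structure handles this cleanly because the parent cut vertex of a deepest terminal-containing block cannot itself be a terminal without creating a second terminal in the subtree. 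A secondary subtlety is ensuring we separate~$t$ from \emph{all} other terminals simultaneously rather than one at a time, but this is immediate once we observe the subtree~$G_B$ contains no terminal other than~$t$, so removing the single cut vertex~$v$ isolates $t$'s side from every other terminal at once.
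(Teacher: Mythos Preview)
Your overall strategy—work in the block-cut tree of a component of~$G-S$ with at least two terminals, locate a block~$B$ deep in the tree whose pendant part contains a single terminal~$t$, and take~$v$ to be the parent cut vertex—is the same as the paper's. However, your choice of~$B$ as a \emph{deepest block whose subtree contains a terminal} leaves a genuine gap in the step ``$v\notin T$''. You assert that if~$v\in T$ then~$v\neq t$, but this is not justified and can fail: the parent cut vertex of~$B$ may well be~$t$ itself. Concretely, take~$G-S$ to be the path~$t_1\!-\!a\!-\!t_2\!-\!b\!-\!c$ with terminals~$t_1,t_2$, root the block-cut tree at the block~$\{t_1,a\}$; then the deepest block whose subtree contains a terminal is~$B=\{t_2,b\}$, the terminal in it is~$t=t_2$, and the parent cut vertex of~$B$ is~$v=t_2=t$. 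Your argument ``$G_B$ would contain two terminals~$t,v$'' breaks down precisely because~$v=t$ is a single terminal, and your closing remark that ``the parent cut vertex of a deepest terminal-containing block cannot itself be a terminal without creating a second terminal in the subtree'' is simply false in this situation.

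The paper closes this gap by a different selection rule: it picks a terminal~$t$ of \emph{maximum depth} (where the depth of a terminal is the distance from the root to the \emph{shallowest} block containing it), and then takes~$B$ to be that shallowest block containing~$t$. With this choice, if the parent cut vertex~$v$ equalled~$t$, then the parent block of~$v$ would also contain~$t$ at strictly smaller depth, contradicting the choice of~$B$; and~$v$ cannot be a different terminal since~$B$ holds at most one terminal. Your argument can be repaired along the same lines: after fixing~$t$, replace your~$B$ by the \emph{uppermost} block containing~$t$ (equivalently, if your chosen~$v$ turns out to equal~$t$, move up to the parent block and repeat). The rest of your reasoning—uniqueness of~$t$ in the pendant subgraph and the separation property of the cut vertex—is correct once~$v\neq t$ is secured.
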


Marx~\cite{Marx06}~\cite[Lemma 8.18]{CyganFKLMPPS15} introduced a pushing lemma for~\textsc{Multiway Cut} to prove that \textsc{Multiway Cut} is FPT. In the following lemma, we present a pushing lemma for~\mwnsshort. 

\begin{restatable}{lemma}{PushingLemmaForMWNSState}[Pushing lemma for {\mwnsshort}]
\label{lemma:pushinglemmaformwns}
$(\bigstar)$ Let~\gtk be a non-trivial instance of \mwnsshort and let~$S \subseteq V(G) \setminus T$ be a solution. Then there exists a terminal~$t \in T$ and a solution~$S^* \subseteq V(G) \setminus T$ with~$|S^*|\leq |S|$ for which one of the following holds:
\begin{enumerate}
    \item there is an important~\ttsep~$S_t^*$ of size at most~$k$ such that~$S_t^* \subseteq S^*$, or
    \label{condition:pushinglemmaformwns1}
    
    \item there is an important~\ttsep~$S_t$ of size at most~$(k+1)$, and there exists a vertex~$v \in S_t$ such that~$(S_t \setminus \{v\}) \subseteq S^*$.
    \label{condition:pushinglemmaformwns2}
\end{enumerate}
\end{restatable}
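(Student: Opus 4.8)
The plan is to combine Lemma~\ref{lemma:easilyseparableterminal} with a pushing argument in the spirit of Marx's original proof, adapted to tolerate the one extra vertex that near-separation allows. First I would apply Lemma~\ref{lemma:easilyseparableterminal} to the solution~$S$ to obtain a terminal~$t \in T$ and a non-terminal~$v$ such that~$S \cup \{v\}$ is a~\ttsep of size at most~$k+1$. Among all such pairs, and more importantly among all~\ttsep{}s contained in~$S \cup \{v\}$, I would take the one that is "closest to~$t$'', i.e.\ I would let~$S_t$ be an important~\ttsep dominating the minimal~\ttsep inside~$S\cup\{v\}$ on the~$t$-side; standard results on important separators (deferred to the appendix in this paper) guarantee that for any~$(t, T\setminus\{t\})$-separator~$Z$ there is an important one~$Z'$ with~$|Z'| \le |Z|$ that dominates it, meaning the set~$R_t$ of vertices reachable from~$t$ in~$G - Z'$ contains the corresponding reachable set for~$Z$. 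So~$|S_t| \le k+1$, which matches the bound in condition~\ref{condition:pushinglemmaformwns2}.

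Next I would perform the actual pushing: define~$S^* := (S \setminus R_t) \cup S_t$ where~$R_t$ is the set of vertices reachable from~$t$ in~$G - S_t$ (equivalently, I replace the part of~$S$ lying strictly on the $t$-side, together with whatever separator~$S$ used near~$t$, by the important separator~$S_t$). The key size inequality~$|S^*| \le |S|$ follows from the standard exchange argument: because~$S_t$ dominates the minimal separator~$S'_t \subseteq S \cup \{v\}$ that~$S$ (plus possibly~$v$) induces between~$t$ and~$T\setminus\{t\}$, every vertex of~$S_t$ is "paid for'' by a vertex of~$(S\cup\{v\})$ that is being removed, and the reachable region only grows, so no vertex outside gets double-counted. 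A little care is needed here because of the single extra vertex~$v$: if the minimal separator actually needs~$v$, then~$|S_t|$ could be as large as~$k+1$ while we only remove~$k$ vertices of~$S$ from the~$t$-side region, which is exactly why condition~\ref{condition:pushinglemmaformwns2} is stated with~$|S_t| \le k+1$ and asks only that~$S_t$ minus \emph{one} vertex be contained in~$S^*$; conversely, if the minimal separator between~$t$ and the rest already has size at most~$k$ without needing~$v$, we land in condition~\ref{condition:pushinglemmaformwns1} with~$S_t^* \subseteq S^*$ entirely. So the case split in the lemma corresponds precisely to whether the "cheap'' separator near~$t$ has size~$\le k$ or needs the full budget~$k+1$.

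The remaining — and I expect most delicate — step is to verify that~$S^*$ is still a valid solution, i.e.\ a~MWNS of~$(G,T)$. By Proposition~\ref{Proposition:DifferentWaysOfLookingAtMWNS} it suffices to show~$G - S^*$ has no $T$-cycle. Suppose for contradiction there is a simple cycle~$C$ through two terminals~$t_a, t_b$ in~$G - S^*$. Since~$S_t \subseteq S^*$ is a $(t, T\setminus\{t\})$-separator, no vertex of~$C$ can be~$t$ unless both "other'' terminals on~$C$ lie in~$R_t$, but~$R_t$ contains no terminal besides~$t$ (a terminal in~$R_t$ would contradict~$S_t$ separating~$t$ from~$T \setminus \{t\}$, unless it equals~$t$) — so~$C$ avoids~$t$. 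Then every vertex of~$C$ lies in~$V(G) \setminus (R_t \cup S_t)$, which is exactly the region where~$S^*$ coincides with~$S$ (since~$S^* \setminus S_t = S \setminus R_t$ and the region outside~$R_t \cup S_t$ is untouched by the swap). Here the standard argument for important separators is used: a cycle living outside~$R_t \cup S_t$ and avoiding~$S^*$ would also avoid~$S$ — one must check that~$C$ cannot "dip into'' the old region and come back, but since~$S_t$ fully surrounds~$R_t$ and~$C$ is disjoint from~$S_t \subseteq S^*$, the cycle~$C$ is confined to~$G - (R_t \cup S_t) \subseteq G - S$ in the relevant sense, contradicting that~$S$ was a solution. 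Making this confinement argument fully rigorous — in particular handling the possibility that~$C$ uses vertices of~$S_t$'s "boundary'' that were in~$S$ but got pushed out, i.e.\ vertices of~$R_t \cap S$ — is the technical heart of the proof, and is where the exact definition of domination for important separators and Observation~\ref{Observation:Block_contains_atmost_1-terminal} will do the work.
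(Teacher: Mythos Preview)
Your overall plan (invoke Lemma~\ref{lemma:easilyseparableterminal} to obtain~$t,v$ with~$S\cup\{v\}$ a \ttsep, then push to an important separator~$S_t$) is the right one, but there is a genuine gap in the size bound for your~$S^*$. With your definition~$S^* := (S \setminus R_t) \cup S_t$, one has~$|S^*| = |S| + |S_t| - |S \cap (R_t \cup S_t)|$ (using~$R_t\cap S_t=\emptyset$). Since~$N(R)\subseteq R_t\cup S_t$ and~$N(R)\setminus\{v\}\subseteq S$, this yields only~$|S\cap(R_t\cup S_t)|\ge |N(R)|-1\ge |S_t|-1$, hence~$|S^*|\le |S|+1$. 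In the critical case where~$S$ itself is not a \ttsep and~$v\in N(R)$, equality can occur: e.g.\ when~$N(R)=S\cup\{v\}$ is already important, so~$S_t=N(R)$, $R_t=R$, and~$S\cap R_t=\emptyset$, giving~$|S^*|=|S|+1$. You notice the off-by-one but then say this is ``exactly why condition~\ref{condition:pushinglemmaformwns2} asks only that~$S_t$ minus one vertex be contained in~$S^*$''; however, the constraint being violated is~$|S^*|\le|S|$, not the containment~$S_t\setminus\{v\}\subseteq S^*$ (which holds trivially since your~$S^*$ contains all of~$S_t$).

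Dropping an arbitrary vertex of~$S_t$ from~$S^*$ does not fix this, because your confinement argument for correctness (``$C$ lies entirely outside~$R_t\cup S_t$'') genuinely needs~$S_t\subseteq S^*$; once a vertex of~$S_t$ is omitted, a~$T$-cycle in~$G-S^*$ can enter~$R_t$ through that vertex. What the paper does instead is to drop a \emph{specific} vertex~$v'\in S_t$, namely one lying on a~$t$--$t'$ path in~$G-S$ (such a path exists precisely because~$S$ alone fails to separate~$t$), and set~$S^*:=(S\setminus N(R))\cup(S_t\setminus\{v'\})$. The correctness proof is then a rerouting argument rather than a confinement one: given two \ivd~$t_i$--$t_j$ paths in~$G-S^*$, one of them must hit~$N(R)\setminus\{v\}$ at some~$w$ (else both avoid~$S$), and of the two subpaths from~$w$ at most one uses~$v'$; concatenating the other subpath with a~$t$--$w$ path through~$R$ produces a~$t$--terminal path in~$G-(S^*\cup\{v'\})$, contradicting that~$S_t\subseteq S^*\cup\{v'\}$ is a \ttsep. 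This choice of~$v'$ and the accompanying rerouting step are the missing ingredients; Observation~\ref{Observation:Block_contains_atmost_1-terminal} plays no role here.
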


The following lemma forms the heart of the FPT algorithm (Theorem~\ref{theorem:MWNS_is_FPT}). It says that there exists an FPT algorithm that can compress a~$k+1$-sized MWNS of~$(G, T)$ to a~$k$-sized MWNS if~\gtk is a YES-instance of \mwnsshort. This is effectively the compression step of the iterative compression technique.  
\begin{restatable}{lemma}{CompressionStepForMWNSState}
\label{lemma:Compression_Step_for_MWNS}
    $(\bigstar)$ There is an algorithm that, given an instance~\gtk of \mwnsshort and a set~$S_{k+1} \subseteq V(G) \setminus T$ of size~$k+1$ such that~$S_{k+1}$ is a MWNS of~$(G, T)$, runs in time~$2^{\Oh(k \log k)} \cdot n^{\Oh(1)}$ and outputs a solution of~\gtk (of size~$k$) if it exists.
\end{restatable}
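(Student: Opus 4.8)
The plan is to combine the terminal-bounding routine of Theorem~\ref{theorem:terminalbounding} with a bounded-search-tree algorithm guided by the pushing lemma (Lemma~\ref{lemma:pushinglemmaformwns}). First I would invoke the algorithm of Theorem~\ref{theorem:terminalbounding} on the instance~\gtk using the given near-separator~$\hat S := S_{k+1}$; since~$|\hat S| = k+1$, this produces in polynomial time an equivalent instance~$(G_1, T_1, k_1)$ with~$k_1 \le k$ and~$|T_1| = \Oh(k^5(k+1)^4) = k^{\Oh(1)}$, together with a polynomial-time routine lifting any solution of~$(G_1, T_1, k_1)$ to one of~\gtk. It therefore suffices to solve~$(G_1,T_1,k_1)$, and from now on the number of terminals is polynomial in~$k$; throughout, write~$p := |T_1| = k^{\Oh(1)}$, a quantity that never increases during the algorithm.

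To keep the recursion well-founded I would use a polynomial-time preprocessing rule: if the current instance contains a terminal~$t$ for which some single (necessarily non-terminal) vertex~$v$ is a~\ttsep --- equivalently, the minimum~$(t, T\setminus\{t\})$-vertex cut has size at most one, which a flow computation decides --- then delete~$t$ from both the graph and the terminal set. This is sound: the hypothesis forces~$t$ to have no terminal neighbour and forbids any $T$-cycle through~$t$ (deleting~$v$ from such a cycle would still leave a $t$-to-$(T\setminus\{t\})$ path), so by Proposition~\ref{Proposition:DifferentWaysOfLookingAtMWNS} the family of multiway near-separators is unchanged by dropping~$t$; moreover every near-separator avoids~$T$ and hence avoids the deleted~$t$, so a solution of the smaller instance is verbatim a solution of the original. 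Applying this rule exhaustively (at most~$|T|$ rounds) we reach an instance in which every~\ttsep of every remaining terminal~$t$ has size at least two; in particular every \emph{important} one does.

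The recursive procedure~$\textsc{Solve}(G',T',k')$ then works as follows. First apply the preprocessing rule. Return failure if~$k' < 0$, and return~$\emptyset$ if~$\emptyset$ is a multiway near-separator of~$(G',T')$ (which is polynomial-time checkable); otherwise the instance is non-trivial, and we branch. For every terminal~$t \in T'$, every important~$(t, T'\setminus\{t\})$-separator~$Z$ of size at most~$k'+1$ (all enumerable in time~$4^{k'+1}\cdot n^{\Oh(1)}$), and every~$v \in Z \cup \{\bot\}$ (with the convention~$Z \setminus\{\bot\} := Z$), recursively call~$\textsc{Solve}(G' - (Z\setminus\{v\}),\, T',\, k' - |Z\setminus\{v\}|)$; if some call returns a set~$R$, return~$R \cup (Z\setminus\{v\})$. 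Correctness is by induction on~$k'$: if the (post-preprocessing) instance is a non-trivial YES-instance with solution~$S^*$, Lemma~\ref{lemma:pushinglemmaformwns} supplies a terminal~$t$ and a solution of size~$\le k'$ that, by its two cases, can be written as~$R^* \cup (Z\setminus\{v\})$ with~$Z$ an important~$(t,T'\setminus\{t\})$-separator of size~$\le k'+1$, where case~\ref{condition:pushinglemmaformwns1} uses~$v = \bot$ and~$Z$ of size~$\le k'$, and case~\ref{condition:pushinglemmaformwns2} uses~$v \in Z$; since the preprocessing guarantees~$|Z| \ge 2$ we get~$|Z\setminus\{v\}| \ge 1$, so~$R^*$ is a near-separator of the recursive instance of size~$\le k' - |Z\setminus\{v\}| < k'$, and the induction hypothesis makes the matching branch succeed. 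In the other direction, any set the algorithm outputs is trivially a near-separator of size~$\le k'$, so no wrong answer is ever produced; composing all the recursive lifts and finally the lifting map of Theorem~\ref{theorem:terminalbounding} returns a solution of~\gtk of size~$\le k$ whenever one exists.

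For the running time, a child call whose budget decreases by exactly~$\delta \ge 1$ arises either from an important separator of size~$\delta$ with~$v=\bot$, or from one of size~$\delta+1$ with some~$v\in Z$; as a fixed terminal admits at most~$4^{\delta}$ (respectively~$4^{\delta+1}$) important separators of those sizes, a node with budget decrease~$\delta$ has at most~$b_\delta := p\cdot 4^{\delta+1}(\delta+2)$ children. The recurrence~$L(k') \le \sum_{\delta\ge 1} b_\delta\, L(k'-\delta)$ with~$L(0)=1$, in which the~$4^{\delta}$ factor is amortised against the~$\delta$-fold budget drop, solves to~$L(k') \le (Cp)^{k'}$ for an absolute constant~$C$; hence the search tree has at most~$(Cp)^{k} = k^{\Oh(k)} = 2^{\Oh(k\log k)}$ leaves, depth at most~$k$, and~$2^{\Oh(k)}\cdot n^{\Oh(1)}$ work per node (dominated by enumerating important separators), giving total time~$2^{\Oh(k\log k)}\cdot n^{\Oh(1)}$. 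I expect the main obstacle to be precisely the well-foundedness of the recursion: case~\ref{condition:pushinglemmaformwns2} of the pushing lemma with a size-one important separator makes no progress, so the equivalence-preserving preprocessing rule (and the verification that deleting such a terminal changes neither the $T$-cycles nor the near-separator family) is essential; a secondary technical point is the amortised branching analysis, which is what prevents the per-level~$4^{\Oh(k)}$ branching from compounding to~$2^{\Oh(k^2)}$.
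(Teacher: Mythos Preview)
Your proposal is correct and follows essentially the same approach as the paper: apply Theorem~\ref{theorem:terminalbounding} once to bound~$|T|$ by~$k^{\Oh(1)}$, then run a bounded-depth search tree that, for every terminal, every important \ttsep of size at most~$k'+1$, and every choice of a vertex to omit, recurses with the pushing lemma guaranteeing success on some branch; the running-time recurrence and its~$2^{\Oh(k\log k)}$ solution match the paper's Claim~\ref{claim:ABoundOnNumberOfLeavesInBranchingTree}. The only noteworthy difference is how well-foundedness is ensured: the paper simply restricts branching to terminals that are not nearly-separated (which forces every important separator considered to have size~$\geq 2$), whereas you introduce an explicit preprocessing rule that deletes any terminal with a \ttsep of size at most one from both~$T$ and~$G$ --- your soundness argument for this (no $T$-cycle passes through such a terminal, so the family of near-separators is unchanged) is valid and achieves the same effect.
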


Given Lemma~\ref{lemma:Compression_Step_for_MWNS}, the proof of Theorem~\ref{theorem:MWNS_is_FPT} follows by applying the standard technique of iterative compression. \rearng{We defer the formal proof to Appendix~\ref{section:ProofOfThm:MWNS_Is_FPT}.}\sk{The formal proof can be found in the full version.}

\section{Constructing a near-separator avoiding a specified vertex}
\label{section:BlockerConstruction}
In this section we prove Theorem~\ref{theorem:blocker}. Throughout the algorithm, we use the perspective provided by Proposition~\ref{Proposition:DifferentWaysOfLookingAtMWNS} that a MWNS is a set intersecting all~$T$-cycles. Note that since~$\{x\}$ is a MWNS for~$(G,T)$, the set~$T$ must be an independent set. Before presenting the algorithm, we define some notations which we will use during the algorithm.

\begin{definition}[$\mathcal{C}(v)$ and~$\mathcal{C}_{\geq 1} (v)$]
\label{Definition:GrandchildrenAndInterestingGrandchildren}
    Given a graph~$G$, terminal set~$T \subseteq V(G)$, and a MWNS~$\{x\} \subseteq V(G)$ of~$(G, T)$, let~$\calF$ be a rooted block-cut forest of~$G - \{x\}$. Let~$v \in V(\calF)$ be a cutvertex. Then we use~$\mathcal{C}(v)$ to denote all the grandchildren (cutvertices) of~$v$ in the subtree~$\calF_v$, i.e., $\mathcal{C}(v) := \bigcup_{y \in \text{child}_\calF(v)}  \text{child}_\calF (y)$. If~$v$ does not have a grandchild then~$\mathcal{C} (v) := \emptyset$.
    We use~$\mathcal{C}_{\geq 1} (v) \subseteq \mathcal{C} (v)$ to denote the cutvertices of~$\mathcal{C} (v)$ such that for each vertex~$c \in \mathcal{C}_{\geq 1} (v)$, the graph~$G_c = G[V_G (\calF_c)]$ contains a vertex~$p \in N_G(x)$ such that there is a~$c$-$p$ path~$P$ in~$G_c$ which contains at least one terminal, i.e.,~$|V(P) \cap T| \geq 1$.
\end{definition}

During the construction of an approximate $x$-avoiding MWNS, we will often make use of Definition~\ref{Definition:GrandchildrenAndInterestingGrandchildren} to keep track of which cutvertices have a pending subgraph attached that can reach a neighbor of~$x$ by a simple path containing a terminal. Such subpaths can be combined to form $T$-cycles. We often use the fact that, in an undirected graph~$G$, it is possible to test in polynomial time whether there is a simple~$p$-$q$ path through a specified vertex~$t$; for example, by constructing a vertex-capacitated flow network in which~$t$ has a capacity of~$2$ and all other vertices a capacity of~$1$, and testing for a flow from~$\{p,q\}$ to~$\{t\}$.

Next, we prove some properties about the sets~$\mathcal{C} (t)$ and~$\mathcal{C}_{\geq 1} (t)$ defined above. We need these properties during the analysis phase (Section~\ref{section:AnalysisOfBlockerAlgo}) of the blocker algorithm.
\begin{restatable}{proposition}{GrandchildCannotBeATerminalState}
\label{Proposition:Grandchild_of_a_Terminal_can_NOT_be_a_Terminal}
    Given a graph~$G$, terminal set~$T \subseteq V(G)$, and a MWNS~$\{x\} \subseteq V(G)$ of~$(G, T)$, let $\calF$ be a rooted block-cut forest of~$G-\{x\}$. Let~$t \in T$ be a cutvertex of~$\calF$ and let~$\calC (t)$ be the set of grandchildren of~$t$ in the subtree~$\calF_t$ as defined in Definition~\ref{Definition:GrandchildrenAndInterestingGrandchildren}. Then we have~$\mathcal{C} (t) \cap T = \emptyset$. 
\end{restatable}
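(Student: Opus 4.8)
The plan is to argue by contradiction and reduce the statement to Observation~\ref{Observation:Block_contains_atmost_1-terminal}, which says that every block of~$G - S$ contains at most one terminal whenever~$S$ is a MWNS of~$(G,T)$. Here we apply it with~$S = \{x\}$, which is legitimate since~$\{x\}$ is a MWNS of~$(G,T)$ by hypothesis. So suppose, for contradiction, that there exists a vertex~$c \in \calC(t) \cap T$.

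\textbf{Key step: locating a common block.} By Definition~\ref{Definition:GrandchildrenAndInterestingGrandchildren}, since~$c \in \calC(t)$ there is a block node~$B \in \mathrm{child}_\calF(t)$ with~$c \in \mathrm{child}_\calF(B)$; in particular~$c$ is a cutvertex of~$\calF$ at distance two from~$t$, so~$c \neq t$. Using the defining property of a block-cut forest---namely that a cutvertex node and a block node are adjacent in~$\calF$ exactly when that cutvertex belongs to that block---the edge~$\{t,B\}$ of~$\calF$ gives~$t \in V(B)$, and the edge~$\{B,c\}$ of~$\calF$ gives~$c \in V(B)$. Hence~$B$ is a block of~$G - \{x\}$ containing the two distinct terminals~$t$ and~$c$.

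\textbf{Conclusion.} This contradicts Observation~\ref{Observation:Block_contains_atmost_1-terminal} applied to the MWNS~$\{x\}$ of~$(G,T)$, which forbids any block of~$G - \{x\}$ from containing two terminals. Therefore no such~$c$ exists and~$\calC(t) \cap T = \emptyset$. The only point that requires care---and the closest thing to an ``obstacle''---is making the block-cut adjacency convention explicit, so that a grandchild cutvertex of~$t$ is correctly identified as a vertex sharing a block with~$t$ in~$G - \{x\}$; once that is stated, the claim follows immediately from the cited observation.
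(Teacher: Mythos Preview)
Your proof is correct and follows essentially the same approach as the paper's own proof: both argue by contradiction, identify the intermediate block~$B$ between~$t$ and the offending grandchild, use the block-cut adjacency convention to place both terminals in~$V(B)$, and finish by invoking Observation~\ref{Observation:Block_contains_atmost_1-terminal} with~$S=\{x\}$. The only cosmetic difference is that the paper names the grandchild~$t'$ and defines~$B$ as~$\mathrm{parent}_\calF(t')$, whereas you name it~$c$ and obtain~$B$ as a child of~$t$; these are the same block.
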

\begin{claimproof}
    Assume for a contradiction that there exists a vertex~$t' \in \mathcal{C} (t) \cap T$. First, note that~$t' \neq t$, as a cutvertex is present exactly once in a block-cut forest. Thus, we have~$t' \in \tminust$. 
    Let~$B := \text{parent}_{\calF} (t')$. Since~$\{t', B\} \in E (\calF)$, we have~$t' \in V(B)$ by definition of block-cut forest.
    Moreover, as~$B$ is the parent of~$t'$ and $t'$ is a grandchild of~$t$, we have~$\{t, B\} \in E(\calF)$ and hence~$t \in V(B)$. Note that~$B$ is a block in~$G-\{x\}$ which contains two distinct terminals~$t, t'$, a contradiction to Observation~\ref{Observation:Block_contains_atmost_1-terminal}.
\end{claimproof}

\begin{restatable}{proposition}{NumberOfInterestingCutVerticesAreBoundedState}
\label{Proposition:NumberOfInterestingCutVerticesAreBounded}
   Given a graph~$G$, terminal set~$T \subseteq V(G)$, and a MWNS~$\{x\} \subseteq V(G)$ of~$(G, T)$, let $\calF$ be a rooted block-cut forest of~$G-\{x\}$.
   Let~$t \in T$ be a cutvertex of~$\calF$ and let~$\mathcal{C}_{\geq 1} (t)$ be the subset of grandchildren of~$t$ in~$\calF_t$ defined in Definition~\ref{Definition:GrandchildrenAndInterestingGrandchildren}.
    Let~$B$ be a node of~$\calF_t$ such that the graph~$G[V_G(\calF_B) \cup \{x\}]$ does not contain a~$T$-cycle.
    Then the number of cutvertices below~$B$ in~$\calF_B$ which also belong to the set~$\mathcal{C}_{\geq 1} (t)$ is at most one, i.e., we have~$|V_G (\calF_B) \cap {C}_{\geq 1} (t)| \leq 1$.
\end{restatable}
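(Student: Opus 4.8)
The plan is to argue by contradiction. Suppose there are two distinct cutvertices $c_1, c_2 \in V_G(\calF_B) \cap \mathcal{C}_{\geq 1}(t)$. Since both lie in $\mathcal{C}_{\geq 1}(t)$, they are grandchildren of $t$ in $\calF_t$, so each $c_i$ has a parent block $B_i$ which is a child of $t$ in $\calF$; in particular $t \in V(B_i)$ and $c_i \in V(B_i)$. Because $c_1 \neq c_2$ and both are grandchildren of the \emph{same} vertex $t$, either their parent blocks $B_1, B_2$ are distinct, or $B_1 = B_2$ is a single block containing $t, c_1, c_2$. In the latter case $B := B_1 = B_2$ is 2-connected (a block of size $\geq 3$), so it contains two internally vertex-disjoint paths between $t$ and, say, $c_1$; combined with a path inside $B$ from $t$ to $c_2$ one gets the reach-structure needed below. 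I will handle both cases but the "distinct parent blocks" case is the main one, so let me describe that.

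**The core construction.** By definition of $\mathcal{C}_{\geq 1}(t)$, for each $i \in \{1,2\}$ there is a vertex $p_i \in N_G(x) \cap V_G(\calF_{c_i})$ and a simple $c_i$-$p_i$ path $P_i$ inside $G_{c_i} = G[V_G(\calF_{c_i})]$ with $|V(P_i) \cap T| \geq 1$. Now I want to stitch these together through $t$ and through $x$ to build a $T$-cycle living inside $G[V_G(\calF_B) \cup \{x\}]$. First observe that $c_1, c_2 \in V_G(\calF_B)$ and $c_i$ lies in the subtree below $B$ in $\calF_t$, so its parent block $B_i$ and hence the whole subtree $\calF_{c_i}$ also lie within $\calF_B$ (using that $\calF$ is a tree and $B$ is an ancestor of each $c_i$ — here I need that $B$ is an ancestor of both $c_1$ and $c_2$ in $\calF_t$, which follows from $c_i \in V_G(\calF_B)$ together with the fact that blocks $B_i$ and subtree vertices are nested). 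Since $t$ is a cutvertex that is an ancestor of $B$ (it is the root of $\calF_t$ and $B \in \calF_t$), and $t \in V(B_1) \cap V(B_2)$ with $B_1, B_2$ on the path from $t$ down toward the $c_i$'s, there is a simple $c_1$-$c_2$ path $Q$ in $G - \{x\}$ passing through $t$, entirely inside $G[V_G(\calF_B)]$, and meeting each $P_i$ only at $c_i$ (because $P_i$ lives in the pendant subgraph $G_{c_i}$ whose only attachment to the rest of $\calF_B$ is the cutvertex $c_i$). Concatenating $P_1$ (reversed), $Q$, $P_2$, and then the edges $p_2 x$ and $x p_1$ yields a cycle $C$; it is simple because the only possible repeated vertices across the pieces are the cutvertices $c_1, c_2$ at which we glue, plus $x$ which is new, so we must only check that $P_1$ and $P_2$ are internally disjoint — they are, since they live in $G_{c_1}$ and $G_{c_2}$ which intersect (in $G-\{x\}$) only if one $c_i$ lies in the other's subtree, and distinct grandchildren of $t$ with the nesting as above have disjoint pendant subtrees except possibly sharing the endpoint vertices, which I can arrange are exactly $c_1,c_2$.

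**Finishing.** The resulting cycle $C$ contains at least the terminal(s) on $P_1$ and on $P_2$, and certainly $|V(C)\cap T| \geq 2$ once I confirm the terminals contributed by $P_1$ and $P_2$ are distinct — they are, since $P_i \subseteq G_{c_i}$ and the sets $V_G(\calF_{c_1}) \setminus \{c_1\}$, $V_G(\calF_{c_2})\setminus\{c_2\}$ are disjoint, while a terminal equal to $c_1$ or $c_2$ is ruled out by Proposition~\ref{Proposition:Grandchild_of_a_Terminal_can_NOT_be_a_Terminal}. Hence $C$ is a $T$-cycle, and $V(C) \subseteq V_G(\calF_B) \cup \{x\}$, contradicting the hypothesis that $G[V_G(\calF_B)\cup\{x\}]$ contains no $T$-cycle. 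The degenerate case $B_1 = B_2$ is similar: a block $B_1$ containing $t, c_1, c_2$ with $|V(B_1)| \geq 3$ is 2-connected, so it contains a simple $c_1$-$c_2$ path avoiding... wait, we actually want one \emph{through} $t$ is not needed here — any simple $c_1$-$c_2$ path inside $B_1$ suffices as the connector $Q$, and the rest goes through verbatim.

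**Main obstacle.** The delicate point is the disjointness bookkeeping: making sure $P_1, P_2, Q$ pairwise intersect only in the intended gluing vertices $\{c_1, c_2\}$, and that $x \notin V(P_i) \cup V(Q)$ (true since these all live in $G - \{x\}$). This rests entirely on the structural fact that in a block-cut forest the pendant subgraph $G_c$ below a cutvertex $c$ attaches to the rest of the graph only at $c$; I would state and use this as a small observation (it is essentially Observation~\ref{observation:property_of_blocks} iterated, or immediate from the definition of $V_G(\calF_c)$). Once that separation property is in hand, the concatenation and the $T$-cycle count are routine, and the contradiction with the no-$T$-cycle hypothesis closes the argument.
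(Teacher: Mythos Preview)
Your proposal is correct and follows essentially the same route as the paper: assume two distinct $c_1,c_2 \in V_G(\calF_B)\cap \mathcal{C}_{\ge 1}(t)$, take the witnessing $c_i$--$p_i$ paths $P_i$ inside $G_{c_i}$ (each containing a terminal), connect $c_1$ to $c_2$ by a path inside the relevant block(s), and close through $x$ via the edges $p_ix$ to obtain a $T$-cycle in $G[V_G(\calF_B)\cup\{x\}]$, contradicting the hypothesis. The paper organises the case split slightly differently---it first disposes of the trivial situation where $B$ lies at depth~$\ge 2$ below $t$ (then $|V_G(\calF_B)\cap \mathcal{C}_{\ge 1}(t)|\le 1$ automatically), and then splits on whether $B\in\mathrm{child}_\calF(t)$ or $B=t$, whereas you split on whether the parent blocks $B_1,B_2$ coincide---but the two decompositions are equivalent, since having two distinct grandchildren in $V_G(\calF_B)$ already forces $B\in\{t,B_1{=}B_2\}$. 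One small point worth tightening in a write-up: your sentence ``$c_i$ lies in the subtree below $B$ \ldots\ so its parent block $B_i$ and hence the whole subtree $\calF_{c_i}$ also lie within $\calF_B$'' is exactly the step that needs the preceding observation about where $B$ can sit; once you note $B\in\{t,B_1\}$ it is immediate.
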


\begin{claimproof}
    First of all, note that if~$B$ belongs to~$\calC (t)$ (see Definition~\ref{Definition:GrandchildrenAndInterestingGrandchildren} for the definition of~$\calC (t)$) or below in the subtree~$\calF_t$ then the claim trivially holds, because in that case we have~$|V_G(\calF_B) \cap {C}_{\geq 1} (t)| \leq 1$. Hence consider the case when either~$B \in \mbox{child}_\calF (t)$ or~$B=t$, and assume for a contradiction that $|V_G(\calF_B) \cap {C}_{\geq 1} (t)| \geq 2$. Let~$c_1, c_2$ be two distinct cutvertices in~$V_G(\calF_B) \cap {C}_{\geq 1} (t)$.
    By definition of the set~$\mathcal{C}_{\geq 1} (t)$ and the fact that~$c_1, c_2 \in \mathcal{C}_{\geq 1} (t)$, we know that for each~$i \in [2]$, the graph~$G_{c_i}$ contains a vertex~$p_i \in N_G(x)$ such that there is a~$c_i$-$p_i$ path~$P_i$ in~$G_{c_i}$ containing a terminal~$t_i$. Moreover, since~$c_1 \neq c_2$, the paths~$P_1$ and~$P_2$ are vertex disjoint. Next, we do a case distinction based on whether~$B \in \mbox{child}_\calF (t)$ or~$B=t$.
    
    \subparagraph*{Case 1. When~$B \in \mbox{child}_\calF (t)$.}
    
            Since~$B \in \mbox{child}_\calF (t)$ and by definition (of~$\calC_{\geq 1} (t)$)~$c_1, c_2$ are grandchildren of~$t$, we have~$c_1, c_2 \in \text{child}_\calF (B)$. Hence, we have~$c_1, c_2 \in V_G(B)$. Next, we construct a cycle~$C$ in~$G[V_G(\calF_B) \cup \{x\}]$ as follows.
            Let~$C := \{x, p_1\} \cdot P_1 [p_1, c_1] \cdot R_{12} [c_1, c_2] \cdot P_2 [c_2, p_2] \cdot \{p_2, x\}$, where~$R_{12}$ is a path between cutvertices~$c_1, c_2 \in V_G (B)$ inside the block~$B$.
            Note that the cycle~$C$ in~$G[V_G(\calF_B) \cup \{x\}]$ is simple and contains two distinct terminals~$t_1, t_2 \in T$, a contradiction to the fact that there is no~$T$-cycle in~$G[V_G (\calF_B) \cup \{x\}]$.
            
    \subparagraph*{Case 2. When~$B=t$.}
    
            For~$i \in [2]$, let~$B_i$ be the parent of~$c_i$. Note that if~$B_1 = B_2$ then similarly to Case 1, we can obtain a~$T$-cycle in~$G[V_G (\calF_{B_1}) \cup \{x\}]$, which is also a~$T$-cycle in the supergraph~$G[V_G (\calF_B) \cup \{x\}]$, again a contradiction to the fact that there is no~$T$-cycle in~$G[V_G (\calF_B) \cup \{x\}]$.
            Hence assume that~$B_1 \neq B_2$. Next, we show that even in this case we can obtain a~$T$-cycle $C$ in $G[V_G (\calF_B) \cup \{x\}]$, yielding a contradiction.
            Indeed, we can use~$C := \{x, p_1\} \cdot P_1[p_1, c_1] \cdot R_1[c_1, B] \cdot R_2[B, c_2] \cdot P_2[c_2, p_2] \cdot \{p_2, x\}$, where for~$i \in [2]$, the path~$R_i$ is a path between vertices~$c_i, B \in V_G(B_i)$ inside the block~$B_i$.
            
            Since both the above cases lead to a contradiction, this concludes the proof of Proposition~\ref{Proposition:NumberOfInterestingCutVerticesAreBounded}.
\end{claimproof}

\subsection{Algorithm}
\label{subsection:BlockerAlgo}
In this section we present a recursive algorithm~$\mbox{Blocker}(G, T, x)$ to construct a set~$S_x \subseteq V(G) \setminus (T \cup \{x\})$ such that~$S_x$ is a \mwnsshort of~$(G, T)$ and~$|S_x| \leq 14$\optx$(G, T)$, where~\optx$(G, T)$ is the cardinality of a minimum $x$-avoiding \mwnsshort of~$(G, T)$. The algorithm effectively takes a graph~$G$, terminal set~$T$, a vertex~$x \in V(G) \setminus T$ (such that~$\{x\}$ is a MWNS of~$(G, T)$) as input, and computes a vertex set~$Z \subseteq V(G) \setminus (T \cup \{x\})$ to hit certain types of~$T$-cycles in~$G$. It combines~$Z$ with the result of recursively computing a solution for~$\mbox{Blocker}(G-Z, T, x)$. For ease of understanding, we present the algorithm step by step with interleaved comments in italic font, whenever required.

\begin{enumerate}

    \item If the graph~$G$ does not contain a~$T$-cycle on~$x$ then return~$Z = \emptyset$.
    
    \item Construct a rooted block-cut forest~$\mathcal{F}$ of~$G-\{x\}$.
    
    \item Choose a deepest node~$d$ in the block cut forest~$\mathcal{F}$ such that the graph~$G[V_G(\mathcal{F}_d) \cup \{x\}]$ contains a~$T$-cycle.\label{Algo:A:Step:ChooseADeepestNode}

    \textit{Note that such a vertex~$d$ exists as there is a~$T$-cycle on $x$ in the graph~$G$ while a simple cycle in~$G$ visits vertices from at most one tree~$\calT \in \calF$.}
        
    \item 
    Consider the following cases.\label{Algo:A:Step:CaseDistinction}
    
    \begin{enumerate}
        \item \textbf{If~$d$ is a cutvertex and~$d \notin T$.} Let~$Z := \{d\}$. Then return~$(Z \cup \mbox{Blocker}(G-Z, T, x))$.
    
    \textit{In this case, it is easy to observe that the set~$Z = \{d\}$ is a MWNS of~$(G[V_G (\calF_d) \cup \{x\}], T)$ because~$d$ is a deepest node satisfying the conditions of Step~{\ref{Algo:A:Step:ChooseADeepestNode}}}.
    
        \item \textbf{If~$d$ is a cutvertex and~$d \in T$.} Let~$\mathcal{C}_{\geq 1} (d)$ be the subset of grandchildren of~$d$ defined using Definition~\ref{Definition:GrandchildrenAndInterestingGrandchildren}. \label{Algo:A:Step:ConstructionOfZWhendIsATerminalcutvertex}
                Let~$Z := \mathcal{C}_{\geq 1} (d)$ and return~$(Z \cup \mbox{Blocker}(G-Z, T, x))$.

    \textit{The set~$Z$ is a MWNS of~$(G[V_G (\calF_d) \cup \{x\}], T)$ by our choice of~$d$, definition of the set~$\mathcal{C}_{\geq 1} (d)$, the fact that a $T$-cycle contains at least 2 terminals, and for each block~$B \in \mbox{child}_\calF (d)$ we have $V(B) \cap (T \setminus \{d\}) = \emptyset$ due to Observation~\ref{Observation:Block_contains_atmost_1-terminal}.} 

        \item \textbf{If~$d$ is a block.}\label{Algo:A:Step:ConstructionOfZWhendIsABlock}
            \begin{itemize}
                \item Let~$D_T := d-T$, i.e., $D_T := G[V(d) \setminus T]$. Note that the block~$d$ of~$G-\{x\}$ contains at most 1 terminal due to Observation~\ref{Observation:Block_contains_atmost_1-terminal}. In the case when~$V(d) \cap T = \emptyset$, we have~$D_T = d  = G[V(d)]$.
                Let~$\calC^d := \text{child}_\calF (d) \setminus T$ and partition~$\calC^d$ as follows.
                    \begin{itemize}
                        \item Let~$\calC^d _{\geq 2} \subseteq \calC^d$ be the set such that for each (cut)vertex~$c \in \calC^d _{\geq 2}$ the graph~$G_c := G[V_G (\calF_c)]$ contains a vertex~$p \in N_G(x)$ such that there is a $c$-$p$ path~$P$ in~$G_c$ which contains at least 2 terminals, i.e., $|V(P) \cap T| \geq 2$.

                        \item Let~$\calC^d _1 \subseteq \calC^d \setminus \calC^d _{\geq 2}$ be the subset of remaining vertices of~$\calC^d$ such that for each vertex~$c \in \calC^d _1$ the graph~$G_c$ 
                        contains a vertex~$p \in N_G(x)$ such that there is a $c$-$p$ path~$P$ in~$G_c$ which contains 1 terminal, i.e., $|V(P) \cap T| = 1$.

                        \item Let~$\calC^d _0 \subseteq \calC^d \setminus (\calC^d _{\geq 2} \cup \calC^d _1)$ be the subset of remaining vertices of~$\calC^d$ such that for each vertex~$c \in \calC^d _0$ the graph~$G_c$ 
                        contains a vertex~$p \in N_G(x)$ such that there is a $c$-$p$ path~$P$ in~$G_c$.

                        \item Let~$\calC^d _\emptyset := \calC^d \setminus (\calC^d _{\geq 2} \cup \calC^d _1 \cup \calC^d _0)$ be the remaining elements of~$\calC^d$.
                    \end{itemize}

                \item Apply Gallai's theorem (\cite[Thm 9.2, Lemma 9.3]{CyganFKLMPPS15}, cf.~\cite[Thm 73.1]{Schrijver03}) on the graph~$D_T$ with~$Q = \calC^d _{\geq 2} \cup \calC^d _1$ to obtain a maximum-cardinality family~$\mathcal{P}_Q$ of pairwise vertex disjoint~$Q$-paths in~$D_T$, along with a vertex set~$Z_1 \subseteq V(D_T)$ of size at most~$2 |\mathcal{P}_Q|$ such that the graph~$D_T - Z_1$ has no~$Q$-path.

                \item Let~$A := \calC^d _{\geq 2}$ and~$B := \calC^d _0 \cup (N_G(x) \cap V(d))$. Next, compute a minimum~$(A,B)$-separator in the graph~$D_T$ using Edmonds-Karp algorithm~\cite{EdmondsK72} which outputs a vertex set~$Z_2 \subseteq V (D_T)$.
                
            \end{itemize}

        \textit{Next, we do a case distinction based on whether or not the block~$d$ contains a terminal. Note that in the case when~$d$ does not contain a terminal then the set~$(Z_1 \cup Z_2)$ hits all $T$-cycles of~$G[ V_G(\calF_d) \cup \{x\} ]$. On the other hand, when~$d$ contains a terminal there could still be a $T$-cycle in the graph~$G[ V_G(\calF_d) \cup \{x\} ] - (Z_1 \cup Z_2)$ (see the third figure of Figure~\ref{fig:BlockerConstruction}). So our next steps are aimed at hitting those $T$-cycles (if any) that are not hit by the set~$(Z_1 \cup Z_2)$.}
        
        \begin{itemize}
            \item If~$V(d) \cap T = \emptyset$, then let~$Z_3, Z_4 := \emptyset$.
            
            \item Otherwise, there is a unique terminal in block~$d \subseteq G - x$ since~$x$ is a MWNS for~$(G,T)$. Let~$t$ be the terminal that belongs to the block~$d$.
                \begin{itemize}
                    \item Let~$\mathcal{D}$ be the set containing \ccts of~$D_T - (Z_1 \cup Z_2)$ that contain at least one neighbor of~$t$, i.e., for each \cct~$D \in \mathcal{D}$, we have~$V (D) \cap N(t) \neq \emptyset$.

                    \item Let~$\mathcal{D}^* \subseteq \mathcal{D}$ be the set such that for each~$D^* \in \mathcal{D}^*$, the \cct~$D^*$ contains a vertex from~$Q = (\mathcal{C}^d _{\geq 2} \cup \mathcal{C}^d _1)$. More precisely, we have $|V (D^*) \cap Q| = 1$ for each~$D^*$ as the set~$Z_1$ is hitting all~$Q$-paths.

                    \item Let~$V(\mathcal{D}^*) := \bigcup _{D^* \in \mathcal{D}^*} V(D^*)$ and define~$Z_3 := (\mathcal{C}^d _{\geq 2} \cup \mathcal{C}^d _1) \cap V(\mathcal{D}^*)$.
            
    \textit{Note that in the case when~$d$ contains a terminal~$t$ and~$t \in \text{child}_\calF (d)$, the way we have defined~$\mathcal{C}^d$, it does not contain~$t$. Hence, when~$t \in \text{child}_\calF (d)$ and the graph~$G_t$ has a vertex~$p \in N_G(x)$ such that there is a~$t$-$p$ path~$P$ in~$G_t$ that contains at least one terminal other than~$t$, i.e, $|V(P) \cap (T \setminus \{t\})| \geq 1$, there could still be $T$-cycles in~$G [V_G (\calF_d)] - \bigcup^3 _{i=1} Z_i$ (see the last figure of Figure~\ref{fig:BlockerConstruction}). So our next step is to hit all $T$-cycles (if any) containing the~$t$-$p$ path~$P$. Recall $\mathcal{C} _{\geq 1} (t)$ from  Definition~\ref{Definition:GrandchildrenAndInterestingGrandchildren}.}
    
            \begin{itemize}
                \item If $t \in \text{child}_\calF (d)$ and the graph~$G_t$ has a vertex~$p \in N_G(x)$ such that there is a $t$-$p$ path~$P$ in~$G_t$ with~$|V(P) \cap T| \geq 2$, then let~$Z_4 := \mathcal{C} _{\geq 1} (t)$.
        
                \item Otherwise, define~$Z_4 := \emptyset$.
            \end{itemize}           
        \end{itemize}
    \end{itemize}

    \textit{Finally, we try to break any interaction between vertices of~$V_G (\calF_d)$ and vertices of~$V_G (\calF) \setminus V_G (\calF_d)$ by adding the parent of~$d$ into the hitting set. But note that in the case when~$\text{parent}_{\calF} (d) \in T$, we can not add it to the hitting set~$Z \subseteq V(G) \setminus (T \cup \{x\})$.}

        \begin{itemize}
            \item If $\text{parent}_{\calF} (d) \in T$, then let $Z_5 := \emptyset$.

            \item Otherwise, $Z_5 := \text{parent}_{\calF}(d)$.
        \end{itemize}
        
        Let~$Z := \bigcup^5 _{i=1} Z_i$ and $\text{return} (Z \cup \mbox{Blocker} (G-Z, T, x))$.
    \end{enumerate}    
\end{enumerate}

This concludes the description of the algorithm. Summarizing, its main structure is to define a vertex set~$Z \subseteq V(G) \setminus (T \cup \{x\})$ to break $T$-cycles which are \emph{lowest} in the block-cut tree, include that set~$Z$ in the approximate solution, and complete the solution by recursively solving the problem on~$G - Z$.

\begin{figure}
    \centering
    \includegraphics[height=4cm]{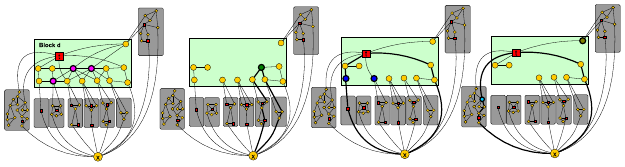}
    \caption{Illustration of Step~\ref{Algo:A:Step:ConstructionOfZWhendIsABlock}. 
    The leftmost figure shows the original graph~$G$ where terminals are represented by red squares and the pink vertices of block~$d$ represent the vertices of~$Z_1$.
     The second figure shows the construction of the set~$Z_2$ in the graph~$G-(Z_1 \cup \{t\})$ where the green vertex represents the vertex of~$Z_2$. It also shows a $T$-cycle represented by thick edges.
     The third figure shows a $T$-cycle (represented by thick edges) which appears after putting the terminal~$t$ back and the blue vertices represent the vertices of~$Z_3$.
     The last figure illustrates the construction of the set~$Z_4$ and~$Z_5$ where the cyan vertex and olive vertex represent the vertex of~$Z_4$ and~$Z_5$, respectively.}
    \label{fig:BlockerConstruction}
\end{figure}

\subsection{Analysis} 
\label{section:AnalysisOfBlockerAlgo}
The following lemma forms the heart of Theorem~\ref{theorem:blocker}. It says that if the above procedure adds a set~$Z$ during the construction of the approximate solution~$S_x$, then the optimum value of the remaining instance decreases by at least~$\frac{|Z|}{14}$. 

\begin{restatable}{lemma}{OPTDecreasesByAConstantFactorState}
\label{Lemma:OPTDecreasesByAFactorOf14}
If a single iteration of the algorithm on input~$(G,T,x)$ yields the set~$Z$, then~\optx$(G-Z, T) \leq$ \optx$(G, T) - \frac{|Z|}{14}$, where~\optx$(G, T)$ and~\optx$(G-Z, T)$ are the cardinalities of a minimum $x$-avoiding MWNS of~$(G, T)$ and~$(G-Z, T)$, respectively.
\end{restatable}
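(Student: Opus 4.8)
The proof proceeds by a careful case analysis mirroring the five cases of the algorithm in Step~\ref{Algo:A:Step:CaseDistinction}. In each case, the strategy is to produce, from any minimum $x$-avoiding MWNS~$S^*$ of~$(G,T)$, a valid $x$-avoiding MWNS of~$(G-Z,T)$ of size at most~$|S^*| - |Z|/14$. Since~$G-Z$ is an induced subgraph of~$G$, $T$ remains independent, so we only need to ensure the constructed set hits every $T$-cycle of~$G-Z$. The natural candidate is~$S^* \setminus Z$ augmented by a small ``charging'' set; the entire difficulty is to bound the size of that augmentation by a constant fraction of~$|Z|$ using the structural facts about~$\mathcal{C}(\cdot)$ and~$\mathcal{C}_{\geq 1}(\cdot)$ proven in Propositions~\ref{Proposition:Grandchild_of_a_Terminal_can_NOT_be_a_Terminal} and~\ref{Proposition:NumberOfInterestingCutVerticesAreBounded}.

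\textbf{The easy cases.} When~$d$ is a non-terminal cutvertex (case~(a)), $Z = \{d\}$; here I would argue that the subgraph~$G[V_G(\calF_d) \cup \{x\}]$ contains a $T$-cycle on~$x$ (by choice of~$d$ at Step~\ref{Algo:A:Step:ChooseADeepestNode}) and, since~$d$ is the deepest such node, any $x$-avoiding MWNS~$S^*$ must intersect this ``private'' subgraph (after deleting~$d$, the deeper subtrees contain no $T$-cycle on~$x$, and~$d$ is a cutvertex separating them from~$x$). Thus~$|S^* \cap V_G(\calF_d)| \geq 1$, and~$S^* \setminus \{d\}$ augmented by nothing — or rather, one shows $S^*$ can be modified to contain~$d$ itself, equivalently that~$\textrm{OPT}_x$ drops by at least~$1 = |Z|/14 \cdot 14$ — wait, we need the drop to be at least~$|Z|/14 = 1/14$, which is immediate since deleting a vertex that any solution must ``pay for'' drops the optimum by at least~$1$. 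The case~$d \in T$ (case~(b)) is similar but uses Proposition~\ref{Proposition:NumberOfInterestingCutVerticesAreBounded}: the grandchildren in~$\mathcal{C}_{\geq 1}(d)$ each carry an independent $T$-path to~$N_G(x)$, and any two of them together with~$x$ and~$d$ form a $T$-cycle entirely inside~$G[V_G(\calF_d) \cup \{x\}]$; hence~$S^*$ must hit all but at most one of the~$|\mathcal{C}_{\geq 1}(d)|$ private subgraphs~$G_c$, giving~$|S^* \cap V_G(\calF_d)| \geq |Z| - 1$. One then checks~$(S^* \setminus V_G(\calF_d)) \cup \{\textrm{something of size} \leq 1\}$ is still an $x$-avoiding MWNS of~$G - Z$, since deleting all of~$\mathcal{C}_{\geq 1}(d)$ kills every $T$-cycle passing through~$V_G(\calF_d)$. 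This yields a drop of at least~$|Z| - 1 \geq |Z|/14$ for~$|Z| \geq 2$ (and~$|Z| \leq 1$ is trivial), actually one must be a bit more careful to always get~$|Z|/14$, but the losses are $O(1)$ and~$|Z|$ is either~$0,1$ or~$\geq 2$.

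\textbf{The main obstacle: case~(c).} The hard case is when~$d$ is a block, where~$Z = Z_1 \cup \cdots \cup Z_5$ is built from a Gallai-type packing/covering ($Z_1$), a min-cut ($Z_2$), two correction sets ($Z_3, Z_4$ tied to the unique terminal~$t \in V(d)$ and its pending subtree), and the parent cutvertex ($Z_5$). The plan is to exhibit a family of~$\Omega(|Z|)$ pairwise $x$-disjoint $T$-cycles all living in~$G[V_G(\calF_d) \cup \{x\}]$, forcing~$S^*$ to contain~$\Omega(|Z|)$ vertices there, and simultaneously to show that~$(S^* \cap V_G(\calF_d))$ can be replaced by~$Z$ at a loss factor of at most~$14$. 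Concretely: the~$|\mathcal{P}_Q|$ packed $Q$-paths in~$D_T$ each close into a $T$-cycle via~$x$ (each endpoint in~$Q$ reaches~$N_G(x)$ by a $T$-path), so~$S^*$ pays~$\geq |\mathcal{P}_Q| \geq |Z_1|/2$ for these (disjointness after removing~$x$ is where one must be careful, using that the~$Q$-paths are vertex-disjoint and the attachment paths live in distinct subtrees~$G_c$); the min-cut~$Z_2$ of value~$|Z_2|$ certifies~$|Z_2|$ disjoint $A$-$B$ paths, which again close through~$x$ into $T$-cycles (the~$A = \calC^d_{\geq 2}$ side contributes~$\geq 2$ terminals each, the~$B$ side reaches~$x$ directly); $Z_3, Z_4$ are bounded against~$Z_1$-type objects and the~$\leq 1$-slack from Proposition~\ref{Proposition:NumberOfInterestingCutVerticesAreBounded} applied to the subtree of~$t$; and~$Z_5$ contributes just one vertex. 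Summing the charges~$|Z_1|/2 + |Z_2| + \cdots$ against the lower bound on~$|S^* \cap V_G(\calF_d)|$ and accounting for the small additive slacks and the one vertex we may need to add back to repair~$S^* \setminus Z$ into a valid solution on~$G-Z$, the constant works out to~$14$. The genuinely delicate points are (i) verifying that deleting~$Z$ indeed destroys \emph{all} $T$-cycles meeting~$V_G(\calF_d)$ — this is where the four correction sets~$Z_2,\dots,Z_5$ and the figure are needed, handling $T$-cycles that use~$t$, that exit through~$\textrm{parent}_\calF(d)$, or that combine a $\calC^d_1$-path with a second terminal reached elsewhere — and (ii) arranging the $T$-cycles exhibited for the lower bound to be pairwise $x$-disjoint so that~$S^*$ is charged additively; these two together are the crux and occupy the bulk of the argument.
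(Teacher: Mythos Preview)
Your overall architecture is right and matches the paper: lower-bound $|S^*_x \cap V_G(\calF_d)|$ via packed $T$-cycles in $G_d+x$, and upper-bound $\text{\optx}(G-Z,T)$ by showing that $\hat S := S^*_x \setminus V_G(\calF_d)$ (not $S^*\setminus Z$) is, possibly after adding one vertex, an $x$-avoiding MWNS of $(G-Z,T)$. But there is a genuine gap that would make your argument fail exactly where the constant $14$ is tight.

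\textbf{The missing ingredient.} You correctly note that $\hat S$ may need one repair vertex (this is the paper's Claim~\ref{Claim:S-hat_is_at_most_1_away_from_a_MWNS}). What you do \emph{not} establish is the companion fact (the paper's Claim~\ref{Claim:ShatIsAMWNSIfOptContains1VertexFromSubtreeBelow_d}): if $|S^*_x \cap V_G(\calF_d)| = 1$, then $\hat S$ is \emph{already} a valid MWNS of $(G-Z,T)$, with no repair needed. Without this, your bookkeeping collapses for small $|Z|$: the lower bound you can prove is only $|S^*_x \cap V_G(\calF_d)| \ge \max\{1,(|Z|-2)/6\}$, so when $|Z|\le 8$ you save exactly $1$ vertex by dropping $S^*_x\cap V_G(\calF_d)$, and if you then \emph{must} spend $+1$ on repair, the net drop is $0$, not $\ge |Z|/14$. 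The paper's fix is a delicate rerouting argument showing that when $S^*_x$ meets $V_G(\calF_d)$ in only one vertex, that single vertex cannot possibly block all the $x$-to-$\mathrm{parent}(d)$ connections that $Z$ also leaves open, so any surviving $T$-cycle through $V_G(\calF_d)$ in $G-Z$ would already survive $S^*_x$ in $G$; hence no repair is needed precisely in this case. This is the step that makes the constant $14$ achievable, and it is absent from your outline. Relatedly, your point (i) --- that deleting $Z$ ``destroys all $T$-cycles meeting $V_G(\calF_d)$'' --- is false as stated; $Z$ only kills $T$-cycles lying entirely in $G_d+x$ (Claim~\ref{Claim:Z_is_a_MWNS_of_Subtree_below_d}), which is why repair is sometimes necessary.

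\textbf{A smaller issue.} Your phrase ``summing the charges $|Z_1|/2 + |Z_2| + \cdots$'' is not how the lower bound works. The $T$-cycle families witnessing $|Z_1|/2$, $|Z_2|$, and $|Z_3|-1$ are each internally $\{x\}$-disjoint but the three families overlap one another (they all live in the same block $d$ and reuse the subtrees $G_c$). So you only get $|S^*_x\cap V_G(\calF_d)| \ge \max\{|Z_1|/2,\,|Z_2|,\,|Z_3|-1\}$, and the paper converts this to $\ge (|Z|-2)/6$ via $\max_i |Z_i| \ge |Z_1\cup Z_2\cup Z_3|/3 \ge (|Z|-2)/3$ (using $|Z_4|,|Z_5|\le 1$). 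With the max rather than the sum, the factor $6$ (not $2$) appears in the denominator, which together with the $+1$ repair and the $|S^*_x\cap V_G(\calF_d)|\ge 2$ guarantee in the repair case is exactly what forces the final constant to be $14$.
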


\begin{claimproof}
            Let~$S^*_x \subseteq V(G)$ be a minimum $x$-avoiding~MWNS of~$(G, T)$. If the algorithm stops before Step~$({\ref{Algo:A:Step:ChooseADeepestNode}})$ then~$Z = \emptyset$. Hence the inequality of the lemma trivially holds. Therefore assume that the algorithm reaches Step~\ref{Algo:A:Step:CaseDistinction}. Let~$d$ be the deepest node selected by the algorithm in Step~$({\ref{Algo:A:Step:ChooseADeepestNode}})$ to compute~$Z$. Let~$\hat{S} := S^*_x \setminus V_G (\calF_d)$. Note that~$\hat{S} \cap (T \cup \{x\}) = \emptyset$. Next, we observe the following properties about the set~$Z$ computed in Step~\ref{Algo:A:Step:CaseDistinction}.
            
            \begin{restatable}{claim}{ZisaMWNSofSubtreeBelowdState}
            \label{Claim:Z_is_a_MWNS_of_Subtree_below_d}
                $(\bigstar)$ Suppose the algorithm reaches Step~\ref{Algo:A:Step:CaseDistinction} and computes the set~$Z \subseteq V_G (\calF_d) \setminus T$. Then~$Z$ is a MWNS of~$(G_d + x, T)$, where~$G_d+x = G [V_G (\calF_d) \cup \{x\}]$.
            \end{restatable}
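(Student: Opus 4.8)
The plan is to go through the case distinction of Step~\ref{Algo:A:Step:CaseDistinction} that produces $Z$ and, in each case, verify directly that $G_d + x - Z$ contains no $T$-cycle, using the fact that $d$ was chosen as a \emph{deepest} node whose associated graph $G[V_G(\calF_d) \cup \{x\}]$ contains a $T$-cycle. First I would record the consequence of this choice: for every child $c$ of $d$ (and hence for every node strictly below $d$), the graph $G[V_G(\calF_c) \cup \{x\}]$ is $T$-cycle-free; in particular any $T$-cycle in $G_d + x$ must use a vertex of the block or cutvertex $d$ itself, and must therefore enter and leave the pending subtrees hanging off $d$ through the cutvertices in $\mbox{child}_\calF(d)$ together with $x$. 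I would also invoke Observation~\ref{Observation:Block_contains_atmost_1-terminal}: each block of $G - \{x\}$ carries at most one terminal, which is what forces the "two terminals on a $T$-cycle" to be distributed across the pending subtrees rather than sitting inside $d$.

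For case~(a), $d$ is a non-terminal cutvertex and $Z = \{d\}$: every $T$-cycle of $G_d + x$ lying below $d$ would be a $T$-cycle of some $G[V_G(\calF_c)\cup\{x\}]$ with $c$ a child of $d$, which is excluded; and since $G_d + x$ is the graph on $V_G(\calF_d) \cup \{x\}$, in which $d$ separates all the child subtrees from one another, removing $d$ leaves only the pieces $G[V_G(\calF_c)\cup\{x\}]$, each $T$-cycle-free — so $\{d\}$ is a MWNS here. For case~(b), $d$ is a terminal cutvertex and $Z = \calC_{\geq 1}(d)$: by Observation~\ref{Observation:Block_contains_atmost_1-terminal} no block below $d$ that is a child of $d$ contains a second terminal, so a $T$-cycle through $d$ must reach a second terminal via a path that leaves through a grandchild of $d$ and returns through $x$; by definition of $\calC_{\geq 1}(d)$ precisely the grandchildren $c$ with a $c$-to-$N_G(x)$ path carrying a terminal can participate, and deleting all of them destroys every such cycle — combined with the deepest-node property for the remaining subtrees, $Z$ is a MWNS. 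For case~(c), $d$ is a block: here I would argue in the layered manner the algorithm is built on. The Gallai/Gallai-theorem set $Z_1$ kills all $Q$-paths inside $D_T = d - T$ for $Q = \calC^d_{\geq 2} \cup \calC^d_1$, so after removing $Z_1$ no two of those "terminal-carrying" pendant cutvertices are joined through $d - T$; the separator $Z_2$ between $A = \calC^d_{\geq 2}$ and $B = \calC^d_0 \cup (N_G(x) \cap V(d))$ kills any remaining route from a doubly-terminal pendant to $x$ through the non-terminal part of $d$. If $d$ carries no terminal this already hits every $T$-cycle of $G_d + x$ (a $T$-cycle needs two terminal-bearing pendants joined through $d - T$ and closed via $x$, or one doubly-terminal pendant reaching $x$ through $d - T$), so $Z_3 = Z_4 = \emptyset$ suffices. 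If $d$ carries the unique terminal $t$, the extra cycles are precisely the ones routed through $t$'s neighbours in the components of $D_T - (Z_1 \cup Z_2)$, which is exactly what $Z_3$ (the terminal-bearing pendants reachable from $N(t)$ in such components) and, in the case $t \in \mbox{child}_\calF(d)$ reaching $x$ with its own terminal-carrying path, $Z_4 = \calC_{\geq 1}(t)$ are designed to hit; here Proposition~\ref{Proposition:Grandchild_of_a_Terminal_can_NOT_be_a_Terminal} guarantees $\calC_{\geq 1}(t) \subseteq V(G) \setminus T$ so that $Z$ stays terminal-free. Finally, since $Z \subseteq V_G(\calF_d)$, the set $Z_5$ (the parent of $d$) plays no role inside $G_d + x$ and may be ignored for this claim.

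The main obstacle I expect is case~(c) with $d$ a terminal block: one must carefully enumerate the shapes a $T$-cycle in $G_d + x$ can take — how many terminals lie on pendant paths through each child of $d$, whether the block's own terminal $t$ is one of the two terminals, and whether the cycle closes through $x$ or stays inside the block — and check that each shape is killed by at least one of $Z_1, \dots, Z_4$. The bookkeeping is delicate because a pendant in $\calC^d_1$ contributes one terminal but the cycle needs a second one, which could come from another pendant, from $t$ itself, or (if $t \in \mbox{child}_\calF(d)$) from $t$'s own subtree, and the three sub-cases are handled by $Z_1$/$Z_2$, by $Z_3$, and by $Z_4$ respectively; I would structure the argument by fixing a hypothetical $T$-cycle $C$ in $G_d + x - Z$, using the deepest-node property to push $C$ "up" to pass through $d$, and then case on which children of $d$ the two terminals of $C$ are associated with, deriving in each case a contradiction with the corresponding deletion set.
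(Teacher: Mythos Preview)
Your proposal is correct and follows essentially the same approach as the paper: handle cases (a) and (b) directly from the deepest-node property and Observation~\ref{Observation:Block_contains_atmost_1-terminal}, and for case (c) fix a hypothetical $T$-cycle $C$ in $(G_d+x)-Z$, use that $x\in V(C)$ and that $C$ meets the block $d$ in at least two vertices, and then case on how the terminals of $C$ are distributed among the pendant subtrees and the unique terminal of $d$ to derive a contradiction with $Z_1,\ldots,Z_4$. The paper organizes this last part by letting $c_u,c_v$ be the first and last vertices of $d$ on the path $C-\{x\}$ and casing on the terminal counts in the three segments $[u,c_u],[c_u,c_v],[c_v,v]$, which is exactly the bookkeeping you anticipate; one small slip is that $Z_5=\text{parent}_\calF(d)$ actually \emph{does} lie in $V_G(\calF_d)$ (it is a vertex of the block $d$), but since $Z_1\cup\cdots\cup Z_4$ already hits every $T$-cycle of $G_d+x$ this does not affect the argument.
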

            \begin{claimproof}[Proof sketch]
            By choice of~$d$, each~$T$-cycle of~$G_d+x$ contains a vertex from block~$d$, or the cutvertex~$d$ itself. Since~$x$ is a MWNS for~$(G,T)$, each $T$-cycle also contains~$x$. The sets~$Z_1, \ldots, Z_4$ added to~$Z$ in the algorithm ensure different types of $T$-cycles of~$G_d+x$ are broken:~$Z_1$ covers all $T$-cycles consisting of two paths between~$x$ and~$d$, each containing at least one terminal;~$Z_2$ covers $T$-cycles consisting of one path between~$x$ and~$d$ containing two or more terminals, and another such path containing no terminals; the sets~$Z_3$ and~$Z_4$ cover $T$-cycles that go through a terminal in~$d$ (if there is one), as explained in the algorithm.            
            \end{claimproof}
        
            \begin{restatable}{claim}{AtLeastOneVertexFromBothInsideAndOutsideState}
            \label{Claim:Contains_at_least_one_vertex_from_both_inside_and_outside}
               $(\bigstar)$ Suppose the algorithm reaches Step~\ref{Algo:A:Step:CaseDistinction} and computes the set~$Z \subseteq V_G (\calF_d) \setminus T$. Then any~$T$-cycle in~$G- (Z \cup \hat{S})$ contains a vertex of~$V_G (\calF_d)$ (i.e., a vertex from~$G_d$) and a vertex from~$V_G (\calF) \setminus (V_G (\calF_d))$ (i.e., a vertex outside~$G_d + x$).
            \end{restatable}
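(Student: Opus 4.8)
The plan is to prove the two assertions separately, each by a short proof by contradiction invoking a fact already established: for ``$C$ contains a vertex of $V_G(\calF_d)$'' I will use that the optimal solution $S^*_x$ hits every $T$-cycle of $G$, and for ``$C$ contains a vertex outside $G_d+x$'' I will use Claim~\ref{Claim:Z_is_a_MWNS_of_Subtree_below_d}. It is convenient to first record that, since $\calF$ is a block-cut forest of $G-\{x\}$, we have $V_G(\calF)=V(G)\setminus\{x\}$ and $x\notin V_G(\calF_d)$; also $x\notin S^*_x$ because $S^*_x$ is $x$-avoiding, and $Z\subseteq V_G(\calF_d)\setminus T\subseteq V(G_d+x)$ by the hypothesis of the claim, where $V(G_d+x)=V_G(\calF_d)\cup\{x\}$.

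For the first assertion, I would suppose towards a contradiction that some $T$-cycle $C$ of $G-(Z\cup\hat{S})$ contains no vertex of $V_G(\calF_d)$, so that $V(C)\subseteq V(G)\setminus V_G(\calF_d)$. As $C$ is in particular a $T$-cycle of $G$ and $S^*_x$ is a MWNS of $(G,T)$, Proposition~\ref{Proposition:DifferentWaysOfLookingAtMWNS} gives a vertex $w\in V(C)\cap S^*_x$. Since $w\in V(C)$ we have $w\notin V_G(\calF_d)$, hence $w\in S^*_x\setminus V_G(\calF_d)=\hat{S}$, contradicting that $C$ avoids $\hat{S}$. I expect no difficulty here; this step only uses that $\hat{S}$ collects exactly the part of $S^*_x$ lying outside $V_G(\calF_d)$.

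For the second assertion, I would suppose towards a contradiction that some $T$-cycle $C$ of $G-(Z\cup\hat{S})$ contains no vertex of $V_G(\calF)\setminus V_G(\calF_d)$. Since $V(G)=(V_G(\calF)\setminus V_G(\calF_d))\cup V_G(\calF_d)\cup\{x\}$, this forces $V(C)\subseteq V_G(\calF_d)\cup\{x\}=V(G_d+x)$, so $C$ is a $T$-cycle of $G_d+x$. By Claim~\ref{Claim:Z_is_a_MWNS_of_Subtree_below_d}, $Z$ is a MWNS of $(G_d+x,T)$ and therefore meets $V(C)$; but $C$ avoids $Z$, a contradiction. The only nontrivial ingredient is Claim~\ref{Claim:Z_is_a_MWNS_of_Subtree_below_d} (handled separately, via a case analysis of how the sets $Z_1,\dots,Z_4$ cover the different shapes of $T$-cycle passing through block~$d$); granting that claim, the present statement follows immediately, so I do not anticipate a genuine obstacle.
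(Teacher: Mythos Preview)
Your proposal is correct and follows essentially the same approach as the paper: both argue the two assertions by contradiction, invoking that $S^*_x$ hits every $T$-cycle of $G$ (so a cycle avoiding $\hat{S}=S^*_x\setminus V_G(\calF_d)$ must enter $V_G(\calF_d)$) and Claim~\ref{Claim:Z_is_a_MWNS_of_Subtree_below_d} (so a cycle avoiding $Z$ cannot lie entirely in $G_d+x$). The paper first observes $x\in V(C)$ and phrases the argument via the path $P_{uv}=C-\{x\}$, whereas you argue directly about $C$; this is a cosmetic difference and your version is slightly more streamlined.
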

            \begin{claimproof}[Proof sketch]
            Since~$\hat{S}$ contains all vertices of the solution~$S^*_x$ except those occurring in a block of the subtree~$\mathcal{F}_d$ of the block-cut tree, any~$T$-cycle disjoint from~$\mathcal{S}$ was intersected by~$S^*_x$ in a vertex of~$V_G(\mathcal{F}_d)$ and therefore uses a vertex of the latter set. On the other hand, since the previous claim shows that~$Z$ hits all the $T$-cycles which live in~$G_d+x$, a~$T$-cycle disjoint from~$Z$ has to use a vertex outside~$V_G(\mathcal{F}_d)$.
            \end{claimproof}
            
            \begin{restatable}{claim}{AnyOptimalMWNSEPicksEnoughVerticesBelowdState}
            \label{Claim:Any_Optimal_MWNS_excluding_x_picks_enough_vertices_below_d}
                $(\bigstar)$ Suppose the algorithm reaches Step~\ref{Algo:A:Step:CaseDistinction} and computes the set~$Z \subseteq V_G (\calF_d) \setminus T$. Then any minimum $x$-avoiding MWNS~$S^*_x$ of~$(G, T)$ satisfies~$|S^*_x \cap  V_G (\calF_d)| \geq \max \{1, \frac{|Z| - 2}{6}\}$.
            \end{restatable}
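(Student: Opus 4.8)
The plan is to prove the two bounds separately. The bound $|S^*_x\cap V_G(\calF_d)|\ge 1$ is immediate: by the choice of $d$ in Step~\ref{Algo:A:Step:ChooseADeepestNode}, the graph $H:=G[V_G(\calF_d)\cup\{x\}]$ contains a $T$-cycle $C$, and since $\{x\}$ is a MWNS of $(G,T)$ every $T$-cycle of $G$ passes through $x$, so $C$ does and $V(C)\setminus\{x\}\subseteq V_G(\calF_d)$ is non-empty; as $S^*_x$ is an $x$-avoiding MWNS of $(G,T)$ it must contain one of these vertices. For the bound $|S^*_x\cap V_G(\calF_d)|\ge\frac{|Z|-2}{6}$ I would first reduce to a statement purely about $H$. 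Writing $T_H:=T\cap V(H)$, the set $S^*_x\cap V(H)=S^*_x\cap V_G(\calF_d)$ is an $x$-avoiding MWNS of $(H,T_H)$, because any $T$-cycle of $H-(S^*_x\cap V(H))$ would be a $T$-cycle of $G-S^*_x$. Hence, denoting by $\mathrm{opt}$ the minimum size of an $x$-avoiding MWNS of $(H,T_H)$, it suffices to prove $\mathrm{opt}\ge\frac{|Z|-2}{6}$. Since $x\notin V_G(\calF_d)$ (as $x$ does not occur in the block-cut forest of $G-\{x\}$) the set $Z\subseteq V_G(\calF_d)\setminus T$ is itself an $x$-avoiding MWNS of $(H,T_H)$ by Claim~\ref{Claim:Z_is_a_MWNS_of_Subtree_below_d}, so what we are really proving is that the set $Z$ produced by a single iteration is an approximate minimum $x$-avoiding MWNS of $H$ with multiplicative factor $6$ and additive error $2$.

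To lower bound $\mathrm{opt}$, I would exhibit families of $T$-cycles of $H$ that pairwise intersect \emph{only in $x$}, or only in $\{x\}$ together with a single fixed terminal cutvertex ($d$ or $t$). Because $S^*_x$ contains neither $x$ nor any terminal, a family of $\ell$ such cycles forces $\ell$ pairwise distinct vertices of $S^*_x$ in $V_G(\calF_d)$, so $\mathrm{opt}\ge\ell$. These families I would read off the dual objects the algorithm already computes, case by case following Step~\ref{Algo:A:Step:CaseDistinction}. In Case~4(a) there is nothing to do beyond the $\ge 1$ bound, since $|Z|=1$ and $\frac{|Z|-2}{6}<1$. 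In Case~4(b), each child block of the terminal cutvertex $d$ lies strictly below $d$ and therefore carries no $T$-cycle together with $x$; by Proposition~\ref{Proposition:NumberOfInterestingCutVerticesAreBounded} it then contains at most one vertex of $Z=\mathcal{C}_{\ge 1}(d)$, so the $G_c$'s attached to the vertices of $Z$ hang from pairwise distinct child blocks. Pairing up the vertices of $Z$, joining each pair through $d$ inside their two parent blocks, and closing the pair out to $x$ through the pending terminal-carrying paths of Definition~\ref{Definition:GrandchildrenAndInterestingGrandchildren}, produces $\lfloor|Z|/2\rfloor\ge\frac{|Z|-2}{6}$ pairwise $\{x,d\}$-disjoint $T$-cycles of $H$, finishing this case.

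Case~4(c) is the substantial one, and I would handle $Z_1,\dots,Z_5$ essentially independently. The Gallai family $\mathcal{P}_Q$ (with $|Z_1|\le 2|\mathcal{P}_Q|$) yields $|\mathcal{P}_Q|$ pairwise $\{x\}$-disjoint $T$-cycles, each obtained by closing a $Q$-path of $\mathcal{P}_Q$ up to $x$ through the pending paths of its two endpoints (the two endpoints lie in distinct children of $d$, so their $G_c$'s are disjoint and the two terminals picked up are distinct). A maximum family of vertex-disjoint $(A,B)$-paths in $D_T$, of size $|Z_2|$ by Menger's theorem, yields $|Z_2|$ further such cycles, closed at the $\mathcal{C}^d_{\ge 2}$-endpoint through its pending path carrying two terminals and at the other endpoint either through the edge to $x$ or through the pending path of a $\mathcal{C}^d_0$-vertex. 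The components in $\mathcal{D}^*$, each containing exactly one $Q$-vertex and a neighbour of the terminal $t\in V(d)$, are paired to give $T$-cycles through $t$ hitting $Z_3$; and the grandchildren of $t$ contributing to $Z_4$ are paired through $t$ (again invoking Proposition~\ref{Proposition:NumberOfInterestingCutVerticesAreBounded}, valid because every child block of $t$ is strictly below $d$). Finally $|Z_5|\le 1$. Combining $|Z_1|\le 2\,\mathrm{opt}$, $|Z_2|\le\mathrm{opt}$ and the analogous bounds for $|Z_3|$, $|Z_4|$ together with $|Z_5|\le1$ gives $|Z|\le 6\,\mathrm{opt}+2$, which is what we need.

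The main obstacle is the bookkeeping in Case~4(c): on the one hand one must check that the cycles constructed for a given $Z_i$ really are disjoint outside $x$, which uses that the pending subgraphs $G_c$ at distinct children of $d$ are vertex-disjoint and meet block $d$ only in the respective cutvertex, that the $Q$-paths of $\mathcal{P}_Q$ and the $(A,B)$-paths are internally vertex-disjoint, and that one can route inside the $2$-connected block $d$ and the child blocks while avoiding the terminal $t$; on the other hand one must get the multiplicative factors across $Z_1,\dots,Z_4$ to add up to exactly $6$ — a naive case-by-case accounting loses a bit more — which requires noticing that some of these packings can be merged (for instance, $T$-cycles routed through the child cutvertex $t$ of $d$ and its pending subgraph $G_t$ are disjoint, outside $x$, from $T$-cycles routed through $D_T$), and pushing the few unavoidable additive $+1$'s coming from $Z_3$, $Z_4$ and $Z_5$ into the single additive constant $2$. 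Throughout, the fact that $d$ was chosen as a \emph{deepest} node carrying a $T$-cycle is what makes Proposition~\ref{Proposition:NumberOfInterestingCutVerticesAreBounded} applicable to every proper descendant subtree, and this is precisely what prevents $|Z_3|$ and $|Z_4|$ (and the pairings used for them) from blowing up; the situation where block $d$ contains $t$ and $t$ itself reaches $x$ through a second terminal — the last picture in Figure~\ref{fig:BlockerConstruction} — is the most delicate, since then all of $Z_3$, $Z_4$ contribute and their cycles must be kept disjoint from those certifying $Z_1$ and $Z_2$.
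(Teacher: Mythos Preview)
Your strategy matches the paper's --- case split on the type of $d$, and in each case extract packings dual to the sets the algorithm builds --- but the combination step in Case~(c) is where you and the paper diverge, and where your sketch is loose. Two points. First, you miss that $|Z_4|\le 1$ outright: Proposition~\ref{Proposition:NumberOfInterestingCutVerticesAreBounded} can be applied with $B=t$ itself (not merely with the child blocks of $t$), since $t$ is a proper descendant of $d$ and hence $G_t+x$ carries no $T$-cycle; this gives $|\mathcal C_{\ge1}(t)|\le1$ directly, so together with $|Z_5|\le1$ the additive $-2$ in the statement is immediate and no packing for $Z_4$ is needed at all. Second, the paper does not sum the per-$Z_i$ bounds and then merge packings across the $Z_i$'s to hit the constant $6$; it simply observes that the largest among $|Z_1|,|Z_2|,|Z_3|$ is at least $(|Z|-2)/3$, and for whichever set realises this maximum the corresponding packing \emph{alone} already gives $\mathrm{opt}\ge|Z_i|/2$ (for $Z_3$ this uses that the $|Z_3|$ paths are internally vertex-disjoint $t$--$x$ paths, so $\mathrm{opt}\ge|Z_3|-1\ge|Z_3|/2$ once $|Z_3|\ge2$). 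This yields $\mathrm{opt}\ge(|Z|-2)/6$ in one line and sidesteps all the cross-packing disjointness bookkeeping you flag as the main obstacle; in particular, the claimed inequality $|Z|\le 6\,\mathrm{opt}+2$ does not actually follow from the individual bounds you list by naive summation, and the ``merging'' you propose to recover it is unnecessary. The same idea streamlines Case~(b): rather than pairing to get $\lfloor|Z|/2\rfloor$ cycles, the paper notes that the $|Z|$ paths form internally vertex-disjoint $d$--$x$ paths each carrying a terminal other than $d$, so any MWNS avoiding both $d\in T$ and $x$ hits all but one of them, giving the sharper $\mathrm{opt}\ge|Z|-1$.
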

            \begin{claimproof}[Proof sketch]
            Any $x$-avoiding MWNS~$S^*_x$ contains a vertex from~$V_G(\mathcal{F}_d)$ because there is a $T$-cycle in~$G_d+x$, by choice of~$d$, which explains why the intersection is nonempty. The fact that~$S^*_x$ contains at least~$\frac{|Z|-2}{6}$ vertices from~$V_G(\mathcal{F}_d)$ can be seen as follows. We have~$|Z_4|, |Z_5| \leq 1$ by definition, so the largest~$Z_i$ of~$Z_1, Z_2, Z_3$ has at least~$\frac{|Z|-2}{3}$ vertices. Any solution contains at least~$|Z_i|/2$ vertices from~$V_G(\mathcal{F}_d)$ because the covering/packing duality for the three types of separators used to define~$Z_1,Z_2,Z_3$ ensures, for each of these sets, that to hit all the~$T$-cycles of the corresponding form, at least~$|Z_i|/2$ vertices are needed. For example, each $Q$-path~$P$ in the family~$\mathcal{P}_Q$ obtained during the construction of~$Z_1$ yields a~$T$-cycle when combined with paths from endpoints of~$P$ (which are cutvertices in~$Q$) to neighbors of~$x$ in two different subtrees of the block-cut forest~$\mathcal{F}$, showing that~$S^*_x \cap V_G(\mathcal{F}_d) \geq |\mathcal{P}_Q| \geq |Z_1|/2$.
            \end{claimproof}

            Next, we show that if a minimum $x$-avoiding MWNS~$S^*_x$ of~$(G, T)$ contains exactly 1 vertex from the subtree~$\calF_d$ then the set~$\hat{S} := S^*_x \setminus V_G (\calF_d)$ is a MWNS of $(G-Z, T)$.
            \begin{restatable}{claim}{ShatIsAMWNSIfOptContainsOneVertexFromSubtreeBelowdState}
            \label{Claim:ShatIsAMWNSIfOptContains1VertexFromSubtreeBelow_d}
                $(\bigstar)$ If~$|S^*_x \cap  V_G (\calF_d)| = 1$ then~$\hat{S} = S^*_x \setminus V_G (\calF_d)$ is a MWNS of~$(G-Z, T)$.
            \end{restatable}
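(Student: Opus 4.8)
\begin{claimproof}[Proof plan]
Since $\hat S\cap T\subseteq S^*_x\cap T=\emptyset$, it suffices to show that $G-Z-\hat S$ has no $T$-cycle; write $\{s\}:=S^*_x\cap V_G(\calF_d)$, so $S^*_x=\hat S\cup\{s\}$, and suppose for contradiction that $C$ is a $T$-cycle in $G-Z-\hat S$. First I would pin down which vertices $C$ must visit. Because $S^*_x$ is a MWNS of $(G,T)$ and $G-S^*_x=(G-\hat S)-s$, the cycle $C$ must pass through $s$; because $\{x\}$ is a MWNS of $(G,T)$, it must pass through $x$; and by Claim~\ref{Claim:Contains_at_least_one_vertex_from_both_inside_and_outside} (equivalently, via Claim~\ref{Claim:Z_is_a_MWNS_of_Subtree_below_d}, since $Z$ already hits every $T$-cycle contained in $G_d+x$) it must also use a vertex of $V_G(\calF)\setminus V_G(\calF_d)$.

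Next I would use that $V_G(\calF_d)$ has a two-vertex boundary in $G$. Set $z_0:=d$ if $d$ is a cutvertex and $z_0:=\text{parent}_\calF(d)$ if $d$ is a non-root block; in either case $z_0\in V_G(\calF_d)$ and, since $d$ (resp.\ $\text{parent}_\calF(d)$) is a cutvertex of $G-\{x\}$, the pair $\{x,z_0\}$ separates $V_G(\calF_d)\setminus\{z_0\}$ from $V_G(\calF)\setminus V_G(\calF_d)$ in $G$. If $d$ is a root block then $V_G(\calF_d)$ is an entire component of $G-\{x\}$, so the connected path $C-x$ cannot meet both it and another component --- impossible by the previous paragraph --- and this case is void. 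Otherwise $s\neq z_0$ (if $z_0\in T$ because $s\notin T$; if $z_0\notin T$ because then $z_0\in Z$, namely $z_0=d$ in Step~\ref{Algo:A:Step:CaseDistinction}(a) or $z_0=\text{parent}_\calF(d)=Z_5\subseteq Z$ in Step~\ref{Algo:A:Step:ConstructionOfZWhendIsABlock}, while $s\notin Z$), so $C-x$ is a connected path meeting $V_G(\calF_d)\setminus\{z_0\}$ (at $s$) and the outside, hence it runs through $z_0$. Consequently, if $z_0\notin T$ then $z_0\in Z\cap V(C)$, contradicting that $C$ avoids $Z$. So from now on $z_0\in T$, and the two live cases are: $d$ a cutvertex with $d\in T$ (where $Z=\calC_{\geq 1}(d)$), and $d$ a block with $\text{parent}_\calF(d)\in T$ (where $Z_5=\emptyset$ and $z_0$ is the unique terminal of block $d$).

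Now $\{x,z_0\}$ splits $C$ into an \emph{inside arc} $P_1$ (an $x$--$z_0$ path whose internal vertices lie in $V_G(\calF_d)\setminus\{z_0\}\subseteq V(G_d+x)$) and an \emph{outside arc} $P_2$; the vertex $s$ is internal to $P_1$, and $C$ carries a second terminal $t'\neq z_0$, lying on $P_1$ or on $P_2$. If $t'$ lies on $P_1$, I would walk $P_1$ from the $x$-neighbour $v_1\in N_G(x)$ towards $z_0$: the final segment of $P_1$ enters $z_0$ inside a single block of $G-\{x\}$ whose only terminal is $z_0$ (Observation~\ref{Observation:Block_contains_atmost_1-terminal}), so $t'$ lies strictly below a cutvertex $c$ on $P_1$ that is a child of that block, with $P_1[v_1,c]$ a $c$--$v_1$ path inside $G_c$ containing $t'$ (Proposition~\ref{Proposition:Grandchild_of_a_Terminal_can_NOT_be_a_Terminal} gives $c\neq t'$). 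In the cutvertex case this forces $c\in\calC_{\geq 1}(d)=Z$ by Definition~\ref{Definition:GrandchildrenAndInterestingGrandchildren}; in the block case $c\in\calC^d_{\geq 2}\cup\calC^d_1=Q$, and following the $(Z_1\cup Z_2)$-free portion of $P_1$ from $c$ up to $z_0$ shows $c$ sits in a component of $D_T-(Z_1\cup Z_2)$ counted in $\calD^*$, so $c\in Z_3\subseteq Z$. Either way $V(C)\cap Z\neq\emptyset$, a contradiction.

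The remaining case --- $t'$ on $P_2$ --- is the crux. The plan is to \emph{reroute} the inside arc around $s$: find an $x$--$z_0$ path $P_1'$ inside $G_d+x$ avoiding $s$. Such a $P_1'$ is automatically \ivd from $P_2$, and $V(G_d+x)\cap\hat S=\emptyset$, so $C':=P_1'\cdot P_2$ would be a $T$-cycle (it still carries $z_0,t'\in T$) in $G-(\hat S\cup\{s\})=G-S^*_x$, contradicting that $S^*_x$ is a MWNS of $(G,T)$. Everything thus reduces to showing that $x$ and $z_0$ lie in the same component of $(G_d+x)-s$. Here I would invoke $|S^*_x\cap V_G(\calF_d)|=1$ via two consequences: $(G_d+x)-s\subseteq G-S^*_x$ has no $T$-cycle, so $\{s\}$ hits all $T$-cycles of $G_d+x$; and $G_d+x$ \emph{does} contain a $T$-cycle $C^\sharp$ by the choice of $d$ in Step~\ref{Algo:A:Step:ChooseADeepestNode}, which therefore runs through both $x$ and $s$. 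Splitting $C^\sharp$ at $s$ and combining its $x$-arc with a path inside the block of $G-\{x\}$ incident to $z_0$ should yield the desired $s$-avoiding $x$--$z_0$ path; the one delicate configuration is when $s$ is a near-leaf cutvertex sitting directly between $z_0$ and the rest of $G_d+x$, where I would instead observe that $P_2$ is itself an $x$--$z_0$ path in $G-S^*_x$ carrying the two terminals $z_0,t'$ and derive a $T$-cycle in $G-S^*_x$ by producing a second internally disjoint $x$--$z_0$ path there, or else contradict the minimality of $S^*_x$. I expect this connectivity argument --- that deleting the single vertex $s$ cannot detach $x$ from the boundary vertex $z_0$ inside $G_d+x$ --- to be the only genuinely delicate step; the rest is bookkeeping with the block-cut forest and the already-established Claims~\ref{Claim:Z_is_a_MWNS_of_Subtree_below_d}--\ref{Claim:Any_Optimal_MWNS_excluding_x_picks_enough_vertices_below_d}.
\end{claimproof}
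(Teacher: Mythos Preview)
Your overall strategy matches the paper's: assume a $T$-cycle $C$ in $G-(Z\cup\hat S)$, pin down that it passes through $x$ and the boundary vertex $z_0$ (the paper calls it $w$), split $C$ into an inside arc and an outside arc, show the inside arc carries no terminal other than $z_0$ (your step with $t'$ on $P_1$ reproves exactly the paper's ``Equation~\ref{Equation:Property1}''), and then replace the inside arc by an $s$-avoiding $x$--$z_0$ path inside $G_d+x$ to obtain a $T$-cycle in $G-S^*_x$.

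The one place where you diverge from the paper is in how you build that $s$-avoiding path. You propose to split the auxiliary $T$-cycle $C^\sharp$ at $s$ and patch one $x$-arc with a path in block~$d$; you then flag a ``delicate configuration'' where $s$ might sit between $z_0$ and the rest of $G_d+x$. The paper instead splits $C^\sharp-x$ at its intersection with block~$d$ (at $d$ itself in the cutvertex case, at the first/last block-$d$ vertices $c_{u'},c_{v'}$ in the block case). Since $|S^*_x\cap V_G(\calF_d)|=1$, the single vertex $s$ lies on at most one of the resulting pieces; the other piece gives an $x$--$c$ path with $c\in V(d)\setminus\{s\}$, and the $2$-connectedness of~$d$ (or a trivial check when $|V(d)|=2$) extends it inside~$d$ to $z_0$. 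This sidesteps your delicate case entirely. In fact your worry is unfounded: $C^\sharp$ always contains $x$, $s$, and either $d$ (cutvertex case) or two distinct vertices of~$V(d)$ (block case), so $C^\sharp-s$ is a path that still connects $x$ to some $c\in V(d)\setminus\{s\}$, and $G[V(d)\setminus\{s\}]$ is connected; hence $x$ and $z_0$ always lie in the same component of $(G_d+x)-s$. Your fallback idea of invoking minimality of $S^*_x$ is not needed (and would not work as stated). With this clarification your plan goes through and is the same proof as the paper's.
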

            \begin{claimproof}[Proof sketch]
            If~$|S^*_x \cap V_G(\mathcal{F}_d)| = 1$, then from the $T$-cycle in~$G_d+x$ which we know to exist, the set~$S^*_x$ contains at most one vertex. In the most crucial case that the~$d$ is a block whose parent is a terminal, this means that~$S^*_x$ cannot break all paths from~$x$ through blocks in the subtree~$\mathcal{F}_d$ to the parent of~$d$. The only types of connections that the set~$Z$ does not break, and which can be part of $T$-cycles in~$G$, can be shown to be precisely paths from a neighbor of~$x$ to~$d$ in~$G_d$ that do not contain a terminal. But if~$|S^*_x \cap V_G(\mathcal{F}_d)| = 1$, then~$S^*_x$ does not break all such paths either. By a rerouting argument, this allows us to show that updating~$S^*$ by replacing~$S^*_x \cap V_G(\mathcal{F}_d)$ with~$Z$ gives a valid $x$-avoiding MWNS, which is equivalent to saying that~$\hat{S}$ is a MWNS of~$(G-Z, T)$.
            \end{claimproof}

            The following claim shows that if the set~$\hat{S}$ is not a MWNS of~$(G-Z, T)$, then we can find a vertex~$\hat{c}$ such that the set obtained after adding~$\hat{c}$ to the set~$\hat{S}$ is a MWNS of $(G-Z, T)$.
            
            \begin{restatable}{claim}{ShatISaNearMWNSState}
            \label{Claim:S-hat_is_at_most_1_away_from_a_MWNS}
                $(\bigstar)$ If~$\hat{S}$ is not a MWNS of~$(G-Z, T)$ then there exists a vertex~$\hat{c} \in V(G) \setminus (T \cup \{x\})$ such that~$\hat{S} \cup \{\hat{c}\}$ is a MWNS of~$(G-Z, T)$.
            \end{restatable}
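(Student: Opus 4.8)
The plan is to show that $\hat{S}$ is, in the language of the paper, at most one vertex away from being a MWNS of $G-Z$: there is a single non-terminal $\hat{c}\neq x$ whose addition destroys every remaining $T$-cycle. Recall from Claim~\ref{Claim:Contains_at_least_one_vertex_from_both_inside_and_outside} that every $T$-cycle in $G-(Z\cup\hat S)$ must use both a vertex of $V_G(\calF_d)$ and a vertex of $V_G(\calF)\setminus V_G(\calF_d)$; in particular it passes through the unique cutvertex separating the subtree $\calF_d$ from the rest (the cutvertex between $d$ and $\text{parent}_\calF(d)$), or through $d$ itself if $d$ is a cutvertex. First I would observe that the only case where $\hat S$ can fail to be a MWNS of $G-Z$ is when this interface cutvertex is a terminal and $Z_5=\emptyset$ — otherwise $Z_5=\text{parent}_\calF(d)$ (or the handling in the cutvertex case) already severs $\calF_d$ from the outside, so that combined with Claim~\ref{Claim:Z_is_a_MWNS_of_Subtree_below_d} (which says $Z$ already hits all $T$-cycles living inside $G_d+x$) no $T$-cycle survives in $G-(Z\cup\hat S)$, contradicting the hypothesis. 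So we may assume the interface vertex is a terminal $t\in T$, which moreover (by Observation~\ref{Observation:Block_contains_atmost_1-terminal} applied to the solution $S^*_x$, or directly since $x$ is a MWNS for $(G,T)$) is the unique terminal occurring in $d$ or adjacent to $d$ on the subtree side.

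Next I would analyze, exactly as in the italicized comments of the algorithm's Step~\ref{Algo:A:Step:ConstructionOfZWhendIsABlock}, what a surviving $T$-cycle $C$ in $G-(Z\cup\hat S)$ can look like. Since $C$ uses a vertex outside $V_G(\calF_d)$ and a vertex inside it, and the only contact point is $t$, the cycle $C$ enters and leaves $\calF_d$ through $t$. On the "inside" portion $C\cap G_d$ is a $t$-to-$t$ walk, i.e., it consists of one path from $t$ into $G_d$ and back; since $C$ is simple this is really a single path $P$ from $t$ to some neighbor $p\in N_G(x)$ inside $G_d$ (using that every $T$-cycle uses $x$, as $\{x\}$ is a MWNS). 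The sets $Z_3$ and $Z_4$ were designed precisely to kill the $T$-cycles that route through $t$ and use such an internal path $P$ that itself contains a second terminal; hence for a cycle to survive, the path $P$ must contain no terminal other than $t$. I would claim that all such surviving configurations share a common vertex, which will be our $\hat{c}$: namely, the grandchild cutvertex of $t$ (an element of $\mathcal{C}(t)\setminus \mathcal{C}_{\geq 1}(t)$, which is non-terminal by Proposition~\ref{Proposition:Grandchild_of_a_Terminal_can_NOT_be_a_Terminal}) through which all terminal-free $t$-to-$N_G(x)$ paths in $G_d$ must pass — if there is one. To make this precise I would invoke Proposition~\ref{Proposition:NumberOfInterestingCutVerticesAreBounded}: since $Z$ hits all $T$-cycles of $G_d+x$, the graph $G[V_G(\calF_d)\cup\{x\}]-Z$ has no $T$-cycle, so below the child block of $t$ housing these paths there is at most one interesting grandchild cutvertex, and this forces all relevant internal paths through a single cutvertex $\hat c$ strictly inside $\calF_d$.

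The final step is to verify that $\hat S\cup\{\hat c\}$ is indeed a MWNS of $G-Z$: any $T$-cycle $C$ in $G-(Z\cup\hat S)$ restricted to $G_d$ is a terminal-free $t$-$p$ path with $p\in N_G(x)$, which by the previous paragraph passes through $\hat c$, so removing $\hat c$ destroys $C$. (If no such internal path exists at all, then no $T$-cycle survives and we may take $\hat c$ arbitrary, or the hypothesis is vacuous.) I expect the main obstacle to be the bookkeeping in the block-cut forest needed to pin down $\hat c$ as a single vertex — in particular arguing carefully that, after deleting $Z$, the "$T$-cycle-free" status of $G_d+x$ (Claim~\ref{Claim:Z_is_a_MWNS_of_Subtree_below_d}) plus Proposition~\ref{Proposition:NumberOfInterestingCutVerticesAreBounded} genuinely collapses all dangerous internal paths to one cutvertex, handling the sub-cases $t\in\text{child}_\calF(d)$ versus $t$ deeper, and the case $d$ is a cutvertex, uniformly. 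The rerouting/combinatorial content is light; the difficulty is entirely in the case analysis matching the algorithm's construction of $Z_3,Z_4,Z_5$.
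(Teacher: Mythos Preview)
Your setup is correct: you rightly identify that the hypothesis forces the interface cutvertex $c$ between $\calF_d$ and the rest of $\calF$ to be a terminal (so $Z_5=\emptyset$), and that the inside portion $P_{uv}[u,c]$ of any surviving $T$-cycle is terminal-free apart from $c$. But your choice of $\hat c$ as a grandchild of $t$ \emph{inside} $\calF_d$ does not work, and the appeal to Proposition~\ref{Proposition:NumberOfInterestingCutVerticesAreBounded} is misapplied. That proposition bounds $\mathcal C_{\geq 1}(t)$, the grandchildren admitting a path to $N_G(x)$ that \emph{contains} a terminal; it says nothing about terminal-\emph{free} $t$--$N_G(x)$ paths, which are precisely the inside portions you must control. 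Such paths may run through several distinct cutvertices of $\mathcal C(t)\setminus\mathcal C_{\geq 1}(t)$, or through none at all --- for instance, if $d$ is a block with terminal parent $t$ and $a,b\in (N_G(x)\cap V(d))\setminus Z$, then $t$--$a$ and $t$--$b$ paths inside $d$ meet only in $t$. Each can be combined with the \emph{same} outside portion to give a $T$-cycle in $G-(Z\cup\hat S)$, so no single vertex of $V_G(\calF_d)\setminus T$ covers them all.

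The paper instead places $\hat c$ on the \emph{outside}: it is the first cutvertex of $G-\{x\}$ on $P_{uv}$ strictly after $c$ when traversing toward $v\notin V_G(\calF_d)$; this $\hat c$ lies in the block of $\calF$ above $c$, hence $\hat c\notin T$ since that block already contains the terminal $c$. More importantly, the proof that $\hat S\cup\{\hat c\}$ is a MWNS is not a structural funneling argument but a rerouting contradiction against $S^*_x$. If a $T$-cycle $C'$ survived in $G-(Z\cup\hat S\cup\{\hat c\})$, one splices the outside portion $P_{uv}[v,c]$ of the original cycle with the outside portion $P'_{u'v'}[c,v']$ of $C'$ through $c$ and $x$; the resulting $T$-cycle avoids $\hat S$ and meets $V_G(\calF_d)$ only in the terminal $c$, hence avoids $S^*_x=\hat S\cup(S^*_x\cap V_G(\calF_d))$ altogether, contradicting that $S^*_x$ is a MWNS of $(G,T)$. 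The leverage comes from $S^*_x$, not from the block-cut structure below $d$.
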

            \begin{claimproof}[Proof sketch]
            The proof consists of a delicate argument which essentially says that this situation only happens when~$d$ is a block whose parent is a terminal, and there is a cutvertex~$\hat{c}$ close to block~$d$ in the block-cut forest which combines with~$Z$ to break all paths in~$G-\{x\}$ from~$N_G(x) \cap V_G(\mathcal{F}_d)$ to vertices of~$N_G(x) \setminus V_G(\mathcal{F}_d)$, and therefore breaks all $T$-cycles intersecting~$V_G(\mathcal{F}_d)$.             \end{claimproof}
            
           The proof of Lemma~\ref{Lemma:OPTDecreasesByAFactorOf14} follows from the preceding statements by formula manipulation: whenever the approximation algorithm chooses a set~$Z$, an optimal solution chooses at least~$|Z|/14$ vertices from~$V_G(\mathcal{F}_d)$.  \rearng{The remainder of the proof is given in Appendix~\ref{subsection:RemainingProof:Lemma:OPTDecreasesByAFactorOf14}.} \sk{The remainder of the proof is given in the full version.} 
            \end{claimproof}

Using Lemma~\ref{Lemma:OPTDecreasesByAFactorOf14}, an easy induction shows that the algorithm indeed computes a 14-approximation. As each iteration can be implemented in polynomial time, this leads to a proof of Theorem~\ref{theorem:blocker}. \rearng{The details are given in Appendix~\ref{subsection:ProofOfBlocker}.} \sk{The details are given in the full version.}

\section{Bounding the number of terminals}
\label{sec:bounding_terminals}
Our main goal in this section is to prove Theorem~\ref{theorem:terminalbounding}. Intuitively, there are two distinct ways in which a \cct admitting a small solution can contain a large number of terminals: either there can be a star-like structure of blocks containing a terminal joined at a single cutvertex, or there exists a long path in the block-cut tree containing many blocks with a terminal. After applying reduction rules to attack both kinds of situations, we give a final reduction rule to bound the number of relevant connected components that contain a terminal, which will lead to the desired bound on the number of terminals. \rearng{Due to space constraints, the safeness of the reduction rules we present here is deferred to Appendix~\ref{section:SafenessOfRRs}.} 
\sk{Due to space constraints, the safeness of the reduction rules we present here is deferred to the full version.}

\begin{reductionrule}\label{RR:RemoveWeaklySeparatedTerminal}
    Let~\gtk be an instance of~\mwnsshort, and let~$t \in T$ be a terminal such that for any other terminal~$t' \in \tminust$, there do not exist~2 \ivd~$t$-$t'$ paths in~$G$, i.e., the terminal~$t$ is nearly-separated. Then remove~$t$ from the set~$T$. The new instance is~\gtminustk.
\end{reductionrule}

Next, we have the following general reduction rule. 

\begin{reductionrule} 
\label{RR:BoundingBlocksOfDegree2Path}
    Let~\gtk be an instance of \mwnsshort. Suppose there exist 2 non-terminal vertices~$x, y \in V(G) \setminus T$ such that~$G-\{x, y\}$ has a \cct~$D$ satisfying the following conditions:
    \begin{enumerate}[(a)]
        \item $|V(D) \cap T| \geq 3$,

        \item the graph~$G[V(D) \cup \{x, y\}]$ has no~$T$-cycle, and \label{RR:BoundingBlocksOfDegree2Path:NoT-Cycle}
        
        \item there is an~$x$-$y$ path in~$G[V(D) \cup \{x, y\}]$ containing distinct terminals~$t_1, t_2 \in T$. \label{RR:BoundingBlocksOfDegree2Path:KeepingAPathP}
    \end{enumerate}
    Then turn any terminal of~$D$ which is not~$t_1, t_2$ into a non-terminal. Formally, let~$t \in V(D) \cap (T \setminus \{t_1, t_2\})$, then~\gtminustk is a new instance of~\mwnsshort. 
\end{reductionrule}

        Next, we show that given an instance~\gtk of~\mwnsshort and a 1-redundant MWNS~$S^* \subseteq V(G) \setminus T$ of~$(G, T)$, in polynomial-time we can obtain an equivalent instance~$(G, T', k)$ such that the number of \ccts of~$G-S^*$ which contain at least one terminal from~$T'$ is bounded by~$\Oh(|S^*|^2 \cdot k)$. Towards this, we apply the following marking scheme with respect to the set~$S^*$ to mark~$\Oh (|S^*|^2 \cdot k)$ \ccts of~$G - S^*$.

        \begin{markingscheme}
        \label{MarkingScheme:Bounding_Interesting_CCTs}
        Let~\gtk be an instance of \mwnsshort, and let~$S^* \subseteq V(G)$ be a 1-redundant MWNS of~$(G, T)$. For each pair of distinct vertices~$x, y \in S^*$, we greedily mark~$(k+2)$ \ccts~$C^{x, y}_{i_1}, \ldots, C^{x, y}_{i_{k+2}}$ of~$G-S^*$ such that for each~$m \in [k+2]$, the \cct~$C^{x, y}_{i_m}$ contains two vertices (not necessarily distinct) $p_m \in N_G(x)$ and~$q_m \in N_G(y)$ such that there is a~$p_m$-$q_m$ path~$P_m$ (possibly of length 0) inside the \cct~$C^{x, y}_m$ which contains at least one terminal, i.e., we have~$|V(P_m) \cap T| \geq 1$. If there are fewer than~$k+2$ such components, we simply mark all of them.
        \end{markingscheme}
    
        We have the following reduction rule based on the above marking scheme. It requires a 1-redundant MWNS to be known. In the context of Theorem~\ref{theorem:terminalbounding}, where we are given a MWNS~$S$ of~$(G,T,k)$, we can exploit Theorem~\ref{theorem:blocker} to obtain a 1-redundant MWNS of size~$\Oh(|S| \cdot k)$ in polynomial time: for each~$x \in S$, if an $x$-avoiding MWNS~$S_x$ in~$G - (S \setminus \{x\})$ has size more than~$14k$ then all solutions of~$(G,T,k)$ contain~$x$ and we may safely remove~$x$ and decrease~$k$; otherwise, we can add~$S_x$ to~$S$ to ensure all $T$-cycles intersecting~$x$ are hit twice, without blowing up the size of~$S$ too much. \rearng{The details are given in Appendix~\ref{subsection:CostructionOfRedundantSet}.} \sk{The details are given in the full version.}
        \begin{reductionrule}
        \label{RR:Bounding_CCTs}
            Let~\gtk be an instance of~\mwnsshort, and let~$S^* \subseteq V(G) \setminus T$ be a 1-redundant MWNS of~$(G, T)$. Let~$\calC_{M}$ be the set of \ccts of~$G-S^*$ marked by Marking Scheme~\ref{MarkingScheme:Bounding_Interesting_CCTs} with respect to the set~$S^*$. Let~$T' := T \cap (\bigcup_{C \in \calC_{M}} V(C))$. 
            If~$T \setminus T' \neq \emptyset$ then we convert the terminals of~$T \setminus T'$ to non-terminals. The new instance is~$(G, T', k)$.
        \end{reductionrule}

A delicate analysis shows that applying these reduction rules indeed leads to an equivalent instance with the desired bound on the number of terminals, which leads to a proof of  Theorem~\ref{theorem:terminalbounding}. \rearng{The details are given in Appendix~\ref{sec:ProofOfTerminalBoundingThm}.} \sk{The details are given in the full version.}

\section{Conclusions}

In this paper we initiated the study of the \mwns problem, a generalization of \textsc{Multiway Cut} focused on reducing the connectivity between each pair of terminals. We developed reduction rules to reduce the number of terminals in an instance, aided by a constant-factor approximation algorithm for the problem of finding a near-separator avoiding a given vertex~$x$ in an instance for which~$\{x\}$ is a near-separator. Our work leads to several follow-up questions. First of all, one could consider extending the notion of near-separation to allow for larger (but bounded) connectivity between terminal pairs in the resulting instance, for example by requiring that in the graph~$G-S$, for each pair of distinct terminals~$t_i, t_j$ there is a vertex set of size at most~$c$ whose removal separates~$t_i$ and~$t_j$. The setting we considered here is that of~$c=1$, which allows us to understand the structure of the problem based on the block-cut forest of~$G-S$. For~$c=2$ it may be feasible to do an analysis based on the decomposition of~$G-S$ into triconnected components, but for larger values of~$c$ the structure may become significantly more complicated. One could also consider a generic setting (see \cite{BarmanC10}) where for each pair of terminals~$t_i, t_j$, some threshold~$f(t_i, t_j) = f(t_j, t_i)$ is specified such that in~$G-S$ there should be a set of at most~$f(t_i, t_j)$ nonterminals whose removal separates~$t_i$ from~$t_j$. Note that if the values of~$f$ are allowed to be arbitrarily large, this generalizes \textsc{Multicut}. Is the resulting problem fixed-parameter tractable parameterized by~$k$?

Another direction for future work lies in the development of a polynomial kernel. For the \textsc{Multiway Cut} problem with delectable terminals, as well as the setting with undeletable but constantly many terminals, a polynomial kernel is known based on matroid techniques~\cite{KratschW20}. Does the variant of \textsc{Multiway Near-Separator} with deletable terminals admit a polynomial kernel?

\bibliography{bibliography.bib}

\appendix

\clearpage

\section{Additional preliminaries}\label{sec:AdditionalPreliminaries}
\subparagraph{Graphs.}\label{Subparagraph:AdditionalPrelimsAboutGraphs}
An~$x$-$y$ path~$P$ and a $q$-$r$ path~$Q$ are said to be \ivd if they do not share a common internal-vertex, i.e., we have~$(V(P) \setminus \{x, y\}) \cap (V(Q) \setminus \{q, r\}) = \emptyset$. Given~$X, Y \subseteq V(G)$, a path~$(v_1, \ldots, v_k)$ in~$G$ is called an~$X-Y$ path if~$v_1 \in X$ and~$v_k \in Y$.

\begin{theorem}[Menger's theorem] {\cite[Theorem 3.3.1]{Diestel17}}
    Let~$G$ be a graph and~$X, Y \subseteq V(G)$ be subsets of vertices such that~$X \cap Y = \emptyset$ and there does not exist an edge~$\{x, y\} \in E(G)$ for any~$x \in X$ and~$y \in Y$. Then the minimum number of vertices separating~$X$ from~$Y$ is equal to the maximum number of vertex-disjoint~$X-Y$ paths in~$G$.
\end{theorem}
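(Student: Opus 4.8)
The plan is to derive this vertex-version of Menger's theorem as a corollary of the max-flow min-cut theorem for directed networks with integral capacities (equivalently, of the Ford--Fulkerson method), through the classical vertex-splitting reduction. First I would build a directed network~$D$ with a fresh source~$s$ and sink~$t$ as follows: replace every vertex~$v \in V(G)$ by an arc~$(v_{\mathrm{in}}, v_{\mathrm{out}})$ of capacity~$1$; replace every edge~$\{u,v\} \in E(G)$ by the two arcs~$(u_{\mathrm{out}}, v_{\mathrm{in}})$ and~$(v_{\mathrm{out}}, u_{\mathrm{in}})$ of capacity~$n+1$; and add arcs~$(s, x_{\mathrm{in}})$ for each~$x \in X$ and~$(y_{\mathrm{out}}, t)$ for each~$y \in Y$, also of capacity~$n+1$. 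The value~$n+1$ only needs to exceed the capacity of any cut we care about (the set of all vertex-arcs is already a cut of capacity~$n$), so it may be read as~$\infty$.

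Next I would establish the two equalities that sandwich the max-flow min-cut quantity. On the flow side: $p$~pairwise vertex-disjoint $X$-$Y$ paths in~$G$ lift to an integral $s$-$t$ flow of value~$p$ (send one unit along each path; vertex-disjointness guarantees every unit-capacity arc~$(v_{\mathrm{in}}, v_{\mathrm{out}})$ carries at most one unit), and conversely an integral $s$-$t$ flow of value~$p$ decomposes into~$p$ simple $s$-$t$ paths whose projections to~$G$ can be trimmed to genuine $X$-$Y$ paths that are pairwise vertex-disjoint, again because every $s$-$t$ path through a vertex~$v$ must traverse~$(v_{\mathrm{in}}, v_{\mathrm{out}})$ and that arc has capacity~$1$. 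On the cut side: any finite-capacity $s$-$t$ cut uses no arc of capacity~$n+1$, hence it is exactly a set~$\{(v_{\mathrm{in}}, v_{\mathrm{out}}) : v \in S\}$ for some~$S \subseteq V(G)$ with~$|S|$ equal to the cut's capacity; one checks that~$S$ separates~$X$ from~$Y$ (an $X$-$Y$ path in~$G-S$ would lift to an $s$-$t$ path disjoint from the cut), and conversely that every $X$-$Y$ separator~$S$ induces the $s$-$t$ cut~$\{(v_{\mathrm{in}}, v_{\mathrm{out}}) : v \in S\}$ of capacity~$|S|$ (every $s$-$t$ path projects to a walk meeting~$S$). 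Chaining these facts with max-flow min-cut then yields that the minimum size of an $X$-$Y$ separator equals the minimum cut, which equals the maximum flow, which equals the maximum number of vertex-disjoint $X$-$Y$ paths.

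I expect the only genuinely delicate point to be the flow-decomposition direction: one must argue that an integral maximum flow produces paths that are vertex-disjoint in~$G$, not merely arc-disjoint in~$D$ --- in particular that they share no endpoint in~$X$ or~$Y$ --- and that non-simple or ``wrapping'' projections can be shortcut to honest $X$-$Y$ paths without losing any of the~$p$ of them. This is exactly the purpose of putting capacity~$1$ on the~$(v_{\mathrm{in}}, v_{\mathrm{out}})$ arcs and capacity larger than~$n$ everywhere else, so that minimum cuts are forced onto vertex-arcs. The hypothesis that no edge joins~$X$ and~$Y$ is not actually used by this argument; it merely rules out the degenerate length-one $X$-$Y$ path and could be dropped. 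As a self-contained alternative avoiding flows, one can instead induct on~$|E(G)|$: delete an edge~$e = \{u,v\}$; if~$G-e$ already has the required number of vertex-disjoint $X$-$Y$ paths we are done, and otherwise a minimum $X$-$Y$ separator of~$G-e$ has size one smaller than that of~$G$, in which case augmenting it by~$u$ and by~$v$ produces two minimum separators of~$G$ that can be used to split~$G$ into two strictly smaller instances on which the induction hypothesis applies and whose path systems recombine along~$e$; the bookkeeping of defining and gluing those two instances is the analogue of the delicate point above.
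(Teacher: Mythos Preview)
The paper does not prove this statement at all: it is quoted verbatim as a known result from Diestel's textbook, with the citation \texttt{[Theorem 3.3.1]{Diestel17}} standing in place of a proof. So there is no ``paper's own proof'' to compare against.

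Your proposal is correct. The max-flow min-cut reduction via vertex splitting is a standard and complete derivation of the vertex version of Menger's theorem, and you have identified the only point requiring care (that integral flow decomposition yields genuinely vertex-disjoint $X$--$Y$ paths, which the unit capacities on the $(v_{\mathrm{in}}, v_{\mathrm{out}})$ arcs guarantee). Your remark that the no-edge-between-$X$-and-$Y$ hypothesis is inessential is also right; it is there only so that ``separating $X$ from $Y$ by vertex removal'' is possible at all without deleting vertices of $X \cup Y$. For what it is worth, the proof in Diestel's book---the reference the paper cites---is precisely the edge-deletion induction you sketch as your alternative, not the flow argument; but since the paper treats the theorem as a black box, either route is equally appropriate here.
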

A cutvertex in a graph is a vertex~$v$ whose removal increases the number of connected components. A graph is 2-connected if it has at least three vertices and does not contain any cutvertex. A \emph{block} of a graph~$G$ is a maximal connected subgraph~$B$ of~$G$ such that~$B$ does not have a cutvertex. Each block of~$G$ is either a 2-connected subgraph of~$G$, a single edge, or an isolated vertex.

\begin{definition}[Block-cut graph] {\cite[\S 3.1]{Diestel17}}
\label{definition:block-cut_graph}	 
	Given a graph~$G$, let~$A$ be the set of cutvertices of~$G$, and let~$\mathcal{B}$ be the set of its blocks. The \emph{block-cut graph}~$G'$ of~$G$ is the bipartite graph with partite sets~$A$ and~$\mathcal{B}$, and for each cut-vertex~$a \in A$, for each block~$B \in \mathcal{B}$, there is an edge~$\{a, B\} \in E(G')$ if~$a \in V(B)$.
\end{definition}
We also need the following simple but useful properties of block-cut graphs. 
\begin{lemma}{\cite[Lemma 3.1.4]{Diestel17}}
\label{lemma:BlockGraphOfCCT}
The block-cut graph of a connected graph is a tree.
\end{lemma}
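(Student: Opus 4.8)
The plan is to show directly that the block-cut graph $G'$ of a connected graph $G$ is connected and acyclic, hence a tree. Throughout I would rely on the standard structural properties of blocks (\cite[\S 3.1]{Diestel17}): every edge of $G$ lies in exactly one block, any two distinct blocks share at most one vertex, and a vertex lying in two distinct blocks is a cutvertex of $G$. All of these are consequences of the maximality built into the definition of a block: if two distinct blocks shared two vertices, or if a block missed an edge between two of its vertices, the relevant union (resp. augmentation) would again be a connected subgraph with no cutvertex, contradicting maximality. The trivial cases where $G$ has at most one vertex, or consists of a single block, are immediate since then $G'$ is a single vertex.

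\textbf{Connectivity of $G'$.} It suffices to show that every two blocks lie in the same \cct of $G'$, since every cutvertex $a$ is a vertex of some block and is therefore adjacent in $G'$ to that block. Fix distinct blocks $B,B'$ and vertices $u\in V(B)$, $w\in V(B')$. If $u=w$, then $u$ lies in two blocks, hence is a cutvertex, and $B-u-B'$ is a path in $G'$. Otherwise take a $u$--$w$ path $P=(u=v_0,v_1,\ldots,v_\ell=w)$ in $G$, and for $t\in[\ell]$ let $\beta_t$ be the unique block containing the edge $v_{t-1}v_t$. For consecutive indices, either $\beta_t=\beta_{t+1}$, or $v_t$ lies in the two distinct blocks $\beta_t,\beta_{t+1}$, hence is a cutvertex and $\beta_t-v_t-\beta_{t+1}$ is a path in $G'$; likewise $B$ and $\beta_1$ are equal or joined through $u$, and $\beta_\ell$ and $B'$ are equal or joined through $w$. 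Concatenating, $B$ and $B'$ lie in the same \cct of $G'$, so $G'$ is connected.

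\textbf{Acyclicity of $G'$.} Suppose for contradiction that $G'$ contains a cycle. Since $G'$ is bipartite, this cycle has the form $B_1a_1B_2a_2\cdots B_ra_rB_1$ with $r\geq 2$, the blocks $B_1,\ldots,B_r$ pairwise distinct, the cutvertices $a_1,\ldots,a_r$ pairwise distinct, and $a_i\in V(B_i)\cap V(B_{i+1})$ with indices taken modulo $r$. Let $H:=B_1\cup\cdots\cup B_r$; this is a connected subgraph of $G$ that properly contains $B_1$ (if $V(B_2)\subseteq V(B_1)$ then also $E(B_2)\subseteq E(B_1)$ by maximality of $B_1$, forcing $B_2=B_1$, contradiction). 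I would then argue that $H$ has no cutvertex, contradicting the maximality of the block $B_1$. To see this, fix $v\in V(H)$ and $x,y\in V(H)\setminus\{v\}$, say $x\in V(B_i)$ and $y\in V(B_j)$. Each $B_\ell-v$ is connected (a block is $2$-connected, a single edge, or a single vertex, and in every case deleting one vertex leaves it connected, noting that a single-edge block among the $B_\ell$ has vertex set $\{a_{\ell-1},a_\ell\}$ which cannot contain any other $a_m$). At most one $a_m$ equals $v$; in the cycle $G'$ choose the arc from $B_i$ to $B_j$ that avoids $a_m$ (either arc if no $a_m=v$), and route an $x$--$y$ walk through $H-v$ by going, inside each block $B_\ell$ on that arc, from the incoming surviving cutvertex to the outgoing one (both different from $v$, and both in $V(B_\ell)$) using connectivity of $B_\ell-v$. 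Hence $H-v$ is connected for every $v$, so $H$ is a connected subgraph with no cutvertex strictly larger than $B_1$ — the desired contradiction.

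I expect the acyclicity step to be the main obstacle, and within it the point requiring care is the verification that deleting a single vertex cannot disconnect the union $H$ of the blocks around a putative cycle of $G'$: the deleted vertex may simultaneously lie in several of the $B_\ell$, so one cannot treat the blocks as genuinely disjoint and must instead combine connectivity of each $B_\ell-v$ with the cyclic gluing pattern of the $a_i$. The connectivity step, by contrast, is routine once the basic structural facts about blocks are in hand.
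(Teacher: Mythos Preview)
The paper does not prove this lemma; it merely quotes it with a citation to Diestel~\cite[Lemma 3.1.4]{Diestel17}. So there is no ``paper's own proof'' to compare against.

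Your argument is correct and is essentially the standard textbook proof. Both halves are sound: the connectivity step is routine, and the acyclicity step correctly shows that a putative cycle in $G'$ would yield a connected cutvertex-free subgraph $H$ strictly containing a block, contradicting maximality. The one place where your write-up could be tightened is the justification that $H$ properly contains $B_1$: rather than arguing via ``$V(B_2)\subseteq V(B_1)$ implies $E(B_2)\subseteq E(B_1)$'', it is cleaner to note that distinct blocks meet in at most one vertex, so $a_2\in V(B_2)\setminus V(B_1)$ (since $a_1\neq a_2$ and $a_1$ already lies in $V(B_1)\cap V(B_2)$). But this is cosmetic; the proof stands.
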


\begin{observation}
\label{observation:propertyofacutvertex}
Consider an edge~$e$ of the block-cut tree~$\calT$ of a connected graph~$G$, let~$v$ be the unique cutvertex incident on~$e$, let~$\calT_1, \calT_2$ be the two trees of~$\calT - \{e\}$, and let~$Y_i := V_G (\calT_i)$ be the vertices of~$G$ occurring in blocks of~$\calT_i$, for~$i \in [2]$. Then all paths from a vertex of~$Y_1 \setminus \{v\}$ to a vertex of~$Y_2 \setminus \{v\}$ in~$G$ intersect~$v$.
\end{observation}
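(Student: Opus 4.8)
The plan is to prove the stronger statement that $v$ is a cutvertex of $G$ separating $Y_1 \setminus \{v\}$ from $Y_2 \setminus \{v\}$, and to do this by first recording two structural facts about the bipartition of $G$ that deleting $e$ from the block-cut tree induces: that $Y_1 \cup Y_2 = V(G)$ and that $Y_1 \cap Y_2 = \{v\}$. The union is immediate, since $G$ is connected so every vertex lies in some block (assuming $|V(G)| \geq 2$; otherwise the statement is vacuous), and every block is a node of~$\calT$, hence of $\calT_1$ or of $\calT_2$.

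For $Y_1 \cap Y_2 = \{v\}$, write $e = \{v, B_0\}$ with $B_0$ a block, and say $v$ is a node of $\calT_1$ and $B_0$ a node of $\calT_2$. Since $v$ is a cutvertex it lies in at least two blocks: one of them is $B_0$, which gives $v \in Y_2$, and another, say $B_1 \neq B_0$, is still joined to $v$ in $\calT - \{e\}$ because its tree-edge to $v$ is not $e$, so $B_1$ is a node of $\calT_1$ and $v \in Y_1$. Conversely, suppose some $w \neq v$ lies in $Y_1 \cap Y_2$, witnessed by a block $B$ of $\calT_1$ and a block $B'$ of $\calT_2$ with $w \in V(B) \cap V(B')$; then $B \neq B'$, and since distinct blocks share at most one vertex and any shared vertex is a cutvertex (a standard property of block-cut trees, cf.\ Observation~\ref{observation:property_of_blocks}), $w$ is a cutvertex, hence a node of~$\calT$. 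Its tree-edges to $B$ and to $B'$ are both different from $e = \{v, B_0\}$ because $w \neq v$, so both survive in $\calT - \{e\}$; but then $w$ is a node of $\calT_1$ and of $\calT_2$, which are disjoint — a contradiction. I expect this to be the step requiring the most care: the point is that $v$ genuinely lies in both $Y_1$ and $Y_2$, and the deleted edge $e$ is the unique reason why, so one must track precisely which block-cut-tree adjacencies are destroyed by removing $e$ and argue that this singles out $v$ and nothing else.

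Next I would show that no edge $\{a,b\} \in E(G)$ has $a \in Y_1 \setminus \{v\}$ and $b \in Y_2 \setminus \{v\}$: such an edge lies in a unique block $B^*$, which is a node of $\calT_1$ or of $\calT_2$; in the former case $a, b \in V(B^*) \subseteq Y_1$, forcing $b \in Y_1 \cap Y_2 = \{v\}$, contradicting $b \neq v$, and the latter case is symmetric. To conclude, take any path $P = (u_0, \ldots, u_\ell)$ in $G$ with $u_0 \in Y_1 \setminus \{v\}$ and $u_\ell \in Y_2 \setminus \{v\}$. Since $Y_1 \cap Y_2 = \{v\}$ and $u_\ell \neq v$ we have $u_\ell \notin Y_1$, so there is a largest index $i$ with $u_i \in Y_1$, and $i < \ell$; then $u_{i+1} \in V(G) \setminus Y_1 = Y_2 \setminus \{v\}$ using $Y_1 \cup Y_2 = V(G)$ and $Y_1 \cap Y_2 = \{v\}$. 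If $u_i \neq v$, then $u_i \in Y_1 \setminus \{v\}$ and $\{u_i, u_{i+1}\}$ is a forbidden crossing edge; hence $u_i = v$, so $P$ intersects $v$, as required.
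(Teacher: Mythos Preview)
Your argument is correct. The paper does not supply a proof for this observation at all; it is listed among ``simple but useful properties of block-cut graphs'' in the preliminaries and left unproven as a standard fact. Your write-up therefore gives strictly more than the paper does: you establish $Y_1 \cup Y_2 = V(G)$ and $Y_1 \cap Y_2 = \{v\}$ explicitly, then use the fact that every edge lies in a unique block to rule out crossing edges, and finish with a first-crossing argument along an arbitrary path. Each step is sound; in particular, the delicate part you flagged --- showing that no $w \neq v$ can lie in both $Y_i$'s --- is handled correctly by observing that the two witnessing block-adjacencies of $w$ survive the deletion of $e$, forcing $w$ into a single component of $\calT - \{e\}$ and hence $B$ and $B'$ into the same component, contradicting the choice of $B \in \calT_1$, $B' \in \calT_2$.
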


\begin{observation}
\label{observation:Separatingcutvertex}
Let~$G$ be a graph and~$T \subseteq V(G)$ be a set of terminals such that~$V(G) \setminus T \neq \emptyset$. If no block of~$G$ contains two or more terminals, then for each pair~$t_i, t_j \in T$ of distinct terminals there exists a vertex~$v \in V(G) \setminus T$ such that~$v_i$ and $v_j$ belong to different connected components of~$G-\{v\}$.
\end{observation}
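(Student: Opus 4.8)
The plan is to reduce to the case that $t_i$ and $t_j$ lie in a common connected component, and then to produce the desired vertex as a suitably chosen non-terminal cutvertex on the block--cut tree path between them.

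If $t_i$ and $t_j$ lie in different components of $G$, then any $v \in V(G) \setminus T$ works: such a $v$ exists by the hypothesis $V(G) \setminus T \neq \emptyset$, and deleting a single vertex never merges components, so $t_i$ and $t_j$ remain separated in $G - \{v\}$. I would dispose of this case in one line. For the main case, let $C$ be the component containing both terminals and let $\mathcal{T}$ be its block--cut tree (Lemma~\ref{lemma:BlockGraphOfCCT}). I would assign to each terminal a node of $\mathcal{T}$: the terminal itself if it is a cutvertex, and otherwise the unique block containing it. Since no block carries two terminals, these two nodes are distinct, so there is a genuine path $P = (a_0, \dots, a_r)$ between them in $\mathcal{T}$, with $a_0$ the node of $t_i$. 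Since nodes on $P$ alternate between blocks and cutvertices, exactly one of $a_1, a_2$ is a cutvertex; let $v$ be that one and let $B$ be the node of $P$ immediately preceding $v$. A quick check shows $t_i \in V(B)$ and $v \in V(B)$, regardless of whether $t_i$ itself happens to be a cutvertex.

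The crux is to show $v \notin T$. The idea is simple: $v$ and $t_i$ both lie in block $B$, so if $v$ were also a terminal with $v \neq t_i$, then $B$ would contain two distinct terminals, contradicting the hypothesis. So the real content is the sanity check $v \neq t_i$ (and, similarly, $v \neq t_j$, which is needed to know $v$ is strictly internal to $P$). These follow from carefully distinguishing a cutvertex as a vertex of $G$ from its appearance as a node of $\mathcal{T}$, using that each cutvertex occurs exactly once in $\mathcal{T}$. I expect this bookkeeping --- especially the subcases where $t_i$ or $t_j$ is itself a cutvertex, or where a naive choice would force two blocks of $\mathcal{T}$ to be adjacent --- to be the only fiddly part of the proof.

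Once $v$ is known to be a non-terminal strictly internal to $P$, I would finish via Observation~\ref{observation:propertyofacutvertex}: removing the edge of $P$ between $v$ and its preceding block $B$ splits $\mathcal{T}$ into two subtrees $\mathcal{T}_1 \ni B$ and $\mathcal{T}_2 \ni v$, placing $t_i$'s node in $\mathcal{T}_1$ and $t_j$'s node in $\mathcal{T}_2$; writing $Y_\ell = V_G(\mathcal{T}_\ell)$, we get $t_i \in Y_1 \setminus \{v\}$ and $t_j \in Y_2 \setminus \{v\}$, so the observation yields that every $t_i$--$t_j$ path in $C$ passes through $v$. Hence $t_i$ and $t_j$ lie in different components of $C - \{v\}$, and therefore of $G - \{v\}$, with $v \in V(G) \setminus T$ as required. (Alternatively, one notes that ``no block has two terminals'' implies $T$ is independent and $G$ has no $T$-cycle, i.e.\ condition~\ref{Condition3:DifferentWaysOfLookingAtMWNS} of Proposition~\ref{Proposition:DifferentWaysOfLookingAtMWNS}; but since that proposition is stated only for nonempty $S$, the direct block--cut tree argument above is the cleaner route here.)
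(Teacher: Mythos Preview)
The paper states this as an observation without proof, treating it as a standard fact about block--cut trees. Your argument is correct and is exactly the kind of reasoning the paper is taking for granted: pass to the common component, walk the block--cut tree path between the two terminals' nodes, and pick the nearest cutvertex along that path; the hypothesis that no block carries two terminals both forces the path to be long enough for an internal cutvertex to exist and forces that cutvertex to be a non-terminal. Your anticipated ``fiddly'' case checks (e.g.\ that $r \geq 2$, and in fact $r \geq 3$ when $a_0$ is a cutvertex, since otherwise the intermediate block would contain both terminals) all go through for precisely the reason you identify. The finish via Observation~\ref{observation:propertyofacutvertex} is clean; your parenthetical remark about Proposition~\ref{Proposition:DifferentWaysOfLookingAtMWNS} is also accurate --- that proposition is stated for nonempty $S$, so the direct argument is the right choice here.
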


\begin{observation} 
\label{Observation:Any_MWNS_contains_at_least_one_vertex_from_each_T-disjoint_cycle}
For any positive integer~$\ell$, if there are~$\ell$ cycles on~$x$ which are $(T \cup \{x\})$-disjoint in graph~$G$, then any $x$-avoiding~MWNS of~$(G, T)$ contains at least~$\ell$ vertices: at least one distinct non-terminal from each cycle.
\end{observation}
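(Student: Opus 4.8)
The plan is a short pigeonhole argument resting on the characterization of a MWNS as a hitting set for $T$-cycles. Fix an arbitrary $x$-avoiding MWNS $S_x$ of $(G,T)$ and let $C_1,\dots,C_\ell$ denote the $\ell$ given cycles on $x$ that are pairwise $(T\cup\{x\})$-disjoint; we use (as is clear from the context in which this observation is applied) that each $C_i$ is a $T$-cycle, i.e.\ it passes through $x$ and through at least two terminals. The goal is to exhibit $\ell$ pairwise distinct elements of $S_x$.

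First I would observe that $S_x$ must intersect $V(C_i)$ for every $i\in[\ell]$: since $S_x$ is a MWNS of $(G,T)$, Proposition~\ref{Proposition:DifferentWaysOfLookingAtMWNS} (equivalence of Conditions~\ref{Condition1:DifferentWaysOfLookingAtMWNS} and~\ref{Condition3:DifferentWaysOfLookingAtMWNS}) tells us that $G-S_x$ contains no $T$-cycle, so the $T$-cycle $C_i$ cannot survive in $G-S_x$, forcing $V(C_i)\cap S_x\neq\emptyset$. Next I would pin down \emph{which} vertex of $C_i$ is available: since $S_x\cap T=\emptyset$ and $x\notin S_x$ by the definition of an $x$-avoiding MWNS, any vertex of $S_x\cap V(C_i)$ lies in $V(C_i)\setminus(T\cup\{x\})$. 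Hence for each $i\in[\ell]$ we may choose a non-terminal $v_i\in S_x\cap\bigl(V(C_i)\setminus(T\cup\{x\})\bigr)$ with $v_i\neq x$.

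Finally, I would check that $v_1,\dots,v_\ell$ are pairwise distinct, which is precisely the role of the $(T\cup\{x\})$-disjointness hypothesis: if $v_i=v_j$ for some $i\neq j$, then this common vertex lies in $V(C_i)\cap V(C_j)\subseteq T\cup\{x\}$, contradicting the choice $v_i\notin T\cup\{x\}$. Thus $\{v_1,\dots,v_\ell\}$ is a set of $\ell$ distinct vertices contained in $S_x$, so $|S_x|\ge\ell$. Since $S_x$ was an arbitrary $x$-avoiding MWNS, this proves the statement.

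I do not expect any genuine obstacle here: the argument is essentially one line once the two ingredients are assembled, namely that a MWNS hits every $T$-cycle while containing neither a terminal nor $x$, and that two $(T\cup\{x\})$-disjoint cycles can only meet inside $T\cup\{x\}$. The single point requiring a moment's care is the \emph{order} of the argument — one must first extract, for each cycle, a hitting vertex lying outside $T\cup\{x\}$, and only then compare these vertices across cycles, so that the disjointness hypothesis can be applied.
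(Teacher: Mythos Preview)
Your argument is correct and is exactly the intended justification; the paper states this as an observation without proof, relying on precisely the two facts you isolate (a MWNS hits every $T$-cycle while avoiding $T\cup\{x\}$, and $(T\cup\{x\})$-disjoint cycles share no vertex outside $T\cup\{x\}$). Your remark that the cycles are implicitly $T$-cycles is also right, as confirmed by every place the observation is invoked in the analysis.
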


\begin{proposition}
\label{Proposition:ThereAre2PathsInsideBlock}
        Let~$G$, be a graph,~$B$ a block in~$G$, and consider three distinct vertices~$p, q, t \in V(B)$. There is a~$p$-$t$ path~$P$ and~$q$-$t$ path~$Q$ inside block~$B$ such that~$V(P) \cap V(Q) = \{t\}$.
\end{proposition}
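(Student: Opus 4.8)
The plan is to reduce the statement to a classical connectivity fact about $2$-connected graphs. First I would observe that $B$ must be $2$-connected: since $p,q,t$ are three distinct vertices of $V(B)$, the block $B$ has at least three vertices, and by the structural classification of blocks recalled above (every block is either a $2$-connected graph, a single edge, or an isolated vertex) the only remaining possibility is that $B$ is $2$-connected. Next I would note that the conclusion is equivalent to the existence of a single simple $p$--$q$ path in $B$ passing through $t$: given such a path, splitting it at $t$ yields a $p$--$t$ path $P$ and a $q$--$t$ path $Q$ with $V(P)\cap V(Q)=\{t\}$, and conversely $P\cdot Q$ is such a path. This reformulation is not strictly necessary but clarifies which classical fact is being invoked.

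The main step is then to apply the fan version of Menger's theorem to the $2$-connected graph $B$: for the vertex $v := t$ and the vertex set $U := \{p,q\}$ with $|U| = 2$, there exist two paths from $t$ to $U$ that are pairwise disjoint apart from the common endpoint $t$, that have distinct endpoints in $U$, and that each meet $U$ only in their final vertex. Hence one of them is a $t$--$p$ path $P$ and the other a $t$--$q$ path $Q$, with $p$ occurring only as an endpoint of $P$ and $q$ only as an endpoint of $Q$, and $V(P)\cap V(Q) = \{t\}$. Renaming $P$ as a $p$--$t$ path and $Q$ as a $q$--$t$ path gives exactly the claimed paths inside $B$.

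If one prefers to rely only on the set version of Menger's theorem stated in the preliminaries, I would argue as follows instead. After disposing of the easy cases where $p$ or $q$ is a neighbour of $t$ (then the edge $\{t,p\}$ or $\{t,q\}$ directly serves as one path, and the other is found inside the connected graph $B-p$ or $B-q$), assume $p,q \notin N_B(t)$. Subdivide every edge of $B$ with one endpoint in $N_B(t)$ and the other in $\{p,q\}$, producing a graph $\tilde B$ in which there is no edge between $X := N_B(t)$ and $Y := \{p,q\}$ and $X\cap Y=\emptyset$. Using that $B$ is $2$-connected (so $B-v$ is connected for every vertex $v$, and every vertex of $B$ has degree at least two), one checks that no single vertex separates $X$ from $Y$ in $\tilde B$, so Menger's theorem yields two vertex-disjoint $X$--$Y$ paths; undoing the subdivisions and prepending $t$ (or, if $t$ happens to already lie on one of the two paths, taking the appropriate subpath from $t$) produces the required $P$ and $Q$.

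This lemma is genuinely easy; I expect the only mildly delicate point, in either approach, to be the bookkeeping that guarantees the two paths meet $\{p,q\}$ only in their endpoints and share nothing but $t$. The fan lemma hands this to us directly, so in the write-up I would simply cite it and omit the subdivision argument.
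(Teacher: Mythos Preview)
Your proposal is correct and essentially follows the same approach as the paper. Both arguments observe that a block on at least three vertices is $2$-connected and then invoke Menger's theorem; the paper carries out the standard fan-lemma trick explicitly (adding an auxiliary vertex $v_{pq}$ adjacent to $p$ and $q$, noting the resulting graph is still $2$-connected, and finding two internally vertex-disjoint $v_{pq}$--$t$ paths), whereas you cite the fan lemma directly. Your alternative subdivision argument is unnecessary, as you anticipated.
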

\begin{claimproof}
        Since~$B$ is a block containing at least three vertices, it is a 2-connected graph. We first add a new vertex~$v_{p, q}$ and edges~$\{v_{pq}, p\}$, $\{v_{pq}, q\}$ to block~$B$ to obtain a new graph~$B'$. Observe that~$B'$ is still a 2-connected graph, as the above modification can be seen as adding a new~$B$-path~$v, v_{pq}, q$ to~$B$ and adding a~$B$-path to a 2-connected graph results in a 2-connected graph (see~\cite[Proposition 3.1.1]{Diestel17}).
        Next, we apply Menger's theorem with~$X = \{v_{pq}\}$ and~$Y = \{t\}$ in the (2-connected) graph~$B'$, which ensures that there are 2 \ivd $v_{pq}$-$t$ paths~$P'$ and~$Q'$ in~$B'$.
        By construction of~$B'$ and the fact that~$P'$ and~$Q'$ are \ivd $v_{pq}$-$t$ paths in~$B'$, we know that one of the~$v_{pq}$-$t$ paths (assume w.l.o.g.~$P'$) contains vertex~$p$ of block~$B$, whereas the other~$v_{pq}$-$t$ path~$Q'$ contains vertex~$q$ of block~$B$. 
        Hence the paths~$P = P'[p, t]$ and~$Q = Q'[q, t]$ are the desired paths with~$V(P) \cap V(Q) = \{t\}$.
\end{claimproof}

This following observation follows from Proposition~\ref{Proposition:ThereAre2PathsInsideBlock}, as explained below.

\begin{observation}
\label{Observation:Degree2Path}
    Let~$\calF$ be the block-cut forest of a graph~$G$. Suppose there is an~$x$-$y$ path~$\calP$ in~$\calF$ between cutvertices~$x, y$ of~$\calF$. Then for any distinct vertices~$v_1, v_2 \in V(G)$ from 2 distinct blocks on~$\calP$, the graph~$G$ has an~$x$-$y$ path~$P_{12}$ through~$v_1, v_2$, i.e., $v_1, v_2 \in V(P_{12})$.
\end{observation}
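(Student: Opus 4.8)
The statement asks: given the block-cut forest $\calF$ of $G$, a path $\calP$ in $\calF$ between cutvertices $x,y$, and two distinct vertices $v_1,v_2$ lying in two distinct blocks on $\calP$, find an $x$-$y$ path in $G$ passing through both $v_1$ and $v_2$. The plan is to reduce this to repeated application of Proposition~\ref{Proposition:ThereAre2PathsInsideBlock} (two internally vertex-disjoint paths inside a block meeting only at a common endpoint), walking along the blocks of $\calP$ in order.

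First I would set up the structure of $\calP$: since $\calP$ is a path in the bipartite block-cut tree between two cutvertices, it alternates cutvertex--block--cutvertex--$\cdots$--block--cutvertex, say $x = a_0, B_1, a_1, B_2, a_2, \ldots, B_m, a_m = y$, where the $a_i$ are cutvertices and the $B_j$ are blocks, and by Observation~\ref{observation:property_of_blocks} consecutive blocks $B_j, B_{j+1}$ share exactly the cutvertex $a_j$, with $a_{j-1}, a_j \in V(B_j)$ for each $j$. Now $v_1$ lies in some block $B_p$ on $\calP$ and $v_2$ in some block $B_q$ on $\calP$ with $p \neq q$ (WLOG $p < q$; if $v_i$ happens to equal one of the cutvertices $a_j$ on $\calP$, the argument only simplifies). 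The idea is to build the $x$-$y$ path by concatenating, for each block $B_j$ on $\calP$, a path inside $B_j$ from $a_{j-1}$ to $a_j$ — and to route through $v_1$ when $j = p$ and through $v_2$ when $j = q$.

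The key step is handling the two special blocks $B_p$ and $B_q$. Inside $B_p$ I need an $a_{p-1}$-$a_p$ path passing through $v_1$. If $B_p$ has at least three vertices it is $2$-connected, and I can apply Proposition~\ref{Proposition:ThereAre2PathsInsideBlock} with the three vertices $a_{p-1}, a_p, v_1$ to get an $a_{p-1}$-$v_1$ path $P$ and an $a_p$-$v_1$ path $Q$ inside $B_p$ with $V(P) \cap V(Q) = \{v_1\}$; then $P \cdot Q^{-1}$ is a simple $a_{p-1}$-$a_p$ path through $v_1$ inside $B_p$. (If $B_p$ is a single edge, then $v_1$ must be one of its two endpoints, i.e. $v_1 \in \{a_{p-1}, a_p\}$, and the trivial path works.) I do the symmetric thing in $B_q$ to get an $a_{q-1}$-$a_q$ path through $v_2$ inside $B_q$. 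For every other block $B_j$ on $\calP$ I simply take any $a_{j-1}$-$a_j$ path inside $B_j$ (which exists since $B_j$ is connected and contains both cutvertices). Concatenating all these subpaths in order from $B_1$ to $B_m$ yields a walk from $x$ to $y$ in $G$; it is in fact a \emph{simple} path because the blocks $B_1, \ldots, B_m$ pairwise intersect only in the shared cutvertices $a_j$ (distinct blocks of a graph share at most one vertex, and on a path of the block-cut tree the shared vertex of $B_j$ and $B_{j+1}$ is exactly $a_j$, while non-consecutive blocks on the path are vertex-disjoint), and each subpath enters $B_j$ at $a_{j-1}$ and leaves at $a_j$ without revisiting earlier vertices. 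This path contains $v_1 \in V(B_p)$ and $v_2 \in V(B_q)$, as desired.

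The main obstacle is the bookkeeping that makes the concatenation genuinely simple: one must argue carefully that the interior vertices of the chosen subpath in $B_j$ do not collide with vertices used in other blocks $B_{j'}$ ($j' \neq j$), which follows from the fact that an interior vertex of $B_j$ (one other than the cutvertices $a_{j-1}, a_j$ that $B_j$ shares with its neighbors on $\calP$) cannot belong to any other block on $\calP$ — otherwise it would be a shared vertex of two distinct blocks and hence itself a cutvertex appearing between them on the block-cut tree path, contradicting that on $\calP$ the blocks $B_j$ and $B_{j'}$ are separated by the unique cutvertices listed. Once this is pinned down, the rest is a direct assembly, and the case analysis for degenerate blocks (single edges or isolated vertices, which cannot actually carry two distinct non-cutvertex targets) is routine.
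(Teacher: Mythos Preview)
Your proposal is correct and follows essentially the same approach as the paper: concatenate, for each block $B$ along $\calP$, a path inside $B$ between its two boundary cutvertices, and for the two special blocks containing $v_1$ and $v_2$ invoke Proposition~\ref{Proposition:ThereAre2PathsInsideBlock} to route through the prescribed vertex. The paper's justification is just this sketch; your version is more detailed (handling single-edge blocks and arguing simplicity of the concatenation), but the underlying idea is identical.
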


We can construct the desired path~$P_{12}$ by concatenating paths inside each block~$B$ on the $x$-$y$ path in~$\mathcal{F}$, where each path connects the cutvertex~$p$ connecting to the previous block, with the cutvertex~$q$ connecting through the next block. If a forced~$v_i$ is chosen from block~$B$, we can use Proposition~\ref{Proposition:ThereAre2PathsInsideBlock} to obtain a~$p$-$q$ path through~$t = v_i$.

\subparagraph{Parameterized algorithms}.\label{Subparagraph:AdditionalPrelimsAboutPC} 
A \emph{parameterized problem}~$L$ is a subset of $\Sigma^* \times \mathbb{N}$, where $\Sigma$ is a finite alphabet. 
A parameterized problem~$L \subseteq \Sigma^* \times \mathbb{N}$ is called \emph{fixed parameter tractable} (FPT) if there exists an algorithm which for every input~$(x, k) \in \Sigma^* \times \mathbb{N}$ correctly decides whether~$(x, k) \in L$ in~$f(k) \cdot |x|^{\Oh (1)}$ time, where~$f \colon \mathbb{N} \to \mathbb{N}$ is a computable function. We refer to~\cite{CyganFKLMPPS15} for more background on parameterized algorithms. 

Given a graph~$G$ and~$X, Y \subseteq V(G)$, a set~$S \subseteq V(G)$ is called an \emph{$(X, Y)$-separator} if there is no $x$-$y$ path in~$G-S$ for any~$x \in X \setminus S$ and~$y \in Y \setminus S$. The notion of important separator was defined by Marx~\cite{Marx06} to obtain an FPT algorithm for \textsc{Multiway Cut}. He also derived several useful properties.
\begin{definition}[Important separator]{\cite[ Definition 8.49]{CyganFKLMPPS15}}
Let~$G$ be an undirected graph, let~$X, Y \subseteq V(G)$ be two sets of vertices, and let~$V^\infty \subseteq V(G)$ be a set of undeletable vertices. Let~$S \subseteq V(G) \setminus V^\infty (G)$ be an~$(X, Y)$-separator and let~$R$ be the set of vertices reachable from~$X \setminus S$ in~$G - S$.
We say that~$S$ is an \emph{important~$(X, Y)$-separator} if it is inclusion-wise \emph{minimal} and there is no~$(X, Y)$-separator~$S' \subseteq V(G) \setminus V^\infty(G)$ with~$|S'| \leq |S|$ such that~$R \subset	R'$, where~$R'$ is the set of vertices reachable from~$X \setminus S'$ in~$G - S'$.
\end{definition}

\begin{proposition}{\cite[Proposition 8.50]{CyganFKLMPPS15}}
\label{proposition:existenceofimpsep}
Let~$G$ be an undirected graph and~$X, Y \subseteq V(G)$ be two sets of vertices, and let~$V^\infty \subseteq V(G)$ be a set of undeletable vertices. Let~$\hat{S} \subseteq V(G) \setminus V^\infty(G)$ be an~$(X, Y)$-separator and let~$\hat{R}$ be the set of vertices reachable from~$X \setminus \hat{S}$ in~$G-\hat{S}$. Then there is an important~$(X, Y)$-separator~$S' = N_G(R') \subseteq  V(G) \setminus V^\infty(G)$ such that~$|S'| \leq |\hat{S}|$ and~$\hat{R} \subseteq R'$. 
\end{proposition}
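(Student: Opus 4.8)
The plan is to run a standard extremal argument over the family of candidate separators. Let $\mathcal{S}$ denote the set of all $(X,Y)$-separators $S^\star \subseteq V(G) \setminus V^\infty(G)$ with $|S^\star| \le |\hat{S}|$ whose reachable set $R^\star$ (the set of vertices reachable from $X \setminus S^\star$ in $G - S^\star$) satisfies $\hat{R} \subseteq R^\star$. This family is non-empty because $\hat{S} \in \mathcal{S}$ by hypothesis, so I can select an element $S' \in \mathcal{S}$ for which $|R'|$ is maximum, and, subject to that, $|S'|$ is minimum. I claim this $S'$ witnesses the proposition, and indeed $|S'| \le |\hat{S}|$ and $\hat{R} \subseteq R'$ hold by construction, so it remains to check that $S'$ is an important $(X,Y)$-separator equal to $N_G(R')$.

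First I would argue that $S'$ is inclusion-wise minimal. Suppose not; then some proper subset $S'' \subsetneq S'$ is still an $(X,Y)$-separator avoiding $V^\infty(G)$, and of strictly smaller size. Removing fewer vertices can only enlarge the reachable set, so $R'' \supseteq R' \supseteq \hat{R}$, whence $S'' \in \mathcal{S}$; by the maximality of $|R'|$ we get $R'' = R'$, and then $S''$ contradicts the minimality of $|S'|$. Hence $S'$ is minimal.

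Next I would establish $S' = N_G(R')$. The inclusion $N_G(R') \subseteq S'$ is immediate: a vertex adjacent to $R'$ that is neither in $R'$ nor in $S'$ would itself be reachable from $X \setminus S'$ in $G - S'$, contradicting $R'$ being exactly the reachable set. For the reverse inclusion, take any $v \in S'$; by minimality $S' \setminus \{v\}$ is not an $(X,Y)$-separator, so there is a simple $X$-$Y$ path $P$ in $G - (S' \setminus \{v\})$, which therefore must pass through $v$. The prefix of $P$ from its start up to the vertex immediately preceding $v$ avoids all of $S'$, so that preceding vertex lies in $R'$, giving $v \in N_G(R')$. Thus $S' = N_G(R')$.

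Finally, importance follows from the choice of $S'$: it is minimal, and if there were an $(X,Y)$-separator $S^\star \subseteq V(G) \setminus V^\infty(G)$ with $|S^\star| \le |S'|$ and $R' \subsetneq R^\star$, then $S^\star \in \mathcal{S}$ with $|R^\star| > |R'|$, contradicting our maximal choice. I expect the only mildly delicate point to be the bookkeeping around the degenerate case where the path $P$ in the previous step starts at $v$ itself (equivalently, a deleted vertex lying in $X$, or the symmetric situation with $Y$); this is handled routinely using that $X \setminus S' \subseteq R'$, and everything else is pure extremal reasoning.
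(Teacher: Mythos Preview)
The paper does not give its own proof of this proposition; it is stated in the preliminaries as a citation to \cite[Proposition 8.50]{CyganFKLMPPS15} and used as a black box. Your extremal argument---pick a separator in the admissible family with reachable set of maximum size and, subject to that, of minimum cardinality---is exactly the standard textbook proof, and it is correct. The only point that deserves a line more than you gave it is the degenerate case you flag at the end: when $v\in S'\cap X$ and the witnessing path starts at~$v$, the cleanest way to close the argument is to observe that $N_G(R')\subseteq S'$ is itself an $(X,Y)$-separator disjoint from~$V^\infty$ whose reachable set contains~$R'$, so by your tie-breaking on $|S'|$ one gets $S'=N_G(R')$ directly, without chasing predecessors on a path.
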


\begin{theorem}{\cite[Theorem 8.51]{CyganFKLMPPS15}}
\label{Theorem:Algorithm_To_Generate_Imp_Sep}
    Let~$X, Y \subseteq V (G)$ be two sets of vertices in an undirected graph~$G$, let~$k \geq 0$ be an integer, and let~$\mathcal{S}_k$ be the set of all $(X, Y )$-important separators of size at most~$k$. Then~$|\mathcal{S}_k| \leq  4^k$ and~$\mathcal{S}_k$ can be constructed in time~$\Oh(|\mathcal{S}_k| \cdot k^2 \cdot (n + m))$.
\end{theorem}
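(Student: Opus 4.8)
The plan is to establish both assertions with a single branching procedure of depth~$\Oh(k)$ and branching factor two: traversing its recursion tree constructs~$\mathcal{S}_k$, while bounding the number of leaves yields~$|\mathcal{S}_k| \leq 4^k$. Write~$\lambda = \lambda_G(X,Y)$ for the minimum size of an~$(X,Y)$-separator, computable by at most~$\lambda \leq k$ augmenting-path iterations; if~$\lambda > k$ then~$\mathcal{S}_k = \emptyset$ and we are done. Recursive subinstances will be allowed a set of undeletable vertices, as in the~$V^\infty$-version of important separators used in Proposition~\ref{proposition:existenceofimpsep}; the initial instance has none.

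First I would set up the structural backbone, which rests on submodularity of the neighbourhood function, $|N_G(R_1 \cup R_2)| + |N_G(R_1 \cap R_2)| \leq |N_G(R_1)| + |N_G(R_2)|$ whenever~$X \subseteq R_i$ and~$N_G(R_i) \cap Y = \emptyset$. Let~$R_0$ be the union of the source sides (sets of vertices reachable from~$X$ after removal) of all \emph{minimum}~$(X,Y)$-separators; iterating submodularity shows~$S_0 := N_G(R_0)$ is itself a minimum~$(X,Y)$-separator and is the unique one whose source side is inclusion-maximal. The same argument shows the source side~$R(S)$ of \emph{every} important~$(X,Y)$-separator~$S$ contains~$R_0$: otherwise~$N_G(R(S) \cup R_0)$ would be an~$(X,Y)$-separator of size at most~$|S|$ whose source side strictly contains~$R(S)$, contradicting importance of~$S$. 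Consequently, for every~$w \in S_0 = N_G(R_0)$ we have~$w \in R(S) \cup S$, because~$w$ has a neighbour in~$R_0 \subseteq R(S)$.

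Then comes the branching. If~$S_0 = \emptyset$ (equivalently~$\lambda = 0$, so~$X$ and~$Y$ are already separated), the only important separator is~$\emptyset$; output it. Otherwise pick any~$v \in S_0$ and branch according to whether the sought important separator~$S$ contains~$v$. In the first branch,~$v \in S$: recurse on~$(G - v, X, Y, k-1)$ and add~$v$ to every set returned; deleting~$v$ decreases the minimum cut by at most one, so the potential~$\mu := 2k - \lambda$ decreases by at least one, since~$2(k-1) - \lambda_{G-v}(X,Y) \leq 2(k-1) - (\lambda-1) = \mu - 1$. In the second branch,~$v \notin S$, hence~$v \in R(S)$: recurse on the instance obtained from~$(G, X, Y, k)$ by adding~$v$ to~$X$ and marking it undeletable; since~$v \notin R_0$ while~$R_0$ is the maximal minimum source side, no~$(X \cup \{v\}, Y)$-separator of size~$\lambda$ avoiding~$v$ can exist (its source side unioned with~$R_0$ would, by submodularity, be a strictly larger minimum source side), so the minimum cut rises to at least~$\lambda+1$ and~$\mu$ again decreases by at least one. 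A routine induction — using that an important separator of~$G$ containing~$v$ restricts to an important separator of~$G-v$, and the analogous statement for the second branch — shows every member of~$\mathcal{S}_k$ is produced, while every set produced has size at most~$k$. Since~$\mu \geq 0$ at every live node, $\mu \leq 2k$ at the root, and~$\mu$ strictly decreases along every tree edge, the recursion tree has depth at most~$2k$ and hence at most~$2^{2k} = 4^k$ leaves, each contributing one set; this proves~$|\mathcal{S}_k| \leq 4^k$, and executing the recursion (recomputing a minimum cut incrementally in~$\Oh(k^2(n+m))$ time per node, and sieving the produced sets for importance) constructs~$\mathcal{S}_k$.

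The one point that is not immediate is making the running time output-sensitive, i.e.\ $\Oh(|\mathcal{S}_k| \cdot k^2 \cdot (n+m))$ rather than~$\Oh(4^k \cdot k^2 \cdot (n+m))$, and this is what I expect to be the main obstacle. I would prune every subinstance whose minimum cut already exceeds its remaining budget (it contributes no separator of the required size), and then argue by a charging scheme that in the pruned tree each internal node has a descendant leaf whose output is an important separator not produced at any other leaf, so the pruned tree has only~$\Oh(|\mathcal{S}_k|)$ nodes. Arranging the branching choices so that distinct live leaves yield distinct important separators is the delicate part; by contrast, the submodular uniqueness of~$S_0$ and the potential-function accounting that give the clean~$4^k$ bound are essentially forced.
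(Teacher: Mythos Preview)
The paper does not contain its own proof of this statement: it is quoted verbatim as Theorem~8.51 from the textbook of Cygan et al.\ \cite{CyganFKLMPPS15} and used as a black box. There is therefore nothing in the paper to compare your attempt against.

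That said, your sketch is the standard proof from the cited source: the submodular uniqueness of the farthest minimum separator~$S_0$, the two-way branch on a vertex~$v \in S_0$ (either delete~$v$ and drop~$k$ by one, or absorb~$v$ into~$X$ and force~$\lambda$ up by one), and the potential~$\mu = 2k - \lambda$ that decreases in both branches. You correctly identify the output-sensitive bound~$\Oh(|\mathcal{S}_k| \cdot k^2 \cdot (n+m))$ as the only non-routine point; the textbook handles it exactly as you suggest, by pruning branches whose minimum cut exceeds the remaining budget and arguing that every surviving leaf outputs an important separator (distinct leaves give distinct sets because their branching histories fix disjoint membership decisions). So your proposal is correct and matches the reference the paper defers to.
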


\section{Additional proofs}
\subsection{Hardness proof for \mwns}

\begin{lemma} \label{Lemma:NP-hardness_proof_of_MWNS}
    \mwns is~\nphard.
\end{lemma}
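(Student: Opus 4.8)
The plan is to reduce from \textsc{(Node) Multiway Cut with undeletable terminals}, which is known to be NP-hard, using exactly the construction already sketched in the introduction of the paper. Given an instance~$(H, T = \{t_1, \ldots, t_\ell\}, k)$ of \textsc{Node Multiway Cut} (where the solution may not delete terminals), build a graph~$G$ from~$H$ by adding~$\ell - 1$ new non-terminal vertices~$w_1, \ldots, w_{\ell - 1}$ with~$N_G(w_i) = \{t_i, t_{i+1}\}$, and output the \mwnsshort instance~$(G, T, k)$ with the same parameter~$k$. This is clearly a polynomial-time construction. It remains to prove the equivalence: $(H, T, k)$ is a YES-instance of \textsc{Node Multiway Cut} if and only if~$(G, T, k)$ is a YES-instance of \mwnsshort.

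For the forward direction, suppose~$S \subseteq V(H) \setminus T$ with~$|S| \le k$ is a multiway cut, so every~$t_i$ lies in a distinct component of~$H - S$. I claim the same set~$S$ is a MWNS of~$(G, T)$. Since the added vertices~$w_i$ have degree~$2$ and connect only consecutive terminals, any simple cycle in~$G$ through some~$w_i$ must also pass through both~$t_i$ and~$t_{i+1}$; more generally I would argue that in~$G - S$ there is no~$T$-cycle: a~$T$-cycle would give two \ivd paths between two distinct terminals~$t_a, t_b$ in~$G - S$, and contracting/deleting the~$w_i$'s along such paths would still yield a~$t_a$-$t_b$ walk in~$H - S$ (the~$w_i$'s only provide the alternative route~$t_i$-$w_i$-$t_{i+1}$ parallel to a possible edge, and even using them, the path between the two terminals restricted to~$V(H)$ connects~$t_a$ to~$t_b$ through~$H - S$), contradicting that~$t_a, t_b$ are in different components of~$H - S$. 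Hence~$S$ is a valid MWNS. (One should double-check that the newly added edges do not create an~$H$-internal connection; since each~$w_i$ only sees~$t_i, t_{i+1}$ and terminals are undeletable but also cannot be interior vertices of a path between two \emph{other} terminals without the path entering~$V(H)$, the argument goes through. Using Proposition~\ref{Proposition:DifferentWaysOfLookingAtMWNS}, it is cleanest to verify directly that~$G - S$ has no~$T$-cycle and that~$T$ is independent in~$G$, which it is since we never add edges incident to two terminals.)

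For the reverse direction, suppose~$S \subseteq V(G) \setminus T$ with~$|S| \le k$ is a MWNS of~$(G, T)$. Then~$S$ contains no~$w_i$ only if... actually~$S$ may contain some~$w_i$; in that case replace each~$w_i \in S$ by an arbitrary neighbor in~$H$ of~$t_i$ (or of~$t_{i+1}$) that was available, or more simply observe that since~$w_i$ has degree~$2$, deleting~$w_i$ is equivalent for separation purposes to deleting one of~$t_i, t_{i+1}$ — but terminals cannot be deleted, so I instead argue that any~$w_i \in S$ can be removed from~$S$ at no cost: if~$S' := S \setminus \{w_i\}$ were not a MWNS, a new~$T$-cycle through~$w_i$ would appear, but such a cycle uses both edges~$t_i w_i$ and~$w_i t_{i+1}$, hence contains the terminals~$t_i$ and~$t_{i+1}$, and removing the detour~$t_i$-$w_i$-$t_{i+1}$ and... this needs a little care, so the cleaner route is: since each~$w_i$ has exactly two neighbors~$t_i, t_{i+1}$, the path~$t_i$-$w_i$-$t_{i+1}$ together with any~$t_i$-$t_{i+1}$ path in~$G - S - w_i$ avoiding~$w_i$ would already be a~$T$-cycle avoiding~$w_i$; conversely if no such path exists then~$w_i$ is not needed. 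Formally: let~$S_H := (S \cap V(H))$, and modify~$S$ so that~$S \subseteq V(H)$ while keeping it a MWNS, which is possible because each~$w_i$'s separating role is subsumed by the structure. Then in~$H - S_H$, no pair~$t_i, t_{i+1}$ of consecutive terminals can be connected by even a single path: such a path~$P$ together with the edge-path~$t_i$-$w_i$-$t_{i+1}$ (which survives in~$G - S$) forms a~$T$-cycle, contradiction. So consecutive terminals lie in distinct components of~$H - S_H$, and by transitivity along~$t_1, t_2, \ldots, t_\ell$ all~$\ell$ terminals lie in distinct components, so~$S_H$ (of size~$\le k$) is a multiway cut for~$H$.

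The main obstacle is the careful handling of the degree-$2$ gadget vertices~$w_i$ in the reverse direction — specifically, cleanly arguing that one may assume~$S \cap \{w_1, \ldots, w_{\ell-1}\} = \emptyset$ without loss of generality, and that a \emph{single}~$t_i$-$t_{i+1}$ path in~$H - S_H$ (not two \ivd ones) already yields a~$T$-cycle in~$G - S$ via the parallel~$w_i$-route. This is exactly where the "direct edge / length-$0$ interior counts as two \ivd paths" convention from Proposition~\ref{Proposition:DifferentWaysOfLookingAtMWNS} and the~$T$-cycle viewpoint pay off, and the rest is routine bookkeeping. I expect the whole proof to be short — under half a page — since both directions are essentially immediate once the~$w_i$ gadget is understood.
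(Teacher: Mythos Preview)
Your construction and forward direction are correct and match the paper's. The reverse direction has two gaps.

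The minor one: your ``by transitivity'' step is false. That $t_1,t_2$ lie in different components of $H-S_H$ and $t_2,t_3$ lie in different components does not imply that $t_1,t_3$ do. This is easily repaired: instead of consecutive pairs, pick any two terminals $t_a,t_b$ in the same component of $H-S_H$ joined by a path $P$ in $H-S_H$ with no internal terminals (truncate if necessary); then $P$ together with the $W$-path $t_a,w_a,t_{a+1},\ldots,w_{b-1},t_b$ (which survives in $G-S$ once $S\cap W=\emptyset$) is a simple $T$-cycle in $G-S$, a contradiction.

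The real gap is the ``without loss of generality $S\cap W=\emptyset$'' step, which you correctly flag as the main obstacle but do not establish. Your attempted justifications do not work: deleting $w_i$ from $S$ can genuinely destroy the MWNS property (whenever $t_i$ and $t_{i+1}$ are still connected in $G-S$), and the cycle you describe as ``a $T$-cycle avoiding $w_i$'' in fact \emph{contains} $w_i$. Worse, the assertion ``every MWNS of size $\le k$ can be replaced by one of size $\le k$ avoiding $W$'' is, via the forward direction and the repaired argument above, equivalent to the very implication you are trying to prove, so invoking it is circular. The paper closes this gap by a different route: starting from an arbitrary MWNS $S'$ of $(G,T)$, it applies Lemma~\ref{lemma:easilyseparableterminal} at most $|T|-1$ times to grow $S'$ into a full multiway \emph{separator} $S^*$ of $(G,T)$ with $|S^*|\le|S'|+|T|-1$. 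Any multiway separator of $G$ must contain every $w_i$ (else $t_i$ and $t_{i+1}$ remain connected through $w_i$), so $W\subseteq S^*$ and hence $|S^*\setminus W|\le|S'|\le k$; and $S^*\setminus W\subseteq V(H)\setminus T$ is then the desired multiway cut for $(H,T)$. This add-$(|T|-1)$-then-subtract-$|W|=(|T|-1)$ accounting is the missing idea in your plan.
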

\begin{proof}
    We give a polynomial-time reduction from~\mws to~\mwnsshort. The~\mws problem is formally defined as follows.
    
    \defparprob{\mws(\mwsshort)}{An undirected graph~$G$, terminal set~$T \subseteq V(G)$, and a positive integer~$k$.}{$k$}{Is there a set~$S \subseteq V(G) \setminus T$ with~$|S| \leq k$ such that there does not exist a pair of distinct terminals~$t_i, t_j \in T$ for which there is a~$t_i$-$t_j$~path in~$G-S$?}
    
    \mws is~\nphard \cite{DahlhausJPSY92}.
    Given an instance~\gtk of \mwsshort we describe the construction of an instance~$(G', T, k)$ of \mwnsshort; it will be easy to see that it can be carried out in polynomial time. Consider an arbitrary ordering~$t_1, t_2, \ldots, t_{|T|}$ of the set~$T$. We construct the graph~$G'$ such that~$(G', T, k)$ is a YES-instance of~\mwnsshort if and only if~\gtk is a YES-instance of~\mwsshort.

    We begin with~$G' := G$. Then for each~$i \in [|T|-1]$, we create a vertex~$w_i$ in~$G'$, and insert edges~$\{t_i, w_i\}$ and~$\{w_i, t_{i+1}\}$ in~$G'$. This completes the construction of~$G'$. 
    
    Next, we prove that~\gtk is a YES-instance of~\mwsshort if and only if~$(G', T, k)$ is a YES-instance of~\mwnsshort.
    In the forward direction, assume that~\gtk is  a YES-instance of~\mwsshort, and let~$S \subseteq V(G) \setminus T$ be a \mwsshort of~\gtk. Note that by definition of~\mwsshort, for each pair of distinct terminals~$t_i, t_j \in T$, there is no~$t_i$-$t_j$ path in~$G-S$. Since the transformation into~$G'$ consists of adding degree-2 vertices connecting consecutive terminals, any pair of distinct terminals~$t_i, t_j \in T$ with~$i<j$ can be separated in~$G'-S$ by removing the vertex~$w_i$. Hence the same set~$S \subseteq V(G') \setminus T$ is also a \mwnsshort of~$(G', T, k)$.

    In the reverse direction, assume that~$(G', T, k)$ is a YES-instance of \mwnsshort and let~$S' \subseteq V(G') \setminus T$ be a solution. Let~$W := \bigcup_{i \in [|T|-1]} \{w_i\}$ be the set of newly added vertices in~$G'$. In the case when~$S' \cap W = \emptyset$, it is easy to see that the same set~$S'$ is also a \mwsshort of~\gtk: because if~$S$ is not a \mwsshort of~\gtk then it implies that there is a pair of distinct terminals~$t_i, t_j \in T$ connected by a~$t_i$-$t_j$ path say~$P$ in~$G-S$. By construction of~$G'$, there is also a unique $t_i$-$t_j$ path say~$P'$ in~$G' [W \cup T]$. Note that the paths~$P$ and~$P'$ are~$T$-disjoint, and neither~$P$ nor~$P'$ intersects the set~$S'$, a contradiction to the fact that~$S'$ is a solution of~$(G', T, k)$ since the paths witness that~$t_i, t_j$ cannot be separated by removing a single non-terminal.

    Hence we need to show how to prove the case when~$S' \cap W \neq \emptyset$. In this case, note that due to Lemma~\ref{lemma:easilyseparableterminal}, there is a terminal~$t \in T$ and a non-terminal~$x \in V(G') \setminus T$ such that the set~$S' \cup \{x\}$ is a~\ttsep. So by applying Lemma~\ref{lemma:easilyseparableterminal} at most~$|T|-1$ times we obtain a set~$S^* \subseteq V(G') \setminus T$ such that the set~$S^*$ is a~\mwsshort of~$(G', T)$. Moreover, note that~$|S^*| \leq |S'| + (|T|-1)$ because in each step of Lemma~\ref{lemma:easilyseparableterminal} we add at most 1 additional vertex to separate a terminal. Next, we obtain a new set~$\hat{S} := S^* \setminus W$. Now, it is easy to observe that the set~$\hat{S} \subseteq V(G') \setminus T$ is a solution of~$(G', T, k)$ such that~$\hat{S} \cap W = \emptyset$ thus (like in the previous case) we have the property that the set~$\hat{S}$ is also a \mwsshort of~\gtk. Hence, \gtk is a YES-instance of~\mwsshort.
\end{proof}

\subsection{Proof of Proposition~\ref{Proposition:DifferentWaysOfLookingAtMWNS}}
We repeat the statement of the proposition for convenience.
\DifferentWaysOfLookingAtMWNSState*

\begin{claimproof}
    We prove that~$\ref{Condition1:DifferentWaysOfLookingAtMWNS} \Rightarrow \ref{Condition2:DifferentWaysOfLookingAtMWNS} \Rightarrow \ref{Condition3:DifferentWaysOfLookingAtMWNS} \Rightarrow \ref{Condition1:DifferentWaysOfLookingAtMWNS}$, thereby finishing the proof. 

    $(\ref{Condition1:DifferentWaysOfLookingAtMWNS} \Rightarrow \ref{Condition2:DifferentWaysOfLookingAtMWNS}):$
    Consider a pair of distinct termianls~$t_i, t_j \in T$. Since~$S \subseteq V(G) \setminus T$, we have~$t_i, t_j \in V(G-S)$. If~$t_i, t_j$ already belong to different \ccts of~$G-S$ then for each vertex~$v \in S$, Condition~\ref{Condition2:DifferentWaysOfLookingAtMWNS} holds.

    Therefore assume that both~$t_i, t_j$ belong to the same \cct~$C$ of~$G-S$. Let~$B_i, B_j$ be arbitrary blocks of~$C$ containing terminal~$t_i, t_j$, respectively. First we show that~$B_i \neq B_j$. Towards this assume for a contradiction that~$B_i = B_j = B$. Then block~$B$ of~$G-S$ contains distinct terminals~$t_i, t_j \in T$. If~$\{t_i, t_j\} \in E(B)$ then we have \ivd $t_i$-$t_j$ paths~$P_1 = P_2 = \{t_i, t_j\}$, a contradiction to Condition~\ref{Condition1:DifferentWaysOfLookingAtMWNS}.
    Whereas if~$\{t_i, t_j\} \notin E(B)$, then we apply Menger's theorem with~$X= \{t_i\}$ and~$Y = \{t_j\}$ in 2-connected graph~$B$, which ensures that there are \ivd $t_i$-$t_j$ paths~$P_1, P_2$ inside block~$B$ of~$G-S$, a contradiction to Condition~\ref{Condition1:DifferentWaysOfLookingAtMWNS}.
    By the same argument, each block of~$C$ contains at most 1 terminal. Hence we have~$V(C) \setminus T \neq \emptyset$. Then by Observation~\ref{observation:Separatingcutvertex}, there is a vertex~$v \in V(C) \setminus T$ such that~$t_i, t_j$ belong to different \ccts of~$C-\{v\}$. Thus for the same vertex~$v$, it holds that~$t_i, t_j$ belong to different \ccts of~$G-(S \cup \{v\})$.

    $(\ref{Condition2:DifferentWaysOfLookingAtMWNS} \Rightarrow \ref{Condition3:DifferentWaysOfLookingAtMWNS}):$ 
    To show this we use a proof by contradiction, i.e., assume that Condition~\ref{Condition2:DifferentWaysOfLookingAtMWNS} holds whereas Condition~\ref{Condition3:DifferentWaysOfLookingAtMWNS} does not hold.\
    This implies that either the set $T$ is not an independent set or there is a simple cycle~$C$ in~$G-S$ containing two distinct terminals $t_i, t_j \in T$.
    Note that if~$T$ is not an independent set and there is an edge~$\{t'_i, t'_j\} \in E(G)$ for distinct~$t'_i, t'_j \in T$, then for the pair~$t'_i, t'_j \in T$, there does not exist a vertex~$v \in V(G) \setminus T$ such that $t'_i$ and $t'_j$ belong to different connected components of~$G-(S \cup \{v\})$ as~$t'_i, t'_j \in V(G-S)$ and~$\{t'_i, t'_j\} \in E(G-S)$, a contradiction to the fact that Condition~\ref{Condition2:DifferentWaysOfLookingAtMWNS} holds.
    
    Whereas in the other case when there is a simple cycle~$C$ in~$G-S$ containing distinct terminals $t_i, t_j \in T$, the removal of a single vertex~$v \in V(G) \setminus T$ does not disconnect $t_i$ and~$t_j$ in~$G-S$. Hence for the pair~$t_i, t_j \in T$, there is no vertex~$v \in V(G) \setminus T$, such that~$t_i$ and~$t_j$ belong to different \ccts of~$G-(S \cup \{v\})$, a contradiction to the fact that Condition~\ref{Condition2:DifferentWaysOfLookingAtMWNS} holds. 

    $(\ref{Condition3:DifferentWaysOfLookingAtMWNS} \Rightarrow \ref{Condition1:DifferentWaysOfLookingAtMWNS}):$
    Assume for a contradiction that Condition~\ref{Condition3:DifferentWaysOfLookingAtMWNS} holds whereas Condition~\ref{Condition1:DifferentWaysOfLookingAtMWNS} does not hold.
    This implies that there exists a pair of distinct terminals~$t_i, t_j \in T$ such that there are \ivd $t_i$-$t_j$ paths~$P_1, P_2$ in $G-S$.
    If either~$P_1$ or~$P_2$ does not have internal vertices then we have~$\{t_i, t_j\} \in E(G)$, a contradiction to the fact that~$T$ is an independent set (and hence Condition~\ref{Condition3:DifferentWaysOfLookingAtMWNS}).
    Hence assume for the case when both~$P_1, P_2$ have some internal vertices. Then we can obtain a cycle~$C := P_1 \cdot P_2$ in~$G-S$ containing terminals~$t_i, t_j$, a contradiction to the fact that Condition~\ref{Condition3:DifferentWaysOfLookingAtMWNS} holds.
\end{claimproof}

\subsection{Proof of Lemma~\ref{lemma:easilyseparableterminal}}
\EasilySeparableTerminalState*
\begin{proof}
First note that if there is a~\cct of~$G-S$ containing exactly one terminal~(say~$t \in T$), i.e., if~$S$ itself is a~\ttsep for some~$t \in T$, then for each~$v \in S$, the set~$S \cup \{v\}$ is a~\ttsep. Thus the above lemma holds. Therefore assume that each \cct of~$G-S$ either contains no terminal or at least two terminals. Choose an arbitrary \cct~$C$ of~$G-S$ containing at least two terminals. Note that such a~\cct $C$ exists because~$|T| \geq 2$ (for a non-trivial instance of~\mwnsshort) and~$S \subseteq V(G) \setminus T$ alone is not a~\ttsep for any~$t \in T$. Let~$C'$ be the block-cut graph of the \cct~$C$. By Lemma~\ref{lemma:BlockGraphOfCCT}, we know that~$C'$ is a tree. 
Let~$\calT$ be a rooted tree formed by fixing an arbitrary block of~$C'$ as the root node. Next, for each terminal~$t \in V(C) \cap T$, we define the \emph{depth} of~$t$ with respect to the block-cut tree~$\calT$, denoted by~$d(t)$, as the shortest distance from the root to a block that contains~$t$. Note that~$t$ may appear in multiple blocks but its depth is uniquely defined by those blocks which contain~$t$ and are nearest to the root node in the tree~$\calT$.
Let~$t \in V(C) \cap T$ be an arbitrary terminal of the \cct~$C$ such that the depth of~$t$ is maximum, i.e., for all~$t' \in V(C) \cap T$, we have~$d(t') \leq d(t)$.
Let~$B \in \mathcal{B}$ be a block in the tree~$\calT$ that contains terminal~$t$ and has the property that among all the blocks of~$\calT$ containing~$t$, block~$B$ has the minimum depth,i.e., the distance from the root node to block~$B$ is~$d(t)$. Using the following claim we prove that~$B$ is not the root of~$\calT$.

\begin{claim}
\label{Claim:property_of_the_block_B}
	The block~$B$ defined above is not the root node of the block-cut tree~$\calT$.
\end{claim}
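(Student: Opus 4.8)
The plan is to argue by contradiction: suppose the block~$B$ selected above is the root of the block-cut tree~$\calT$. By the definition of~$B$, the distance from the root to~$B$ equals~$d(t)$, so in this case~$d(t) = 0$. Since~$t$ was chosen among the terminals of~$C$ to have \emph{maximum} depth, it would follow that~$d(t') = 0$ for \emph{every} terminal~$t' \in V(C) \cap T$.

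Next I would use the bipartite structure of the block-cut tree to pin down which blocks have depth~$0$. In~$\calT$ the nodes alternate between blocks and cutvertices along every path, so the distance (in~$\calT$) from the root block to any other block is even and at least~$2$; hence the root block~$B$ is the \emph{unique} block at distance~$0$ from the root. Consequently, a terminal~$t'$ with~$d(t') = 0$ must be contained in a block at depth~$0$, i.e., in~$B$ itself. Combining this with the previous paragraph, all terminals of~$C$ would lie in the single block~$B$.

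To finish, I would invoke Observation~\ref{Observation:Block_contains_atmost_1-terminal}: since~$B$ is a block of the connected component~$C$ of~$G-S$, it is also a block of~$G-S$, so~$B$ contains at most one terminal. But~$C$ was chosen to contain at least two distinct terminals, all of which we just argued lie in~$B$ --- a contradiction. Hence~$B$ is not the root node of~$\calT$, proving the claim.

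I do not expect any serious obstacle here; the only points requiring a little care are the observation that a block of the component~$C$ is indeed a block of~$G-S$ (so that Observation~\ref{Observation:Block_contains_atmost_1-terminal} applies), and the parity argument that the root block is the only depth-$0$ block, both of which are routine consequences of the definition of the (rooted) block-cut tree.
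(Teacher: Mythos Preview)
Your proposal is correct and follows essentially the same approach as the paper: both derive a contradiction from the combination of $d(t)=0$, the maximality of $d(t)$, the fact that $C$ has at least two terminals, and Observation~\ref{Observation:Block_contains_atmost_1-terminal}. The only cosmetic difference is that the paper picks a single second terminal $t'$ and shows $d(t') > d(t)$, whereas you argue that all terminals would land in $B$; the underlying ideas are identical.
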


\begin{claimproof}
Assume for a contradiction that~$B \ni t$ is the root node of the block-cut tree~$\calT$. Let~$t' \in V(C) \cap \tminust$ be a terminal. Note that such a terminal~$t'$ exists as we have~$|V(C) \cap T| \geq 2$ by our choice of~$C$. Due to Observation~\ref{Observation:Block_contains_atmost_1-terminal}, the fact that $B$ is a block of~$G-S$ containing terminal~$t$, and $S$ is a MWNS of~$(G, T)$, we have~$t' \notin V_G(B)$. By definition of the block-cut tree and the fact that~$t' \in V(C)$, the terminal~$t'$ is present in at least one block of~$\calT$. Since~$t$ is present in the root block of~$\calT$ and the other terminal~$t'$ is present in a block other than the root of~$\calT$, by definition of the depth, we have~$d(t') > d(t)$, a contradiction to the fact that the depth of~$t$ is maximum.
\end{claimproof}

Let~$v$ be the parent of block~$B$ in~$\calT$, which exists by the preceding claim. Note that $v$ is a cutvertex and~$v \in V_G(B)$ by Definition~$\ref{definition:block-cut_graph}$ of the block-cut graph. Next, we use our specific choice of the terminal~$t$ and block~$B$ in the following two claims to show that~$v \notin T$ and~$S \cup \{v\}$ is a~\ttsep, thereby finishing the proof of Lemma~$\ref{lemma:easilyseparableterminal}$.
\begin{claim}
	\label{claim:v_is_notin_T}
	    $v \notin T$.
\end{claim}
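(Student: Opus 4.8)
\textbf{Proof plan for Claim~\ref{claim:v_is_notin_T}.} The plan is to argue by contradiction: assume $v \in T$, and split into the two cases $v \neq t$ and $v = t$, deriving a contradiction in each. Recall from the surrounding text that $v = \mathrm{parent}_{\mathcal{T}}(B)$ is a cutvertex of the component $C$ with $v \in V_G(B)$ (by Definition~\ref{definition:block-cut_graph}), that $B$ is a block of $G-S$ containing the terminal $t$, that $t$ is a terminal of maximum depth in $\mathcal{T}$, and that $B$ has minimum depth among all blocks of $\mathcal{T}$ containing $t$, so $d(B) = d(t)$.

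For the first case, suppose $v \in T \setminus \{t\}$. Since $v \in V_G(B)$, the block $B$ of $G-S$ then contains the two distinct terminals $t$ and $v$, contradicting Observation~\ref{Observation:Block_contains_atmost_1-terminal}, which states that every block of $G-S$ contains at most one terminal.

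For the second case, suppose $v = t$. Here I would exploit the structure of the rooted block-cut tree $\mathcal{T}$: its root is a block (by construction in the proof of Lemma~\ref{lemma:easilyseparableterminal}), so the cutvertex node $v$ is not the root and therefore has a parent, which is necessarily a block node $B'$ (the grandparent of $B$ in $\mathcal{T}$). Since $v \in V_G(B')$ and $v = t$, the block $B'$ contains the terminal $t$, but $d(B') = d(B) - 2 < d(B) = d(t)$, contradicting the minimality of the depth of $B$ among blocks containing $t$. Both cases being impossible, we conclude $v \notin T$.

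The argument is short and essentially bookkeeping; the only point requiring care is the structural fact that in the rooted block-cut tree the root is a block and hence every cutvertex node has a block as its parent, so that the "grandparent block'' $B'$ used in the second case is well-defined and has depth exactly $d(B)-2$. Once this is in place, the two contradictions follow immediately from Observation~\ref{Observation:Block_contains_atmost_1-terminal} and from the extremal choices of $t$ and $B$, respectively.
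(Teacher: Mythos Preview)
Your proposal is correct and follows essentially the same approach as the paper's own proof: both split into the cases $v \in T\setminus\{t\}$ (contradicting Observation~\ref{Observation:Block_contains_atmost_1-terminal}) and $v=t$ (using that the parent block of the cutvertex $v$ would also contain $t$, contradicting the minimal-depth choice of $B$). The only cosmetic difference is that you make the depth comparison $d(B')<d(B)$ explicit, whereas the paper simply says ``a contradiction to our choice of $B$''.
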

\begin{claimproof}
First note that $v \neq t$, because if~$v=t$ then by the definition of the block-cut graph, the parent of~$v$ (which exists because~$\calT$ is rooted at a block and~$B$ is not the root) also contains ~$t$, a contradiction to our choice of~$B$.	Also, it is easy to observe that~$v \notin T \setminus \{t\}$, because if~$v \in T \setminus \{t\}$ then the block~$B$ of $G-S$ contains two distinct terminals~$v$ and~$t$, a contradiction Observation~\ref{Observation:Block_contains_atmost_1-terminal}. 
\end{claimproof}

\begin{claim}
\label{Claim:Construced_Set_is_separating-t}
	$S \cup \{v\}$ is a~\ttsep.
\end{claim}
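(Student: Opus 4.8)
The plan is to read off the separation property from the rooted block-cut tree~$\calT$ of the \cct~$C$ of~$G-S$ that contains~$t$. Since any terminal lying in a \cct of~$G-S$ different from~$C$ is already separated from~$t$ by~$S$ alone (and hence also by~$S \cup \{v\}$), it suffices to show that~$v$ lies on every~$t$-$t'$ path of~$G-S$ for each terminal~$t' \in V(C) \cap (\tminust)$; because~$t$ and~$t'$ both lie in~$C$, such a path stays inside~$C$. To make this precise I would fix the edge~$e = \{v, B\}$ of~$\calT$ joining~$B$ to its parent~$v$, let~$\calT_1, \calT_2$ be the two trees of~$\calT - \{e\}$ with~$B \in V(\calT_1)$, and set~$Y_i := V_G(\calT_i)$. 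Then~$Y_1 \cup Y_2 = V(C)$, and by Observation~\ref{observation:propertyofacutvertex} every path in~$C$ from a vertex of~$Y_1 \setminus \{v\}$ to a vertex of~$Y_2 \setminus \{v\}$ meets~$v$; in particular~$Y_1 \cap Y_2 \subseteq \{v\}$. Since~$t \in V(B) \subseteq Y_1$ and~$t \neq v$ by Claim~\ref{claim:v_is_notin_T}, the task reduces to proving that every terminal~$t' \in V(C) \cap (\tminust)$ lies in~$Y_2$.

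The heart of the argument, which I expect to be the main obstacle, is this last claim, and I would prove it by contradiction using the maximality of the depth~$d(t)$. Suppose~$t' \in Y_1$. As~$v \notin T$ we have~$t' \neq v$, so~$t' \notin Y_2$ by~$Y_1 \cap Y_2 \subseteq \{v\}$; hence~$t'$ occurs in no block that is a node of~$\calT_2$, i.e.\ every block containing~$t'$ is a node of~$\calT_1$. None of these blocks equals~$B$, since by Observation~\ref{Observation:Block_contains_atmost_1-terminal} the only terminal of block~$B$ is~$t \neq t'$. Now, because~$B$ is not the root of~$\calT$ (Claim~\ref{Claim:property_of_the_block_B}) and~$e$ is the edge from~$B$ to its parent, the root of~$\calT$ lies in~$\calT_2$; therefore every path in~$\calT$ from the root to a block~$B'' \in V(\calT_1)$ passes through~$B$, and since~$\calT$ alternates between blocks and cutvertices, any such~$B'' \neq B$ satisfies~$\mathrm{dist}_\calT(\mathrm{root}, B'') \geq \mathrm{dist}_\calT(\mathrm{root}, B) + 2 = d(t) + 2$. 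Consequently~$d(t') = \min_{B'' \ni t'} \mathrm{dist}_\calT(\mathrm{root}, B'') \geq d(t) + 2 > d(t)$, contradicting the choice of~$t$ as a terminal of~$C$ of maximum depth.

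It remains to assemble the pieces: for each~$t' \in V(C) \cap (\tminust)$ we have shown~$t' \in Y_2 \setminus \{v\}$, so by Observation~\ref{observation:propertyofacutvertex} every~$t$-$t'$ path in~$G-S$ meets~$v$, and thus~$t$ and~$t'$ lie in different \ccts of~$G-(S \cup \{v\})$; together with the already-present separation of~$t$ from all terminals outside~$C$, and noting~$t \notin S \cup \{v\}$ since~$S \cap T = \emptyset$ and~$v \neq t$, this shows that~$S \cup \{v\}$ is a~\ttsep, completing the proof. The delicate point to get right is the step relating ``$t'$ sits on the~$B$-side of~$e$'' to the depth lower bound~$d(t') \geq d(t)+2$; everything hinges on first using~$Y_1 \cap Y_2 \subseteq \{v\}$ to force \emph{all} blocks containing~$t'$ onto one side of the cut edge.
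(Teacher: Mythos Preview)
Your proof is correct and takes essentially the same approach as the paper: both use the edge~$\{v,B\}$ of the block-cut tree~$\calT$ to split~$V(C)$ into the two sides~$Y_1 = V_G(\calT_B)$ and~$Y_2$, invoke Observation~\ref{observation:propertyofacutvertex}, and then use the depth-maximality of~$t$ to rule out any other terminal~$t'$ on the~$B$-side. Your write-up is in fact slightly cleaner than the paper's: the paper does a two-case analysis (both~$t,t' \in Y_1 \setminus \{v\}$ versus both in~$V(C) \setminus Y_1$), but since~$t \in V(B) \subseteq Y_1$ and~$t \neq v$ the second case is vacuous, and you correctly bypass it by observing~$t \in Y_1 \setminus \{v\}$ directly.
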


\begin{claimproof}
Assume for a contradiction that~$S \cup \{v\}$ is not a~\ttsep. This implies that there exists~$t' \in T\setminus \{t\}$ such that there is a~$t$-$t'$ path~$P$ in~$G - (S \cup \{v\})$. Since~$t$ belongs to the \cct~$C$ of~$G-S$ and the path~$P$ is also present in~$G-S$, the terminal~$t'$ also belongs to~$C$, i.e., $t' \in V(C) \cap T$. Hence, the path~$P$ is present in the graph~$C-\{v\}$. Next, we look how the $t$-$t'$ path~$P$ appear inside the block-cut tree~$\calT$.  Towards this, let~$Y := V_G (\calT_B)$. Then by Observation~\ref{observation:propertyofacutvertex}, we know that there is no path from a vertex in~$Y\setminus \{v\}$ to a vertex in~$V(C) \setminus Y$ in the graph~$C - \{v\}$. Whereas the~$t$-$t'$ path~$P$ is present in~$C- \{v\}$, therefore either both~$t, t'$ belong to $Y \setminus \{v\}$ or both~$t, t'$ belong to the set $V(C) \setminus Y$.

First consider the case when both~$t$ and $t'$ belong to $V(C) \setminus Y$. Since~$t \in V(C) \setminus Y$, there exists a block~$B_t$ in~$\calT - \calT_B$ such that~$t \in V_G(B_t)$. Moreover, by definition of~$B$, the block~$B$ of subtree~$\calT_B$ also contains terminal~$t$. Then due to Observation~\ref{observation:property_of_blocks} there is a path (of length 2) from~$B$ to~$B_t$ in the block-cut tree~$\calT$ with~$t$ as an intermediate vertex as~$V_G(B) \cap V_G(B_t) = \{t\}$. Due to Claim~\ref{claim:v_is_notin_T} we have~$t \neq v$.
Hence for any vertex~$p \in V_G(B) \setminus \{v\}$ and~$q \in V_G(B_t) \setminus \{v\}$, there is a $p$-$q$ path~$P' := P_B[p, t] \cdot P_{B_t}[t, q]$, where~$P_B$ is a $p$-$t$ path in~$B - \{v\}$ and~$P_{B_t}$ is a $t$-$q$ path in~$B_t - \{v\}$. Note that paths~$P_B, P_{B_t}$ exist because~$B, B_t$ are 2-connected graphs. Since~$P'$ is a path from a vertex of~$Y \setminus \{v\}$ to a vertex of~$V(C) \setminus Y$ such that~$v \notin V(P)$, a contradiction to Observation~\ref{observation:propertyofacutvertex}. This concludes the proof for this case.

Now consider the other case when both~$t$ and~$t'$ belong to~$Y$. First note that the block~$B$ of $G-S$ which already contains terminal~$t$ cannot contain another terminal~$t'$ due to Observation~\ref{Observation:Block_contains_atmost_1-terminal}. Hence there exists a block~$B' \neq B$ in the subtree~$\calT_B$ with~$t' \in V_G(B')$.
Moreover, there is no block~$\hat{B}$ in $\calT-\calT_B$ with~$t' \in V_G(\hat{B})$ because if there exists such a block~$\hat{B}$ then we can use a similar argumentation as in the previous case (when both~$t, t' \in V(C) \setminus Y$) to obtain a $y$-$z$ path~$R$ for any~$y \in V_G(B') \setminus \{v\}$, and~$z \in V_G(\hat{B}) \setminus \{v\}$ such that~$v \notin V(R)$, a contradiction to Observation~\ref{observation:propertyofacutvertex}. Since all the blocks containing terminal~$t'$ lie in the subtree~$\calT_B$, $t' \notin V(B)$, and~$t \in V(B)$, it holds that~$d(t') > d(t)$, a contradiction to our choice of~$t$.

Since we show that both the cases lead to a contradiction, the set~$S \cup \{v\}$ is a~\ttsep.
\end{claimproof}
This concludes the proof of Lemma~$\ref{lemma:easilyseparableterminal}$.
\end{proof}

\subsection{Proof of Lemma~\ref{lemma:pushinglemmaformwns}}
\label{lable:RemainingProofOfPushingLemma}
\PushingLemmaForMWNSState*
\begin{proof}
We do a case distinction based on whether or not~$S$ alone is a~\ttsep for some~$t \in T$. The argumentation for the case when $S$ is a \ttsep for some~$t\in T$ is similar to the proof of the original pushing lemma for \textsc{Multiway Cut}~\cite{Marx06}~\cite[Lemma 8.18]{CyganFKLMPPS15}.

\subparagraph{1. If $S$ is a \ttsep for some~$t\in T$.}
Towards this, let~$R \subseteq V(G)$ be the set of vertices reachable from~$t$ in the graph~$G-S$ and note that~$N(R)$ is an~\ttsep.
If~$N(R) \subseteq S$ is also an \emph{important}~\ttsep then Condition~$\ref{condition:pushinglemmaformwns1}$ of the lemma already holds for~$S^* = S$ and~$S^*_t = N(R)$.
Therefore assume that~$N(R)$ is not an \emph{important}~\ttsep. We invoke Proposition~\ref{proposition:existenceofimpsep}, with~$X= \{t\}$, $Y = \tminust$, $V^\infty (G) = T$, and $\hat{S} = N(R)$ which ensures  that there exists an important~\ttsep~$N(R') \subseteq V(G) \setminus T$ with~$|N(R')| \leq |N(R)|$ and~$R \subseteq R'$. Let~$S^* := (S \setminus N(R)) \cup N(R')$.

Next, we show that~$|S^*| \leq |S|$ and~$S^*$ is a \mwnsshort of~\gtk, thereby proving Condition~$\ref{condition:pushinglemmaformwns1}$ of the lemma for~$S_t^* = N(R')$. By construction of~$S^*$ and the fact that~$S$ is a \mwnsshort of~\gtk and~$|N(R')| \leq |N(R)|$, we have~$|S^*| \leq |S| \leq k$. Moreover, as $S, N(R') \subseteq V(G) \setminus T$, we have~$S^* \subseteq V(G) \setminus T$.
It remains to show that there does not exist a pair of distinct terminals~$t_i, t_j \in T$ such that there are 2 \ivd $t_i$-$t_j$ paths in~$G-S^*$. Assume for a contradiction that there exist distinct~$t_i, t_j \in T$ with 2 \ivd $t_i$-$t_j$ paths (say~$P_1$ and~$P_2$) in~$G-S^*$. Note that~$S^*$ is a \ttsep as $S^*$ contains the \ttsep~$N(R')$. This implies~$t_i, t_j \in T \setminus \{t\}$.
Observe that neither~$P_1$, nor~$P_2$ can intersect the set~$N(R)$: because if for some~$z \in [2]$, the path~$P_z$ goes through the set~$N(R)$, then it also goes though the set~$N(R') \subseteq S^*$ as~$R \subseteq R'$, a contradiction to the fact that~$P_z$ is present in~$G-S^*$. Since the paths~$P_1, P_2$ are disjoint from~$N(R)$ and~$S \setminus N(R) \subseteq S^*$, the paths are also present in~$G-S$, a contradiction to the fact  that~$S$ is a \mwnsshort.
	
\subparagraph{2. If $S$ is not a \ttsep for any~$t \in T$.}
We first invoke~Lemma~\ref{lemma:easilyseparableterminal} which ensures that there is a terminal~$t \in T$ and~a non-terminal~$v$ such that the set~$S_v = S \cup \{v\}$ is a~\ttsep. 
Let~$R$ be the set of vertices reachable from~$t$ in~$G - S_v$. If~$N(R) \subseteq S_v$ is an important~\ttsep then Condition~$\ref{condition:pushinglemmaformwns2}$ of the lemma holds for~$S^* = S, S_t = N(R)$, and~$v = v$.
Therefore assume that~$N(R)$ is not an important~\ttsep. Again, due to Proposition~\ref{proposition:existenceofimpsep} there is an important~\ttsep $N(R') \subseteq V(G) \setminus T$ with~$R \subseteq R'$ and~$|N(R')| \leq |N(R)|$. Since~$S$ alone is not a~\ttsep for any~$t \in T$, there exists a path~$P$ between~$t$ and some other terminal~$t' \in T\setminus \{t\}$ in~$G-S$ such that the path~$P$ passes though vertex~$v \in S_v$. Note that the same~$t$-$t'$ path~$P$ must contain some vertex~$v' \in N(R')$ (where~$v'$ may be the same as~$v$) since~$N(R')$ is a \ttsep. Let~$S^* := (S \setminus N(R)) \cup (N(R') \setminus \{v'\})$. Next we show that~$|S^*| \leq |S|$ and~$S^*$  is a \mwnsshort of~\gtk, implying that Condition~$\ref{condition:pushinglemmaformwns2}$ of the lemma holds for~$S_t = N(R')$ and~$v=v'$. By construction of~$S^*$ and the fact that~$N(R) \setminus \{v\} \subseteq S, v' \in N(R')$ and~$N(R') \leq N(R)$, we have~$|S^*| \leq |S| \leq k$. Moreover, we have~$S^* \subseteq V(G) \setminus T$ as both $S, N(R') \subseteq V(G) \setminus T$.

It remains to show that there does not exist a pair of distinct terminals~$t_i, t_j \in T$ such that there are 2 \ivd $t_i$-$t_j$ paths in~$G-S^*$. Assume for a contradiction that there exist distinct~$t_i, t_j \in T$ with 2 \ivd $t_i$-$t_j$ paths (say~$P_1$ and~$P_2$) in~$G-S^*$. First note that neither~$t_i$ nor~$t_j$ can be the vertex~$t$: there are 2 \ivd $t_i$-$t_j$ paths in~$G-S^*$, whereas the set~$S^* \cup \{v'\}$ is a~\ttsep (as~$N(R') \setminus \{v'\} \subseteq S^*$). Hence~$t_i, t_j \in T \setminus \{t\}$. Note that if both the paths~$P_1, P_2$ do not intersect with the set~$N(R) \setminus \{v\}$ then they also do not intersect the set~$S \subseteq S^* \cup (N(R) \cup \{v\})$, a contradiction to the fact that~$S$ is a solution of~\gtk. Hence there exists~$z \in [2]$ such that the path~$P_z$ intersects the set~$N(R) \setminus \{v\}$. Let~$w$ be an arbitrary vertex in $V(P_z) \cap (N(R) \setminus \{v\})$. Note that~$w \neq v'$ as~$w \in N(R) \setminus \{v\} \subseteq S, v'\in V(P)$, and the path~$P$ is present in~$G-S$.
We divide the path~$P_z$ into two internally vertex disjoint subpaths~$P_z[t_i, w]$ and~$P_z[w, t_j]$. Moreover, since~$P_z$ is a simple path, these subpaths are internally vertex disjoint and hence at most one of the subpaths (say~$P_z[t_i, w]$; the other case is symmetric) can intersect vertex~$v'$ of~$N(R')$. Let~$\hat{S} = S^* \cup \{v'\}$, note that the set~$\hat{S}$ is a \ttsep since~$N(R') \subseteq \hat{S}$. Observe that the subpath~$P_z[w, t_j]$ is still present in~$G - \hat{S}$.	
Since~$w \in N(R) \setminus \{v\}$, by definition of the set~$R$ there exists a~$t$-$w$ path~$P_{t, w}$ in~$G - \hat{S}$. Let~$P := P_{t, w} \cdot P_z[w, j]$.
Note that the~$t-t_j$ path~$P$ is present in~$G - \hat{S}$, a contradiction to the fact that~$N(R') \subseteq \hat{S}$ is a~\ttsep.

Since for both cases the above lemma holds, this concludes the proof of Lemma~\ref{lemma:pushinglemmaformwns}.
\end{proof}

\subsection{Proof of Lemma~\ref{lemma:Compression_Step_for_MWNS}}
\CompressionStepForMWNSState*
\begin{proof}
    We assume that~$T$ is an independent set in~$G$, otherwise the answer is trivially NO.
    Then we invoke the algorithm of Theorem~\ref{theorem:terminalbounding} with the \mwnsshort instance~\gtk and the MWNS~$\hat{S}=S_{k+1}$ of~$(G, T)$. In polynomial time, it outputs an equivalent instance~$(G', T', k')$ such that $G'$ is an induced subgraph of $G$, $T' \subseteq T$ with $|T'| = k^{\Oh(1)}$, and~$0 \leq k' \leq k$. If~$k' < 0$ then we know that~$(G', T', k')$ is a NO-instance of \mwnsshort. Since~\gtk and~$(G', T', k')$ are equivalent instances of \mwnsshort, it holds that~\gtk is a NO-instance of \mwnsshort. If~$k' = 0$ and there is a $T'$-cycle in~$G'$ then $(G', T', k')$ is a NO-instance of \mwnsshort. Hence~\gtk is also a NO-instance of~\mwnsshort. If~$k' \geq 0$ and there is no $T'$-cycle in~$G'$, then by Proposition~\ref{Proposition:DifferentWaysOfLookingAtMWNS} we know that~$\emptyset$ is a MWNS of~$(G', T')$ and hence~$(G', T', k')$ is a YES-instance of~\mwnsshort. Moreover, in this case, the algorithm of Theorem~\ref{theorem:terminalbounding} also ensures that in polynomial-time it can obtain a \mwnsshort~$S$ for~\gtk. 

    Else, $k' > 0$ and there is a $T'$-cycle in~$G'$. In this case, we use recursive branching to obtain a~\mwnsshort of~$(G', T', k')$ if it exists.
    Towards this, for each terminal~$t \in T'$ which is not nearly-separated, we invoke Theorem~\ref{Theorem:Algorithm_To_Generate_Imp_Sep} with~$G=G', X = \{t\}, Y = T' \setminus \{t\}$, and~$k = k'+1$, which in time~$\Oh (|\mathcal{S}_{k'+1}| \cdot (k'+1)^2 \cdot (n+m))$ outputs a set~$\mathcal{S}_{k'+1}$ of size at most~$4^{k'+1}$ consisting of every important~$(t, T' \setminus \{t\})$-separator of size at most~$k'+1$. Then for each important separator~$I \in \mathcal{S}_{k+1}$, we do one of the following.
    \begin{enumerate}
        \item If~$|I| \leq k$ then we recursively solve the \mwnsshort instance~$(G'-I, T', k' - |I|)$ and for each~$v \in I$, we recursively solve the \mwnsshort instance~$(G-(I \setminus \{v\}), T', k' - (|I|-1))$.

        \item If~$|I| = k+1$, then for each vertex~$v \in I$ we recursively solve the \mwnsshort instance~$(G-(I \setminus \{v\}), T', 0)$. 
    \end{enumerate}
If one of these branches with the \mwnsshort instance~$(G'-I', T', k'-|I'|)$ returns a solution~$\hat{S}$, then it is easy to observe that the set~$I' \cup \hat{S}$ is a \mwnsshort of~$(G', T', k')$. Hence again by spending polynomial-time, we can obtain a \mwnsshort $S$ of the original instance~\gtk as ensured by Theorem~\ref{theorem:terminalbounding} and output it. If all branches fail, we conclude that~$(G,T,k)$ does not have a solution.
The correctness of the algorithm is clear from Lemma~\ref{lemma:pushinglemmaformwns}.

Compared to the textbook FPT algorithm for \textsc{Multiway Cut}, our algorithm does significantly more work since it recurses for important~$(t,T' \setminus \{t\})$-separators for \emph{all} terminals in~$T'$ which are not nearly-separated. However, using the bound~$|T'| \in k^{\Oh(1)}$ and the fact that most choices of an important separator significantly decrease the parameter, we can still prove the following claim. Let~$L(k')$ be the number of leaves in the recursion tree for a call with parameter~$k'$.
\begin{restatable}{claim}{ABoundOnNumberOfLeavesInBranchingTreeState}
\label{claim:ABoundOnNumberOfLeavesInBranchingTree}
    $L(k') = 2^{\Oh(k \log k)}$.
\end{restatable}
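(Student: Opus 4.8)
The plan is to set up a recurrence for $L(k')$ by analyzing the branching structure described in the algorithm. Recall that in a single recursive call with parameter $k'$, the algorithm first (in polynomial time via Theorem~\ref{theorem:terminalbounding}) reduces the number of terminals to $|T'| = k^{\Oh(1)}$, and then, for \emph{each} terminal $t \in T'$ that is not nearly-separated, enumerates the set $\mathcal{S}_{k'+1}$ of important $(t, T' \setminus \{t\})$-separators of size at most $k'+1$, of which there are at most $4^{k'+1}$ by Theorem~\ref{Theorem:Algorithm_To_Generate_Imp_Sep}. For each such important separator $I$, the algorithm creates branches: one branch for including $I$ entirely (dropping the parameter by $|I|$), and $|I| \le k'+1$ branches where all-but-one vertex of $I$ is included (dropping the parameter by $|I| - 1$). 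Summing over all terminals and all important separators, the total number of children of a node at parameter $k'$ is at most
\[
  |T'| \cdot \sum_{I \in \mathcal{S}_{k'+1}} (1 + |I|) \;\le\; k^{\Oh(1)} \cdot 4^{k'+1} \cdot (k'+2),
\]
and each child recurses with a strictly smaller parameter. The crucial observation — exactly as in the textbook analysis of the \textsc{Multiway Cut} algorithm — is that almost all branches decrease the parameter substantially: a branch that includes an important separator $I$ decreases $k'$ by $|I|$, and a branch that includes all-but-one of $I$ decreases $k'$ by $|I|-1$. Only the branches arising from \emph{singleton} important separators $I = \{v\}$ fail to make the parameter drop by at least one in the all-but-one case (there the drop is $|I|-1 = 0$); for these we instead appeal to the fact that the ``include $I$ entirely'' branch still drops the parameter by $1$, and more importantly that in the all-but-one branch for a singleton nothing is deleted, so this branch can only be entered finitely often before the instance changes — here the standard fix is that when $|I| = 1$ the all-but-one branch ``$(S_t \setminus \{v\})$'' is empty and coincides with the parent instance minus nothing, which is handled either by a separate measure or by observing it is subsumed.

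Concretely, I would bound $L(k')$ by the recurrence
\[
  L(k') \;\le\; \sum_{j=1}^{k'+1} a_j \cdot L(k' - j) \;+\; \sum_{j=2}^{k'+1} b_j \cdot L(k'-(j-1)),
\]
where $a_j$ (respectively $b_j$) counts, over all terminals and all important separators of size $j$, how many ``include entirely'' (respectively ``include all-but-one'') branches are generated; by Theorem~\ref{Theorem:Algorithm_To_Generate_Imp_Sep} and $|T'| = k^{\Oh(1)}$ we have $\sum_j a_j \le k^{\Oh(1)} 4^{k'+1}$ and $\sum_j j \cdot b_j \le k^{\Oh(1)} 4^{k'+1}$. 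Since every term on the right has argument at most $k'-1$ except the $j=2$ term of the second sum, which has argument $k'-1$ as well (the all-but-one branch for a size-$2$ important separator drops the parameter by $1$), in fact \emph{every} recursive call strictly decreases the parameter. Therefore the recursion tree has depth at most $k' \le k$, and at each level the branching factor is at most $B := k^{\Oh(1)} \cdot 4^{k+1} \cdot (k+2) = 2^{\Oh(k)}$ — wait, that would give $2^{\Oh(k^2)}$, which is too weak, so the real argument must be sharper.

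The sharper argument, which I expect to be the main obstacle, mirrors Lemma~8.53's analysis in~\cite{CyganFKLMPPS15}: one does \emph{not} bound $L(k')$ by (branching factor)$^{\text{depth}}$, but rather proves by induction that $L(k') \le c^{k'} \cdot (k')!$ for a suitable constant $c$, i.e.\ $L(k') = 2^{\Oh(k' \log k')} = 2^{\Oh(k\log k)}$. The induction step plugs the hypothesis $L(k'-j) \le c^{k'-j}(k'-j)!$ into the recurrence; the point is that the branches that only decrease the parameter by $1$ (the all-but-one branches) are few — there are at most $k^{\Oh(1)} \cdot 4^{k'+1}$ of them total across all terminals and separators, but each such branch corresponds to choosing \emph{which} vertex of a small important separator to spare, so their number is governed by the $|I| \le k'+1$ factor, and one absorbs a factor of $k'$ into the $(k')!$ while the $4^{k'}$-type factors get absorbed into $c^{k'}$. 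More precisely, writing $g(k') := L(k')/(k')!$ and dividing the recurrence through by $(k')!$, the $j \ge 2$ terms contribute geometrically decaying amounts (since $(k'-j)!/(k')! \le 1/k'$ for $j \ge 1$) so that $g(k') \le k^{\Oh(1)} 4^{k'+1} \cdot g(k'-1) \cdot O(1)$, hmm — this still looks like it gives an extra $4^{k'}$ per level. The clean resolution is that the total over \emph{all} branches of $(\text{parameter drop})$ weighted appropriately telescopes: following~\cite[Lemma 8.53]{CyganFKLMPPS15} verbatim, one observes that the number of \emph{distinct} recursion leaves is at most $\prod_{i=1}^{k} (\text{number of branch-choices at depth } i)$ but the branch-choices at depth $i$ number only $|T'| \cdot 4^{k+1}$ and — crucially — the parameter $k'$ can decrease by $1$ at most $k$ times along any root-to-leaf path while larger decreases shorten the path, so the number of leaves is at most $(|T'| \cdot 4^{k+1} \cdot k)^{k} \cdot \binom{...}{...}$; carrying this out carefully and collecting the $k^{\Oh(k)}$ and $4^{\Oh(k)}$ factors yields $L(k) \le 2^{\Oh(k\log k)}$. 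The hard part, then, is setting up the weighting/measure so that the $4^{k'+1}$ factor from the number of important separators is amortized against the parameter drop rather than multiplied at every level — I would follow the structure of~\cite[Lemma 8.53]{CyganFKLMPPS15} closely, replacing ``one terminal'' by ``a $k^{\Oh(1)}$-sized set of candidate terminals'' and checking that the extra polynomial factor is harmless.
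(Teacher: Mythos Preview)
Your proposal has a genuine gap: you never use the \emph{size-stratified} bound on important separators. Throughout, you bound the total number of important $(t,T'\setminus\{t\})$-separators by $4^{k'+1}$ (writing ``$\sum_j a_j \le k^{\Oh(1)} 4^{k'+1}$'' and ``branch-choices at depth $i$ number only $|T'|\cdot 4^{k+1}$''). With only this coarse bound, every level of the recursion tree carries a $4^{k}$-type factor, and since the depth is $\Theta(k)$ you inevitably land at $2^{\Oh(k^2)}$ --- which you yourself notice. None of your suggested fixes (the factorial ansatz $L(k')\le c^{k'}(k')!$, or the vague appeal to~\cite[Lemma 8.53]{CyganFKLMPPS15}) can repair this without the missing ingredient.

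The ingredient you need, and which the paper uses, is that the number of important $(t,T'\setminus\{t\})$-separators of size at most $j$ is at most $4^j$ (not just $4^{k'+1}$). This lets the paper write the recurrence in the form
\[
  L(k') \;\le\; |T'|\sum_{i=1}^{k'} 4^{i+1}\, L(k'-i),
\]
where, crucially, the exponential weight $4^{i+1}$ is tied to the parameter drop $i$ rather than to $k'$. One then proves directly by induction that $L(k')\le (32|T'|)^{k'}$: plugging the hypothesis in, the sum $\sum_{i=1}^{k'} 4^{i}(32|T'|)^{k'-i}$ is a geometric-type series that stays below $(32|T'|)^{k'}/(4|T'|)$, closing the induction. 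Since $|T'| = k^{\Oh(1)}$ this gives $L(k') = 2^{\Oh(k\log k)}$ with no factorials and no amortization tricks --- the telescoping you were looking for is exactly the pairing of $4^j$ with the drop of $j-1$.
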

\begin{claimproof}
Towards this, we have the following recurrence.
\begin{equation}
    L(k') \leq \begin{cases}
        1 & \mbox{if $k' \leq 1$}\\
        |T'| \cdot \sum_{j=2}^{k'+1} 4^j \cdot L(k'- (j-1)) & \mbox{otherwise.} 
    \end{cases}
\end{equation}
We simplify the above equation by plugging $i= j-1$ to obtain the following.
\begin{equation}
    L(k') \leq \begin{cases}
        1 & \mbox{if $k \leq 1$}\\
        |T'| \cdot \sum_{i=1}^{k'} 4^{i+1} \cdot L(k'-i) & \mbox{otherwise.} 
    \end{cases}
\end{equation}

We prove by induction on~$k'$ that $L(k') \leq (32|T'|)^{k'}$. Note that the inequality trivially holds if~$k' \leq 1$ as~$|T| \geq 1$. Suppose that the inequality holds for all values below~$k'$ then we obtain:
\begin{equation}
    \begin{split}
        L(k')    &\leq |T'| \cdot \sum_{i=1}^{k'} 4^{i+1} \cdot L(k'-i)
                = 4|T'| \cdot \sum_{i=1}^{k'} 4^i \cdot L(k'-i) \\
                &\leq 4|T'| \cdot \sum_{i=1}^{k'} 4^i \cdot (32|T'|)^{k'-i} &\mbox{by induction}\\
                &\leq 4|T'| \cdot 4^{k'} \cdot \sum_{i=1}^{k'} (4|T'|)^{k'-1} \cdot 2^{k'-i} &\mbox{as $(4|T'|)^{k'-1} \geq (4|T'|)^{k'-i}$}\\
                &\leq (4|T'|)^{k'} \cdot 4^{k'} \cdot 2^{k'} &\mbox{as $\sum_{i=0}^{k'-1} 2^i < 2^{k'}$}\\
                &\leq (32|T'|)^{k'}.
    \end{split}
\end{equation}

As~$|T'| = k^{\Oh(1)}$ and~$k' \leq k$, we get the desired bound of~$2^{\Oh (k \log k)}$ on the number of leaves in the recursion tree.
\end{claimproof}

Therefore the total time spent at leaves can be bounded by~$2^{\Oh (k \log k)} n^{\Oh(1)}$ as~$k \leq n$. Moreover, as the height of the recursion tree is at most~$k$, the number of nodes is bounded by~$k \cdot 2^{\Oh (k \log k)}$. 
As we use the algorithm of Theorem~\ref{Theorem:Algorithm_To_Generate_Imp_Sep} to construct the set~$\mathcal{S}_k$, the time spent at an internal node is bounded by~$\Oh(4^{k+1} \cdot k^2 \cdot (n+m))$.
Therefore the total time spent at internal nodes is bounded by~$2^{\Oh (k \log k)} \cdot k^3 \cdot (n+m)$. 
As~$k \leq n$ and~$m \leq n^2$, the total time spent by the algorithm is~$2^{\Oh (k \log k)} n^{\Oh(1)}$. This concludes the proof of Lemma~\ref{lemma:Compression_Step_for_MWNS}.
\end{proof}

\subsection{Proof of Theorem~\ref{theorem:MWNS_is_FPT}}
\label{section:ProofOfThm:MWNS_Is_FPT}
\MWNSIsFPTState*
\begin{proof}
    Given an instance~\gtk of \mwnsshort, we use the following steps to compute a solution of~\gtk if it exists or return NO if~\gtk is a NO-instance of \mwnsshort. If~$T$ is not an independent set then~\gtk is a NO-instance of \mwnsshort and hence return NO. If there is no $T$-cycle in $G$ then return~$\emptyset$.
    If there is a $T$-cycle in $G$ and~$ k=0$ then \gtk is a NO-instance of~\mwnsshort and hence return NO.

    Else we know that there is a $T$-cycle in~$G$ and~$k > 0$. 
    Let~$V := V(G) \setminus T$. 
    If~$|V| \leq k$ then note that the whole set~$V$ is a solution of~\gtk. 
    Thus assume that~$|V| > k$.
    Let~$\{v_1, \ldots, v_{|V|}\}$ be an arbitrary ordering of~$V$. For~$\ell \in [|V|]$, we use~$V_\ell := \{v_1, \ldots, v_\ell\}$ as shorthand.
    Let~$G_{k+1} := G[V_{k+1} \cup T]$.    
    We invoke Lemma~\ref{lemma:Compression_Step_for_MWNS} with~$(G = G_{k+2}, T, k)$ and the set~$S_{k+1} = V_{k+1}$ which in time~$2^{\Oh (k \log k)} n^{\Oh(1)}$ either concludes that~$(G_{k+2}, T, k)$ is a NO-instance of \mwnsshort or outputs a set~$S^*$ of size~$k$ such that~$S^*$ is a solution of~$(G_{k+1}, T, k)$.
    Note that if~$(G_{k+1}, T, k)$ is a NO-instance of \mwnsshort then~\gtk is also a NO-instance of~\mwnsshort and hence we return NO in this case.
    Whereas when the algorithm outputs a $k$-sized set~$S^*$ and there exists a vertex~$v_{k+2} \in V$, we iteratively call Lemma~\ref{lemma:Compression_Step_for_MWNS} with the \mwnsshort instance~$(G_{k+2}, T, k)$ and the set~$S_{k+1} := S^* \cup \{v_{k+2}\}$.
    In the case when the algorithm outputs a $k$-sized set~$S^*$ and there does not exist a vertex~$v_{k+2} \in V$, we have~$G_{k+1} = G$ and hence~$S^*$ is a solution of~\gtk.

    Note that the above steps can be applied at most $n$ times until we either conclude that~\gtk is a no instance or iteratively get to the whole graph~$G$ by adding an additional vertex in each iteration.
    Thus to solve~\gtk, we spent polynomial time and then call Lemma~\ref{lemma:Compression_Step_for_MWNS} at most~$n$ times. Hence the total running time of the algorithm is~$2^{\Oh (k \log k)} n^{\Oh(1)}$.
\end{proof}

\subsection{Proof of Claim~\ref{Claim:Z_is_a_MWNS_of_Subtree_below_d}}
\ZisaMWNSofSubtreeBelowdState*
\begin{claimproof}
                In the case that~$d$ is a non-terminal cutvertex (Case a) or a terminal cutvertex (Case b), then as remarked after the construction of~$Z$, it is easy to observe that~$Z$ is a MWNS of~$(G_d + x, T)$.
                Hence it remains to prove the claim for the case when~$d$ is a block (Case c). Towards this, assume for a contradiction that~$Z$ is not a MWNS of~$(G_d + x, T)$. First note that~$Z \cap T = \emptyset$ by construction of~$Z$. This implies that there is a $T$-cycle $C$ in~$(G_d + x) - Z$. Since~$\{x\}$ is a MWNS of~$(G, T)$ and~$(G_d + x)-Z \subseteq G$, we have~$x \in V (C)$. Let~$\puv := C - \{x\}$ be the path with endpoints~$u, v$. Note that~$\puv$ is contained in~$G_d - Z$.
        
                Next, we look at how~$\puv$ appears in the subtree~$\calF_d$ of the block-cut forest~$\calF$. First observe that the~$\puv$ must contain at least 2 vertices from the block~$d$, because if~$|V(\puv) \cap V(d)| \leq 1$ then we get a contradiction to the fact that~$d$ is a deepest node for which~$G[V_G(\mathcal{F}_d) \cup \{x\}]$ has a $T$-cycle.
                Let~$c_u, c_v \in V (d)$ be the first and the last vertex of~$d$ on the path~$\puv$ while traversing from~$u$ to~$v$. By the argument above, such~$c_u, c_v$ exist and~$c_u \neq c_v$. We orient the path~$\puv$ in such a way that the number of terminals on the subpath~$\puv(c_v,v] := \puv[c_v, v] - \{c_v\}$ is at least the number of terminals on the subpath~$\puv(c_u,u] := \puv[c_u, u] - \{c_u\}$, i.e., $|V (\puv (c_v, v]) \cap T| \geq |V (\puv (c_u, u]) \cap T|$. Note that since~$|V (\puv)\cap T| \geq 2$, $|V (\puv [c_u, c_v]) \cap T| \leq 1$, and~$|V (\puv (c_v, v]) \cap T| \geq |V (\puv (c_u, u]) \cap T|$, we have~$|V (\puv (c_v, v]) \cap T| \geq 1$. Next, we do a case distinction based on whether or not~$\puv$ visits a terminal while passing through~$d$.\\
                \textbf{Case 1. If~$|V (\puv [c_u, c_v] ) \cap T| = 0$.}\\
                We further divide Case 1 into the following two sub-cases based on whether or not the subpath~$\puv (c_u, u]$ contains a terminal.
        
                \textbf{Case 1.a. If~$|V (\puv (c_u, u]) \cap T| \geq 1$.}
        
                        By definition of the set~$\calC_1^d$ and~$\calC_{\geq 2}^d$ (in Step~\ref{Algo:A:Step:ConstructionOfZWhendIsABlock}) and the fact that $|V (\puv (c_v, v]) \cap T| \geq 1$ and~$|V (\puv (c_u, u]) \cap T| \geq 1$, we have~$c_u, c_v \in Q = \calC_1^d \cup \calC_{\geq 2}^d$.
                Since there are distinct vertices~$c_u, c_v \in Q$ such that there is a $c_u$-$c_v$ path~$\puv [c_u, c_v]$ in~$D_T - Z$ and~$Z_1 \subseteq Z$, this yields a contradiction to the fact that~$D_T - Z_1$ has no $Q$-path.
        
                \textbf{Case 1.b. If~$|V (\puv (c_u, u]) \cap T| = 0$.}
        
                Since the path~$\puv$ contains at least 2 terminals, $|V (\puv (c_u, u]) \cap T| = 0$, and~$|V (\puv [c_u, c_v]) \cap T| = 0$, we have~$|V_G (\puv (c_v, v]) \cap T| \geq 2$.
                Then we have~$c_u \in \calC_0^d \cup (N (x) \cap V (d))$ and~$c_v \in \calC_{\geq 2}^d$ such that there is a $c_u$-$c_v$ path~$\puv [c_u, c_v]$ in~$D_T - Z$ with~$Z_2 \subseteq Z$, yielding a contradiction to the fact that the set~$Z_2$ is an $(A, B)$ separator for~$A = \calC_{\geq 2}^d$ and~$B = \calC_0^d \cup (N (x) \cap V (d))$.\\
                \textbf{Case 2. If~$|V (\puv [c_u, c_v]) \cap T| = 1$.}\\
                Let~$t_i \in V (\puv) \cap (T \cap V (d))$. Next, we do a case distinction based on whether or not~$c_v = t_i$.
        
                \textbf{Case 2.a. If~$c_v \neq t_i$.}
                Since~$|V (\puv (c_v, v]) \cap T| \geq 1$ and~$c_v \notin t_i$, we have~$c_v \in \calC_1^d \cup \calC_{\geq 2}^d$. Let~$D^*$ be the \cct of~$D_T - Z$ which contains the vertex~$c_v$. Note that such a $D^*$ exists and we have~$N (t_i) \cap V (D^*) \neq \emptyset$, because the subpath~$\puv (t_i, c_v]$ is contained inside the \cct~$D^*$.
                Since the vertex~$c_v \in \calC_1^d \cup \calC_{\geq 2} ^d$ is contained in the \cct~$D^*$ of~$D_T - Z$ such that $N (t_i) \cap V (D^*) \neq \emptyset$, we have~$c_v \in Z_3 \subseteq Z$, a contradiction to the fact that the path~$\puv$ is present in~$G_d - Z$.
        
                \textbf{Case 2.b. If~$c_v = t_i$.}
                Since~$|V (\puv (c_v, v]) \cap T| \geq 1$, the graph~$G_{t_i}$ has a vertex~$v \in N (x)$ such that the $t_i$-$v$ path~$\puv [t_i, v]$ contains at least 2 terminals. By construction of the set~$Z_4 \subseteq Z$, such a path does not exist in~$G_d - Z$, a contradiction to the fact that the path~$\puv$ is present in~$G_d - Z$.

                Since all the above cases lead to a contradiction, the set~$Z$ is a MWNS of~$(G_d + x, T)$. This completes the proof of Claim~\ref{Claim:Z_is_a_MWNS_of_Subtree_below_d}.
            \end{claimproof}

\subsection{Proof of Claim~\ref{Claim:Contains_at_least_one_vertex_from_both_inside_and_outside}}
\AtLeastOneVertexFromBothInsideAndOutsideState*

\begin{proof}
                Consider a $T$-cycle $C$ in~$G- (Z \cup \hat{S})$. Since~$\{x\}$ is a MWNS of~$(G, T)$ and~$G-(Z \cup \hat{S}) \subseteq G$, we have~$x \in V (C)$.
                Let~$\puv:= C-\{x\}$ be the path with endpoints~$u, v$. First, note that~$\puv$ can not be completely contained in~$G_d = G[V_G (\calF_d)]$ because if~$\puv$ is completely contained in~$G_d$ then the $T$-cycle $C$ is present in~$G[V_G(\calF_d) \cup \{x\}]$ while~$C$ does not intersect~$Z$, a contradiction to the fact that~$Z$ is a MWNS of~$(G[V_G (\calF_d) \cup \{x\}], T)$.
                Next, note that the path~$\puv$ can not be completely contained in~$G' = G[V_G (\calF) \setminus V_G (\calF_d)]$ because if~$\puv$ is completely contained in~$G'$ then the $T$-cycle~$C$ 
                is present in~$G' + x$ while~$C$ does not intersect~$\hat{S} \cup V_G (\calF_d) \supseteq S^*_x$, a contradiction to the fact that~$S^*_x$ is a MWNS of~$(G, T)$.
                Since the path~$\puv$ can neither be completely contained in~$G_d$ nor in~$G[V_G (\calF) \setminus V_G (\calF_d)]$, it contains at least one vertex from both~$V_G (\calF_d)$ and~$V_G(\calF) \setminus V_G(\calF_d)$, which implies the same for the $T$-cycle $C \supseteq \puv$.
            \end{proof}

\subsection{Proof of Claim~\ref{Claim:Any_Optimal_MWNS_excluding_x_picks_enough_vertices_below_d}}
\AnyOptimalMWNSEPicksEnoughVerticesBelowdState*
            \begin{claimproof}             
                First we do a case distinction (similar to Step~\ref{Algo:A:Step:CaseDistinction}) based on the type of the deepest node~$d$ selected by the algorithm in Step~$({\ref{Algo:A:Step:ChooseADeepestNode}})$.
                
                \subparagraph{Case 1. If~$d$ is a non-terminal cutvertex.}
                In this case~$Z = \{d\}$. Moreover, since there is a~$T$-cycle in~$G_d + x$ by our choice of~$d$, and~$S^*_x$ is an $x$-avoiding MWNS of~$(G, T)$, we have~$|S^*_x \cap V_G (\calF_d)| \geq 1$. Hence the inequality of Claim~\ref{Claim:Any_Optimal_MWNS_excluding_x_picks_enough_vertices_below_d} holds.

                \subparagraph{Case 2. If~$d$ is a terminal cutvertex.}
                In this case we have~$Z = \mathcal{C}_{\geq 1} (d)$ by definition, so that~$Z$ consists of some cutvertices of~$G-\{x\}$ which are grandchildren of~$d$ in the block-cut forest~$\calF$.
                For each cutvertex~$z_i \in Z$, the graph~$G_{z_i}$ contains a vertex~$p_i \in N_G(x)$ such that there is a~$z_i$-$p_i$ path~$P_i$ in~$G_{z_i}$ which contains at least one terminal.
                Moreover, for distinct~$z_i, z_j \in Z$, the corresponding paths~$P_i$ and~$P_j$ in~$G_{z_i}$ and~$G_{z_j}$, respectively are vertex-disjoint since~$V_G (\calF_{z_i}) \cap V_G (\calF_{z_j}) = \emptyset$. Since~$d$ is a deepest node with a~$T$-cycle in~$G_d + x$, the graph~$G[V_G (\calF_B) \cup \{x\}]$ does not contain a $T$-cycle for each block~$B \in \mbox{Child}_\calF (d)$. Then due to Observation~\ref{Proposition:NumberOfInterestingCutVerticesAreBounded}, we know that for each block~$B \in \text{Child}_\calF (d)$ there is at most one vertex from~$Z$ in~$\text{Child}_\calF (B)$. 
                Moreover, for any vertex~$z_i \in Z$, there is a $d$-$x$ path~$R_i$ in~$G_d + x$ such that~$z_i \in V(R_i)$ and all internal vertices of~$R_i$ belong to~$V_G(\calF_{\mbox{parent}(z_i)})$.
                Let~$R_i := W_i[d, z_i] \cdot P_i [z_i, p_i] \cdot \{p_i, x\}$, where~$W_i$ is a $d$-$z_i$ path inside the block~$\mbox{parent}_{\calF} (z_i)$. By construction of~$R_i$, it is clear that~$z_i \in V(R_i)$ and all internal vertices of~$R_i$ belong to~$V_G(\calF_{\mbox{parent}(z_i)})$.
                Also, for distinct~$z_i, z_j \in Z$, the corresponding $d$-$x$ paths~$R_i$ and~$R_j$ are \ivd as $V_G(\calF_{\mbox{parent}(z_i)}) \cap V_G(\calF_{\mbox{parent}(z_j)}) = \{d\}$.
                Thus, the graph~$G_d + x$ has~$|Z|$ pairwise \ivd~$d$-$x$ paths, each containing at least one terminal other than~$d$ itself.
                Hence, any MWNS excluding the vertices~$x, d$ (thus~$S^*_x$) intersects all but at most one of those paths and hence~$|S^*_x \cap V_G (\calF_d)| \geq \max \{1, |Z|-1\} \geq \max \{1, \frac{|Z|-2}{6}\}$ since~$|Z| \geq 1$ as there is a $T$-cycle in~$G_d + x$. Hence the inequality of Claim~\ref{Claim:Any_Optimal_MWNS_excluding_x_picks_enough_vertices_below_d} holds in the case when~$d$ is a terminal cutvertex.

                \subparagraph{Case 3. If~$d$ is a block.}
                 Let~$Z_1, Z_2, Z_3, Z_4, Z_5, Z \subseteq V_G (\calF_d) \setminus T$ be the sets computed in Step~\ref{Algo:A:Step:ConstructionOfZWhendIsABlock}, where~$Z = \bigcup_{i=1}^5 Z_i$. 
                 Next, for each~$i \in [5]$, we do an analysis for the set~$Z_i$.
                 
                 \textbf{Analysis for the set~$Z_1$:}
                        While choosing~$|Z_1|$ vertices in Step~\ref{Algo:A:Step:ConstructionOfZWhendIsABlock} (using Gallai's theorem), we ensured that there is a family~$\mathcal{P}_Q$ containing at least~$\frac{|Z_1|}{2}$ pairwise vertex-disjoint $Q$-paths in the graph~$D_T = G [V_G (d) \setminus T]$, for~$Q = \calC_{\geq 2}^d \cup \calC_1^d$. For each~$q_i \in Q$, the definitions of~$Q, \calC_{\geq 2}^d$, and $\calC_1^d$ ensure that there exists a $q_i$-$p_i$ path~$P_i$ in~$G_{q_i}$ for some~$p_i \in N_G(x)$ such that~$P_i$ contains at least one terminal. Hence for each~$q_i$-$q_j$ path $Q_{ij}$ in~$\mathcal{P}_Q$, there is a $T$-cycle on~$x$, say~$C_{ij}$ which can be defined as follows.
                        Let~$C_{ij} := \{x, p_i\} \cdot P_i [p_i, q_i] \cdot Q_{ij} [q_i, q_j] \cdot P_j [q_j, p_j] \cdot \{p_j, x\}$.  

                        Moreover, it is easy to observe that there are~$|\mathcal{P}_Q| \geq \frac{|Z_1|}{2}$ pairwise $\{x\}$-disjoint cycles on~$x$ (one corresponding to each $Q$-path of~$\mathcal{P}_Q$) in~$G_d + x$ because for each~$q_i \in Q$ the $q_i$-$p_i$ path~$P_i$ contains vertices from~$V_G(\calF_{q_i})$.
                        Hence by Observation~\ref{Observation:Any_MWNS_contains_at_least_one_vertex_from_each_T-disjoint_cycle}, we have~$|S^*_x \cap V_G (\calF_d)| \geq \frac{|Z_1|}{2}$.

                         \textbf{Analysis for the set~$Z_2$:}
                         Since~$|Z_2|$ is the cardinality of a minimum~$(A, B)$ separator in the graph~$D_T$, by Menger's theorem there exists a family~$\calP_2$ of~$|Z_2|$ pairwise vertex-disjoint $(A, B)$ paths in~$D_T$, for~$A= \calC_{\geq 2}^d$ and~$B = \calC_0^d \cup (N_G (x) \cap V_G (D_T))$. Similar to the above analysis done for the set~$Z_1$, for any path~$R \in \calP_2$ with endpoints~$c_2 \in A$ and~$c_0 \in B$, we can obtain a $T$-cycle~$C_{20}$ as follows.
                         Let~$C_{20} := \{x, p_0\} \cdot P_0 [p_0, c_0] \cdot R[c_0, c_2] \cdot P_2 [c_2, p_2] \cdot \{p_2, x\}$, where~$P_0$ is a $c_0 - p_0$ path in the graph~$G_{c_o}$ for some~$p_0 \in N_G (x)$, and $P_2$ is a $c_2 - p_2$ path in the graph~$G_{c_2}$ for some~$p_2 \in N_G (x)$ with~$|V_G (P_2) \cap T| \geq 2$. By definition of~$A$ and~$B$, such paths exist. Again, it is easy to observe that there are at least~$|Z_2|$ pairwise $(T \cup \{x\})$-disjoint $T$-cycles on~$x$. Hence we have~$|S^*_x \cap V_G (\calF_d)| \geq |Z_2|$.

                         \textbf{Analysis for the set~$Z_3$:}
                         By definition of~$Z_3 \subseteq \calC_1^d \cup \calC_{\geq 2}^d$, for each~$z_i \in Z_3$ there exists a \cct~$D_i$ in~$D_T - (Z_1 \cup Z_2)$ such that~$z_i \in V_G (D_i)$ and~$N_G (t) \cap V_G (D_i) \neq \emptyset$.
                         Let~$n_i \in N_G (t) \cap V_G (D_i)$. Since~$z_i \in \calC_1^d \cup \calC_{\geq 2}^d$, by definition of the set~$\calC_1^d$ and~$\calC_{\geq 2}^d$, the graph~$G_{z_i}$ contains a vertex~$p_i \in N_G (x)$ such that there is a $z_i$-$p_i$ path~$P_i$ in~$G_{z_i}$ with~$|V_G(P_i) \cap T| \geq 1$.          
                         Thus for each vertex~$z_i \in Z_3$, there is a $t$-$x$ path~$W_i$ containing the vertex~$z_i$ and at least one terminal other than~$t$, which can be obtained as follows.
                         Let~$W_i := \{t, n_i\} \cdot R_i [n_i, z_i] \cdot P_i [z_i, p_i] \cdot \{p_i, x\}$, where~$R_i$ is a path between vertices~$n_i, z_i \in V_G (D_i)$ in side \cct~$D_i$.
                         Observe that~$W_i$ contains at least one terminal other than~$t$ as the subpath~$P_i$ of~$W_i$ contains at least one terminal other than~$t$. Moreover, each \cct $D_i$ of~$D_T - (Z_1 \cup Z_2)$ contains at most one vertex from~$\calC_1^d \cup \calC_{\geq 2}^d$ as the set~$Z_1$ hits all~$(\calC_1^d \cup \calC_{\geq 2}^d)$-paths. Then by definition of~$Z_3$, construction of~$W_i$ and the fact that~$V_G (\calF_{z_i}) \cap V_G(\calF_{z_j}) = \emptyset$ for distinct~$z_i, z_j \in Z_3$, it is easy to observe that the graph~$G_d + x$ has~$|Z_3|$ \ivd $t$-$x$ paths, each containing at least one terminal other than $t$. Hence, any~MWNS excluding the vertices~$t, x$ (thus~$S^*_x$) intersects all but at most one of those paths and hence $|S^*_x \cap V_G (\calF_d)| \geq |Z_3| - 1$.

              \textbf{Analysis for the set~$Z_4$ and~$Z_5$:}
                         By construction of~$Z_4$, if~$\text{child}_{\calF_d} (d) \cap T = \emptyset$ then~$Z_4 = \emptyset$. Otherwise, let~$t \in \text{child}_{\calF_d} (d) \cap T$. Since~$d$ is a deepest node with a $T$-cycle in~$G_d + x$, there does not exist a~$T$-cycle in~$G_t + x$. 
                         Thus when we apply Proposition~\ref{Proposition:NumberOfInterestingCutVerticesAreBounded} with respect to the vertex~$t$, we have~$|Z_4| = 1$. Overall, we have~$|Z_4| \leq 1$.
                         Also, by definition of~$Z_5$, we have~$|Z_5| \leq 1$.

                Now we combine the analysis of~$Z_i$'s to prove that Claim~\ref{Claim:Any_Optimal_MWNS_excluding_x_picks_enough_vertices_below_d} holds in the case when $d$ is a block. Towards this, let~$Z_{123} = Z_1 \cup Z_2 \cup Z_3$. Since $|Z_4|, |Z_5| \leq 1$, we know that $|Z_{123}| \geq |Z|-2$. Hence to prove Claim~\ref{Claim:Any_Optimal_MWNS_excluding_x_picks_enough_vertices_below_d}, it suffices to show that~$|S^*_x \cap V_G (\calF_d)| \geq \max \{1, \frac{|Z_{123}|}{6} \}$. 
                If~$|Z_{123}| \leq 6$ then the right-hand side of the inequality is 1 and the result follows since our choice of~$d$ ensures there is a~$T$-cycle on~$x$ in~$G_d + x$, any~$x$-avoiding MWNS (and hence~$S^*_x$) contains at least one vertex from~$V_G (\calF_d)$. 
                So we assume~$|Z_{123}| > 6$. Let~$Z_i$ be a set with the maximum cardinality among~$Z_1,Z_2,Z_3$, then~$|Z_i| \geq |Z_{123}| / 3 \geq 6/3 = 2$. 
                To prove that~$|S^*_x \cap V_G (\calF_d)| \geq \frac{|Z_{123}|}{6}$, it now suffices to show~$|S^*_x \cap V_G (\calF_d)| \geq \frac{|Z_i|}{2}$. This follows from the construction of~$Z_1, Z_2, Z_3$: if~$Z_i = Z_1$ we already show that~$|S^*_x \cap V_G (\calF_d)| \geq \frac{|Z_1|}{2}$, if~$Z_i = Z_2$, we already show~$|S^*_x \cap V_G (\calF_d)| \geq |Z_2|$, and if~$Z_i = Z_3$ we have~$|S^*_x \cap V_G (\calF_d)| \geq |Z_3|-1 \geq \frac{|Z_3|} {2}$ since we could assume~$|Z_i| \geq 2$.
                This concludes the proof of Claim~\ref{Claim:Any_Optimal_MWNS_excluding_x_picks_enough_vertices_below_d}.
            \end{claimproof}

\subsection{Proof of Claim~\ref{Claim:ShatIsAMWNSIfOptContains1VertexFromSubtreeBelow_d}}
\ShatIsAMWNSIfOptContainsOneVertexFromSubtreeBelowdState*
\begin{claimproof}
                Assume for a contradiction that~$|S^*_x \cap V_G (\calF_d)| = 1$ whereas~$\hat{S}$ is not a MWNS of~$(G-Z, T)$. Since~$\hat{S} \subseteq S^*_x \subseteq V(G) \setminus T$ and~$\hat{S}$ is not a MWNS of~$(G-Z, T)$, there exists a $T$-cycle $C$ in~$G- (Z \cup \hat{S})$. Moreover, $x \in V(C)$ as~$\{x\}$ is a MWNS of~$(G, T)$. Let~$\puv := C - \{x\}$ be the path with the endpoints~$u, v$. Due to Claim~\ref{Claim:Contains_at_least_one_vertex_from_both_inside_and_outside}, one endpoint say $u$ of the simple path~$\puv$ is contained in~$V_G(\calF_d)$ whereas the other endpoint~$v$ is contained in the set~$V_G (\calF) \setminus V_G(\calF_d)$. Also, note that since~$x \in V(G) \setminus T$ and~$|V (C) \cap T| \geq 2$, we have~$|V_G (\puv) \cap T| \geq 2$.
                Let~$w$ be the last vertex of~$V_G (\calF_d)$ on the path~$\puv$ while traversing from~$u$ to~$v$.
                First, observe that~$w \in T$, because if~$w \notin T$ then by construction of~$Z$, we have~$w \in Z$ (when~$d$ is a cutvertex we have~$d=w$ and~$Z = \{w\}$, whereas when~$d$ is a block we have~$w \in Z_5 \subseteq Z$), a contradiction to the fact that the cycle~$C$ is present in~$G- (Z \cup \hat{S})$. 
                
                Next, we show that the subpath~$\puv [u, w]$ does not contain any terminal other than~$w$.
                    \begin{equation}
                    \label{Equation:Property1}
                        V (\puv [u, w]) \cap T = \{w\}.
                    \end{equation}
                    
                        We will prove that the above equation holds before continuing the rest of the proof of Claim~\ref{Claim:ShatIsAMWNSIfOptContains1VertexFromSubtreeBelow_d}.
                        We prove this using a case distinction similar to Step~\ref{Algo:A:Step:CaseDistinction}.
                        First observe that since~$w \in T$, we are not in the case when~$d$ is a non-terminal cutvertex, i.e. Case a. Hence, we are only left with cases when~$d$ is either a terminal cutvertex or~$d$ is a block.
                        First, consider the case when~$d$ is a terminal cutvertex. Note that in this case by definition of the vertex~$w$, we have~$d= w \in T$.
                            Then by the construction of the set~$Z = \mathcal{C}_{\geq 1} (d)$ in Step~\ref{Algo:A:Step:ConstructionOfZWhendIsATerminalcutvertex} and the fact that for each block~$B \in \mbox{child}_\calF (d)$ we have~$V(B) \cap T = \{d\}$, there can not be any other terminal on the subpath~$\puv[u, w]$ except~$d = w$.
                        Next, we consider the case when~$d$ is a block. Note that in this case, by definition of~$w$ and the fact that~$u \in V_G (\calF_d)$ and~$v \in V_G (\calF) \setminus V_G (\calF_d)$, 
                        we have~$w \in V(d)$ and~$w = \text{parent}_{\calF} (d)$. Again by the construction of the set~$Z_3 \subseteq Z$ (in Step~\ref{Algo:A:Step:ConstructionOfZWhendIsABlock}) and the fact that block~$d$ (of~$G-\{x\}$) contains at most one terminal, there can not be any other terminal on the subpath~$\puv[u, w]$ except~$w$. This concludes the proof of Equation~\ref{Equation:Property1}.

                We now continue with the proof of Claim~\ref{Claim:ShatIsAMWNSIfOptContains1VertexFromSubtreeBelow_d}.
                Since~$\puv$ contains at least 2 distinct terminals and the subpath~$\puv[u, w]$ contains exactly 1 terminal, namely~$w$, the remaining subpath~$\puv(w, v]$ must contain at least one terminal. Moreover, by our choice of~$d$ in Step~\ref{Algo:A:Step:ChooseADeepestNode} there is a $T$-cycle~$C'$ in the~$G_d + x$ and~$V(C') \cap V(d) \neq \emptyset$. Also, we have~$x \in V(C')$ as~$\{x\}$ is a MWNS of~$(G, T)$. Let~$\puvprime := C' - \{x\}$ be the path with endpoints~$u'$ and~$v'$. Next, we do a case distinction based on whether~$d$ is a cutvertex or a block.

                    \textbf{Case 1. If~$d$ is a cutvertex.}\newline
                        We have~$d = w  \in T$ in this case. Since~$d$ is a deepest node satisfying the conditions of Step~\ref{Algo:A:Step:ChooseADeepestNode}, we have~$d \in V(\puvprime)$.
                        Consider the subpaths~$\puvprime [u', d]$ and~$\puvprime [d, v']$. Note that~$V (\puvprime [u', d]) \cap V (\puvprime [d, v']) = \{d\}$.
                        Since~$d \in T$, $S^*_x \subseteq V(G) \setminus (T \cup \{x\})$, and~$|S^*_x \cap V_G (\calF_d)| = 1$, the set~$S^*_x$ can intersect at most on of the subpaths among~$\puvprime [u', d]$ and~$\puvprime [d, v']$.
                        Assume w.l.o.g. that~$S^*_x \cap V (\puvprime [u', d]) = \emptyset$.
                        Let~$C^* := \{x, u'\} \cdot \puvprime [u', d] \cdot \puv [d, v] \cdot \{v, x\}$. Note that~$C^*$ is a $T$-cycle on~$x$ avoiding the set~$S^*_x$, a contradiction to the fact that the set~$S^*_x$ is a MWNS of~$(G, T)$.

                    \textbf{Case 2. If~$d$ is a block.}\newline
                        As observed earlier (during the proof of Equation~\ref{Equation:Property1}), we have~$w \in V (d)$ and~$w = \text{parent}_{\calF} (d)$.
                        Also, observe that in this case we have, $|V (P'[u'v']) \cap V(d)| \geq 2$, because if~$|V (P'[u'v']) \cap V(d)| \leq 1$ then we get a contradiction to the fact that~$d$ is a deepest node with a $T$-cycle in~$G_d + x$.
                        Let~$c_{u'}, c_{v'} \in V (d)$ be the first and the last vertex of~$d$ on the path~$\puvprime$ while traversing from~$u'$ to~$v'$. By the argument above, such~$c_{u'}, c_{v'}$ exist and~$c_{u'} \neq c_{v'}$.
                        Next, observe that since~$S^*_x$ is an $x$-avoiding MWNS of~$(G, T)$ and~$C'$ is a $T$-cycle in~$G_d+x \subseteq G$, we have~$V (\puvprime) \cap S^*_x \neq \emptyset$. Moreover, since~$|S^*_x \cap V_G (\calF_d)| = 1$ and the path~$\puvprime$ is present in~$G_d$, we have~$|S^* \cap V (\puvprime)| = 1$.
                        Consider the (partition of~$\puvprime$ into) pairwise vertex disjoint subpaths~$\puvprime [u', c_{u'}]$, $\puvprime [c_{v'}, v']$, and~$\puvprime (c_{u'}, c_{v'}) := \puvprime [c_{u'}, c_{v'}] - \{c_{u'}, c_{v'}\}$.
                        Next, we do a case distinction based on whether or not the set~$S^*_x$ intersects the subpath~$\puvprime (c_{u'}, c_{v'})$.

                            \textbf{Case I. If~$|S^*_x \cap \puvprime (c_{u'}, c_{v'})| = 1$.}\newline
                                Since~$d$ is a biconnected component of~$G-\{x\}$, $|S^*_x \cap \puvprime (c_{u'}, c_{v'})| = 1$, and~$S^*_x \cap T = \emptyset$, the graph~$G[V(d) \setminus S^*_x]$ has a path say~$R$ between vertices~$c_{u'}$ and~$w \in T$ in block~$d$. Let~$C^* := \{x, u'\} \cdot \puvprime [u', c_{u'}] \cdot R[c_{u'}, w] \cdot P_{uv} [w, v] \cdot \{v, x\}$. Note that~$C^*$ is a $T$-cycle in~$G-S^*_x$, a contradiction to the fact that the set~$S^*_x$ is a MWNS of~$(G, T)$.

                            \textbf{Case II. If~$|S^*_x \cap V (\puvprime [u', c_{u'}])| =  1$ or $|S^*_x \cap V (\puvprime [v', c_{v'}])| = 1$.}\newline
                                Since the subpaths~$\puvprime [u', c_{u'}]$ and~$\puvprime [v', c_{v'}]$ are vertex disjoint and $|S^*_x \cap V_G (\calF_d)| = 1$, the set~$S^*_x$ either intersects the subpath~$\puvprime [u', c_{u'}]$ or~$\puvprime [v', c_{v'}]$, but not both.
                                Assume w.l.o.g. that~$S^*_x$ intersects the subpath~$\puvprime [v', c_{v'}]$. Let~$R$ be a path between vertices~$c_{u'}, w \in V (d)$ inside the block~$d$.
                                Let~$C^* := \{x, u' \} \cdot \puvprime [u', c_{u'}] \cdot R [c_{u'}, w] \cdot \puv [w, v] \cdot \{v, x\}$.
                                Note that~$C^*$ is a $T$-cycle in~$G- S^*_x$, a contradiction to the fact that the set~$S^*_x$ is a MWNS of~$(G, T)$. 

                        Since both cases lead to a contradiction, our assumption that~$\hat{S}$ is not a MWNS of~$(G-Z, T)$ is wrong, i.e., the set~$\hat{S}$ is a MWNS of~$(G, T)$.
                        This concludes the proof of Claim~\ref{Claim:ShatIsAMWNSIfOptContains1VertexFromSubtreeBelow_d}.
            \end{claimproof}

\subsection{Proof of Claim~\ref{Claim:S-hat_is_at_most_1_away_from_a_MWNS}}

\ShatISaNearMWNSState*
\begin{claimproof}
                Since~$\hat{S}$ is not a MWNS of~$(G-Z, T)$ and~$\hat{S} \cap T = \emptyset$, there is a $T$-cycle~$C$ in~$G- (Z \cup \hat{S})$. 
                Note that since~$\{x\}$ is a MWNS of~$(G, T)$ and~$G-(Z \cup \hat{S}) \subseteq G$, we have~$x \in V (C)$.
                Consider the derived path~$\puv := C-\{x\}$ such that~$u, v$ are endpoints of~$\puv$ with~$u \in V_G (\calF_d)$ and~$v \in V_G (\calF) \setminus V_G (\calF_d)$.
                Since~$Z$ is a MWNS of~$(G_d+x, T)$ and the~$T$-cycle~$C$ is present in~$G-(Z \cup \hat{S})$, we have such a path~$\puv$ due to Observation~\ref{Claim:Contains_at_least_one_vertex_from_both_inside_and_outside}.
                Moreover, since~$x \notin T$ and~$|V(C) \cap T| \geq 2$, we have~$|V (\puv) \cap T| \geq 2$.
                Let~$p = \text{parent}_{\calF} (d)$. Consider the edge~$e = \{d, p\}$ of the block-cut forest~$\calF$. Let~$c \in \{d, p\}$ be the unique cutvertex incident on edge~$e$. Note that due to Observation~\ref{Claim:Contains_at_least_one_vertex_from_both_inside_and_outside}, we have~$c \in V (\puv)$. 
                Note that~$c \in T$ because if~$c \notin T$ then by construction of~$Z$, we have~$c \in Z$, a contradiction to the fact that the~$T$-cycle~$C$ does not intersect the set~$Z$.
                Moreover, due to the construction of~$Z$ and the fact that a block of~$\calF$ can contain at most 1 terminal, the subpath~$\puv [u, c]$ does not contain any terminal other than~$c$ itself, i.e., we have~$V(\puv [u, c]) \cap T = \{c\}$. Since~$\puv$ contains at least 2 terminals and the subpath~$\puv[u,c ]$ contains exactly one terminal~$c$, we have~$V(\puv[c, v]) \cap (T \setminus \{c\}) \neq \emptyset$, i.e., the remaining subpath~$\puv[c, v]$ contains a terminal (say~$t_2 \neq c \in T$).
                
                Let~$\hat{c}$ be the cutvertex after the cutvertex~$c$ on the path~$\puv$ while traversing from~$u \in V_G (\calF_d)$ to~$v \in V_G (\calF) \setminus \calF_d$. Note that~$\hat{c} \notin T$ because a block of~$\calF$ can contain at most 1 terminal.
                Next, we show that the set~$S' : = \hat{S} \cup \{\hat{c}\}$ is a MWNS of~$(G-Z, T)$. Towards this, assume for a contradiction that~$S'$ is not a MWNS of~$(G-Z, T)$. Since~$S' \cap T = \emptyset$, implying that there is a~$T$-cycle $C'$ in~$G-(Z \cup S')$. Again, consider the derived path~$\puvprime := C' - \{x\}$ such that~$u' \in V_G (\calF_d)$ and~$v' \in V_G (\calF) \setminus V_G (\calF_d)$ are endpoints of~$\puvprime$. Note that~$c \in V (\puvprime)$. Next, we construct a $T$-cycle~$C^*$ as follows.
                Let~$C^* = \{x, v\} \cdot \puv[v, c] \cdot \puvprime[c, v'] \cdot \{v', x\}$. Note that~$C^*$ is a $T$-cycle which does not intersect the set~$\hat{S} \cup (V_G (\calF_d) \setminus T) \supseteq S^*_x$, a contradiction to the fact that~$S^*_x$ is a MWNS of~$(G, T)$. This concludes the proof of Claim~\ref{Claim:S-hat_is_at_most_1_away_from_a_MWNS}.
            \end{claimproof}

\subsection{Remaining proof of Lemma~\ref{Lemma:OPTDecreasesByAFactorOf14}}
\label{subsection:RemainingProof:Lemma:OPTDecreasesByAFactorOf14}
We continue to prove Lemma~\ref{Lemma:OPTDecreasesByAFactorOf14}. Towards this, we do a case distinction based on whether or not the set~$\hat{S}$ is a MWNS of~$(G-Z, T)$.

           \textbf{Case 1. If~$\hat{S}$ is a MWNS of~$(G-Z, T)$}.
           
                Since~$\hat{S}$ is a MWNS of~$(G-Z, T)$ excluding vertex~$x$, we have:
                        \begin{equation}
                            \begin{split}
                                \text{\optx} (G-Z, T) &\leq |\hat{S}|
                                                        = |S^*_x \setminus V_G (\calF_d)|                      &&\text{by definition of~$\hat{S}$} \\
                                                        &= \text{\optx} (G, T) - |S^*_x \cap V_G (\calF_d)|     &&\text{by definition of~$S^*_x$} \\
                                                        & \leq \text{\optx} (G, T) - \max \{1, \frac{|Z|-2}{6}\}    &&\text{by Claim~\ref{Claim:Any_Optimal_MWNS_excluding_x_picks_enough_vertices_below_d}}                                        
                            \end{split}
                        \end{equation}

             We further analyze the above inequality based on whether or not~$|Z| \leq 8$.

                        \textbf{When~$|Z| \leq 8$}
                            \begin{equation}
                                \begin{split}
                                    \text{\optx} (G-Z, T)   & \leq 
                                                                  \text{\optx} (G, T) - 1                         
                                                             \leq  \text{\optx} (G, T) - \frac{|Z|}{14}           
                                \end{split} 
                            \end{equation}

                        \textbf{When~$|Z| > 8$}
                            \begin{equation}
                                \begin{split}
                                    \text{\optx} (G-Z, T)    &\leq  
                                                               \text{\optx} (G, T) - \frac{|Z|-2}{6}       
                                                            \leq  \text{\optx} (G, T) - \frac{|Z|}{14}        
                                \end{split}
                            \end{equation}        
        
            Since both the cases give the desired bound hence Lemma~\ref{Lemma:OPTDecreasesByAFactorOf14} holds in the case when the set~$\hat{S}$ is a MWNS of~$(G-Z, T)$.

            \textbf{Case 2. If~$\hat{S}$ is not a MWNS of~$(G-Z, T)$}.
            
            First note that since~$\hat{S}$ is not a MWNS of~$(G-Z, T)$, we have~$|S^*_x \cap V_G (\calF_d)| \geq 2$ due to Claim~\ref{Claim:ShatIsAMWNSIfOptContains1VertexFromSubtreeBelow_d}.
            Moreover, due to Claim~\ref{Claim:S-hat_is_at_most_1_away_from_a_MWNS}, we know that there is a non-terminal~$c \in V(G) \setminus (T \cup \{x\})$ such that the set~$\hat{S} \cup \{x\}$ is a MWNS of~$(G-Z, T)$. Since~$\hat{S} \cup \{c\}$ is a MWNS of~$(G-Z, T)$ excluding vertex~$x$, we have:
                        \begin{equation}
                            \begin{split}
                                \text{\optx} (G-Z, T)   &\leq |\hat{S} \cup \{x\}|
                                                        =  |\hat{S}| + 1                               
                                                        =  |S^*_x \setminus V_G (\calF_d)| + 1                     &&\text{by definition of~$\hat{S}$}\\
                                                        &=  \text{\optx} (G, T) - |S^*_x \cap V_G (\calF_d)| + 1    &&\text{by definition of~$S^*_x$}\\
                                                        &=  \text{\optx} (G, T) - \max \{2, \frac{|Z|-2}{6}\} + 1       &&\text{by Claim~\ref{Claim:Any_Optimal_MWNS_excluding_x_picks_enough_vertices_below_d} and Claim~\ref{Claim:ShatIsAMWNSIfOptContains1VertexFromSubtreeBelow_d}}
                            \end{split}
                        \end{equation}
                        We further analyze the above inequality based on whether or not~$|Z| \leq 14$.

                        \textbf{When~$|Z| \leq 14$}
                            \begin{equation}
                                \begin{split}
                                    \text{\optx} (G-Z, T)   & \leq  
                                                                 \text{\optx} (G, T) - 2 + 1                            
                                                             \leq  \text{\optx} (G, T) - \frac{|Z|}{14}                    
                                \end{split} 
                            \end{equation}

                        \textbf{When~$|Z| > 14$}
                            \begin{equation}
                                \begin{split}
                                    \text{\optx} (G-Z, T)   & \leq  
                                                              \text{\optx} (G, T) - \frac{|Z|-2}{6} + 1                  
                                                             \leq  \text{\optx} (G, T) - \frac{|Z|}{14}                   
                                \end{split}
                            \end{equation}        
                        
                    Since both the cases give the desired bound hence Lemma~\ref{Lemma:OPTDecreasesByAFactorOf14} holds in the case when~$\hat{S}$ is not a MWNS of~$(G-Z, T)$.
                    Note that the approximation factor of 14 derived above is tight because in the case when~$\hat{S}$ is not a MWNS of~$(G-Z, T)$ and when~$|Z| \leq 14$ we need a factor 14 to hold the inequalities.
            This concludes the proof of Lemma~\ref{Lemma:OPTDecreasesByAFactorOf14}.

\subsection{Proof of Theorem~\ref{theorem:blocker}}
\label{subsection:ProofOfBlocker}
\BlockerState*

\begin{proof}
     Let~$S_x$ be the set returned by the above algorithm on the input~$(G, T, x)$, i.e., $S_x$ is the output of $\mbox{Blocker} (G, T, k)$.     
    It is easy to observe that~$\mbox{Blocker}(G, T, x)$ takes polynomial time because each step of the recursive algorithm takes polynomial time, and the size of the input graph reduces by at least one (as~$|Z| \geq 1$) in each recursive call. Due to Proposition~\ref{Proposition:Grandchild_of_a_Terminal_can_NOT_be_a_Terminal} and construction of the set~$S_x$, we have~$S_x \cap (T \cup \{x\}) = \emptyset$.
    Since the algorithm only stops when there is no $T$-cycle on~$x$, there is no $T$-cycle on~$x$ in the graph~$G-S_x$. Moreover, every $T$-cycle in~$G$ should intersect~$x$ as~$\{x\}$ is a~MWNS of~$(G, T)$. Thus~$S_x$ hits all~$T$ cycles of~$G$, and hence, the set~$S_x$ is a MWNS of~$(G, T)$ by Observation~\ref{Proposition:DifferentWaysOfLookingAtMWNS}.
    
    It remains to prove that~$S_x$ is a 14-approximate $x$-avoiding MWNS of~$(G, T)$. Towards this, assume that~\optx$(G, T)$ is the cardinality of a minimum~$x$-avoiding MWNS of~$(G, T)$. Then we need to show that~$|S_x| \leq 14 \cdot$\optx$(G, T)$. We prove this by induction on~$|V(G)|$.\\ 
    \textbf{Base Case: When~$|V(G)| \leq 2$.} 
    Since any~$T$-cycle should contain at least 2 vertices from~$G-\{x\}$, 
    there is no $T$-cycle in~$G$. Moreover, we have~$S_x = \emptyset$ as there is no $T$-cycle on~$x$. Hence Theorem~\ref{theorem:blocker} holds in this case.\\
    \textbf{Induction Hypothesis (IH):} Assume that for all instances~$(G', T, x)$ with~$G' \subseteq G$ and~$|V(G')| < |V(G)|$, we have~$|S'_x| \leq 14 \cdot$ \optx $(G', T)$, where~$S'_x$ is an $x$-avoiding MWNS of~$(G', T)$ computed by~$\mbox{Blocker}(G', T, x)$.\\
    \textbf{Induction Step:} Let~$Z \subseteq V(G)$ be the set computed by the first iteration of~$\mbox{Blocker} (G, T, x)$. If~$Z = \emptyset$ then Theorem~\ref{theorem:blocker} trivially holds. Hence assume that~$Z \neq \emptyset.$
    Since~$Z \subseteq S_x$ and~$S_x \setminus Z$ is the set computed by~$\mbox{Blocker}(G-Z, T, x)$, we have~$|S_x \setminus Z| \leq 14 \cdot \text{OPT}_x (G-Z, T)$ due to IH. Thus we have:
    \begin{equation}
        \begin{split}
            |S_x| &= |S_x \setminus Z| + |Z| 
            \leq 14 \cdot \text{OPT}_x (G-Z, T) + |Z| &&\text{due to IH}\\
            &\leq 14 \cdot \Bigl( \text{OPT}_x (G, T) - \frac{|Z|}{14} \Bigr) + |Z| &&\text{due to Lemma~\ref{Lemma:OPTDecreasesByAFactorOf14}}\\
            &= 14 \cdot \text{OPT}_x (G, T).
        \end{split}
    \end{equation} 
    This concludes the proof of Theorem~\ref{theorem:blocker}.
\end{proof}
            
\subsection{Safeness of reduction rules}
\label{section:SafenessOfRRs}

\begin{restatable}{lemma}{SafenessOfWeaklySeparableTerminalRRState} 
\label{Lemma:SafenessOfWeaklySeparableTerminalRR}
    If~\gtk reduces to~\gtminustk by Reduction Rule~\ref{RR:RemoveWeaklySeparatedTerminal}, then~\gtk is a YES-instance of~\mwnsshort if and only if~\gtminustk is a YES-instance of~\mwnsshort. Moreover, given a solution~$S'$ of~\gtminustk, in polynomial time we can obtain a solution~$S$ of~\gtk.
\end{restatable}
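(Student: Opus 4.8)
The forward direction is immediate from monotonicity in the terminal set. If~$S$ is a solution of~\gtk, then~$S \subseteq V(G) \setminus T$, so~$S \cap (T \setminus \{t\}) = \emptyset$ and~$|S| \le k$; moreover, since~$G-S$ has no two \ivd $t_i$-$t_j$ paths for any distinct~$t_i, t_j \in T$, the same holds for the smaller terminal set~$\tminust$. Hence~$S$ is a solution of~\gtminustk.

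For the reverse direction together with the constructive statement, the plan is: given a solution~$S'$ of~\gtminustk, output~$S := S' \setminus \{t\}$, which is trivially polynomial-time computable. Observe that~$S$ can differ from~$S'$ only if~$t \in S'$, which is a priori possible because a solution of~\gtminustk need only avoid~$\tminust$. I would then verify that~$S$ is a MWNS of~$(G, T)$ with~$|S| \le k$. The size bound~$|S| \le |S'| \le k$ and the property~$S \cap T = \emptyset$ (which follows from~$S \cap (\tminust) \subseteq S' \cap (\tminust) = \emptyset$ and~$t \notin S$) are immediate, so the work is to rule out, in~$G-S$, two \ivd paths between any pair of distinct terminals of~$T$.

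Pairs of distinct terminals one of which is~$t$ are dealt with at once: since~$t$ is nearly-separated in~$G$ by the hypothesis of Reduction Rule~\ref{RR:RemoveWeaklySeparatedTerminal}, and~$G-S$ is a subgraph of~$G$, there are no two \ivd $t$-$t'$ paths in~$G-S$ for any~$t' \in \tminust$. For a pair~$t_i, t_j \in \tminust$, I would argue by contradiction: suppose~$G-S$ contains two \ivd $t_i$-$t_j$ paths~$Q_1, Q_2$. If neither passes through~$t$, then both are present in~$G-S'$ (whether or not~$t \in S'$), contradicting that~$S'$ solves~\gtminustk. Hence exactly one of them, say~$Q_1$, passes through~$t$ --- at most one can, since~$t \notin \{t_i, t_j\}$ forces~$t$ to be an internal vertex of any path through it, while~$Q_1, Q_2$ share no internal vertex. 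Orient~$Q_1$ as~$t_i, \ldots, t, \ldots, t_j$. The crux is a rerouting step producing two \ivd $t$-$t_i$ paths inside~$G-S \subseteq G$, which contradicts that~$t$ is nearly-separated in~$G$: since~$Q_1$ is simple, the internal vertices of~$Q_1[t, t_j]$ lie strictly between~$t$ and~$t_j$ on~$Q_1$, hence are disjoint from the internal vertices of~$Q_2$; also~$t \notin V(Q_2)$ and~$t_i \notin V(Q_1[t, t_j])$, so~$Q_1[t, t_j]$ and~$Q_2[t_j, t_i]$ are \ivd and meet only in the common endpoint~$t_j$. Therefore~$R := Q_1[t, t_j] \cdot Q_2[t_j, t_i]$ is a simple $t$-$t_i$ path. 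Taking~$R' := Q_1[t, t_i]$, its internal vertices lie strictly between~$t$ and~$t_i$ on~$Q_1$ and therefore avoid the internal vertices of~$Q_1[t, t_j]$ (the opposite side of~$t$ along the simple path~$Q_1$), avoid~$t_j$, and avoid the internal vertices of~$Q_2$; so~$R$ and~$R'$ are \ivd. This contradiction shows that~$S$ is a solution of~\gtk, completing the reverse direction and the construction.

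The only mildly delicate point is the rerouting in the last paragraph: verifying that the spliced path~$R$ is simple and internally vertex-disjoint from~$R'$. Everything else is routine bookkeeping about subgraph monotonicity and about reconciling the two notions of solution --- one that forbids all vertices of~$T$, the other only vertices of~$\tminust$.
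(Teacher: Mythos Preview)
Your proof is correct and takes essentially the same approach as the paper: both hinge on the rerouting argument that splices $Q_1[t,t_j]$ with $Q_2$ to produce two \ivd $t$-$t_i$ paths contradicting that $t$ is nearly-separated. The only organizational difference is that the paper first passes to an inclusion-minimal $S'$ and uses the rerouting to argue $t \notin S'$ (so $S'$ itself works for~\gtk), whereas you set $S := S' \setminus \{t\}$ directly and deploy the rerouting in the main case analysis---a slight simplification that avoids the appeal to minimality.
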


\begin{proof}
    In the forward direction, let~\gtk be a YES-instance of~\mwnsshort, and~$S \subseteq V(G) \setminus T$ is a solution of~\gtk. Observe that the same set~$S$ is also a solution of~\gtminustk as~$\tminust \subseteq T$. 
    In the backward direction, let~\gtminustk is a~YES-instance of~\mwnsshort, and~$S'\subseteq V(G) \setminus (\tminust)$ is an inclusion-wise minimal solution of~\gtminustk. Using the following claim we first show that the set~$S'$ does not contain any terminal from the set~$T$.
    \begin{claim}
        If~$S' \subseteq \tminust$ is an inclusion-wise minimal~solution of~\gtminustk then~$S' \cap T = \emptyset$.
    \end{claim}
    \begin{claimproof}
        Since~$S'$ is a solution of~\gtminustk, we have~$S' \subseteq \tminust$. Hence it remains to prove that~$t \notin S'$. Towards this, assume for a contradiction that~$t \in S'$. 
        Since~$t$ belongs to an inclusion-wise minimal solution~$S'$ of~\gtminustk, there exist 2 distinct terminals~$t_i, t_j \in \tminust$ such that the graph~$G -(S' \setminus \{t\})$ has 2~\ivd $t_i$-$t_j$ paths (say~$P_1, P_2$).
        For~$z \in [2]$, let~$P_z$ be a~$t_i$-$t_j$ path which intersects~$t \in S'$. Note that such a path~$P_z$ exists as~$S'$ is a solution of~\gtminustk.

        Next, we construct 2~\ivd $t$-$t_i$ paths~$P$ and~$P'$ from the paths~$P_z$ and~$P_{3-z}$ as follows.
        Let~$P = P_z [t_i, t]$ and~$P' = P_z[t, t_j] \cdot P_{3-z}$. 
        By construction, both~$P$ and~$P'$ are $t$-$t_i$ paths. Moreover, the paths~$P$ and~$P'$ are~\ivd because for distinct~$t_i, t_j, t$ the subpaths~$P_z[t_i, t]$ and~$P_z[t, t_j]$ of the simple~$t_i$-$t_j$ path~$P_z$ only intersect at~$t$, whereas the other~$t_i$-$t_j$ path~$P_{3-z}$ (which is~\ivd from~$P_z$ and thus also~\ivd from its subpaths~$P_z[t_i, t]$) only intersect with~$P_z[t_i - t]$ at~$t_i$. 
        Hence for~$t_i \in \tminust$ there are 2~\ivd $t_i$-$t$ paths in~$G$, a contradiction to the assumption that for any~$t'\in \tminust$, there do not exist 2~\ivd $t$-$t'$ paths in~$G$.
    \end{claimproof}
    Due to the above claim, we have~$S' \subseteq V(G) \setminus T$. Now we finish the proof of Claim~\ref{Lemma:SafenessOfWeaklySeparableTerminalRR} by showing that the same set~$S' \subseteq V(G) \setminus T$ is also a solution of~\gtk. Towards this assume for a contradiction that~$S'$ is not a solution of~\gtk. This implies that there exist distinct terminals~$t_i, t_j \in T$ such that there are 2 \ivd $t_i$-$t_j$ paths in~$G-S'$. Next, we do a case distinction based on whether or not~$t_i$ or~$t_j$ is equal to~$t$.

    \textbf{Case 1. When neither~$t_i$ nor~$t_j$ is equal to~$t$.}\newline
    In this case observe that both~$t_i, t_j \in \tminust$ such that there are 2 \ivd $t_i$-$t_j$ paths in~$G-S'$, a contradiction to the fact that~$S'$ is a solution of~\gtminustk.
    
    \textbf{Case 2. When either~$t_i = t$ or~$t_j = t$.}\newline
    Assume that~$t_j = t$, as the argument for the other case when~$t_i = t$ is symmetric. Since in this case, for~$t_i \in \tminust$, there exist 2 \ivd~$t_i$-$ t$ paths in~$G - S'$, the same paths are also resent in the supergraph~$G \supseteq G-S'$. This yields a contradiction to the fact that for any~$t'\in \tminust$, there do not exist 2~\ivd $t-t'$ paths in~$G$.
    Since both the above cases lead to a contradiction, our assumption that~$S'$ is not a solution of~\gtk is wrong, i.e., the set~$S'$ is a solution of~\gtk. Thus, \gtk is a Yes-instance of~\mwnsshort.   
    This concludes the proof of Lemma~\ref{Lemma:SafenessOfWeaklySeparableTerminalRR}.
\end{proof}

\begin{restatable}{lemma}{SafenessOfDegreeTwoPathBoundingRRState}
\label{Lemma:SafenessOfDegreeTwoPathBoundingRR}
    If~\gtk reduces to~$(G, T\setminus \{t\}, k)$ by Reduction Rule~\ref{RR:BoundingBlocksOfDegree2Path}, then~\gtk is a YES-instance of~\mwnsshort if and only if $(G, T\setminus\{t\}, k)$ is a YES-instance of~\mwnsshort. Moreover, given a solution~$S'$ of~\gtminustk, in polynomial time we can obtain a solution~$S$ of~\gtk.
\end{restatable}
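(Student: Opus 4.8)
The plan is as follows. For the forward implication there is nothing to do: if $S$ is a solution of~\gtk then $S \cap (T \setminus \{t\}) \subseteq S \cap T = \emptyset$, and every $(T\setminus\{t\})$-cycle of $G-S$ is in particular a $T$-cycle of $G-S$, so by Proposition~\ref{Proposition:DifferentWaysOfLookingAtMWNS} the same set $S$ is a solution of~\gtminustk. (Throughout we may assume $T$ is independent in $G$, an invariant maintained by the algorithm.)

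For the backward implication, let $S'$ be a solution of~\gtminustk; we produce in polynomial time a solution $S$ of~\gtk with $|S| \le |S'|$. If $t \in S'$, set $S := (S' \setminus \{t\}) \cup \{x\}$; then $|S| \le |S'| \le k$, $S \cap T = \emptyset$, and $S$ is a solution of~\gtk by the following observation. Deleting $x$ makes all of $V(D)$ hang off the single vertex $y$ (every neighbour of a vertex of $V(D)$ lies in $V(D) \cup \{x,y\}$), so every cycle of $G - x$ meeting $V(D)$ lives in $G[V(D) \cup \{y\}] \subseteq G[V(D) \cup \{x,y\}]$, which by condition~(b) contains no $T$-cycle; hence any $T$-cycle of $G - S$ avoids $V(D)$ entirely, in particular avoids $t$, so it is a $(T\setminus\{t\})$-cycle of $G - S'$ (it avoids $S' \setminus \{t\}$ since it lies in $G - S$, and it avoids $t$), contradicting that $S'$ solves~\gtminustk. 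So it remains to treat the case $t \notin S'$, where we claim $S := S'$ already works.

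Suppose not. By Proposition~\ref{Proposition:DifferentWaysOfLookingAtMWNS}, since $S'$ solves~\gtminustk but not~\gtk, there are two \ivd $t$-$t'$ paths in $G - S'$ for some $t' \in T \setminus \{t\}$, whose union (as $T$ is independent) is a $T$-cycle $C$ of $G-S'$ through $t$; and $C$ cannot be a $(T\setminus\{t\})$-cycle, so $\{t, t'\}$ are exactly its terminals. Write $H := G[V(D)\cup\{x,y\}]$. Since $H$ has no $T$-cycle (condition~(b)) it has a block-cut "spine" $x = c_0, B_1, c_1, \dots, c_{\ell-1}, B_\ell, c_\ell = y$ which every $x$-$y$ path of $H$ traverses block by block through $c_1, \dots, c_{\ell-1}$; by condition~(c) the path $P$ through $t_1, t_2$ does so, hence $t_1, t_2$ lie on the spine, in distinct blocks (Observation~\ref{Observation:Block_contains_atmost_1-terminal}). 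A short analysis of how $C$ meets $D$ — using that $D$-vertices border only $V(H)$ and that $H$ has no $T$-cycle — shows that $C$ uses both $x$ and $y$, that the maximal subpath of $C$ inside $D$ through $t$, extended by its incident edges to $x$ and to $y$, is an $x$-$y$ path $W$ in $H$ through $t$ avoiding $S'$, and that the complementary $x$-$y$ arc $C^{\dagger}$ of $C$ contains no vertex of $V(D)$ at all; since the interior of every $x$-$y$ path of $H$ lies in $V(D)$, the paths $W$ and $C^{\dagger}$ meet only in $\{x,y\}$ and $C = W \cup C^{\dagger}$.

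To finish it suffices to build an $x$-$y$ path $W'$ in $H$ that avoids $S'$ and passes through $t_1$ or through $t_2$: then $W'$ has interior inside $V(D)$, hence meets $C^{\dagger}$ only in $\{x,y\}$, so $W' \cup C^{\dagger}$ is a cycle of $G-S'$ containing $t' \in T \setminus \{t\}$ together with $t_1$ or $t_2$ — a $(T\setminus\{t\})$-cycle of $G-S'$, a contradiction. (The case that $C$'s second terminal $t'$ lies on $W$ rather than on $C^{\dagger}$ is handled identically, rerouting $W$ so as to keep $t'$ and additionally pick up $t_1$ or $t_2$.) We obtain $W'$ by splicing: $W$ already realises a $c_{i-1}$-$c_i$ path avoiding $S'$ inside every spine block $B_i$, so it is enough to replace its portion in the block $B_{i_1}$ containing $t_1$ by a $c_{i_1-1}$-$c_{i_1}$ path through $t_1$ avoiding $S'$, which exists by the $2$-connectivity of $B_{i_1}$ and Proposition~\ref{Proposition:ThereAre2PathsInsideBlock} unless $S'$ locally separates $t_1$ from $\{c_{i_1-1},c_{i_1}\}$ inside $B_{i_1}$ — and in that case we use $t_2$ instead. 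The main obstacle is precisely this final rerouting step: arguing that $t_1$ and $t_2$ cannot both be cut off from the spine by $S'$ while $W$ still survives, which is where condition~(a) (there are at least three terminals in $D$, so two genuinely reserved terminals $t_1,t_2$) and the detailed block structure of $H$ are needed. All other steps are routine, and the resulting $S$ is computed from $S'$ in polynomial time.
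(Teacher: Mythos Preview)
Your forward direction and the case $t \in S'$ are correct. The genuine gap is in the case $t \notin S'$, where you set $S := S'$ and attempt a contradiction by rerouting. You correctly flag the obstacle --- showing that $t_1$ and $t_2$ cannot both be cut off from the spine by $S'$ while the path $W$ through $t$ survives --- but this is not merely the hard step: it is false in general, so the construction $S=S'$ can fail outright. Concretely, let $H=G[V(D)\cup\{x,y\}]$ consist of three $5$-cycle blocks $B_1=(x,t_1,a,c_1,u_1)$, $B_2=(c_1,t,e,c_2,m)$, $B_3=(c_2,t_2,b,y,u_2)$ glued at the cutvertices $c_1,c_2$; let $G$ be $H$ together with one further terminal $t'$ adjacent only to $x$ and $y$; take $T=\{t,t_1,t_2,t'\}$. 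Conditions~(a)--(c) of the rule hold. With $S'=\{a,b\}$ both $t_1$ and $t_2$ become pendants in $G-S'$, so $S'$ is a MWNS for $(G,T\setminus\{t\})$ and $t\notin S'$; yet $(t',x,u_1,c_1,t,e,c_2,u_2,y,t')$ is a $T$-cycle in $G-S'$, and there is no $x$--$y$ path in $H-S'$ through $t_1$ or through $t_2$. Your rerouting argument therefore cannot be completed, and $S'$ is genuinely not a solution of~\gtk here.

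The paper avoids this by splitting on whether $S'\cap V(D)=\emptyset$ rather than on whether $t\in S'$. If $S'\cap V(D)\neq\emptyset$ it sets $S:=(S'\setminus V(D))\cup\{x\}$; then, by exactly the argument you gave for your $t\in S'$ case, every $T$-cycle of $G-S$ avoids $x$ and hence all of $V(D)$, so it is a $(T\setminus\{t\})$-cycle avoiding $S'\subseteq S\cup V(D)$, a contradiction. If $S'\cap V(D)=\emptyset$ it sets $S:=S'$; but now the fixed path $P$ from condition~(c) automatically avoids $S'$ (its interior lies in $V(D)$, and $x,y\in V(C)$ forces $x,y\notin S'$), so one simply replaces the $D$-arc of the offending $T$-cycle by $P$ to obtain a $(T\setminus\{t\})$-cycle avoiding $S'$ --- no block-by-block rerouting is needed. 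In the example above the paper outputs $S=\{x\}$, which works. The point you missed is that trading \emph{all} of $S'\cap V(D)$ for the single vertex $x$, rather than just trading $t$, is what makes the residual case clean enough for the reserved path $P$ to be used as-is.
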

\begin{claimproof}
   In the forward direction, assume that~\gtk is a YES-instance of~\mwnsshort, and $S \subseteq V(G) \setminus T$ is a solution of~\gtk. Note that the same set~$S$ is also a solution of~\gtminustk, because~$S \subseteq V(G) \setminus T \subseteq V(G)\setminus (T \setminus \{t\})$, $|S| \leq k$, and since~$S$ hits all~$T$-cycles and every~$T\setminus \{t\}$-cycle is also a~$T$-cycle, the set~$S$ hits all~$T\setminus \{t\}$-cycles.

   In the backward direction, assume that~$(G, T\setminus \{t\}, k)$ is a YES-instance of~\mwnsshort, and $S' \subseteq V(G) \setminus (T \setminus \{t\})$ is a solution of~$(G, T\setminus \{t\}, k)$.
   Next, depending on whether or not the set~$S'$ contains a vertex from the \cct~$D$, we construct a set~$S \subseteq V(G) \setminus T$ such that~$S$ is a solution of~\gtk.
        \begin{equation}
        S = \begin{cases}
        S' & \mbox{if $S' \cap V(D) = \emptyset$} \\
        (S' \setminus V(D)) \cup \{x\} & \mbox{otherwise.}
        \end{cases}
        \end{equation}
    Now we prove that~$S$ is a solution of~\gtk. Towards this, first note that by construction of~$S$, we have~$|S| \leq |S'| \leq k$. Moreover, as~$S'\subseteq V(G) \setminus (T \setminus \{t\})$, $t \in V(D)$, $V(D) \cap S = \emptyset$, and~$x \notin T$, we have~$S \subseteq V(G) \setminus T$. Thus, it remains to prove that the set~$S$ hits all~$T$-cycles. Towards proving this, assume for a contradiction that there exists a~$T$-cycle~$C$ in~$G - S$. 
    We now do a case distinction based on whether or not the~$T$-cycle~$C$ contains a vertex from the \cct~$D$.

    \textbf{Case 1. If~\boldmath{$V(C) \cap V(D) = \emptyset$.}}\newline
    Since~$V(C) \cap V(D) = \emptyset$ and $t \in V(D)$, the cycle~$C$ is also a~$T\setminus \{t\}$-cycle.
    Since~$C$ is preset in~$G-S$ and $V(C) \cap V(D) = \emptyset$, the $T\setminus \{t\}$-cycle $C$ does not intersect the set~$S \cup V(D) \supseteq S'$, a contradiction to the fact that~$S'$ is a solution of~\gtminustk.
    
    \textbf{Case 2. If~\boldmath{$V(C) \cap V(D) \neq \emptyset$}.}\newline
    Due to Condition~\ref{RR:BoundingBlocksOfDegree2Path:NoT-Cycle} there is no~$T$-cycle in~$G[V(D) \cup \{x, y\}]$, thus the~$T$-cycle~$C$ contains at least one vertex from~$V(G) \setminus (V(D) \cup \{x, y\})$. 
    Since~$D$ is a \cct in~$G-\{x, y\}$ and the simple cycle~$C$ contains at least one vertex from both~$V(D)$ and~$V(G) \setminus (V(D) \cup \{x, y\})$, we have~$x, y \in V(C)$.
    As~$x \in V(C)$ and the cycle~$C$ is present in~$G- S$, we have~$x \notin S$. Hence, by construction of~$S$, we have~$S = S'$ and $S' \cap V(D) = \emptyset$.
    Next, we construct a~$\tminust$-cycle~$C'$ avoiding the set~$S'$, yielding a contradiction to the fact that~$S'$ is a solution of~\gtminustk.
    Towards this, let~$P_c[x, y]$ be the path obtained by removing vertices of \cct~$D$ from the cycle~$C$, i.e., $P_c[x, y] := C \setminus V(D)$. Note that~$P_c$ is an $x-y$ path because~$C$ is a simple cycle that contains at least one vertex from both the sets~$V(D)$ and~$V(G)\setminus (V(D) \cup \{x, y\})$ and~$D$ is a \cct of~$G-\{x, y\}$.
    Let~$C' := P_c[x, y] \cdot P[x, y]$, where~$P$ is an~$x-y$ path in~$G[V(D) \cup \{x, y\}]$ containing 2 distinct terminals~$t_1, t_2 \in T\setminus \{t\}$ due to Condition~\ref{RR:BoundingBlocksOfDegree2Path:KeepingAPathP} of Reduction Rule~\ref{RR:BoundingBlocksOfDegree2Path}. 

    To prove~$V(C') \cap S' = \emptyset$, first note that~$P_c[x, y] \cap S' = \emptyset$, because by definition~$P_c [x, y] \subseteq C$,  we have~$V(C) \cap S = \emptyset$ as the cycle~$C$ is present in~$G-S$, and~$S = S'$ (as shown earlier). Moreover, as~$S' \cap V(D) = \emptyset$ and the path~$P(x, y)$ is present inside the \cct~$D$, we have~$P(x, y) \cap S' = \emptyset$.
    Overall, we have~$V(C') \cap S' = \emptyset$, a contradiction to the fact that~$S'$ is a solution of~$(G, T\setminus \{t\}, k)$. Since both the above cases lead to a contradiction, our assumption that~$S$ does not hit a~$T$-cycle is wrong, i.e., the set~$S$ hits all~$T$-cycles. Hence~$S$ is a solution of~\gtk. Hence~\gtk is a YES-instance of~\mwnsshort. Moreover, given a solution~$S'$ of~\gtminustk, it is easy to observe that the set $S$ defined above can be constructed in polynomial time.
    This concludes the proof of Lemma~\ref{Lemma:SafenessOfDegreeTwoPathBoundingRR}.
\end{claimproof}

Before proving the correctness of Reduction Rule~\ref{RR:Bounding_CCTs}, we note the following property about the reduction rule. 
        It follows from the fact that each component~$C^{x,y}_j$ marked for distinct~$x, y \in S^*$ contains an $x$-$y$ path containing a terminal. Since paths from distinct components are pairwise vertex-disjoint, any pair of such components yields a cycle containing two terminals of marked components, which therefore stay terminals in~$T'$. Since any set of size~$k$ must avoid two of the~$k+2$ marked components, to hit all these $T'$-cycles a solution must contain~$x$ or~$y$.
        
        \begin{observation}        \label{Observation:MarkingScheme_For-bounding_CCTs}
            Suppose~\gtk reduces to~$(G, T', k)$ by Reduction Rule~\ref{RR:Bounding_CCTs} and there exists a pair of distinct vertices~$x, y \in S^*$ for which there are~$k+2$ \ccts of~$G-S^*$ that are marked with respect to the pair~$x, y$ during Marking Scheme~\ref{MarkingScheme:Bounding_Interesting_CCTs}. Then any solution of~$(G, T', k)$ contains~$x$ or~$y$.
        \end{observation}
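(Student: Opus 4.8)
The plan is to argue by contradiction, exploiting that Reduction Rule~\ref{RR:Bounding_CCTs} changes neither the graph~$G$ nor the budget~$k$ and only discards terminals. Suppose~$S'$ is a solution of~$(G,T',k)$ with~$x\notin S'$ and~$y\notin S'$; then~$|S'|\le k$ and~$S'$ is a MWNS of~$(G,T')$, and this is meaningful because~$x,y\in S^*\subseteq V(G)\setminus T$ together with~$T'\subseteq T$ gives~$x,y\notin T'$. First I would unpack what the marking of the pair~$x,y$ provides: $k+2$ pairwise vertex-disjoint connected components~$C^{x,y}_{i_1},\dots,C^{x,y}_{i_{k+2}}$ of~$G-S^*$, each containing vertices~$p_m\in N_G(x)$ and~$q_m\in N_G(y)$ joined by a path~$P_m$ inside that component with~$|V(P_m)\cap T|\ge 1$. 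Fixing a terminal~$t_m\in V(P_m)\cap T$, the fact that~$C^{x,y}_{i_m}$ is one of the marked components~$\calC_{M}$ and that~$T'=T\cap\bigcup_{C\in\calC_{M}}V(C)$ guarantees~$t_m\in T'$.

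Next I would run a pigeonhole step: since the~$k+2$ components are pairwise vertex-disjoint and~$|S'|\le k$, the set~$S'$ meets at most~$k$ of them, so there are indices~$a\ne b$ with~$V(C^{x,y}_{i_a})\cap S'=V(C^{x,y}_{i_b})\cap S'=\emptyset$. Using the associated paths~$P_a,P_b$ I would form
\[
C := \{x,p_a\}\cdot P_a[p_a,q_a]\cdot\{q_a,y\}\cdot\{y,q_b\}\cdot P_b[q_b,p_b]\cdot\{p_b,x\},
\]
which is a simple cycle containing the two distinct terminals~$t_a,t_b\in T'$ (distinct because they sit in different components of~$G-S^*$), hence a~$T'$-cycle; and it avoids~$S'$ because its vertex set is~$\{x,y\}\cup V(P_a)\cup V(P_b)$, with~$x,y\notin S'$ by assumption and~$V(P_a),V(P_b)$ disjoint from~$S'$ by the choice of~$a,b$. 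A~$T'$-cycle in~$G-S'$ contradicts~$S'$ being a MWNS of~$(G,T')$ via Proposition~\ref{Proposition:DifferentWaysOfLookingAtMWNS}, so every solution of~$(G,T',k)$ must contain~$x$ or~$y$.

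The only point requiring genuine care is verifying that~$C$ is a \emph{simple} cycle rather than a walk repeating a vertex: here I would use that~$P_a$ and~$P_b$ lie in distinct connected components of~$G-S^*$ and are therefore vertex-disjoint, that~$x,y\in S^*$ consequently lie on neither path, and that~$x\ne y$, so that~$x$ and~$y$ each occur exactly once on~$C$. I would also spell out the degenerate subcase where~$P_a$ (or~$P_b$) has length~$0$: then its single vertex equals~$p_a=q_a=t_a$, which is the terminal, and~$C$ still closes up into a simple cycle of length at least~$3$ through~$x$ and~$y$. Everything else is routine bookkeeping about which vertices lie inside marked components and hence survive into~$T'$.
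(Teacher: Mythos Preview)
Your proposal is correct and follows essentially the same approach as the paper: pigeonhole on the $k+2$ marked components to find two avoided by~$S'$, then combine their internal terminal-carrying paths with the edges to~$x$ and~$y$ to exhibit a~$T'$-cycle in~$G-S'$. The paper sketches exactly this argument in the paragraph preceding the observation; you have simply spelled out the details (simplicity of the cycle, the degenerate length-$0$ path case, and why the chosen terminals survive into~$T'$) more carefully.
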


\begin{restatable}{lemma}{SafenessOfCCTsBoundingRRState}
        \label{Lemma:SafenessOfCCTBoundingRR}
        If~\gtk reduces to~$(G, T', k)$ by Reduction Rule~\ref{RR:Bounding_CCTs}, then~\gtk is a YES-instance of~\mwnsshort if and only if~$(G, T', k)$ is a YES-instance of~\mwnsshort. Moreover, given a solution~$S'$ of~$(G, T', k)$, in polynomial time we can obtain a solution~$S$ of~\gtk.
\end{restatable}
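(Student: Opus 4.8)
The plan is to prove the two directions of the equivalence and, for the ``moreover'' part, to observe that the lifting algorithm is just ``greedily shrink $S'$ to an inclusion‑minimal solution of~$(G,T',k)$''. Two ingredients already at hand do all the work: first, that $S^*$ is $1$-redundant, which implies that \emph{every $T$-cycle meets $S^*$ in at least two vertices} (if a $T$-cycle $C$ met $S^*$ in a single vertex $z$, then $S^* \setminus \{z\}$ would be a MWNS of~$(G,T)$ missing $C$, a contradiction); and second, Observation~\ref{Observation:MarkingScheme_For-bounding_CCTs}, which says that if some pair $x,y \in S^*$ has $k+2$ marked components, then every solution of~$(G,T',k)$ contains $x$ or $y$.

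The forward direction is immediate: since $T' \subseteq T$, any solution $S$ of~\gtk{} satisfies $S \cap T' = \emptyset$ and hits every $T'$-cycle (as each $T'$-cycle is a $T$-cycle), so $S$ also solves $(G,T',k)$. For the reverse direction and the lifting algorithm, I would start from an arbitrary solution $S'$ of $(G,T',k)$, repeatedly delete vertices from it as long as it stays a solution of $(G,T',k)$ (polynomial time), obtaining an inclusion-minimal solution $S$, and then argue that $S$ is already a solution of~\gtk. This requires (i) $S \cap T = \emptyset$ and (ii) $S$ hits every $T$-cycle; both follow from one common structural claim, argued by contradiction.

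The common claim: suppose there is a ``bad'' cycle $C$ avoided by $S$ — in case (ii) a $T$-cycle with $V(C)\cap S = \emptyset$, and in case (i), assuming $t \in S \cap (T\setminus T')$, a $T'$-cycle with $V(C)\cap(S\setminus\{t\})=\emptyset$ supplied by minimality of $S$ (note $t\in V(C)$ since $S$ itself hits all $T'$-cycles). Pin a terminal $t^{\circ} \in V(C)\cap (T\setminus T')$ on $C$: in case~(i) take $t^{\circ}=t$; in case~(ii), if all terminals of $C$ lay in $T'$ then $C$ would be a $T'$-cycle avoided by $S$, contradicting that $S$ solves $(G,T',k)$, so such $t^{\circ}$ exists. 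Let $D$ be the connected component of $G - S^*$ containing $t^{\circ}$; since $t^{\circ}\notin T'$, the component $D$ is \emph{not} marked. As $C$ is a $T$-cycle it meets $S^*$ (in at least two vertices by $1$-redundancy), so the maximal sub-path $A$ of $C$ contained in $D$ and through $t^{\circ}$ is a proper sub-path of $C$; because every edge leaving $D$ enters $S^*$, the two endpoints of $A$ are adjacent along $C$ to vertices $x,y \in S^*\cap V(C)$. If $x=y$ then $C = A \cup \{x\}$ meets $S^*$ in a single vertex, contradicting $1$-redundancy, so $x \neq y$; then $A$ is a witness that $D$ is markable for the pair $\{x,y\}$, which forces the marking scheme to have marked $k+2$ components for $\{x,y\}$ (had there been fewer than $k+2$ markable ones, $D$ itself would have been marked). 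Observation~\ref{Observation:MarkingScheme_For-bounding_CCTs} then places $x$ or $y$ into $S$; but $x,y \in V(C)$ are distinct from $t^{\circ}$ (as $S^*\cap T=\emptyset$), and $C$ avoids $S$ (respectively $S\setminus\{t\}$), so $x,y\notin S$ — a contradiction. This rules out the bad $C$ in both cases, giving (i) and (ii), and hence $S$ solves~\gtk{} with $|S|\le|S'|\le k$.

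I expect the main obstacle to be making this structural step fully rigorous rather than any counting: one must carefully justify that the maximal arc of $C$ inside $D$ is a proper sub-path whose two endpoints are each adjacent on the cycle to a vertex of $S^*$ (including the degenerate case where the arc is a single vertex), correctly identify the vertices $x,y$, and cleanly dispatch the $x=y$ case via $1$-redundancy. Once that is done, the contradiction comes directly from Observation~\ref{Observation:MarkingScheme_For-bounding_CCTs}, with the only arithmetic being the inequality $k+2 > k \ge |S|$ already baked into that observation, and the handling of $T$-cycles versus $T'$-cycles unified through the single chosen terminal $t^{\circ}\in T\setminus T'$.
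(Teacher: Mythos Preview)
Your proposal is correct and follows essentially the same approach as the paper: shrink to an inclusion-minimal solution of $(G,T',k)$, then use the unmarked component containing a terminal of $T\setminus T'$ together with Observation~\ref{Observation:MarkingScheme_For-bounding_CCTs} to derive a contradiction. Your presentation is in fact a bit cleaner than the paper's, which treats the two goals $S\cap T=\emptyset$ and ``$S$ hits all $T$-cycles'' separately and, for the latter, splits into sub-cases depending on how many of the two internally vertex-disjoint paths meet $S^*$; your single ``maximal arc through $t^\circ$'' argument, with $1$-redundancy dispatching the $x=y$ case uniformly, subsumes all of these cases at once.
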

\begin{proof}
            In the forward direction, it is easy to note that if~\gtk is a YES-instance of \mwnsshort then~$(G, T', k)$ is also a YES-instance of~\mwnsshort as~$T' \subseteq T$.       
            In the reverse direction, assume that~$(G, T', k)$ is a YES-instance of~\mwnsshort and let~$S' \subseteq V(G) \setminus T'$ be an inclusion-wise minimal solution of $(G, T', k)$.
            Using the following claim we first show that any inclusion-wise minimal solution (and hence~$S'$) of~$(G, T', k)$ cannot contain any terminal from~$T$, i.e., we have~$S'\subseteq V(G \setminus T$.

            \begin{claim}
                If~$S'$ is an inclusion-wise minimal solution of~$(G, T', k)$ then~$S' \cap T = \emptyset$.
            \end{claim}
            \begin{claimproof}
                Assume for a contradiction that there exists a terminal~$t \in T$ such that~$t \in S'$. Since~$S'$ is a solution of~$(G, T', k)$, we have~$S' \subseteq V(G) \setminus T'$. Hence~$t \in T \setminus T'$, i.e., the terminal~$t$ is contained in an unmarked \cct of~$G-S^*$, where~$S^*$ is the 1-redundant set with respect to which we have done marking in Reduction Rule~\ref{RR:Bounding_CCTs}. Since~$t$ belongs to an inclusion-wise minimal solution $S'$ of~$(G, T', k)$, there exists a pair of distinct terminals~$t_i, t_j \in T'$ such that the graph~$G - (S' \setminus \{t\})$ has two \ivd $t_i$-$t_j$ paths~$P_1$ and~$P_2$. Let~$P \in \{P_1, P_2\}$ be the $t_i$-$t_j$ path which passes trough~$t \in S'$ and does not contain any vertex from~$S' \setminus \{t\}$, i.e., we have $V(P) \cap S' = \{t\}$.

                Next, we observe how the path~$P$ appears with respect to the decomposition of~$G$ into~$S^*$ and~$G-S^*$. Since~$S^* \subseteq V(G) \setminus T$, we have~$t \in V(G-S^*)$. Let~$C_t$ be the \cct of~$G-S^*$ which contains~$t$. Since~$t \in T \setminus T'$, the \cct $C_t$ is an unmarked \cct, thereby neither~$t_i \in T'$ nor~$t_j \in T'$ belong to the \cct $C_t$ of~$G-S^*$.
                Observe that the subpath~$P[t, t_i]$ must intersect the set~$S^*$, because if~$P[t, t_i]$ is completely contained in~$G-S^*$ then~$t$ and~$t_i$ belong to the same \cct~$C_t$, a contradiction to the fact that~$t_i \in T'$.
                Similarly, the subpath~$P[t, t_j]$ must intersect the set~$S^*$.
                Let~$x \in V(P[t, t_i])$ be the fist vertex of the subpath~$P[t, t_i]$ (while traversing from $t$ to $t_i$) that does not belong to the \cct~$C_t$ of~$G-S^*$.
                Let~$y \in V(P[t, t_j])$ be the fist vertex of the subpath~$P[t, t_j]$ (while traversing from $t$ to $t_j$) that does not belong to the \cct~$C_t$ of~$G-S^*$.
                By construction of~$x$ and~$y$, we have~$x, y \in S^*$ and~$x \neq y$. Moreover, the subpaths~$P[t, x]$ and~$P[t, y]$ of the simple~$t_i$-$t_j$ path~$P$ are \ivd.
                Observe that the \cct~$C_t$ satisfies all the conditions of Marking Scheme~\ref{MarkingScheme:Bounding_Interesting_CCTs} with respect to the pair~$x, y \in S^*$, whereas~$C_t$ is not marked, implying that the marking scheme has already marked~$k+2$ \ccts with respect to the pair~$x, y$. Hence, due to Observation~\ref{Observation:MarkingScheme_For-bounding_CCTs}, the solution $S'$ of~$(G, T', k)$ contains either~$x$  or~$y$, a contradiction to the assumption that~$P$ is a path in~$G-(S' \setminus \{t\})$.
            \end{claimproof}

            Next, we show that the set~$S' \subseteq V(G) \setminus T$ is also a solution of~\gtk, thereby proving that~\gtk is a YES-instance of \mwnsshort.
            Towards this, assume for a contradiction that~$S'$ is not a solution of~\gtk. This implies that there exists a pair of distinct terminals~$t_i, t_j \in T$ such that there are two \ivd $t_i$-$t_j$ paths (say~$P_1, P_2$) in~$G-S'$. Observe that at least one of~$t_i$ or~$t_j$ must belong to~$T \setminus T'$, because if both~$t_i, t_j \in T'$ then the two \ivd $t_i$-$t_j$ paths of~$G-S'$ yield a contradiction to the fact that~$S'$ is a solution of~$(G, T', k)$. W.l.o.g. (by symmetry) let~$t_j \in T \setminus T'$.
            Next, we look at how the~$t_i$-$t_j$ paths~$P_1, P_2$ appear when we decompose~$G$ into~$S^* \subseteq V(G) \setminus T$ (the set with respect to which we have done marking) and~$G-S^*$.
            Let~$C_j$ be the \cct of~$G-S^*$ which contains~$t_j$. We do a case distinction based on how many paths among~$P_1, P_2$ intersect~$S^*$.

            \subparagraph{Case 1. When exactly one of~$P_1$ or~$P_2$ intersects the set~$S^*$.} For~$z \in [2]$, let~$P_z$ be the path which intersects~$S^*$. Since in this case the other path~$P_{3-z}$ is present in~$G_S^*$, both~$T_i, t_j$ belong to the same \cct $C_j$ of~$G_S^*$.
            Let~$x, y$ be the first and the last vertex of the~$t_i$-$t_j$ path~$P_z$ in~$S^*$ while traversing from~$t_i$ to~$t_j$, respectively. Observe that~$x \neq y$ because~$S^*$ is a 1-redundant MWNS of~$(G, T)$ and the other path~$P_{3-z}$ does not intersect~$S^*$.
            Let~$p$ be the predecessor of~$x$ and~$q$ be the successor of~$y$ on the path~$P_z$ while traversing from~$t_i$ to~$t_j$. By definition, we have~$p \in N_G(x) \cap V(C_j)$ and~$q \in N(y) \cap V(C_j)$. Moreover, there is a~$p$-$q$ path~$P_j := P_z[p, t_i] \cdot P_{3-z}[t_i, t_j]$ in~$C_j$ which contains~$t_i, t_j$.
            Hence the \cct~$C_j$ satisfies all the conditions of Marking Scheme~\ref{MarkingScheme:Bounding_Interesting_CCTs}, whereas~$C_j$ is not marked. This implies that the marking scheme has already marked~$k+2$ \ccts of~$G_S^*$ with respect to the pair~$x, y \in S^*$. Hence, by Observation~\ref{Observation:MarkingScheme_For-bounding_CCTs}, either~$x \in S'$ or~$y \in S'$, a contradiction to the fact that~$P_1, P_2$ are preset in~$G-S'$.
            
            \subparagraph{Case 2. When both~$P_1, P_2$ intersect the set~$S^*$.} Let~$x$ and~$y$ be the first vertices (while traversing from~$t_j$ to~$t_i$) of the $t_i$-$t_j$ paths~$P_1$ and~$P_2$, respectively that is not present inside the \cct~$C_j$. The subpath~$P_1 [x, t_j]$ and~$P_2 [y, t_j]$ implies that with respect to the pair~$x, y \in S^*$, the \cct~$C_j$ satisfies all the properties of Marking Scheme~\ref{MarkingScheme:Bounding_Interesting_CCTs}, whereas~$C_j$ is not marked. This implies that the marking scheme has already marked~$k+2$ \ccts of~$G_S^*$ with respect to the pair~$x, y \in S^*$. Hence, by Observation~\ref{Observation:MarkingScheme_For-bounding_CCTs}, either~$x \in S'$ or~$y \in S'$, a contradiction to the fact that~$P_1, P_2$ are preset in~$G-S'$.

        Since we show that both the above cases lead to a contradiction, our assumption that~$S'$ is not a solution of~\gtk is wrong. Thus~$S'$ is a solution of~\gtk, implying~\gtk is  a YES-instance of \mwnsshort. This completes the proof of Lemma~\ref{Lemma:SafenessOfCCTBoundingRR}.
\end{proof}

\subsection{Construction of 1-redundant set}
\label{subsection:CostructionOfRedundantSet}
\begin{restatable}{lemma}{CostructionOfRedundantSetState}
\label{lemma:ConstructionOfRedundantSet}
There is a polynomial-time algorithm that, given an instance~\gtk of \mwnsshort and a MWNS~$\hat{S}$ of~$(G, T)$, outputs an equivalent instance~$(G_1, T, k_1)$ and a 1-redundant MWNS~$S^*$ of~$(G_1, T)$ such that: (a) $|S^*| \leq (14k+1) |\hat{S}|$, (b) $G_1$ is an induced subgraph of~$G$, and (c) $k_1 \leq k$.
Moreover, there is a polynomial-time algorithm that, given a solution~$S_1$ for~$(G_1, T, k_1)$, outputs a solution~$S$ of~\gtk.
\end{restatable}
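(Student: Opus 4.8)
The plan is to process the vertices of~$\hat S$ one at a time, maintaining an induced subgraph~$G^{(i)}$ of~$G$ with a shrinking parameter~$k^{(i)} \le k$, and for each processed vertex~$x$ deciding via Theorem~\ref{theorem:blocker} whether~$x$ is \emph{forced} into every size-$k$ solution---in which case we delete~$x$ and decrease the parameter---or whether the redundancy for~$x$ can be provided cheaply, in which case we accumulate a small vertex set into a set~$A$. The output will be~$G_1 := G^{(q)}$, $k_1 := k^{(q)}$, and~$S^* := (\hat S \cap V(G_1)) \cup A$, where~$q = |\hat S|$, together with the solution-lifting map~$S_1 \mapsto S_1 \cup F$ for a recorded set~$F$ of the deleted (forced) vertices.

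Concretely, at step~$i$ I would fix~$x = x_i$ and form~$G' := G^{(i)} - (\hat S^{(i)} \setminus \{x\})$ where~$\hat S^{(i)} := \hat S \cap V(G^{(i)})$. Maintaining the invariant that~$\hat S^{(i)}$ is a MWNS of~$(G^{(i)},T)$ (true initially and preserved because deleted vertices always lie in~$\hat S$), one checks that~$\{x\}$ is a MWNS of~$(G',T)$: any $T$-cycle of~$G'$ avoids~$\hat S^{(i)} \setminus \{x\}$ yet must be hit by~$\hat S^{(i)}$, hence uses~$x$. Since~$\hat S$ being a MWNS forces~$T$ to be independent, $T$ is independent in~$G'$ too, so Theorem~\ref{theorem:blocker} applies and yields an $x$-avoiding MWNS~$S_x$ of~$(G',T)$ with~$|S_x| \le 14\,\mathrm{OPT}_x(G',T)$. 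If~$|S_x| > 14k$, then~$(G',T)$ has no $x$-avoiding MWNS of size~$\le k$, and I would argue that every solution~$S$ of~$(G^{(i)},T,k^{(i)})$ must contain~$x$: otherwise~$S \cap V(G')$ would be an $x$-avoiding MWNS of~$(G',T)$ of size~$\le k^{(i)} \le k$, since every $T$-cycle of~$G'$ is a $T$-cycle of~$G^{(i)}$ (hit by~$S$) that avoids~$\hat S^{(i)} \setminus \{x\}$. So I delete~$x$ (adding it to~$F$), set~$G^{(i+1)} := G^{(i)} - \{x\}$ and~$k^{(i+1)} := k^{(i)}-1$; equivalence of the new instance is the standard argument for a vertex contained in all solutions (if~$S'$ solves~$(G^{(i+1)},T,k^{(i+1)})$ then~$S' \cup \{x\}$ solves~$(G^{(i)},T,k^{(i)})$, as a $T$-cycle of~$G^{(i)}$ either uses~$x$ or lies in~$G^{(i+1)}$), and the invariants persist. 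Otherwise~$|S_x| \le 14k$, and I add~$S_x$ to~$A$, leaving the graph and parameter unchanged; since~$S_x \subseteq V(G') = V(G^{(i)}) \setminus (\hat S^{(i)} \setminus \{x\})$ and~$x \notin S_x$, the set~$S_x$ is disjoint from~$\hat S^{(i)}$, hence from every vertex deleted in later steps, so it survives into~$G_1$.

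For the analysis I would first dispatch the easy properties: $G_1$ is an induced subgraph of~$G$ (only deletions), $k_1 \le k$, equivalence (compose the per-step equivalences), and the size bound~$|S^*| \le |\hat S| + \sum_{x} |S_x| \le |\hat S| + 14k|\hat S| = (14k+1)|\hat S|$. The solution-lifting algorithm returns~$S_1 \cup F$; it has size~$\le k_1 + |F| = k_1 + (k - k_1) = k$, is disjoint from~$T$, and is a MWNS of~$(G,T)$ by unrolling the deletion steps. The running time is polynomial since Theorem~\ref{theorem:blocker} is invoked~$q \le n$ times and every other operation is elementary. The degenerate case where~$k^{(i)}$ is driven below~$0$ simply certifies that the original instance is a NO-instance, and needs only a one-line remark (the output instance, with its negative parameter, is still an induced subgraph, still has parameter~$\le k$, and is still equivalent).

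The step I expect to be the genuine obstacle---where the argument must be done most carefully---is showing that~$S^*$ is actually \emph{1-redundant}, i.e.\ that every $T$-cycle~$C$ of~$G_1$ meets~$S^*$ in at least two vertices. Since~$\hat S \cap V(G_1)$ is a MWNS of~$(G_1,T)$, $C$ passes through some~$x = x_i \in \hat S \cap V(G_1)$; because~$x_i$ survived, it was handled by the redundancy case, so~$S_{x_i} \subseteq A \subseteq S^*$. If~$C$ uses a second vertex of~$\hat S \cap V(G_1)$, we are done. Otherwise I must argue that~$C$ lies entirely inside the local graph~$G'$ on which~$S_{x_i}$ was computed: the only vertices of~$\hat S^{(i)}$ that can lie on a cycle of~$G_1$ are those still present in~$G_1$, and by assumption~$x_i$ is the only such vertex on~$C$, so~$C$ avoids~$\hat S^{(i)} \setminus \{x_i\}$ and is a $T$-cycle of~$G'$; hence~$S_{x_i}$ hits~$C$ in a vertex distinct from~$x_i$, giving~$|V(C) \cap S^*| \ge 2$. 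Pinning down this case distinction, together with the matching bookkeeping (that accumulated vertices really survive to~$G_1$, and that using the global threshold~$14k$ rather than~$14k^{(i)}$ keeps the forced-deletion step sound), is the technical heart of the proof.
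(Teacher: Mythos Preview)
Your proposal is correct and follows essentially the same approach as the paper. The paper processes all~$x \in \hat S$ against the single original graph (computing~$S_x$ in~$G-(\hat S\setminus\{x\})$, collecting the forced vertices into a set~$X$, and setting~$G_1:=G-X$, $S^*:=\bigcup_{x\notin X}(S_x\cup\{x\})$), whereas you phrase it iteratively; but since your local graph~$G^{(i)}-(\hat S^{(i)}\setminus\{x_i\})$ equals~$G-(\hat S\setminus\{x_i\})$ (the deleted vertices~$F^{(i)}\subseteq\hat S$ are removed either way), the two computations coincide. Your 1-redundancy argument is in fact spelled out more carefully than in the paper, which simply asserts redundancy from having added both~$x$ and~$S_x$.
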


\begin{proof}
    We describe an algorithm to construct a 1-redundant MWNS~$S^*$. During its execution, we may identify vertices that belong to every solution of~\gtk, which we use to reduce the graph. The algorithm starts by initializing~$S^* := \emptyset$, and a set~$X := \emptyset$ which will store vertices belonging to every solution of~\gtk.
    For each vertex~$x \in \hat{S}$, we invoke the algorithm of Theorem~\ref{theorem:blocker} with the graph~$G' = G- (\hat{S} \setminus {x})$, terminal set~$T$, and vertex~$x$. In polynomial time, it outputs a set~$S_x \subseteq V(G') \setminus (T \cup \{x\})$ such that~$S_x$ is a MWNS of~$(G', T)$ and~$|S_x| \leq$ 14 \optx$(G, T)$, where~\optx$(G, T)$ is the cardinality of a minimum~$x$-avoiding MWNS of~$(G', T)$.
    \begin{itemize}
        \item If~$|S_x| > 14 k$, add~$x$ to~$X$. 
        \item Otherwise, add~$S_x \cup \{x\}$ to~$S^*$.
    \end{itemize}
    We set~$G_1 := G-X$ and~$k_1 := k - |X|$ and output the instance~$(G_1, T, k_1)$ with the 1-redundant MWNS~$S^*$. This completes the description of the algorithm, which runs in polynomial time via Theorem~\ref{theorem:blocker}. 
    Note that the set~$X$ constructed above is contained in every solution of~\gtk: because for each vertex~$x \in X$, we have~$|S_x| > 14k$ 
    and since~$S_x$ is a 14-approximate $x$-avoiding MWNS of~$(G', T)$, every~$k$-sized MWNS of~$(G', T)$ must contain~$x$ and hence every solution of~\gtk contains~$x$ as~$G \supseteq G'$. 
    By the construction of~$(G_1, T, k_1)$, it is easy to observe that~$G_1$ is an induced subgraph of~$G$ and~$k_1 \leq k$. 
    Moreover, since~$X$ is contained in every solution of~\gtk, we have~$(G_1, T, k_1)$ is a YES instance of \mwnsshort if and only if~\gtk is a YES-instance of \mwnsshort. Also, given a solution~$S_1$ of~$(G_1, T, k_1)$, we can obtain a solution~$S$ of~\gtk by setting~$S:= S_1 \cup X$.
    Since for each vertex~$x' \in \hat{S} \setminus X$, we add both~$S_{x'}$ and~$x'$ to~$S^*$, the set~$S^*$ is a 1-redundant MWNS of~$(G_1, T)$.
    Moreover, for each~$x'$, we have~$|S_{x'}| \leq 14k$, we have ~$|S^*| \leq (14k +1) |\hat{S}|$.
    This concludes the proof of Lemma~\ref{lemma:ConstructionOfRedundantSet}.
\end{proof}

\section{Proof of Theorem~\ref{theorem:terminalbounding}}
\label{sec:ProofOfTerminalBoundingThm}
In order to prove Theorem~\ref{theorem:terminalbounding}, we take an instance that is reduced with respect to our reduction rules and consider the block-cut forest of the graph that remains when removing the multiway near-separator~$\hat{S}$ that is given in the context of the theorem. As the problem definition ensures that each such block contains at most one terminal, to obtain a bound on the number of terminals it suffices to bound the number of blocks that contain a terminal. The next two subsections are devoted to analyzing the number of blocks containing a terminal in two different scenarios: in Section~\ref{subsection:BoundingInterestingLeaves} we bound the number of blocks that contain a terminal and are at 'maximal depth' in a certain technical sense, while in Section~\ref{subection:BoundingTerminalsOnDegree2Path} we consider blocks occurring on a root-to-leaf path in the block-cut tree. Then we combine the results of these sections (along with the bound on the cardinality of a 1-redundant set) to prove Theorem~\ref{theorem:terminalbounding} in Section~\ref{section:WrappingUp}.

\subsection{\texorpdfstring{A bound on the number of blocks from the set~$\calL^*$}{Bounding interesting leaves}}
\label{subsection:BoundingInterestingLeaves}
In this section, we show that given a non-trivial instance~\gtk of~\mwnsshort, and a 1-redundant~MWNS~$S^* \subseteq V(G) \setminus T$ of~$(G, T)$, if none of the reduction rules discussed in the previous section are applicable on~\gtk with respect to the set~$S^*$, then we can bound the number of blocks of a rooted block-cut forest~$\calF^*$ of~$G-S^*$ which contain a terminal and are deepest in a technical sense, by~$\Oh(k \cdot |S^*|^3)$. The following definition captures the relevant set of blocks~$\mathcal{L}^*$.

\begin{definition}[Definition of $\calL^*$]
\label{Definiton:InterestingLeaves}
    Given an instance~\gtk of \mwnsshort and a 1-redundant MWNS $S^* \subseteq V(G)$ of~$(G, T)$, let~$\mathcal{F}^*$ be a rooted block-cut forest of~$G-S^*$. 
    For each terminal~$t \in T$, let~$B_t$ be the uppermost block of the block-cut forest~$\mathcal{F}^*$ containing~$t$. Let~$\mathcal{B}^* = \bigcup_{t \in T} \{B_t\}$.
    Let~$\calL^* \subseteq \mathcal{B}^*$ be the subset containing block~$L$ of~$\mathcal{B}^*$ if no proper descendent of~$L$ (in~$\mathcal{F}^*$) belongs to~$\mathcal{B}^*$.
\end{definition} 

By Observation~\ref{observation:property_of_blocks}, for each terminal~$t\in T$, the unique upper-most block containing~$t$ does not contain any other terminals. The reduction rules and definition of a 1-redundant MWNS allow us to prove the following bound on~$|\mathcal{L}^*|$. 

\begin{restatable}{lemma}{NumberofInterestingLeavesIsBoundedState}
\label{Lemma:NumberofInterestingLeavesIsBounded}
    Let~\gtk be a non-trivial instance of~\mwnsshort, and let~$S^* \subseteq V(G)$ be a 1-redundant MWNS of~$(G, T)$. Let~$\calF^*$ be a rooted block-cut forest of~$G-S^*$. 
    If none of
    Reduction Rule~\ref{RR:RemoveWeaklySeparatedTerminal}, Reduction Rule~\ref{RR:BoundingBlocksOfDegree2Path}, and Reduction Rule~\ref{RR:Bounding_CCTs} (with respect to the 1-redundant set~$S^*$) are applicable on~\gtk, then the number of blocks from the set~$\calL^*$ (defined above in Definition~\ref{Definiton:InterestingLeaves}) in the block-cut forest~$\calF^*$ is bounded by~$\Oh (k \cdot |S^*|^3)$.
\end{restatable}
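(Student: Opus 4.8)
The plan is to bound $|\calL^*|$ by first bounding the number of connected components of $G-S^*$ that can contain a block of $\calL^*$, using Reduction Rule~\ref{RR:Bounding_CCTs}, and then bounding the number of blocks of $\calL^*$ inside a single such component, using Reduction Rule~\ref{RR:BoundingBlocksOfDegree2Path} together with the $T$-cycle structure forced by Reduction Rule~\ref{RR:RemoveWeaklySeparatedTerminal} and $1$-redundancy of $S^*$.

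\textbf{Forced $T$-cycles and pinned $S^*$-vertices.} Fix $L\in\calL^*$ with its terminal $t_L$ (by Observation~\ref{observation:property_of_blocks} the uppermost block $L$ contains only $t_L$, and the subtree $W_L:=V_G(\calF^*_L)$ contains $t_L$ as its only terminal; moreover the sets $\{W_L\}_{L\in\calL^*}$ are pairwise disjoint, since incomparable subtrees of a block-cut tree have disjoint vertex sets in $G$). As Reduction Rule~\ref{RR:RemoveWeaklySeparatedTerminal} is inapplicable, $t_L$ is not nearly-separated, so there are two internally vertex-disjoint paths from $t_L$ to another terminal; since $T$ is independent (because $S^*$ is a MWNS), both paths have internal vertices and their union is a $T$-cycle $C_L$ through $t_L$. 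Because $S^*$ is $1$-redundant, $S^*$ must hit $C_L$ in at least two vertices. Orienting $C_L$, let $s_L\neq s_L'$ be the two $S^*$-vertices of $C_L$ closest to $t_L$ on either side; the sub-arc $A_L$ of $C_L$ between $s_L$ and $s_L'$ through $t_L$ has no internal $S^*$-vertex, hence $A_L$ minus its endpoints lies inside a single connected component $D_L$ of $G-S^*$, namely the one containing $t_L$, and yields a path through $t_L$ in $D_L$ between a neighbour of $s_L$ and a neighbour of $s_L'$.

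\textbf{Bounding the relevant components.} Since $t_L\in D_L$ and Reduction Rule~\ref{RR:Bounding_CCTs} is inapplicable (so every terminal lies in a marked component), $D_L$ is one of the components marked by Marking Scheme~\ref{MarkingScheme:Bounding_Interesting_CCTs}; as that scheme marks at most $k+2$ components per pair of $S^*$-vertices, the total number of marked components, and hence the number of components of $G-S^*$ containing a block of $\calL^*$, is $\Oh\!\left(k\cdot|S^*|^2\right)$.

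\textbf{Few blocks of $\calL^*$ per component.} It then suffices to show $|\calL^*\cap D|=\Oh(|S^*|)$ for each component $D$ of $G-S^*$. Working in the block-cut tree $\calT$ of $D$ inherited from $\calF^*$, the blocks of $\calL^*\cap D$ form an antichain; let $\calT'$ be the minimal subtree of $\calT$ spanning them and the root, so that they are exactly the leaves of $\calT'$ and $|\calL^*\cap D|\le 1+\sum_{\nu}(\deg_{\calT'}(\nu)-2)$ over branch nodes $\nu$ of $\calT'$. One must (i) cap the number of $\calL^*$-leaves attached along any branch-free segment of $\calT'$ by a constant: if many pendant subtrees $W_L$, each with a terminal, hang off a common segment, then the two cutvertices delimiting the relevant portion of the segment are non-terminals whose removal leaves a component with at least three terminals, no $T$-cycle (by the maximal-depth property defining $\calL^*$ together with $S^*$ being a MWNS), and an $x$-$y$ path through two terminals, so Reduction Rule~\ref{RR:BoundingBlocksOfDegree2Path} would apply, a contradiction; and (ii) charge each branch node of $\calT'$ injectively to a vertex of $S^*$, obtained by applying the $T$-cycle/$1$-redundancy analysis above to two of the pendant terminals meeting at the branch node (after routing their arcs $A_L$ through it and using that Reduction Rule~\ref{RR:BoundingBlocksOfDegree2Path} is not applicable). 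Together these give $|\calL^*\cap D|=\Oh(|S^*|)$, and multiplying by the $\Oh(k|S^*|^2)$ bound on the number of relevant components yields $|\calL^*|=\Oh(k\cdot|S^*|^3)$.

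\textbf{Main obstacle.} The routine parts are the existence of $C_L$ and the component count; the delicate part is the per-component argument, where one has to invoke Reduction Rule~\ref{RR:BoundingBlocksOfDegree2Path} to simultaneously bound segment lengths and branching in $\calT'$, which requires carefully verifying its ``no $T$-cycle in $G[V(D')\cup\{x,y\}]$'' hypothesis from the definition of $\calL^*$, and handling the boundary cases where $L$ is a root block of its block-cut tree or where $C_L$ leaves $W_L$ through the parent cutvertex of $L$ rather than through $S^*$.
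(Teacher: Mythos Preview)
Your high-level decomposition --- first bound the number of components of $G-S^*$ that can host a block of $\calL^*$ via Reduction Rule~\ref{RR:Bounding_CCTs}, then bound $|\calL^*|$ inside each component --- is exactly what the paper does, and your component count $\Oh(k\,|S^*|^2)$ is correct.

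The per-component argument, however, has a genuine gap. In part~(i) you want to apply Reduction Rule~\ref{RR:BoundingBlocksOfDegree2Path} with $x,y$ chosen as cutvertices of $G-S^*$ lying on a segment of~$\calT'$. But the component $D$ in the hypothesis of that rule is a component of $G-\{x,y\}$, i.e.\ of the \emph{full} graph minus two vertices, not of $G-S^*$. You have already argued (correctly) that each pendant subtree $W_L$ contains a neighbour of some $s_L\in S^*$; hence in $G-\{x,y\}$ those subtrees are still connected to $S^*$, and the component $D$ you obtain can swallow vertices of $S^*$ and, through them, large parts of $G$. There is then no reason for $G[V(D)\cup\{x,y\}]$ to be free of $T$-cycles, so hypothesis~(\ref{RR:BoundingBlocksOfDegree2Path:NoT-Cycle}) of Reduction Rule~\ref{RR:BoundingBlocksOfDegree2Path} cannot be verified. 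Part~(ii), the ``injective charging of branch nodes to $S^*$-vertices'', is only stated, not argued; it is unclear which $S^*$-vertex a branch node is charged to, or why distinct branch nodes get distinct charges.

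The paper avoids all of this. Inside a single block-cut tree $\calT^*$ it does not analyse the Steiner tree of $\calL^*$, nor does it invoke Reduction Rule~\ref{RR:BoundingBlocksOfDegree2Path} at all. Instead it shows directly (your observation about the two internally vertex-disjoint paths from $t_L$ is the starting point) that for every $L\in\calL^*$ one of those two paths avoids the parent cutvertex $p_L$ of $L$ --- since $p_L$ is a single internal vertex --- and therefore reaches $S^*$ while staying inside $W_L=V_G(\calT^*_L)$. This assigns to each $L$ a vertex $z_L\in S^*$ with a $t_L$--$N(z_L)$ path inside $W_L\setminus\{p_L\}$. If $|\calL^*\cap\calT^*|>|S^*|$, two blocks $L_1,L_2$ get the same $z^*$; concatenating the two private paths inside $W_{L_1},W_{L_2}$, the tree path between $p_{L_1}$ and $p_{L_2}$, and the two edges to $z^*$ gives a $T$-cycle meeting $S^*$ only in $z^*$, contradicting $1$-redundancy. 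So $|\calL^*\cap\calT^*|\le|S^*|$ by a one-line pigeonhole, and Reduction Rule~\ref{RR:BoundingBlocksOfDegree2Path} is not needed for this lemma.
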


Before proving that the number of blocks from~$\calL^*$ is bounded in the whole block-cut forest~$\calF^*$ (Lemma~\ref{Lemma:NumberofInterestingLeavesIsBounded}), we prove the following lemma which says that the number of blocks from~$\calL^*$ in a single block-cut tree~$\calT^*$ of~$\calF^*$ is bounded.
\begin{restatable}{lemma}{InterestingLeavesInaTreeIsBoundedState}
\label{Lemma:NumberofInterestingLeavesInATreeIsBounded}
     Let~\gtk be a non-trivial instance of~\mwnsshort, and let~$S^* \subseteq V(G)$ be a 1-redundant MWNS of~$(G, T)$. Let~$\calF^*$ be a rooted block-cut forest of~$G-S^*$. If none of
    Reduction Rule~\ref{RR:RemoveWeaklySeparatedTerminal}, Reduction Rule~\ref{RR:BoundingBlocksOfDegree2Path}, and Reduction Rule~\ref{RR:Bounding_CCTs} (with respect to 1-redundant set~$S^*$) are applicable on~\gtk then the number of blocks from the set~$\calL^*$ (defined in Definition~\ref{Definiton:InterestingLeaves}) in every block-cut tree~$\calT^* \in \calF^*$ is at most~$|S^*|$.
\end{restatable}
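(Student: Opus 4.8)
The plan is to argue by contradiction: if some block‑cut tree $\calT^*$ of $\calF^*$ contains more than $|S^*|$ blocks of $\calL^*$, then $G$ has a $T$‑cycle meeting $S^*$ in exactly one vertex, which is impossible because $S^*$ is a $1$‑redundant MWNS. Concretely, let $\calT^*$ be the block‑cut tree of a \cct $C$ of $G-S^*$ and let $L_1,\dots,L_m$ be the blocks of $\calL^*\cap\calT^*$. Since $\calL^*\subseteq\calB^*$ consists of blocks having no proper descendant in $\calB^*$, no $L_i$ is an ancestor of another, so for $i\neq j$ the rooted subtrees $\calF^*_{L_i}$ and $\calF^*_{L_j}$ are incomparable; as a cutvertex occurs only once in a block‑cut tree, the vertex sets $V_i:=V_G(\calF^*_{L_i})$ satisfy $V_i\cap V_j\subseteq\{p_i\}$, where $p_i$ denotes the parent cutvertex of $L_i$, so in particular $(V_i\setminus\{p_i\})\cap(V_j\setminus\{p_j\})=\emptyset$. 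At most one $L_i$ can be the root of $\calT^*$, and if one is then it is the unique element of $\calL^*\cap\calT^*$, so the bound holds trivially whenever $S^*\neq\emptyset$; hence I may assume each $L_i$ has a parent cutvertex $p_i$. Let $t_i$ be the unique terminal of $L_i$ (Observation~\ref{Observation:Block_contains_atmost_1-terminal}). I first record some easy structural facts I will need: $V_i\cap T=\{t_i\}$ (using $L_i\in\calL^*$ together with Observation~\ref{Observation:Block_contains_atmost_1-terminal}), $t_i\neq p_i$, and, since $p_i$ lies in no block of $\calF^*_{L_i}$ besides $L_i$ and blocks are $2$‑connected, both $G[V_i]$ and $G[V_i]-p_i$ are connected.

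Next I would establish the two facts that drive the contradiction. (i) Since Reduction Rule~\ref{RR:RemoveWeaklySeparatedTerminal} is not applicable and $T$ is independent (Proposition~\ref{Proposition:DifferentWaysOfLookingAtMWNS}), no $t_i$ is nearly‑separated, so there is a terminal $t'\neq t_i$ and two \ivd $t_i$‑$t'$ paths in $G$, whose union is a $T$‑cycle through $t_i$. (ii) Since $S^*$ is a $1$‑redundant MWNS, both $S^*$ and $S^*\setminus\{s\}$ are MWNS of $(G,T)$ for every $s\in S^*$; hence, by Proposition~\ref{Proposition:DifferentWaysOfLookingAtMWNS}, every $T$‑cycle of $G$ contains at least two vertices of $S^*$ (a $T$‑cycle avoiding $S^*\setminus\{s\}$ would persist in $G-(S^*\setminus\{s\})$). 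Combining (i) with the fact that inside $C$ the cutvertex $p_i$ is the only vertex of $V_i$ having a neighbour outside $V_i$ (Observation~\ref{observation:propertyofacutvertex}), I get that for each $i$ there is $s_i\in S^*$ with a neighbour $n_i\in V_i\setminus\{p_i\}$: otherwise $p_i$ would separate $t_i$ from every other terminal of $G$, contradicting (i). Define $\mathcal{S}_i:=\{s\in S^*:N_G(s)\cap(V_i\setminus\{p_i\})\neq\emptyset\}$; each $\mathcal{S}_i$ is non‑empty, so in particular $S^*\neq\emptyset$.

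The core step is to show $\mathcal{S}_i\cap\mathcal{S}_j=\emptyset$ for $i\neq j$; since the $\mathcal{S}_i$ are non‑empty pairwise disjoint subsets of $S^*$, this yields $m\le|S^*|$. Suppose $s\in\mathcal{S}_i\cap\mathcal{S}_j$, with neighbours $n_i\in V_i\setminus\{p_i\}$ and $n_j\in V_j\setminus\{p_j\}$ of $s$. In $G[V_i]$ no single vertex separates $t_i$ from $\{p_i,n_i\}$: such a vertex would separate $t_i$ from $p_i$, or (if it equals $p_i$) from $n_i$, but $t_i$ and $p_i$ lie in the common block $L_i$ while $G[V_i]-p_i$ is connected. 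So Menger's theorem yields paths $\alpha_i$ from $t_i$ to $p_i$ and $\beta_i$ from $t_i$ to $n_i$ inside $G[V_i]$ with $V(\alpha_i)\cap V(\beta_i)=\{t_i\}$ (hence $\beta_i$ avoids $p_i$); symmetrically obtain $\alpha_j,\beta_j$. Let $\pi$ be a $p_i$‑$p_j$ path in $C$ avoiding $(V_i\setminus\{p_i\})\cup(V_j\setminus\{p_j\})$ — it exists because $C$ stays connected after deleting the pendant pieces $V_i\setminus\{p_i\}$ and $V_j\setminus\{p_j\}$ hanging off $p_i,p_j$, and $\pi$ is the single vertex $p_i$ when $p_i=p_j$. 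Then $\rho:=\beta_i\cdot(n_i,s)\cdot(s,n_j)\cdot\overline{\beta_j}$ is a $t_i$‑$t_j$ path meeting $S^*$ exactly in $\{s\}$, while $\sigma:=\alpha_i\cdot\pi\cdot\overline{\alpha_j}$ is a $t_i$‑$t_j$ path avoiding $S^*$; the disjointness properties above ($V(\alpha_i)\cap V(\beta_i)=\{t_i\}$, $(V_i\setminus\{p_i\})\cap(V_j\setminus\{p_j\})=\emptyset$, $\beta_i,\beta_j$ avoid $p_i,p_j$, $\pi$ avoids the pendant pieces, $s\notin C$) show that $\rho$ and $\sigma$ are \ivd. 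Hence $\rho\cup\sigma$ is a simple $T$‑cycle through $t_i$ and $t_j$ whose only vertex of $S^*$ is $s$, contradicting (ii). This proves the claim and the lemma.

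The main obstacle I anticipate is making the routing fully rigorous: carefully deducing $V_i\cap T=\{t_i\}$ and the connectivity of $G[V_i]-p_i$ from the definition of $\calL^*$ and the block‑cut structure; justifying the Menger fan from $t_i$ to $\{p_i,n_i\}$, including the degenerate cases where $L_i$ is a single edge, where $n_i=t_i$, and where $p_i=p_j$; and verifying in detail that $\rho$ and $\sigma$ share no internal vertex, so that their union is genuinely a simple $T$‑cycle. The remaining ingredients (existence of $\pi$, the $1$‑redundancy consequence in (ii), and the pairwise disjointness bookkeeping for the $\mathcal{S}_i$) are routine.
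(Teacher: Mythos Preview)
Your proof is correct and follows essentially the same approach as the paper: both show that if two blocks $L_i,L_j\in\calL^*$ in the same tree $\calT^*$ have subtrees reaching a common vertex of $S^*$, then one can assemble a $T$-cycle meeting $S^*$ in exactly one vertex, contradicting $1$-redundancy. The paper phrases this via pigeonhole (more than $|S^*|$ blocks forces a shared $z^*\in S^*$) and builds the two $t_i$-paths using Proposition~\ref{Proposition:ThereAre2PathsInsideBlock} inside $L_i$ followed by an extension down the subtree, whereas you phrase it as pairwise disjointness of the sets $\mathcal{S}_i$ and obtain the fan from $t_i$ to $\{p_i,n_i\}$ in one step via Menger on $G[V_i]$; these are equivalent packagings of the same argument, with your Menger shortcut being slightly more direct.
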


\begin{proof}
    Assume for a contradiction that there exists a block-cut tree~$\calT^*$ of~$\mathcal{F}^*$ such that the number of blocks from~$\calL^*$ in~$\calT^*$ is more than~$|S^*|$.
    First, we prove using the following claim that for each block~$L \in \calL^*$ of~$\calT^*$ containing terminal~$t$, any block of the subtree~$\calT^*_L$ cannot contain a  terminal other than~$t$.
    \begin{claim}\label{Claim:NoTerminalBelowInterestingBlock}
        Let~$L \in \calL^*$ be a block of~$\calT^*$ with~$t \in V_G(L)$ for some~$t \in T$. Then we have,~$V_G(\calT^*_L) \cap T = V_G(L) \cap T = \{t\}$.
    \end{claim}
    \begin{claimproof}
        First note that each block of~$\calT^*$ contains at most one terminal due to Observation~\ref{Observation:Block_contains_atmost_1-terminal} and the fact that~$S^*$ is a MWNS of~$(G, T)$. Hence we have~$V(L) \cap T = \{t\}$.
        Since block~$L$ is present in the subtree~$\calT^*_L$, we have~$t \in V_G(\calT^*_L)$, i.e., $V_G(\calT^*_L) \cap T \supseteq \{t\}$.
        Assume for a contradiction that~$V_G(\calT^*_L) \cap T \supsetneq \{t\}$, implying that there exists another terminal~$t'\neq t$ such that~$t'\in V_G (\calT^*_L)$. Note that~$t' \neq V(L)$ as block~$L$ already contains terminal~$t$.
        Thus the terminal~$t'$ is present in some block~$B'$ of the subtree~$\calT^*_L$.
        Let~$B_{t'}$ be the uppermost block of the block-cut forest~$\mathcal{F}^*$ containing~$t'$. Note that~$B_{t'} \neq L$ as~$t' \notin V(L)$.
        Moreover, the block~$B_{t'}$ cannot belong to the subtree~$\calT^*_L$ because if~$B_{t'}$ is present in the subtree~$\calT^*_L$ then we get a contradiction to the fact that~$L \in \calL^*$.
        Thus the block~$B_{t'}$ is present in~$\mathcal{F}^* - \calT^*_L$. Note that in this case the distance between block~$B'$ of the subtree~$\calT^*_L$ and block~$B_{t'}$ of~$\mathcal{F}^* - \calT^*_L$ is at least 3 because they are separated by block~$L$, a contradiction to Observation~\ref{observation:property_of_blocks}.
    \end{claimproof}
    
    Next we show that for each block~$L \in \calL^*$ of~$\calT^*$, there is a block~$B$ in subtree~$\calT^*_L$ containing a vertex~$w \in N(S^*) \setminus \{\mbox{parent}_{\calT^*}(L)\}$.
    \begin{claim}\label{Claim:EachInterestinLeafSeesAVertexinS*}
        Let~$L \in \calL^*$ be a block of~$\calT^*$ containing terminal~$t$. Then the subtree~$\calT^*_L$ contains a vertex~$w \in V_G(\calT^*_L) \cap N(S^*)$ such that there is a~$t$-$w$ path in~$G[V_G(\calT^*_L) \setminus \{p\}]$, where~$p$ is the parent of block~$L$ in~$\calT^*$.       
    \end{claim}
    \begin{claimproof}
        Since Reduction Rule~\ref{RR:RemoveWeaklySeparatedTerminal} is no longer applicable on~\gtk, whereas the block~$L \in \calL^*$ of the block-cut tree~$\calT^*$ contains a terminal~$t$, implying that there exists another terminal~$t' \in \tminust$ such that there are 2~\ivd~$t$-$t'$ paths say~$P_1$ and $P_2$ in graph~$G$.
        Next, we look at how the 
        paths~$P_1$ and~$P_2$ are present with respect to decomposition~$\calF^*$ and~$S^*$ of~$G$.

        First, note that the block~$L \in \calL^*$ cannot be the root~$\calT^*$, because if~$L$ is the root then by Claim~\ref{Claim:NoTerminalBelowInterestingBlock}, we know that there is no other terminal than~$t$ in~$V_G(\calT^*)$ and hence there is no block from~$\calL^*$ in~$\calT^*$ except the block~$L$, a contradiction to our assumption that~$\calT^*$ has more than~$|S^*| \geq 1$ (note that~$|S^*| \geq 1$ for a non-trivial instance of~\mwnsshort) blocks from~$\calL^*$. Let~$p$ be the parent of block~$L$ in~$\calT^*$. Observe that~$p \notin T$: First, it is easy to note that~$p \notin \tminust$ because both~$p, t$ belong to block~$L$ and a block can have at most one terminal. Secondly, we have~$p \neq t$ as the block~$L$ is the uppermost block of~$\calT^*$ containing~$t$ by definition of~$\calL^*$.
        Note that the terminal~$t'$ (defined above) does not belong to~$V_G(\calT^*_L) \ni p$ due to Claim~\ref{Claim:NoTerminalBelowInterestingBlock}.
        Moreover, the terminals~$t$ and~$t'$ belong to different \ccts of~$G - (S^* \cup \{p\})$ due to Observation~\ref{observation:propertyofacutvertex}, the fact that~$p \neq t'$, and~$S^* \subseteq V(G) \setminus T$.
       
        Since the terminals~$t$ and~$t'$ belong to different \ccts of~$G - (S^* \cup \{p\})$ and there are 2 \ivd~$t$-$t'$ paths~$P_1$ and~$P_2$ in~$G$, both the paths~$P_1$ and~$P_2$ intersect the set~$S^* \cup \{p\} \subseteq V(G) \setminus T$. 
        Note that at most one of~$P_1$ or~$P_2$ can intersect the vertex~$p$ as~$P_1$ and~$P_2$ are \ivd $t$-$t'$ paths. Hence there is at least one~$t$-$t'$ path say~$P \in \{P_1, P_2\}$ which intersects a vertex of~$S^*$ and avoids the vertex~$p$. 
        Let~$z$ be the first vertex of~$S^*$ on path~$P$ while traversing from~$t$ to~$t'$. 
        Let~$w$ be the predecessor of~$z$. By construction, the subpath~$P[t, w]$ of the~$t$-$t'$ path~$P$ does not intersect~$S^* \cup \{p\}$.

        Next, we look at the subpath~$P[t, z]$ of the path~$P$ with respect to the decomposition~$\calF^*$ and~$S^*$ of~$G$.
        By definition of block~$L$, we know~$t \in V(L)$. Moreover, we have~$t \notin V_G(\calF^* - \calT^*_L)$ due to Observation~\ref{observation:property_of_blocks} and the fact that~$p \neq t$.
        Hence the subpath~$P[t, w]$ is present in graph~$G[V_G(\calT^*_L) \setminus \{p\}]$ because~$t \notin V_G (\calF^* - \calT^*_L)$, the subpath~$P[t, w]$ does not intersect with the vertex set~$S^* \cup \{p\}$, and Observation~\ref{observation:propertyofacutvertex}.
        
        There is an edge~$\{w, z\}$ for~$z \in S^*$.
        This concludes the proof of Claim~\ref{Claim:EachInterestinLeafSeesAVertexinS*}.
    \end{claimproof}

    Since Claim~\ref{Claim:EachInterestinLeafSeesAVertexinS*} holds for each block~$L \in \calL^*$ and there are more than~$|S^*|$ blocks of~$\calL^*$ in the block-cut tree~$\calT^*$, by pigeon-hole principle there exists a vertex~$z^* \in S^*$ and two distinct blocks~$L_1, L_2 \in \calL^*$ in~$\calT^*$ containing distinct terminals~$t_1, t_2$, respectively, such that for each~$i \in [2]$, there exist~$w_i \in N(z^*)$ with $t_i$-$w_i$ path~$P_i$ in the graphs~$G[V_G(\calT^*_{L_i}) \setminus \{p_i\}]$, where~$p_i$ is the parent of block~$L_i$ in~$\calT^*$.
    
    For each~$i \in [2]$, let~$q_i \in V(L_i)$ be the last vertex of the~$t_i$-$w_i$ path~$P_i$ while traversing from~$t_i$ to~$w_i$. 
    Next, we apply proposition~\ref{Proposition:ThereAre2PathsInsideBlock} with~$B = L_i, t= t_i, p = p_i$, and~$q = q_i$ to obtain~$p_i$-$t_i$ path~$P^i$ and~$q_i$-$t_i$ path~$Q^i$ such that~$V(P^i) \cap V(Q^i) = \{t_i\}$.

    Moreover, since the blocks~$L_1$ and~$L_2$ belong to the same block-cut tree, there is a simple~$p_1$-$p_2$ path~$P_{12}$ in~$G$ which avoids vertices of the subtrees~$\calT^*_{L_1}$ and~$\calT^*_{L_2}$ except the vertices~$p_1, p_2$.
    Next, we obtain a $T$-cycle~$C_{12}$ as follows.
    $C_{12} := Q^1 [t_1, q_1] \cdot P_1 [q_1, w_1] \cdot \{w_1, z^*\} \cdot\{z^*, w_2\} \cdot P_2 [w_2, q_2] \cdot Q^2 [q_2, t_2] \cdot P^2 [t_2, p_2] \cdot P_{12} [p_2, p_1] \cdot P^1 [p_1, t_1]$. 
    Note that~$C_{12}$ is a $T$-cycle containing exactly one vertex~$z^*$ from the set~$S^*$, a contradiction to the fact that the set~$S^*$ is a 1-redundant MWNS of~$(G, T)$.
    This completes the proof of Lemma~\ref{Lemma:NumberofInterestingLeavesInATreeIsBounded}.
\end{proof}

Now we are equipped to prove Lemma~\ref{Lemma:NumberofInterestingLeavesIsBounded}.

\NumberofInterestingLeavesIsBoundedState*
\begin{proof}
Since Reduction Rule~\ref{RR:Bounding_CCTs} is not applicable on the \mwnsshort instance~\gtk with respect to the 1-redundant set~$S^*$, the number of connected components of~$G-S^*$ containing at least one terminal from~$T$ is bounded by~$\Oh (|S^*|^2 \cdot k)$. 
Moreover, due to Lemma~\ref{lemma:BlockGraphOfCCT}, each \cct~$C_i^*$ of~$G-S^*$ corresponds to a block-cut tree~$\calT_i^*$ in the block-cut forest~$\mathcal{F}^*$ of~$G-S^*$. Thus the number of block-cut trees in~$\mathcal{F}^*$ whose block contains a terminal is also bounded by~$\Oh (|S^*|^2 \cdot k)$.
Also, for any block-cut tree~$\calT^*$ of~$\mathcal{F}^*$, the number of blocks form~$\calL^*$ in~$\calT^*$ is bounded by~$|S^*|$ due to Lemma~\ref{Lemma:NumberofInterestingLeavesInATreeIsBounded}.
Since the number of block-cut trees in~$\mathcal{F}^*$ whose block contains a terminal is bounded by~$\Oh (|S^*|^2 \cdot k)$ and each block cut tree of~$\calF^*$ can have at most~$|S^*|$ blocks from~$\calL^*$, the number of blocks from~$\calL^*$ in the block-cut forest~$\calF^*$ is bounded by~$\Oh(|S^*|^3 \cdot k)$.
This concludes the proof of Lemma~\ref{Lemma:NumberofInterestingLeavesIsBounded}.
\end{proof}

\subsection{A bound on the number of terminals on a path}
\label{subection:BoundingTerminalsOnDegree2Path}
In this section, we show that given a non-trivial instance~\gtk of~\mwnsshort, and a 1-redundant MWNS~$S^* \subseteq V(G)$ of~$(G, T)$, if none of the reduction rules discussed in Section~\ref{sec:bounding_terminals} are applicable on~\gtk with respect to the set~$S^*$, 
for any rooted block-cut tree~$\calT^*$ of $G-S^*$, the number of blocks containing a terminal on a path from the root to a leaf of~$\calT^*$ is bounded by~$\Oh(|S^*|)$.

\begin{restatable}{lemma}{TerminalsOnDegreeTwoPathIsBoundedState}
\label{lemma:TerminalsOnDegreeeTwoPathIsBounded}
    Let~\gtk be a non-trivial instance of~\mwnsshort, and let~$S^* \subseteq V(G)$ be a 1-redundant MWNS of~$(G, T)$. Let~$\calF^*$ be a rooted (at blocks) block-cut forest of~$G-S^*$. Let~$\calT^*$ be a block-cut tree of~$\calF^*$, and let~$\calP$ be a path from the root to a leaf in~$\calT^*$.
If none of Reduction Rule~\ref{RR:RemoveWeaklySeparatedTerminal}, Reduction Rule~\ref{RR:BoundingBlocksOfDegree2Path}, and Reduction Rule~\ref{RR:Bounding_CCTs} (with respect to the 1-redundant set~$S^*$) are applicable on~\gtk then the number of blocks containing a terminal on the path~$\calP$ is less than~$9 \cdot (2|S^*| + 1)$.
\end{restatable}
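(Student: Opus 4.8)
The plan is to fix a root‑to‑leaf path $\calP$ in $\calT^*$, assume for contradiction that it has at least $9\cdot(2|S^*|+1)$ blocks containing a terminal, and derive either an application of Reduction Rule~\ref{RR:BoundingBlocksOfDegree2Path} or a $T$‑cycle witnessing that $S^*$ is not $1$‑redundant. Write $\calP$ as the alternating sequence $B_0,c_1,B_1,\dots,c_\ell,B_\ell$ of blocks and cutvertices of the \cct $C^* := G[V_G(\calT^*)]$ of $G-S^*$; by Observation~\ref{Observation:Block_contains_atmost_1-terminal} each $B_i$ holds at most one terminal, and call $B_i$ a \emph{terminal block} if it holds one. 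For $z\in S^*$, say $z$ \emph{reaches position $i$} if $z$ has a neighbour in $B_i$, in $c_i$ or $c_{i+1}$, or in one of the subtrees of $\calF^*$ hanging off $B_i$ through a cutvertex different from $c_i,c_{i+1}$ (equivalently, in the \cct of $C^*-\{c_i,c_{i+1}\}$ containing $B_i$); let $[\mathrm{lo}(z),\mathrm{hi}(z)]$ be the smallest interval of positions containing all positions reached by $z$. These $\le|S^*|$ intervals have $\le 2|S^*|$ endpoints, which cut $\calP$ into at most $2|S^*|+1$ consecutive segments; a position in the interior of a segment is reached by some $z$ only if $[\mathrm{lo}(z),\mathrm{hi}(z)]$ contains the whole segment.

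The argument then rests on two structural facts, each bounding by a small constant the number of terminal blocks inside one segment. \textbf{Clean segments:} if the interior of a segment is reached by no vertex of $S^*$, then it contains only a constant number of terminal blocks. Indeed, given three terminal blocks strictly inside such a segment and non‑terminal cutvertices $x,y$ of $\calP$ flanking them (which exist because no two consecutive cutvertices of $\calP$ can both be terminals, by Observation~\ref{Observation:Block_contains_atmost_1-terminal}), the \cct $D$ of $C^*-\{x,y\}$ between $x$ and $y$ is \emph{also} a \cct of $G-\{x,y\}$ since no vertex of $D$ is adjacent to $S^*$; as $V(D)\cup\{x,y\}\subseteq V(C^*)\subseteq V(G)\setminus S^*$ and $G-S^*$ has no $T$‑cycle (Proposition~\ref{Proposition:DifferentWaysOfLookingAtMWNS}), condition~\ref{RR:BoundingBlocksOfDegree2Path:NoT-Cycle} holds, and routing an $x$‑$y$ path along $\calP$ through two of the three terminals via Observation~\ref{Observation:Degree2Path} (and Proposition~\ref{Proposition:ThereAre2PathsInsideBlock} inside the blocks) gives condition~\ref{RR:BoundingBlocksOfDegree2Path:KeepingAPathP}; hence Reduction Rule~\ref{RR:BoundingBlocksOfDegree2Path} would apply, a contradiction. \textbf{Spanned segments:} if some $z\in S^*$ reaches two positions $\pi<\pi'$ of $\calP$ with several terminal blocks strictly between them, pick neighbours $p,q\in V(C^*)$ of $z$ at those positions, route a $p$‑$q$ path inside $C^*$ that passes through two \emph{distinct} terminals situated in blocks between $\pi$ and $\pi'$ (using Observations~\ref{observation:propertyofacutvertex} and~\ref{Observation:Degree2Path} and Proposition~\ref{Proposition:ThereAre2PathsInsideBlock}), and prepend/append the edges $zp$ and $qz$: the resulting simple cycle is a $T$‑cycle meeting $S^*$ only in $z$, contradicting that $S^*\setminus\{z\}$ is a MWNS of $(G,T)$. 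Applying the first fact to segments with clean interior and the second to the remaining segments (each of which is contained in a single interval $[\mathrm{lo}(z),\mathrm{hi}(z)]$) bounds the number of terminal blocks in every segment by a fixed constant.

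Summing over the $\le 2|S^*|+1$ segments then gives the bound $9\cdot(2|S^*|+1)$, the constant $9$ being chosen generously to absorb the degeneracies detailed below.

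The main obstacle is making the two facts uniform over all degenerate configurations, and this is exactly where the slack in the constant is spent: a cutvertex of $\calP$ may itself be a terminal, so it cannot always serve as the separator $x$ or $y$ of Reduction Rule~\ref{RR:BoundingBlocksOfDegree2Path} nor be counted cleanly as a terminal block; a hanging subtree may attach to $B_i$ at one of $c_i,c_{i+1}$, so the in‑block rerouting needs the three‑distinct‑vertices hypothesis of Proposition~\ref{Proposition:ThereAre2PathsInsideBlock} verified case by case; a neighbour of $z$ may lie in a hanging subtree rather than on $\calP$, so the reaching relation and the path construction must accommodate a short detour up to $\calP$; the two terminals used on the $x$‑$y$ or $p$‑$q$ path must be distinct, which forces us to require a few terminal blocks (rather than exactly three, resp.\ two) before the rule or the contradiction is triggered; and the root block $B_0$ and leaf block $B_\ell$ each lack one of their two path‑cutvertices. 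Handling these crudely but correctly still keeps the per‑segment count below $9$, which yields the lemma.
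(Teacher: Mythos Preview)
Your proposal is correct and rests on the same two structural facts as the paper---a stretch of~$\calP$ with no neighbour of~$S^*$ triggers Reduction Rule~\ref{RR:BoundingBlocksOfDegree2Path}, while a stretch sandwiched between two attachments of the same~$z\in S^*$ yields a $T$-cycle meeting~$S^*$ only in~$z$---but you organise them via a genuinely different decomposition. The paper partitions~$\calP$ in a \emph{terminal-driven} way: it cuts~$\calP$ into consecutive segments each containing exactly nine terminal blocks, shows (via the first fact) that every such segment must see a vertex of~$S^*$, and then pigeonholes over the~$\ge 2|S^*|+1$ segments to find some~$z\in S^*$ seen by three of them, using the outer two as attachment points and the middle one to supply the terminals for the second fact. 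You instead partition in an \emph{$S^*$-driven} way, cutting~$\calP$ at the~$\le 2|S^*|$ extremal reach positions~$\mathrm{lo}(z),\mathrm{hi}(z)$ and then arguing a per-segment bound directly via the clean/spanned dichotomy. Your route sidesteps the three-segment pigeonhole trick and makes the case split explicit up front, at the cost of introducing the reach intervals; the paper's route avoids that bookkeeping but needs the extra ``middle segment'' to guarantee terminals between the two attachment points. Both arrive at the same constant~$9$ absorbing the same degeneracies you list (terminal cutvertices, neighbours in hanging subtrees, terminals shared by consecutive blocks), and both are equally valid.
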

\begin{proof}
Assume for a contradiction that there exists a tree~$\calT^* \in \calF^*$ such that there is a path~$\calP$ from the root to a leaf in~$\calT^*$ such that the number of blocks containing a terminal on~$\calP$ is at least~$9 \cdot (2|S^*| +1)$. Let~$\calP = B_1, c_1, B_2, c_2, \ldots, c_{\ell-1} B_\ell$, where $B_1$ is the root and for each~$i \in [\ell-1]$, node~$c_i$ is the cutvertex between blocks~$B_i$ and~$B_{i+1}$. Let~$\calB$ be the set containing blocks of~$\calP$ which contain a terminal.
Then we have~$|\calB| \geq 9 \cdot (2|S^*|+1)$.

    Next, we partition~$\calP$ (from top to bottom) into segments~$S_1, S_2, \ldots, S_{\ell'}$ such that for each~$i \in [\ell' -1]$, the segment~$S_i$ contains exactly 9 blocks from~$\calB$, whereas~$S_{\ell'}$ contains at most 9 blocks from~$\calB$. We call a segment~$S_i$ \emph{complete} if~$S_i$ has exactly 9 blocks from~$\calB$. Observe that every complete segment has at least 5 distinct terminals because there are 9 blocks containing a terminal in a complete segment, and a terminal can belong to at most 2 (consecutive) blocks of~$\calP$ due to Observation~\ref{observation:property_of_blocks} and the fact that~$\calT^*$ is a tree. 
    Moreover, since there are at least~$9 \cdot (2 |S^*| +1)$ blocks from~$\calB$ on~$\calP$ and each segment contains at most 9 blocks from~$\calB$, 
    there are at least~$(2 |S^*| + 1)$ complete segments. 
    Next, using the following claim we show that each complete segment~$S_i$ has a node~$y_i$ such that the set $V_G(\calT^*_{y_i} - \calT^*_{z_i})$ contains a vertex~$w_i \in N_G(S^*)$, where~$z_i$ is the node adjacent to~$y_i$ on the path~$\calP$ while traversing from top to bottom. 
    
    \begin{restatable}{claim}{EachSegementSeesAVertexOfSstarState} 
    \label{Claim:EachSegementSeesAVertexOfSolution}
         Each complete segment~$S_i$ contains a node~$y_i$ such that~$V_G(\calT^*_ {y_i} - \calT^*_{z_i}) \cap N_G (S^*) \neq \emptyset$, 
        where~$z_i$ is the node adjacent to~$y_i$ on the path~$\calP$ while traversing from top to bottom (if there is no~$z_i$ then $\calT^*_{z_i}$ is an empty subtree).
    \end{restatable}
    \begin{claimproof}
        Assume for a contradiction that there exists a complete segment~$S_i$ such that~$S_i$ does not contain a node~$y_i$ such that~$V_G(\calT^*_{y_i} - \calT^*_{z_i}) \cap N_G (S^*) \neq \emptyset$.
        Consider the subpath~$ \calP [B_q, B_r] := B_q, c_q, \ldots, c_{r-1}, B_r$ of segment~$S_i$ containing all blocks of~$S_i$ along with the cut-vertices between its blocks.
        
        First note that at least one of the cutvertices among~$c_q$ and~$c_{q+1}$ must be a non-terminal, because by definition of the block-cut tree both the cutvertices~$c_q, c_{q+1}$ belong to the block~$B_{q+1}$, and the block~$B_{q+1}$ of~$G-S^*$ can contain at most one terminal due to Observation~\ref{Observation:Block_contains_atmost_1-terminal}. 
        By the same argument, one of the cutvertices among~$c_{r-1}$ and~$c_{r-2}$ must be a non-terminal.
        Let~$x \in \{c_q, c_{q+1}\} \setminus T$ and~$y \in \{c_{r-1}, c_{r-2}\} \setminus T$. 
        Since there are 9 blocks from the set~$\calB$ in the (complete) segment~$S_i$ and independent of whether~$x = c_q$ or~$x = c_{q+1}$ and~$y = c_{r-1}$ or~$y \in c_{r-2}$, we still have at least 5 blocks from the set~$\calB$ on the subpath~$\calP(x, y)$. Since a terminal can appear in at most 2 distinct blocks of~$\calP$ (due to Observation~\ref{observation:property_of_blocks}) and there are at least 5 distinct blocks from~$\calB$ (each containing a terminal) on the subpath~$\calP(x, y)$, there are at least 3 distinct terminals in the graph~$D := G[V_G (\calP (x, y)) \setminus \{x, y\}]$.

        Moreover, since the segment~$S_i$ does not contain a node~$y_i$ such that~$V_G(\calT^*_{y_i} - \calT^*_{z_i}) \cap N_G (S^*) \neq \emptyset$ and the subpath~$\calP(x, y)$ is contained in segment~$S_i$, therefore~$D$ is a \cct in~$G - \{x, y\}$.
        Also, since a block of~$G-S^*$ can contain at most one terminal (due to Observation~\ref{Observation:Block_contains_atmost_1-terminal}) and there are 3 distinct terminals on the subpath~$\calP(x, y)$, the subpath~$\calP(x, y)$ contains distinct blocks~$B_1, B_2$ containing distinct terminals~$t_1, t_2$, respectively. Then due to Observation~\ref{Observation:Degree2Path}, there is a $x$-$y$ path in~$G[V_G(\calP[x, y])]$ containing distinct terminals~$t_1, t_2$.
        Note that we have two distinct non-terminal vertices~$x, y \in V(G) \setminus T$ such that the graph~$G - \{x, y\}$ contains a~\cct $D$ satisfying all the conditions of Reduction Rule~\ref{RR:BoundingBlocksOfDegree2Path}, a contradiction to the fact that Reduction Rule~\ref{RR:BoundingBlocksOfDegree2Path} is no longer applicable on~\gtk. This concludes the proof of Claim~\ref{Claim:EachSegementSeesAVertexOfSolution}.
    \end{claimproof}

    Since there are at least~$2 |S^*| + 1$ complete segments, and due to the above claim each complete segment contains a node~$y_i$ such that~$V_G(\calT^*_ {y_i} - \calT^*_{z_i}) \cap N_G (S^*) \neq \emptyset$, by the pigeonhole principle, there exists a vertex~$x \in S^*$ and at least 3 distinct complete segments~$S_1^x, S_2^x, S_3^x$ 
    (ordered from top to bottom on the path~$\calP$) such that for each~$i \in [3]$, the segment~$S_i^x$ contains a node~$y_i$ such that the set~$V_G(\calT^*_ {y_i} - \calT^*_{z_i})$ contains a vertex~$w_z \in N_G(S^*)$

    Next, we construct a~$T$-cycle~$C^*$ such that~$V(C^*) \cap S^* = \{x\}$. 
    Towards this, let~$x_1$ and~$x_3$ be the first and the last cutvertices (while traversing from the root to the leaf) on the subpath~$\calP [y_1, y_3]$, respectively. Note that for each~$k \in [2]$, either~$x_k = y_k$ or~$x_k \in V(y_k)$.
    Let~$P_{13}$ be an $x_1$-$x_3$ path in the graph~$G[V_G(\calP[x_1, x_3])]$ such that~$P_{13}$ contains at least 2 terminals. 
    Note that such a path~$P_{13}$ exists due to Observation~\ref{Observation:Degree2Path} and the fact that there is a complete segment~$S_2^x$ 
    on~$\calP [x_1, x_3]$ which contains at least 5 distinct terminals and hence at least 5 distinct blocks containing distinct terminals due to Observation~\ref{Observation:Block_contains_atmost_1-terminal}. Let~$C^* := \{x, w_1\} \cdot R_1[w_1, x_1] \cdot P_{13} [x_1, x_3] \cdot R_3[x_3, w_3] \cdot \{w_3, x\}$, where for~$j \in \{1,3\}$, $R_j$ is a $w_j$-$x_j$ path inside~$G[V_G (\calT^*_{y_i} - \calT^*_{z_i})]$. Note that~$C^*$ is a~$T$-cycle with~$V(C^*) \cap S^* = \{x\}$, a contradiction to the fact that the set~$S^*$ is a 1-redundant MWNS of~$(G, T)$.
    This concludes the proof of Lemma~\ref{lemma:TerminalsOnDegreeeTwoPathIsBounded}.
\end{proof}

\subsection{Wrapping-up the proof of Theorem~\ref{theorem:terminalbounding}}
\label{section:WrappingUp}
Given Lemma~\ref{Lemma:NumberofInterestingLeavesIsBounded},  Lemma~\ref{lemma:TerminalsOnDegreeeTwoPathIsBounded}, and the bound on the cardinality of 1-redundant \mwnsshort~$S^*$ (due to Lemma~\ref{lemma:ConstructionOfRedundantSet}), the proof of Theorem~\ref{theorem:terminalbounding} follows 
from the observation that each terminal that does not belong to a block of~$\mathcal{L^*}$, belongs on a path from a block of~$\mathcal{L^*}$ to the root of a block-cut tree.

\TerminalBoundingState*
\begin{proof} 
    We first invoke Lemma~\ref{lemma:ConstructionOfRedundantSet} with the \mwnsshort instance~\gtk and the MWNS~$\hat{S}$ of~$(G, T)$ which in polynomial time outputs an equivalent instance~$(G_1, T, k_1)$ and a 1-redundant MWNS~$S^*$ of~$(G_1, T)$ of size~$\Oh (k \cdot |\hat{S}|)$ such that $G_1$ is an induced subgraph of~$G$ and $k' \leq k$. Moreover, there is a polynomial-time algorithm that, given a solution~$S_1$ for~$(G_1, T, k_1)$ outputs a solution~$S$ of~\gtk. Then we exhaustively apply Reduction Rule~\ref{RR:RemoveWeaklySeparatedTerminal}, Reduction Rule~\ref{RR:BoundingBlocksOfDegree2Path}, and Reduction Rule~\ref{RR:Bounding_CCTs} (with respect to the 1-redundant set~$S^*$) on the \mwnsshort instance~$(G_1, T, k_1)$ to obtain a reduced instance~$(G', T', k')$. Since in each reduction rule, we only remove terminals from the set~$T$, we have~$G' = G_1$ and~$k' = k_1$. This completes the construction of~$(G', T', k')$.

    First observe that given the \mwnsshort instance~$(G_1, T, k_1)$ and the 1-redundant set~$S^*$ of~$(G_1, T)$, it takes polynomial time to obtain the reduced instance~$(G', T', k')$ because each reduction rule takes polynomial time and the cardinality of the terminal set decreases by at least one after application of each reduction rule. Since Lemma~\ref{lemma:ConstructionOfRedundantSet} outputs the \mwnsshort instance~$(G_1, T, k_1)$ and 1-redundant set~$S^*$ in polynomial time, the whole construction of~$(G', T', k')$ takes polynomial time.

     Due to Lemma~\ref{lemma:ConstructionOfRedundantSet}, we know that the \mwnsshort instances~\gtk and~$(G_1, T, k_1)$ are equivalent and there is a polynomial-time algorithm that, given a solution~$S_1$ for~$(G_1, T, k_1)$ outputs a solution~$S$ of~\gtk.     
     Also, due to the safeness proof (Lemma~\ref{Lemma:SafenessOfWeaklySeparableTerminalRR}, Lemma~\ref{Lemma:SafenessOfDegreeTwoPathBoundingRR}, and Lemma~\ref{Lemma:SafenessOfCCTBoundingRR}) of each reduction rule, it holds that if~$(G_1, T_i, k_1)$ reduces to~$(G_1, T_j, k_1)$ then~$(G_1, T_i, k_1)$ and~$(G_1, T_j, k_1)$ are equivalent instances of~\mwnsshort, and there is a polynomial-time algorithm that, given a solution~$S_j$ for~$(G_1, T_j, k_1)$ outputs a solution~$S_i$ of~$(G_1, T_i, k_1)$.
     Hence the~\mwnsshort instances~$(G', T', k')$ and~\gtk are equivalent and there is a polynomial-time algorithm that, given a solution~$S'$ for~$(G', T', k')$ outputs a solution~$S$ of~\gtk. 
     
    Since~$G'= G_1$ and~$k' = k_1$ and~$G'$ is an induced subgraph of~$G$ and~$k_1 \leq k$ (due to Lemma~\ref{lemma:ConstructionOfRedundantSet}), it holds that~$G'$ is an induced subgraph of~$G$ and~$k' \leq k$. 
    Moreover, we have~$T' \subseteq T$ because the terminal set~$T$ does not change after invoking Lemma~\ref{lemma:ConstructionOfRedundantSet}, and we only remove terminals from the set~$T$ to obtain~$T'$ during application of reduction rules.

    It remains to prove that~$|T'| = \Oh(k^5 \cdot |\hat{S}|^4)$. Towards this, 
    first note that if~$(G', T', k')$ is a trivial instance of \mwnsshort (i.e., if~$\emptyset$ is a solution of~$(G', T', k')$) then it implies that all the terminals of~$T'$ are already nearly-separated in~$G'$. Since Reduction Rule~\ref{RR:RemoveWeaklySeparatedTerminal} is no longer applicable on~$(G', T', k')$, we have~$|T'| = 0$. Thus assume that~$(G', T', k')$ is a non-trivial instance of \mwnsshort. Let~$\calF'$ be a rooted block-cut forest of~$G'-S^*$.
    Let~$\calL'$ be the set defined by Definition~\ref{Definiton:InterestingLeaves} for~$\calF'$.
    Since no reduction rule is applicable on~$(G', T', k')$ w.r.t. the 1-redundant set $S^*$, the number of blocks from~$\calL'$ in the block-cut forest~$\calF'$ is bounded by~$\Oh(k' \cdot |S^*|^3)$ by Lemma~\ref{Lemma:NumberofInterestingLeavesIsBounded}.    
    
   For any path from a root to a leaf in~$\mathcal{F'}$, the number of blocks on the path containing a terminal is bounded by~$\Oh(|S^*|)$ due to Lemma~\ref{lemma:TerminalsOnDegreeeTwoPathIsBounded}.
   Since for any vertex~$t \in T$, either $t$ belongs to a block of~$\calL'$ or there exists a tree~$\calT' \in \calF'$, and a block~$L \in \calL'$ such that $t$ belongs to a block on the path from the root to~$L'$ in~$\calT'$. Thus the number of terminals occurring in a block of~$\mathcal{F'}$ is bounded by~$\Oh(k' \cdot |S^*|^4)$.
   Since~$S^* \subseteq V(G) \setminus T$ and each block of~$G'-S^*$ contains at most one terminal, we have~$|T'| = \Oh(k^5 \cdot |\hat{S}|^4)$. This concludes the proof of Theorem~\ref{theorem:terminalbounding}.
\end{proof}

\end{document}